\newtheorem{theorem}{Theorem}
\newtheorem{corollary}[theorem]{Corollary}
\newtheorem{lemma}[theorem]{Lemma}
\newtheorem{prop}[theorem]{Proposition}
\theoremstyle{definition}\newtheorem{definition}{Definition}
\theoremstyle{definition}\newtheorem{remark}{Remark}
\theoremstyle{definition}\newtheorem{example}{Example}
\theoremstyle{definition}\newtheorem{experiment}{Experiment}
\newcommand{\BR}{\mathbb{R}}
\newcommand{\BE}{\mathbb{E}}
\newcommand{\BN}{\mathbb{N}}
\newcommand{\BZ}{\mathbb{Z}}
\newcommand{\BQ}{\mathbb{Q}}
\newcommand{\SP}{\mathscr{P}}
\newcommand{\SD}{\mathscr{D}}
\newcommand{\SF}{\mathscr{F}}
\newcommand{\pp}{\mathrm{pmmse}}
\newcommand{\mm}{\mathrm{mmse}}
\newcommand{\LL}{\mathrm{lmmse}}
\newcommand{\supp}{\mathrm{supp}}
\newcommand{\wc}{\mathrel{\;\cdot\;}}
\newcommand{\Span}{\mathrm{span}}
\newcommand{\sn}{\mathcal{S}_{[n]}}
\DeclareRobustCommand{\stirling}{\genfrac\{\}{0pt}{}}
\newcommand{\calA}{\mathcal{A}}
\newcommand{\calB}{\mathcal{B}}
\newcommand{\calC}{\mathcal{C}}
\newcommand{\calD}{\mathcal{D}}
\newcommand{\calF}{\mathcal{F}}
\newcommand{\calG}{\mathcal{G}}
\newcommand{\calI}{\mathcal{I}}
\newcommand{\calL}{\mathcal{L}}
\newcommand{\calM}{\mathcal{M}}
\newcommand{\calN}{\mathcal{N}}
\newcommand{\calO}{\mathcal{O}}
\newcommand{\calP}{\mathcal{P}}
\newcommand{\calR}{\mathcal{R}}
\newcommand{\calS}{\mathcal{S}}
\newcommand{\calT}{\mathcal{T}}
\newcommand{\calV}{\mathcal{V}}
\newcommand{\calX}{\mathcal{X}}
\newcommand{\calY}{\mathcal{Y}}
\newcommand{\frakA}{\mathfrak{A}}
\newcommand{\frakB}{\mathfrak{B}}
\newcommand{\frakC}{\mathfrak{C}}
\newcommand{\frakD}{\mathfrak{D}}
\newcommand{\frakE}{\mathfrak{E}}
\newcommand{\frakF}{\mathfrak{F}}
\newcommand{\ba}{\boldsymbol{a}}
\newcommand{\bc}{\boldsymbol{c}}
\newcommand{\bd}{\boldsymbol{d}}
\newcommand{\be}{\boldsymbol{e}}
\newcommand{\bff}{\boldsymbol{f}}
\newcommand{\bg}{\boldsymbol{g}}
\newcommand{\bk}{\boldsymbol{k}}
\newcommand{\bp}{\boldsymbol{p}}
\newcommand{\bq}{\boldsymbol{q}}
\newcommand{\bu}{\boldsymbol{u}}
\newcommand{\bv}{\boldsymbol{v}}
\newcommand{\bw}{\boldsymbol{w}}
\newcommand{\bx}{\boldsymbol{x}}
\newcommand{\by}{\boldsymbol{y}}
\newcommand{\bz}{\boldsymbol{z}}
\newcommand{\bA}{\boldsymbol{A}}
\newcommand{\bB}{\boldsymbol{B}}
\newcommand{\bC}{\boldsymbol{C}}
\newcommand{\bH}{\boldsymbol{H}}
\newcommand{\bI}{\boldsymbol{I}}
\newcommand{\bL}{\boldsymbol{L}}
\newcommand{\bM}{\boldsymbol{M}}
\newcommand{\bN}{\boldsymbol{N}}
\newcommand{\bR}{\boldsymbol{R}}
\newcommand{\bU}{\boldsymbol{U}}
\newcommand{\bV}{\boldsymbol{V}}
\newcommand{\bW}{\boldsymbol{W}}
\newcommand{\bX}{\boldsymbol{X}}
\newcommand{\bY}{\boldsymbol{Y}}
\newcommand{\bZ}{\boldsymbol{Z}}
\newcommand{\bSigma}{\boldsymbol{\Sigma}}
\newcommand{\btheta}{\pmb{\theta}}
\newcommand{\bmu}{\pmb{\mu}}
\newcommand{\bnu}{\pmb{\nu}}
\newcommand{\blambda}{\pmb{\lambda}}
\newcommand{\bpsi}{\pmb{\psi}}
\newcommand{\bxi}{\pmb{\xi}}
\newcommand{\bvarepsilon}{\pmb{\varepsilon}}
\newcommand{\bbeta}{\pmb{\beta}}
\newcommand{\bsigma}{\pmb{\sigma}}
\newcommand{\bcalX}{\pmb{\calX}}
\begin{document}

\title{Measuring Information from Moments}

\date{}

\author{Wael Alghamdi and Flavio P. Calmon\thanks{W. Alghamdi and F. P. Calmon are with the John A. Paulson School of Engineering and Applied Sciences at Harvard University. E-mails: \textsf{alghamdi@g.harvard.edu, flavio@seas.harvard.edu}. 
This work was supported in part by the National Science Foundation under Grants CIF 1900750 and CAREER 1845852. This paper was presented in part at ISIT 2019~\cite{Moments}, and will be presented in part at ISIT 2021~\cite{CondExp}.}}

\maketitle

\begin{abstract}
We investigate the problem of representing information measures in terms of the moments of the underlying random variables. First, we derive polynomial approximations of the conditional expectation operator. We then apply these approximations to bound the best mean-square error achieved by a polynomial estimator---referred to here as the PMMSE. In Gaussian channels, the PMMSE coincides with the minimum mean-square error (MMSE) if and only if the input is either Gaussian or constant, i.e., if and only if the conditional expectation of the input of the channel given the output is a polynomial of degree at most 1. By combining the PMMSE with the I-MMSE relationship, we derive new formulas for information measures (e.g., differential entropy, mutual information) that are given in terms of the moments of the underlying random variables. As an application, we introduce estimators for information measures from data via approximating the moments in our formulas by sample moments. These estimators are shown to be asymptotically consistent and possess desirable properties, e.g., invariance to affine transformations when used to estimate mutual information.
\end{abstract}

\setcounter{tocdepth}{1}
\tableofcontents

\newpage

\section{Introduction}
A fundamental formula in information theory is the I-MMSE relation~\cite{Guo2005}, which shows that in Gaussian channels the mutual information is the integral of the minimum mean-square error (MMSE):
\begin{equation} \label{qh}
    I(X;\sqrt{\gamma}X+N) = \frac12 \int_0^\gamma \mm\left(X \mid \sqrt{t} X + N\right) \, dt.
\end{equation}
Here, $X$ has finite variance and $N$ is standard normal independent of $X.$ In this paper, we build on this relation to express information measures of two random variables $X$ and $Y$ as functions of their moments. For example, whenever $X$ and $Y$ are continuous and subexponential there is a sequence of rational functions $\{ \rho_n \}_{n\in \BN}$---each completely determined by finitely many of the moments of $X$ and $Y$---such that the mutual information is
\begin{equation} \label{qk}
    I(X;Y) = \lim_{n\to \infty} \int_\BR \rho_n(t) \, dt.
\end{equation}

We derive the new expression~\eqref{qk} and a similar formula for differential entropy in three steps. First, we produce polynomial approximations of conditional expectations. Second, we apply these approximations to bound the mean-square error of reconstructing a hidden variable $X$ from an observation $Y$ using an estimator that is a polynomial in $Y$. We call this approximation the PMMSE, in short for Polynomial MMSE. Finally, we use the PMMSE in the I-MMSE relation~\eqref{qh} to approximate mutual information (as in~\eqref{qk}) and differential entropy.

\subsection{Overview of Main Results}

The crux of our work is the study of polynomial approximations of conditional expectations. We produce polynomial approximations for $\BE[X\mid Y]$ for general random variables $X$ and $Y$ in Section~\ref{az}. The  polynomial formulas are instantiated for $\BE[X\mid X+N],$ where $N\sim \calN(0,1)$ is independent of $X,$ and studied further in Section~\ref{jh}. 

A surprising result that motivates the study of polynomial approximations of conditional expectations is a negative answer to the question: Can $\BE[X\mid X+N]$ be a polynomial of degree at least 2? We prove in Theorem~\ref{np} that among all integrable random variables $X$ (i.e., $\BE[|X|]<\infty),$ the only way that $\BE[X\mid X+N]$ can be a polynomial is if $X$ is Gaussian or constant. In other words, $\BE[X\mid X+N]$ is linear or constant if it is a polynomial.\footnote{The fact that $\BE[X\mid X+N]$ is a polynomial if and only if $X$ is Gaussian or constant can be proved in view of the inequality $\BE[X^2 \mid X+N = y] = O(y^2),$ which is derived in~\cite[Proposition 1.2]{Fozunbal} when $X$ has finite variance. Here, we extend the negative conclusion to any integrable $X$.}

Nevertheless, we produce a sequence of polynomials that converges to the conditional expectation. For a finite-variance $X$ and a light-tailed non-finitely-atomic $Y$ (see Theorems \ref{ar} and \ref{hw}), we derive the mean-square polynomial approximation of conditional expectation
\begin{equation} \label{ko}
    \BE[X\mid Y] = \lim_{n\to \infty} \BE\left[ (X,XY,\cdots,XY^n) \right] \bM_{Y,n}^{-1} \left( \begin{array}{c}
    1 \\
    Y \\
    \vdots \\
    Y^n 
    \end{array} \right),
\end{equation}
where the $n$-th order Hankel matrix of moments of $Y$ is denoted by
\begin{equation}
    \bM_{Y,n}:=\left(\BE \left[ Y^{i+j} \right]\right)_{0\le i,j \le n}.
\end{equation}
The light-tail condition on $Y$ is satisfied if $Y$ has a moment-generating function (MGF) or, more generally, if it satisfies Carleman's condition \cite{Lubinsky2007}
\begin{equation} \label{ql}
    \sum_{n\in \BN_{>0}} \BE\left[ Y^{2n} \right]^{-1/(2n)} = \infty.
\end{equation}
Imposing the light-tail condition on $Y$ ensures that polynomials are dense in $L^2(P_Y)$; in this case, the random variable $\BE\left[X \mid Y\right]\in L^2(P_Y)$ will be a limit of polynomials, and~\eqref{ko} gives one such limit. 

Note that the limit~\eqref{ko} holds even when $Y$ is not a Gaussian perturbation of $X.$ Also, in the special case that $Y=X+N$ for $N\sim \calN(0,1)$ independent of $X,$ the MGF of $Y$ exists if that of $X$ exists. In general, when stating our results we do not make an implicit assumption on the relationship between $X$ and $Y$ unless explicitly stated.

The expressions in the right hand side of \eqref{ko} are the orthogonal projections of the conditional expectation $\BE[X\mid Y]$ (or, equivalently, of $X$) onto finite-dimensional subspaces of polynomials in $Y$ of a certain degree. In other words, for each $n\in \BN,$ the polynomial in $Y$ defined by
\begin{equation}
    E_n[X\mid Y] := \BE\left[ (X,XY,\cdots,XY^n) \right] \bM_{Y,n}^{-1} \left( \begin{array}{c}
    1 \\
    Y \\
    \vdots \\
    Y^n 
    \end{array} \right),
\end{equation}
is the orthogonal projection of $\BE[X \mid Y]$ onto the subspace $\SP_n(Y):=\{p(Y) \mid p\in \SP_n\},$ where $\SP_n$ is the set of polynomials in one variable of degree at most $n$ with real coefficients. This projection characterization, in turn, makes $E_n[X \mid Y]$ the best polynomial approximation (in the $L^2(P_Y)$-norm sense) of the conditional expectation $\BE[X \mid Y].$ Specifically, $E_n[X\mid Y]$ uniquely solves the approximation problem
\begin{equation} \label{kp}
    E_n[X\mid Y] = \underset{q(Y)\in \SP_n(Y)}{\mathrm{argmin}} ~  \|q(Y) - \BE[X\mid Y] \|_{L^2(P_Y)}.
\end{equation}
Equation~\eqref{kp} is taken as the definition of $E_n[X\mid Y]$ for random variables $X$ and $Y$ satisfying $\BE[X^2]<\infty$ and $\BE[Y^{2n}]<\infty$ (see Definition~\ref{ho}). The approximation error $\|E_n[X\mid Y] - \BE[X\mid Y]\|_{L^2(P_Y)}$ in~\eqref{kp} can also be quantified when the estimation is done in Gaussian channels, which we briefly overview next.

If $Y=X+N$ for standard normal $N$ independent of $X,$ and if $X$ has a probability density function (PDF) or a probability mass function (PMF) $p_X$ that is compactly-supported, even, and decreasing over $[0,\infty)\cap \supp(p_X),$ then the approximation error $\|E_n[X\mid Y] - \BE[X\mid Y]\|_{L^2(P_Y)}$ decays faster than any polynomial in the degree $n$ (Theorem \ref{kw}). More precisely, for all positive integers $n$ and $k$ satisfying $n\ge \max(k-1,1)$ we have that
\begin{equation} \label{xf}
    \left\|E_n[X\mid Y] - \BE[X\mid Y] \right\|_{L^2(P_Y)} = O_{X,k}\left( \frac{1}{n^{k/2}} \right).
\end{equation}  
The implicit constants in~\eqref{xf} depend only on $X$ and $k.$ 

The bound on the rate of decay in~\eqref{xf} is derived by applying recently developed results on polynomial approximation in weighted Hilbert spaces \cite{Lotsch2009}. The key result (Theorem \ref{og}) is a uniform (in $X$) bound on the derivatives of the conditional expectation of the form
\begin{equation} \label{xg}
    \sup_{\BE[|X|]<\infty} \left\| \frac{d^{k}}{dy^{k}} ~ \mathbb{E}[X \mid Y=y] \right\|_{L^2(P_Y)} \le ~ \eta_k
\end{equation}
for each $k\in \BN,$ where the $\eta_k$ are absolute constants. In~\eqref{xg}, $Y=X+N$ for $N\sim \calN(0,1)$ independent of $X.$

From an estimation-theoretic point of view, the operators $E_n$ are natural generalizations of the linear minimum mean-square error (LMMSE) estimate. By the orthogonality property of conditional expectation, the characterization in equation \eqref{kp} is equivalent to the characterization
\begin{equation}
    E_n[X\mid Y] = \underset{q(Y)\in \SP_n(Y)}{\mathrm{argmin}} ~ \BE\left[ \left( q(Y) - X \right)^2 \right].
\end{equation}
Hence, we call the random variable $E_n[X \mid Y]$ the $n$-th degree polynomial minimum-mean squared error (PMMSE) estimate (of $X$ given $Y$) and denote the estimation error by
\begin{equation} \label{ks}
    \pp_n(X \mid Y) := \BE\left[ \left( E_n[X \mid Y] - X \right)^2 \right].
\end{equation}

We next overview several moments-based representations for distribution functionals proved in this work. Recall that the minimum mean-square error (MMSE) is given by $\mm(X\mid Y) = \BE\left[ \left( \BE[ X \mid Y ] - X \right)^2 \right].$ Using the limit $E_n[X \mid Y]\to \BE[X\mid Y],$ an application of the triangle inequality yields that the PMMSEs converge to the MMSE
\begin{equation} \label{kq}
    \mm(X\mid Y) = \lim_{n\to \infty} \pp_n(X\mid Y)
\end{equation}
when $X$ has finite variance and $Y$ satisfies Carleman's condition~\eqref{ql} (see Theorem~\ref{ar}).

Utilizing the approximation of the MMSE given by \eqref{kq} in the I-MMSE relation, we prove new formulas for differential entropy and mutual information expressing them primarily in terms of moments. For example, a corollary of the I-MMSE relation states that the differential entropy of a finite-variance continuous random variable $X$ can be expressed in terms of the MMSE as \cite{Guo2005}
\begin{equation}
    h(X) = \frac12 \int_0^\infty  \mm(X\mid \sqrt{t}X+N) - \frac{1}{2\pi e + t} \, dt,
\end{equation}
where $N$ is standard normal and is independent of $X.$ In view of the approximability of the MMSE by the PMMSEs \eqref{kq}, we obtain in Proposition \ref{jn} an approximation of differential entropy of an $X$ that has a MGF\footnote{Interestingly, the light-tail condition here, which is required for $\sqrt{t}X+N,$ might necessitate that $X$ satisfy a condition stronger than Carleman's condition in equation \eqref{ql} (though $N$ satisfies Carleman's condition, it might be the case that $X$ satisfies Carleman's condition but the sum $\sqrt{t}X+N$ does not, see~\cite[Proposition 3.1]{Berg1985}); nevertheless, assuming that $X$ has a MGF is sufficient, as then $\sqrt{t}X+N$ would necessarily have a MGF.} as
\begin{equation} \label{kr}
    h(X) = \lim_{n\to \infty} h_n(X)
\end{equation}
where we have introduced the distribution functionals $h_n(X)$ to be
\begin{equation} \label{kt}
    h_n(X) := \frac12 \int_0^\infty   \pp_n(X\mid \sqrt{t}X+N) - \frac{1}{2\pi e + t} \, dt.
\end{equation}

Figure \ref{fig:hn} provides an illustration of how $h_n(X)$ approximates $h(X),$ where $X$ has a chi distribution with two degrees of freedom (commonly denoted by $\chi_2$). It is evident from the figure that $h_n(X)$ approximates the differential entropy of $X$ monotonically more accurately as $n$ grows; indeed, this is true in general, because the monotonicity $\SP_1 \subset \SP_2 \subset \cdots$ implies that the convergence in \eqref{kr} is monotone
\begin{equation}
    h_1(X)\ge h_2(X) \ge \cdots \ge h(X),
\end{equation}
with $h_1(X) = \frac12 \log(2\pi e \sigma_X^2)$ being the differential entropy of a Gaussian with the same variance as that of $X.$

\begin{figure}[!tb]
    \centering
    \includegraphics[width=0.7\textwidth]{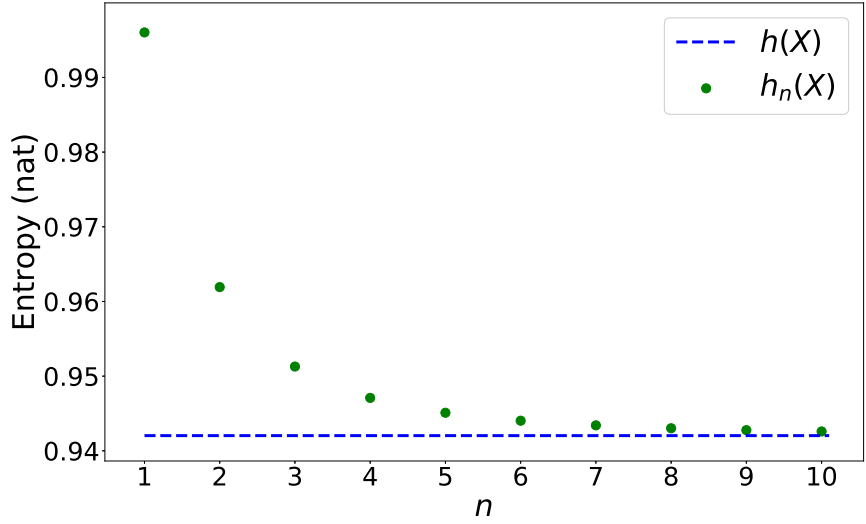}
    \caption{Comparison of the values of $h_n(X)$ (green dots) against the true value $h(X)$ (dashed blue line) for $n\in \{1,\cdots,10\}$ and $X \sim \chi_2.$ We have that $h(X)<h_{10}(X)<h(X)+6\cdot 10^{-4}.$}
    \label{fig:hn}
\end{figure}

Furthermore, closure properties of polynomial subspaces under affine transformations imply that the PMMSE behaves under affine transformations exactly as the MMSE does:
\begin{equation} \label{lw}
    \pp_n(aX+b \mid cY+d) = a^2 ~ \pp_n(X\mid Y)
\end{equation}
for constants $a,b,c,$ and $d$ such that $c \neq 0$ (Proposition~\ref{hn}). Thus, the distribution functionals $h_n$ behave under affine transformations exactly as differential entropy does, namely,
\begin{equation}
    h_n(aX+b) = h_n(X) + \log |a|
\end{equation}
for $a\neq 0$ (Proposition~\ref{js}).

The most noteworthy implication of \eqref{kr} is that it is a formula for differential entropy entirely in terms of moments. This fact follows directly from \eqref{ko}, \eqref{ks}, and \eqref{kt}. A closer analysis of the PMMSE under Gaussian perturbation yields a characterization that is more amenable to numerical computation. More precisely, we show in Theorem~\ref{ir} that the PMMSE when estimating a $2n$-times integrable random variable $X$ given its output under a Gaussian channel
\begin{equation} \label{lx}
    \pp_n(X,t) := \pp_n(X\mid \sqrt{t}X+N)
\end{equation}
is a rational function in the signal-to-noise ratio (SNR) $t$
\begin{equation} \label{lu}
    \pp_n(X,t) = \frac{\sigma_X^2 \prod_{k=1}^n k! + \cdots + (\det  \bM_{X,n}) t^{d_n-1}}{\prod_{k=1}^n k! + \cdots + (\det  \bM_{X,n}) t^{d_n}}
\end{equation}
where the degree of the denominator is $d_n = \binom{n+1}{2}.$ Here, $N$ is standard normal and is independent of $X.$ For example, if $X$ is zero-mean and unit-variance, denoting $\calX_k = \BE[X^k],$ we have the formula
\begin{equation} \label{lv}
    \pp_2(X,t) = \frac{2+4t+(\calX_4-\calX_3^2-1)t^2}{2+6t+(\calX_4+3)t^2+(\calX_4-\calX_3^2-1)t^3}.
\end{equation}
For a general $n\in \BN,$ the coefficients in both the numerator and denominator of the PMMSE in \eqref{lu} are ``homogeneous" polynomials in the moments of $X$ (i.e., for a single coefficient $c(X)$ there is a $k_c \in \BN$ such that $c(\alpha X)=\alpha^{k_c}c(X)$); this and further characterizations of the coefficients are given in Theorem~\ref{iv}.

Pointwise convergence of $\pp_n(X,t)$ to the MMSE
\begin{equation}
    \mm(X,t) := \mm(X \mid \sqrt{t}X+N)
\end{equation}
follows immediately from the general PMMSE-to-MMSE convergence in \eqref{kq} if $X$ has a MGF. In fact, continuity of both the PMMSE and the MMSE in the SNR and the monotonicity of the convergence in \eqref{kq} imply that the convergence is uniform (Theorem~\ref{an})
\begin{equation}
    \lim_{n\to \infty} ~ \sup_{t\ge 0} ~\pp_n(X,t) - \mm(X,t) =  0.
\end{equation}
Figure \ref{fig:pmmses} shows an example of how the PMMSE approximates the MMSE for a random variable $X$ that takes the values $1$ and $-1$ equiprobably. In this case, the MMSE is given by
\begin{equation}
    \mm(X,t) = 1 - \frac{1}{\sqrt{2\pi}} \int_{\BR} \tanh(z\sqrt{t})^2 e^{-(z+\sqrt{t})^2/2} \, dz,
\end{equation}
whereas the functions $\pp_n(X,t)$ are rational in $t,$ e.g., for $n=1$ we have the LMMSE
\begin{equation}
    \pp_1(X,t) = \frac{1}{1+t}
\end{equation}
and for $n=5$ we have the $5$-th degree PMMSE\footnote{In general, $\pp_5(Z,t)$ is a ratio of a degree-$14$ polynomial by a degree-$15$ polynomial as in equation~\eqref{lu}. In the special case of a Rademacher random variable, significant cancellation occurs and we obtain equation~\eqref{ya}.} 
\begin{equation} \label{ya}
    \pp_5(X,t) =  \frac{  45+ 360 t+ 675 t^{2}+300 t^{3}   }{  45+ 405 t+ 1035 t^{2}+ 1005 t^{3}+ 450 t^{4}+ 96 t^{5}+8 t^{6}}.
\end{equation}

\begin{figure}[!tb]
    \centering
    \includegraphics[width=0.7\textwidth]{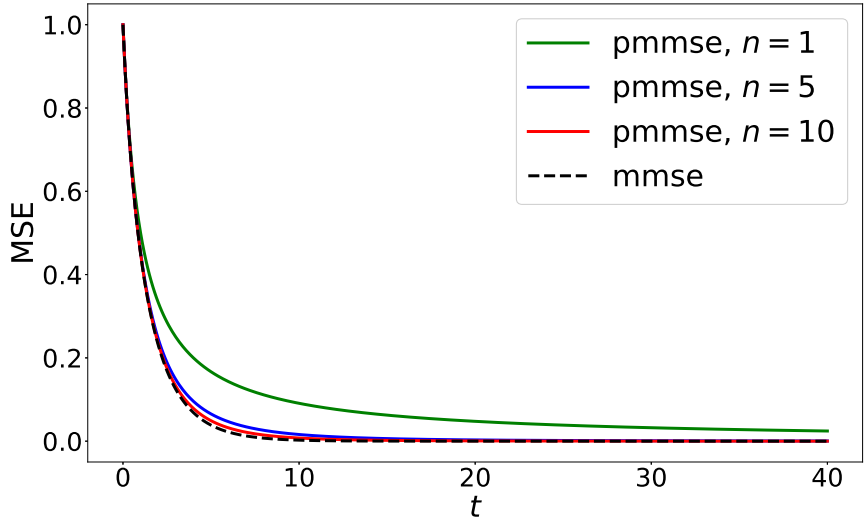}
    \caption{Comparison of the graphs of the functions $t\mapsto \pp_n(X,t)$ (solid lines) against the function $t \mapsto \mm(X,t)$ (dashed black line) for $n\in \{1,5,10\}$ and $X \sim \mathrm{Unif}(\{\pm 1\}).$}
    \label{fig:pmmses}
\end{figure}

The convergence of the distribution functionals $h_n$ to differential entropy gives rise to approximations of mutual information between a discrete random variable and a continuous random variable. Let $X$ and $Y$ be jointly distributed random variables such that $Y$ is finitely-atomic and, for each $y\in \supp(Y),$ the random variable $X_y$ obtained from $X$ conditioned on $Y=y$ is continuous. Then, the sequence $\{ I_n(X;Y) \}_{n\in \BN}$ defined by
\begin{equation} \label{jl}
    I_n(X;Y) := h_n(X) - \sum_{y \in \supp(Y)} \mathrm{Pr}(Y=y) h_n(X_y)
\end{equation}
converges to $I(X;Y)$ when $X$ has a MGF (Theorem \ref{jo}). 

We also obtain a moments-based formula for $I(X;Y)$ when both $X$ and $Y$ are continuous. This expression follows, in part, by extending our results on random variables to random vectors. The multidimensional generalization, in turn, is straightforward in view of the existence of analogous results for the I-MMSE relation \cite{Guo2005} and on the denseness of polynomials in the multidimensional setting. One notable exception is that we derive the MMSE dimension of a continuous random vector, namely, that 
\begin{equation}
    \lim_{t\to \infty} t \cdot \mm(\bX \mid \sqrt{t} \bX+\bN) =  \sum_{i=1}^m \sigma_{N_i}^2
\end{equation}
for continuous square-integrable $m$-dimensional random vectors $\bX$ and $\bN=(N_1,\cdots,N_m)^T$ whenever the density function of $\bN$ is bounded and decaying sufficiently fast (see Theorem~\ref{qm}).

Approximating the PMMSE in \eqref{lu} by plugging in sample moments in place of moments, we derive an estimator of several information measures. If $\{X_j\}_{j=1}^m$ are i.i.d. samples taken from the distribution of $X,$ then a uniform random variable over the samples $U \sim \mathrm{Unif}(\{X_j\}_{j=1}^m)$ provides an estimate $\pp_n(U,t)$ for $\pp_n(X,t).$ The moments of $U$ converge to the moments of $X$ by the law of large numbers. Further, using $\pp_n(U,t)$ to estimate $\pp_n(X,t)$ is a consistent estimator by the continuous mapping theorem, as the PMMSE is a continuous function of the moments. The same can be said of $h_n(U)$ as an estimate of $h_n(X),$ or of $I_n(U;V)$ as an estimate of $I(X;Y)$ when $(U,V) \sim \mathrm{Unif}(\{(X_j,Y_j)\}_{j=1}^m)$ where $\{(X_j,Y_j)\}_{j=1}^m$ are i.i.d. samples drawn according to the distribution of $(X,Y).$ These estimators also satisfy some desirable properties. For example, the behavior of the PMMSE under affine transformations \eqref{lw} implies that the estimate of the PMMSE from data is robust to (injective) affine transformations
\begin{equation}
    \pp_n(aU+b,t) = a^2 \pp_n(U,a^2 t),
\end{equation}
and so are the estimates of differential entropy
\begin{equation}
    h_n(aU+b) = h_n(U) + \log|a|
\end{equation}
and mutual information
\begin{equation}
    I_n(aU+b;V) = I_n(U;V).
\end{equation}

\subsection{Related Literature}

The mutual information between the input and output of the Gaussian channel is known to have an integral relation with the MMSE, referred to in the literature as the I-MMSE relation. This connection was made in the work of Guo, Shamai, and Verd\'{u} in \cite{Guo2005}. Extensions of the I-MMSE relation were investigated in \cite{Zakai2005,Guo2009,Verdu2010,Guo2011,Wu2012,Asnani2014,Han2015,Dytso2017,Dytso2020}, and applications have been established, e.g., in optimal power allocation \cite{Lozano2006} and monotonicity of non-Gaussianness \cite{Tulino2006}. Our work is inscribed within this literature. 

We introduce the PMMSE approximation of the MMSE, derive new representations of distribution functionals in terms of moments, and introduce estimators based on these new representations. We note that utilizing higher-order polynomials as proxies of the MMSE has appeared, e.g., in approaches to denoising \cite{Cha2018}. Also, studying smoothed distributions, e.g., via convolutions with Gaussians, has generated recent interest in the context of information theory~\cite{Calmon2018,Polyanskiy2016a} and learning theory \cite{Goldfeld2019,Goldfeld2020}.

At the heart of our work is the Bernstein approximation problem, on which a vast literature exists within approximation theory. The original Bernstein approximation problem extends Weierstrass approximation to the whole real line by investigating whether polynomials are dense in $L^\infty(\mu)$ for a measure $\mu$ that is absolutely continuous with respect to the Lebesgue measure. Works such as those by Carleson \cite{Carleson1951} and Freud \cite{Freud1977a}, and eventually the more comprehensive solution given by Ditzian and Totik \cite{Lotsch2009}---which introduces moduli of smoothness, a natural extension of the modulus of continuity---show that tools used to solve the Bernstein approximation problem can be useful for the more general question of denseness of polynomials in $L^p(\mu)$ for all $p\ge 1$ (see \cite{Lubinsky2007} for a comprehensive survey). In particular, the case $p=2$ has a close relationship with the Hamburger moment problem, described next.

The Hamburger moment problem asks whether a countably-infinite sequence of real numbers corresponds to a unique positive Borel measure on $\BR.$ A connection between this problem and the Bernstein approximation problem is that if the Hamburger moment problem has a positive answer for the sequence of moments of $\mu$ then polynomials are dense in $L^2(\mu),$ see \cite{Berg1981}. In the context of information theory, the application of the Bernstein approximation problem and the Hamburger moment problem has appeared in \cite{Makur2017}.

The denominator of the PMMSE in Gaussian channels, $\det \bM_{\sqrt{t}X+N,n}$ in \eqref{lu}, as well as the leading coefficient of both the numerator and the denominator, $\det \bM_{X,n},$ can be seen as generalizations of the Selberg integral. Denote
\begin{equation} \label{qi}
    \calI_n(\varphi) = \int_{\BR^{n+1}} \prod_{0\le i <j\le n} (y_i-y_j)^2 \prod_{i=0}^n \varphi(y_i) \, dy_0 \cdots dy_n.
\end{equation}
If $\varphi$ is the PDF of a Beta distribution or a standard normal distribution, then $\calI_n(\varphi)$ is the Selberg integral or the Mehta integral, respectively (both with parameter $\gamma=1$). For a continuous random variable $Y$ whose PDF is $p_Y,$
\begin{equation}
    \det \bM_{Y,n} = \frac{1}{(n+1)!} \calI_n(p_Y).
\end{equation}
The Vandermonde-determinant power $\prod_{i<j} (y_i-y_j)^2$ in the integrand in \eqref{qi} bears a close connection with the quantum hall effect \cite{Scharf1994,King2004}. The connection arises via expanding powers of the Vandermonde determinant and investigating which of the ensuing monomials have nonzero coefficients.

We quantify the rate of convergence of the PMMSE to the MMSE in Theorem \ref{kw}, for which the key ingredient is the bound in Theorem \ref{og} on the derivatives of the conditional expectation. The first-order derivative of the conditional expectation in Gaussian channels has been treated in \cite{DytsoDer}. We note that in parallel to this work the authors were made aware that the higher-order derivative expressions in Proposition~\ref{ars} were also derived in~\cite{Dytso2021}. We also extend the proofs for the MMSE dimension in the continuous case as given in \cite{Wu2011} to higher dimensions.

Distribution functionals, such as mutual information, are popular metrics for quantifying associations between data (e.g.,  \cite{Goodarzi2012,Carro2010,Fleuret2004}), yet reliably estimating distributional functions directly from samples is a non-trivial task. The naive route of first estimating the underlying distribution is generally impractical and imprecise. To address this challenge, a growing number of distribution functionals' estimators have recently been proposed within the information theory and computer science communities (see, e.g., \cite{Kraskov2004,Valiant2011,Jiao2015,Wu2016,Gao2017}). The estimators proposed in this paper satisfy desirable properties, such as shift invariance and scale resiliency, without the need to estimate the underlying distributions.

\subsection{Paper Organization}

In Section~\ref{qj}, we show that the conditional expectation in a Gaussian channel is linear if it is a polynomial. We introduce the PMMSE, prove its convergence to the MMSE, and provide explicit formulas for it in Section~\ref{az}. Basic properties of the PMMSE are given in Section \ref{ca}. A more focused treatment of the Gaussian-channel case occupies Section \ref{jh}. We derive a bound on the derivatives of the conditional expectation in Section~\ref{pc}. This bound, in turn, is used to quantify how close the PMMSE approximates the MMSE as the polynomial degree increases, which is the focus of Section \ref{ji}. We translate the expressions of the PMMSE into new formulas for differential entropy in Section \ref{jj}. We then develop a moments-based formula for the mutual information in Section \ref{ba}. Generalizations to multiple dimensions and to pairs of continuous random variables are given in Section \ref{jc}. The proposed PMMSE-based estimator is introduced, and its performance illustrated with simulations, in Section~\ref{bb}.

\subsection{Notation} 

Throughout, we fix a probability space $(\Omega,\calF,P).$ Let $\calB$ denote the Borel $\sigma$-algebra of $\BR.$ A random variable (RV) is a function $X: \Omega \to \BR$ that is $(\calF,\calB)$-measurable. For any sub-$\sigma$-algebra $\Sigma\subset \calF,$ we denote 
\begin{equation}
    \calM(\Sigma) := \left\{ f: \Omega \to \BR ~ ; ~ f \text{ is } (\Sigma,\calB)\text{-measurable} \, \right\}.
\end{equation}
The $\sigma$-algebra generated by a RV $X$ is denoted by $\sigma(X):=\{ X^{-1}(B) ~ ; ~ B\in \calB\}\subset \calF,$ where $X^{-1}(B)$ denotes the set-theoretic inverse. A function $g:\BR\to \BR$ is called a Borel function if it is $(\calB,\calB)$-measurable. The set of $\sigma(X)$-measurable RVs can be characterized by~\cite[Section II.4.5, Theorem 3]{Shiryaev}
\begin{equation}
    \calM(\sigma(X)) = \{g\circ X ~ ; ~ g:\BR \to \BR  ~ \text{is a Borel function} \, \}.
\end{equation}
For $1\le q < \infty,$ the weighted $L^q$-space $L^q(\Sigma)$ is defined by
\begin{equation}
    L^q(\Sigma) := \left\{ f\in \calM(\Sigma) ~ ; ~ \int_{\Omega} |f|^q \, dP < \infty \right\}. 
\end{equation}
We denote the norm of the Banach space $L^q(\calF)$ (for $1\le q < \infty$) by $\|\wc\|_q,$ i.e., for any $f\in \calM(\calF)$
\begin{equation}
    \| f \|_q := \left( \int_{\Omega} |f|^q \, dP \right)^{1/q}.
\end{equation}
We say that a RV $f$ is integrable if $f\in L^1(\calF),$ and we say that $f$ is $n$-times integrable if $f\in L^n(\calF).$ A shorthand for integration against $P$ is the expectation operator, i.e., if $f\in L^1(\calF)$ we denote $\BE[f] := \int_{\Omega} f \, dP.$ If $f\in L^2(\calF),$ we denote its variance by $\sigma_f^2 := \BE\left[f^2\right] - \BE[f]^2,$ where $\sigma_f\ge 0.$ The inner product in the Hilbert space $L^2(\calF)$ is denoted by $\langle \wc,\wc \rangle,$ i.e., for $f,g\in L^2(\calF)$ we set $\langle f,g \rangle := \BE[fg].$ Note that the notation $L^q(\calF)$ suppresses the dependence on the underlying space $\Omega$ and measure $P.$ The Banach space $L^q(\BR)$ consists of all Lebesgue-measurable functions $f:\BR \to \BR$ such that $\int_{\BR} |f(x)|^q \, dx < \infty,$ and its norm is denoted by $\|\wc\|_{L^q(\BR)}.$

The (Borel) probability measure induced by a RV $X$ is denoted by $P_X,$ i.e., for $B\in \calB$ we set
\begin{equation}
    P_X(B) := P\left(X^{-1}(B)\right).
\end{equation}
We let $\supp(X)$ denote the support of $X$; this is the smallest closed subset $S\subset \BR$ such that $P_X(S)=1,$ i.e., $\supp(X)$ is the complement of the union of all open $P_X$-null sets, or,
\begin{equation} \label{im}
    \supp(X) := \{ x\in \BR ~ ; ~ P_X(U)>0 \text{ for each open } U\ni x\}.
\end{equation}
These are equivalent characterizations of $\supp(X)$ because the standard topology on $\BR$ has a countable basis. We denote the cardinality of a set $S$ by $|S|,$ so by $|\supp(X)|$ we mean the smallest size of a set containing $X$ almost surely
\begin{equation} \label{il}
    |\supp(X)| := \min\{ |S| ~ ; ~ P_X(S)=1 \}.
\end{equation}
If there is no finite-cardinality $S$ for which $P_X(S)=1,$ then we write $|\supp(X)|=\infty$ (and we say $X$ has infinite support), whereas if there is some finite $S$ for which $P_X(S)=1$ then the well-ordering principle yields that the minimum in \eqref{il} is attained. For example, the uniform random variable over the interval $[0,1]$ has infinite support. The Dirac measure is denoted by $\delta_a,$ i.e., for every $a\in \BR$ and $B\in \calB$
\begin{equation}
    \delta_a(B) := \left\{ \begin{array}{cl}
    1 & \text{ if } a \in B, \\
    0 & \text{ otherwise.}
    \end{array} \right.
\end{equation}
For $b\in \BR,$ we write $\delta_a(b):= \delta_a(\{b\}).$ The indicator function of a set $B$ is denoted by $1_B.$

We say that $Y$ is a continuous RV if $P_Y$ is absolutely continuous with respect to the Lebesgue measure $\mu_L$ on $\BR,$ and we call $Y$ discrete if there is a countable set $\{q_y\}_{y\in \calS}\subset [0,1]$ such that for every $B\in \calB$ we have $P_Y(B) = \sum_{y\in \calS} q_y \delta_y(B).$ We let $p_Y$ denote the probability density function (PDF) of a RV $Y$ if $Y$ is continuous, or the probability mass function of $Y$ if $Y$ is a discrete RV. We write $Y\sim f$ to mean that $f$ is a PDF or PMF of $Y$ if it is clear from context that $Y$ is continuous or discrete. The symbol $\sim$ is also used to indicate that a RV follows a named distribution, e.g., $N\sim \calN(0,1)$ means that $N$ is a standard normal RV.\footnote{Additionally, the symbol $\sim$ will be used to indicate asymptotic behavior, namely, for functions $u$ and $v$ defined on the same metric space and having values in $\BR,$ we write $u\sim v$ as $t\to a$ if it holds that $\lim_{t\to a} u(t)/v(t)=1.$} 

If $Y$ is a continuous RV, we denote the differential entropy of $Y$ by
\begin{equation} \label{wq}
    h(Y) := -\int_{\BR} p_Y(y) \log p_Y(y) \, dy
\end{equation}
whenever the integral is well-defined. For example, if $\sigma_Y^2<\infty,$ then $h(Y)$ as given by~\eqref{wq} is well-defined and $- \infty \le h(Y) \le (1/2) \log(2\pi \sigma_Y^2)$ as can be seen by splitting the integrand $p_Y\log p_Y$ into positive and negative parts. The characteristic function of a RV $Y$ is denoted by
\begin{equation}
    \varphi_Y(t):= \BE[e^{itY}].
\end{equation}
We say that the RV $Y$ has a moment-generating function (MGF) if its MGF $\BE[e^{tY}]$ is finite over some nonempty interval $t\in (-\delta,\delta).$

For a nonempty $B\in \calB$ that is finite or of nonzero Lebesgue measure, we write $Y \sim \mathrm{Unif}(B)$ to mean that $Y$ is a uniform RV over $B.$ In other words, if $B$ is finite, then $Y$ is discrete and $Y\sim |B|^{-1} \sum_{y\in B} \delta_y,$ whereas if $\mu_L(B)>0$ then $Y$ is continuous and $Y\sim \mu_L(B)^{-1}1_B.$ We will also consider RVs that are uniform over independent and identically distributed (i.i.d.) samples. In other words, if $X_1,\cdots,X_m$ are i.i.d. RVs, we write $U\sim \mathrm{Unif}(\{X_j\}_{j=1}^m)$ to denote a collection of RVs $\{ U(\omega) \}_{\omega \in \Omega}$ such that for each $\omega \in \Omega$ the RV $U(\omega)$ is uniformly distributed over the multiset $\{X_j(\omega)\}_{j=1}^m.$ For such $U,$ we let $\BE[U]$ be a shorthand for the RV $\omega \mapsto \BE[U(\omega)],$ i.e., $\BE[U] = (X_1+\cdots+X_m)/m.$

If $\BE\left[ Y^{2n} \right]<\infty,$ we denote the Hankel matrix\footnote{Hankel matrices are square matrices with constant skew diagonals.} of moments $\bM_{Y,n}\in \BR^{(n+1)\times (n+1)}$ by
\begin{equation} \label{jd}
\bM_{Y,n} := \left(\BE \left[ Y^{i+j} \right]\right)_{0\le i,j\le n}.
\end{equation}
For a RV $Y,$ we consider random vectors
\begin{equation} \label{jg}
    \bY^{(n)} := \left( 1,Y,\cdots,Y^n \right)^T.
\end{equation}
Note that $\bM_{Y,n}$ is the expectation of the outer product of the random vector $\bY^{(n)},$ 
\begin{equation}
    \bM_{Y,n} = \BE\left[ \bY^{(n)} \left( \bY^{(n)} \right)^T \right].
\end{equation}
Therefore, $\bM_{Y,n}$ is a rank-1 perturbation of the covariance matrix of $\bY^{(n)},$
\begin{equation}
    \bSigma_{\bY^{(n)}}:=\BE\left[ \left(\bY^{(n)}-\BE\left[\bY^{(n)}\right] \right)  \left(\bY^{(n)}-\BE\left[\bY^{(n)}\right] \right)^T \right],
\end{equation}
namely,
\begin{equation}
    \bM_{Y,n} = \bSigma_{\bY^{(n)}} + \BE \left[ \bY^{(n)} \right]\BE \left[ \bY^{(n)} \right]^T. 
\end{equation}
The matrix $\bM_{Y,n}$ is symmetric. If, in addition, $\bM_{Y,n}$ is positive-definite (which occurs if and only if $|\supp(Y)|>n$ as shown in Lemma \ref{kv}), then we use the notation $\bM_{Y,n}^{1/2}$ to denote the lower-triangular matrix in the Cholesky decomposition of $\bM_{Y,n}.$ In other words, $\bM_{Y,n}^{1/2}$ is the unique lower-triangular matrix with positive diagonal entries that satisfies
\begin{equation} \label{ls}
    \bM_{Y,n}^{1/2} \left( \bM_{Y,n}^{1/2} \right)^T = \bM_{Y,n}.
\end{equation}
We also use the shorthand
\begin{equation} \label{lr}
    \bM_{Y,n}^{-1/2} := \left( \bM_{Y,n}^{1/2} \right)^{-1}.
\end{equation}
Note that $(\bM_{Y,n}^{1/2})^{-1}\neq (\bM_{Y,n}^{-1})^{1/2},$ unless $\bM_{Y,n}$ is diagonal; since $\bM_{Y,n}$ is a Hankel matrix, this can only happen if $n=1$ and $\BE[Y]=0.$

We let $\SP_n$ denote the collection of all polynomials of degree at most $n$ with real coefficients, and we define $\SP_n(Y)\subset \calM(\sigma(Y))$ by
\begin{equation}
    \SP_n(Y) := \{ q(Y) ~ ; ~ q \in \SP_n\}.
\end{equation}
If we write $q(Y) \in \SP_n(Y),$ then we implicitly mean that $q\in \SP_n$; this does not necessarily define $q$ uniquely, e.g., if $Y$ is binary then with $q(Y)=Y^2$ we also have $q(Y)=Y,$ so both $t\mapsto t$ and $t\mapsto t^2$ are valid candidates for $q.$ Nevertheless, if $|\supp(Y)|>n,$ then each element in $\SP_n(Y)$ corresponds to a unique element in $\SP_n.$ 

For $n\in \BN,$ we set
\begin{equation}
    [n]:=\{0,1,\cdots,n\}.
\end{equation}
The set of all finite-length tuples of non-negative integers is denoted by $\BN^*.$ Vectors are denoted by boldface letters, in which case subscripted regular letters refer to the entries, e.g., when $\blambda\in \BR^n$ we let $\lambda_j$ denote the $j$-th entry so $\blambda = (\lambda_1,\cdots,\lambda_n)^T.$ Often, we will start indexing at $0$ rather than $1,$ and that will be made clear when it is done (in Section \ref{pc}, indexing of tuples starts at $2$). The $n\times n$ identity matrix is denoted by $\bI_n.$ For a real vector space $V,$ we let $\dim V$ denote its dimension. Note that
\begin{equation}
    \dim \SP_n(Y) = \min(n+1,|\supp(Y)|).
\end{equation}
The closure of a set $S$ will be denoted by $\overline{S}.$ In various parts of the paper, we will use $\calX_k$ as a shorthand for the $k$-th moment of a RV $X$
\begin{equation}
    \calX_k := \BE\left[ X^k \right],
\end{equation}
and the notation $\calY_k$ is used analogously.

For every integer $r\ge 2,$ let $\Pi_r$ be the set of unordered integer partitions $r=r_1+\cdots+r_k$ of $r$ into integers $r_j\ge 2.$ We encode $\Pi_r$ via a list of the multiplicities of the parts as
\begin{equation} \label{so}
    \Pi_r := \left\{ (\lambda_2,\cdots,\lambda_\ell) \in \BN^{*} ~ ; ~ 2\lambda_2+\cdots+\ell \lambda_\ell = r \right\}.
\end{equation}
In \eqref{so}, $\ell\ge 2$ is free, and trailing zeros are ignored (i.e., $\lambda_{\ell}>0$). For a partition $(\lambda_2,\cdots,\lambda_\ell)=\blambda \in \Pi_r$ having $m=\lambda_2+\cdots+\lambda_\ell$ parts, we denote\footnote{The integer $c_{\blambda}$ counts the number of cyclically-invariant ordered set-partitions of an $r$-element set into $m=\lambda_2+\cdots+\lambda_\ell$ subsets where, for each $k\in \{2,\cdots,\ell\},$ exactly $\lambda_k$ parts have size $k.$}
\begin{equation} \label{ss}
    c_{\blambda} := \frac{1}{m} \binom{m}{\lambda_2,\cdots,\lambda_\ell} \binom{r}{\underbrace{2,\cdots,2}_{\textstyle \lambda_2};\cdots;\underbrace{\ell,\cdots,\ell}_{\textstyle  \lambda_\ell}}
\end{equation}
and
\begin{equation} \label{sp}
    e_{\blambda} := (-1)^{m-1} c_{\blambda}.
\end{equation}
We set\footnote{The integer $C_r$ counts the total number of cyclically-invariant ordered set-partitions of an $r$-element set into subsets of sizes at least $2.$} $C_r:=\sum_{\blambda \in \Pi_r} c_{\blambda}.$ Let $\stirling{r}{m}$ denote the Stirling numbers of the second kind (i.e., the number of unordered set-partitions of an $r$-element set into $m$ nonempty subsets). The integer $C_r$ can be expressed as 
\begin{equation} \label{arx}
    C_r = \sum_{k=1}^r (k-1)!  \sum_{j=0}^k (-1)^{j} \binom{r}{j} \stirling{r-j}{k-j}.
\end{equation}
The first few values of $C_r$ (for $2\le r \le 7$) are given by $1, 1, 4, 11, 56, 267,$ and as $r\to \infty$ we have the asymptotic $C_r \sim (r-1)!/\alpha^r$ for some constant $\alpha \approx 1.146$ (see \cite{OEIS}). The crude bound $C_r<r^r$ can also be seen by a combinatorial argument.  For completeness, equation~\eqref{arx} is derived in Appendix~\ref{aaj}.

\subsection{Assumptions}

We assume only that $X$ is integrable and $N\sim \calN(0,1)$ is independent of $X$ to prove that the conditional expectation $\BE[X\mid X+N]$ cannot be a polynomial of degree exceeding~$1$ (Theorem~\ref{np}) and bound the norms of the derivatives of the conditional expectation $y \mapsto \BE[X \mid X+N=y]$ (Theorem~\ref{og}). To prove that the PMMSE estimate converges to the MMSE estimate $\BE[X\mid Y]$ (Theorem~\ref{ar}), we assume that $X$ is square-integrable and $Y$ satisfies Carleman's condition, i.e.,
\begin{equation}
    \sum_{n\ge 1} \BE\left[ Y^{2n} \right]^{-1/(2n)} = \infty.
\end{equation}
For a RV to satisfy Carleman's condition it suffices to have a MGF \cite{Lin2017}. To instantiate the PMMSE formula and its convergence to the MMSE in the Gaussian channel case in Section~\ref{jh}, we either assume that the channel input is sufficiently integrable or that it has a MGF, respectively. The formula for differential entropy of a continuous RV in terms of its moments (Theorem~\ref{jn}) and its generalization to random vectors (Theorem~\ref{xu}) hold whenever the underlying RVs have MGFs. The ensuing formula for mutual information $I(X;Y)$ between a discrete RV $X$ and a continuous RV $Y$ (Theorem~\ref{jo}) holds when $X$ has finite support and $Y$ has a MGF and satisfies $h(Y)>-\infty.$ The formula for $I(X;Y)$ when both $X$ and $Y$ are continuous (Theorem~\ref{xr}) holds when both $X$ and $Y$ have MGFs and $\min(h(X),h(Y))>-\infty.$ Finally, for the Bernstein approximation theorem we prove for $\BE[X\mid X+N]$ (Theorem~\ref{kw}), we impose the assumption that $X$ is either continuous or discrete with a PDF or a PMF belonging to the set we define next.
\begin{definition} \label{kx}
Let $\SD$ denote the set of compactly-supported even PDFs or PMFs $p$ that are non-increasing over $[0,\infty)~\cap~ \mathrm{supp}(p).$
\end{definition}

\section{Polynomial Conditional Expectations in Gaussian Channels are Constant or Linear} \label{qj}

We prove in Theorem~\ref{np} that $\BE[X\mid Y]$ is  a polynomial, for integrable $X$ and $Y=X+N$ a Gaussian perturbation, if and only if $X$ is Gaussian or constant. The proof is carried in two steps. First, we show that a degree-$m$ non-constant polynomial $\BE[X\mid Y]$ requires $p_Y = e^{-h}$ for some polynomial $h$ with $\deg h = m+1.$ The second step is showing that, because $p_Y=e^{-h}$ is a convolution of the Gaussian kernel, $m =1 .$

The key result we use to prove Theorem~\ref{np} is Marcinkiewicz's theorem that a cumulant-generating function has degree at most $2$ if it were a polynomial.

\begin{theorem}[Marcinkiewicz, {\cite[Theorem 2.5.3]{Bryc}}] \label{wr}
If the characteristic function of a nondeterministic RV $R$ satisfies $\varphi_R=e^{g}$ on a neighborhood of $0$ in $\BR$ for some polynomial $g$ with complex coefficients, then $\deg(g) = 2$ and $R$ is Gaussian.
\end{theorem}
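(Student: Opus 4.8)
The plan is to treat $\varphi_R$ as an analytic characteristic function and exploit the ``ridge'' growth inequality that such functions obey. First I would upgrade the local identity $\varphi_R = e^{g}$ to a global one on all of $\BC$: since $\varphi_R$ is analytic at $0$, it extends analytically to a maximal horizontal strip $\{|\Im z| < \delta\}$ on which $\varphi_R(z) = \BE[e^{izR}]$, and the only obstruction to enlarging the strip is a singularity on its boundary. But $e^{g}$ is entire, so by the identity theorem $\varphi_R = e^{g}$ throughout the strip and $\varphi_R$ has no finite singularity; hence $\delta = \infty$, the function $\varphi_R$ is entire, and $\varphi_R = e^{g}$ on $\BC$. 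Normalizing $g(0)=0$ (as $\varphi_R(0)=1$) and using that $R$ is real, so that $\varphi_R(-t) = \overline{\varphi_R(t)}$ for real $t$, I obtain $g(-z) = \overline{g}(z)$ as polynomials; writing $g(z) = \sum_{k} a_k z^k$ this forces $a_k \in \BR$ for even $k$ and $a_k \in i\BR$ for odd $k$. A useful consequence is that $a_m i^m$ is real, where $m := \deg g$.

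The crux is the degree bound, for which I would use the ridge inequality. For an entire characteristic function the identity $|e^{i(t+is)R}| = e^{-sR}$ gives $|\varphi_R(t+is)| \le \BE[e^{-sR}] = \varphi_R(is)$ for all real $t,s$, with $\varphi_R(is) > 0$; taking logarithms yields $\Re g(t+is) \le g(is)$, where $g(is)$ is real. Equivalently $\Re g(z) \le g(i\,\Im z)$ for every $z\in\BC$. Passing to top-degree homogeneous parts (a polynomial that is nonnegative on all of $\BR^2$ has nonnegative leading form), this becomes $\Re(a_m z^m) \le a_m i^m (\Im z)^m$ for all $z$. Now I parametrize $z = r e^{i\theta}$ and write $a_m = |a_m| e^{i\alpha}$: the left side equals $|a_m| r^m \cos(m\theta + \alpha)$, whose maximum $|a_m| r^m$ over $\theta$ is attained at the $m$ equally spaced angles with $m\theta + \alpha \in 2\pi\BZ$, while the right side equals $(a_m i^m)\, r^m \sin^m\theta$, of modulus at most $|a_m| r^m$. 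At any angle maximizing the left side the inequality forces $a_m i^m \sin^m\theta = |a_m|$, hence $|\sin\theta| = 1$, i.e. $\theta = \pm\pi/2$. Thus every one of the $m$ maximizing angles must lie on the imaginary axis; since the imaginary axis supplies only two directions, this is impossible once $m \ge 3$, and therefore $\deg g \le 2$.

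It remains to identify $R$ when $m = 2$. With $a_0 = 0$, the parity constraints give $g(t) = i\mu t + a_2 t^2$ with $\mu,a_2 \in \BR$, and $|\varphi_R(t)| \le 1$ on $\BR$ forces $a_2 \le 0$, say $a_2 = -\sigma^2/2$ with $\sigma^2 \ge 0$. Since $R$ is nondeterministic we cannot have $\sigma^2 = 0$ (that would give $\varphi_R = e^{i\mu t}$, the characteristic function of the constant $\mu$), so $\sigma^2 > 0$, $\deg g = 2$, and $\varphi_R(t) = e^{i\mu t - \sigma^2 t^2/2}$ is exactly the characteristic function of $\calN(\mu,\sigma^2)$; by uniqueness of characteristic functions, $R$ is Gaussian. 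I expect the main obstacle to be the first step: rigorously upgrading the local identity to a global entire one together with the ridge inequality, since this is where one must invoke the structure theory of analytic characteristic functions (maximal strip of analyticity, boundary singularity, and $|\varphi_R(t+is)| \le \varphi_R(is)$) rather than elementary manipulations. By contrast, once the ridge inequality is in hand, the degree bound reduces to the short observation about the maximizers of $\cos(m\theta+\alpha)$ on the circle.
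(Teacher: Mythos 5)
Your proposal is correct, but there is nothing in the paper to compare it against: the paper states Marcinkiewicz's theorem as a quoted result from Bryc \cite[Theorem 2.5.3]{Bryc} and uses it as a black box in the proof of Theorem~\ref{np}; it contains no proof of its own. What you have reconstructed is essentially the classical proof found in the cited literature (Bryc/Lukacs): (i) use the structure theory of analytic characteristic functions --- extension to a maximal horizontal strip on which $\varphi_R(z)=\BE[e^{izR}]$, with purely imaginary singular points on any finite boundary --- to upgrade the local identity $\varphi_R=e^{g}$ to an identity between entire functions; (ii) the ridge inequality $|\varphi_R(t+is)|\le \varphi_R(is)$; (iii) a comparison of degree-$m$ homogeneous parts plus the maximizing-angle count to force $\deg g\le 2$, and then the elementary identification of the Gaussian in the remaining case. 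All three steps are sound; in particular your observation that the $m$ equally spaced maximizers of $\theta\mapsto\cos(m\theta+\alpha)$ must all land on $\{\pm\pi/2\}$ is a clean way to finish, and it handles the degenerate case where the degree-$m$ part of $g(is)-\Re g(t+is)$ vanishes identically (the inequality then holds with equality, and the same count applies). Two points would need a sentence each in a polished write-up: the relations $e^{g(0)}=1,$ $e^{g(-t)}=e^{\overline{g(t)}},$ and $e^{g(is)}>0$ pin down $g$ only modulo $2\pi i\BZ$ pointwise, so the normalizations $g(0)=0,$ $g(-z)=\overline{g}(z),$ and the realness of $g(is)$ each require the routine continuity argument for a continuous $2\pi i\BZ$-valued function; and step (i) is a genuine theorem (Lukacs' strip-of-regularity and boundary-singularity result) that you must cite rather than prove --- which you flag explicitly, and which puts your proof on the same footing as the paper's own choice to cite the full statement.
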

\begin{remark}
An analogous statement holds for the MGF. If the MGF satisfies $M_R(t):=\BE[e^{tR}]=e^{f(t)}$ over $t\in (-\delta,\delta)$ for some polynomial $f,$ then the two functions $\BE[e^{zR}]$ and $e^{f(z)}$ are analytic continuations of $M_R$ to the domain $|\Im(z)|<\delta$ (that $\BE[e^{zR}]$ is analytic there can be seen by Morera's theorem). By uniqueness of analytic continuations, we obtain that $\BE[e^{zR}] = e^{f(z)}$ over $|\Im(z)|<\delta.$ In particular, $\varphi_R(t) = e^{g(t)}$ over $t\in (-\delta,\delta),$ where $g$ is the polynomial $g(t):=f(it).$ Therefore, having a polynomial cumulant generating function $\kappa_R(t)=f(t)$ implies by Theorem~\ref{wr} that $\deg(\kappa_R) \in \{0,2\}.$
\end{remark}

The following elementary lemma will be useful for the proof of Theorem~\ref{np}.

\begin{lemma} \label{arv}
For a RV $X$ and a polynomial $p,$ if $p(X)$ is integrable then so is $X^{\deg(p)}.$
\end{lemma}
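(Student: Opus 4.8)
The plan is to exploit the fact that a degree-$d$ polynomial grows like its leading monomial at infinity, so that $|x|^d$ is dominated by $|p(x)|$ outside a bounded region; the conclusion then follows by splitting the expectation. Write $d:=\deg(p)$ and let $a_d\neq 0$ be the leading coefficient of $p.$ The whole argument is a deterministic inequality between Borel functions, composed afterwards with $X,$ so that the hypothesis (which concerns only the composite RV $p(X)$) is used exactly once at the end.

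First I would establish the pointwise domination at the level of real functions. Since $p(x)/x^{d}\to a_d$ as $|x|\to \infty,$ the ratio $|p(x)|/|x|^{d}$ tends to $|a_d|>0,$ so there is a radius $R>0$, depending only on the coefficients of $p,$ such that $|p(x)|\ge (|a_d|/2)\,|x|^{d}$ for all $|x|\ge R.$ Equivalently, $|x|^{d}\le (2/|a_d|)\,|p(x)|$ on the set $\{|x|\ge R\}.$ (For $d=0$ this is vacuous since $X^{0}=1$ is trivially integrable, so one may assume $d\ge 1.$)

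Next I would split $\BE\big[|X|^{d}\big]$ according to whether $|X|\ge R$ or $|X|<R.$ On $\{|X|<R\}$ the integrand satisfies $|X|^{d}<R^{d},$ contributing at most $R^{d}.$ On the complementary event the domination above gives $|X|^{d}\le (2/|a_d|)\,|p(X)|,$ so that part is bounded by $(2/|a_d|)\,\BE[|p(X)|],$ which is finite because $p(X)\in L^{1}(\calF)$ by assumption. Adding the two contributions yields $\BE\big[|X|^{d}\big]\le R^{d}+(2/|a_d|)\,\BE[|p(X)|]<\infty,$ i.e.\ $X^{\deg(p)}\in L^{1}(\calF).$

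There is no genuine obstacle here, as the statement is elementary; the only points deserving a moment's care are phrasing the comparison as an inequality between Borel functions of a real variable before composing with $X$ (so that measurability and integrability are handled cleanly), and noting that the threshold $R$ and the constant $2/|a_d|$ depend only on $p,$ not on the law of $X.$
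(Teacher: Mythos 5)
Your proof is correct, and it takes a genuinely different route from the paper's. You argue directly: since $|p(x)|/|x|^{d}\to |a_d|$ as $|x|\to\infty,$ there is an $R$ with $|x|^{d}\le (2/|a_d|)\,|p(x)|$ on $\{|x|\ge R\},$ and splitting the expectation gives the explicit bound $\BE\big[|X|^{d}\big]\le R^{d}+(2/|a_d|)\,\BE[|p(X)|]<\infty.$ The paper instead proves the contrapositive: assuming $\BE\big[|X|^{\deg(p)}\big]=\infty,$ it takes $k$ to be the largest integer below $\deg(p)$ with $\BE\big[|X|^{k}\big]<\infty,$ decomposes $p(u)=u^{k+1}q(u)+r(u)$ with $r\in\SP_k,$ notes that $r(X)$ is integrable by monotonicity of norms, and then shows $X^{k+1}q(X)$ is non-integrable by observing that the set $\calD=\{u:|q(u)|<|a|\}$ (with $a$ the leading coefficient of $q$) is bounded. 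Both proofs ultimately rest on the same elementary fact---a polynomial dominates a positive multiple of its leading monomial outside a compact set---but your version applies it once to $p$ itself and yields a clean quantitative inequality with constants depending only on $p,$ whereas the paper's version applies it to the quotient polynomial $q$ and additionally pinpoints the first moment at which divergence occurs (the $(k+1)$-st). For the purposes of this lemma your argument is shorter and entirely sufficient; the paper's moment-ladder reduction buys nothing extra here beyond that finer localization of where integrability fails.
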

\begin{proof}
See Appendix~\ref{aae}.
\end{proof}

This lemma will allow us to conclude the finiteness of all moments of $X$ directly from the hypotheses that $\BE[X\mid Y]$ is a polynomial of degree exceeding $1$ and $\|X\|_1<\infty,$ because we have the inequalities $\| \BE[X\mid Y] \|_k \le \|X\|_k$ for every $k\ge 1.$ 

\begin{theorem} \label{np}
For $Y=X+N$ where $X$ is an integrable RV and $N\sim \calN(0,1)$ independent of $X,$ the conditional expectation $\BE[X \mid Y]$ cannot be a  polynomial in $Y$ with degree greater than 1.  Therefore, the MMSE estimator in a Gaussian channel with finite-variance input is a polynomial if and only if the input is Gaussian or constant.
\end{theorem}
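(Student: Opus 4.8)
The plan is to prove both assertions from a single tool: the classical relation expressing the conditional mean in a Gaussian channel through the score of $p_Y$, combined with Marcinkiewicz's theorem (Theorem~\ref{wr}). Since $Y=X+N$, the density $p_Y(y)=\int_\BR \frac{1}{\sqrt{2\pi}}e^{-(y-x)^2/2}\,dP_X(x)$ is smooth and \emph{strictly positive on all of }$\BR$ (differentiation under the integral sign is justified by the rapid decay of the Gaussian kernel, and the relevant integrals converge because $\BE[|X|]<\infty$). Differentiating and using $\frac{d}{dz}e^{-z^2/2}=-z\,e^{-z^2/2}$ gives the identity $\BE[X\mid Y=y]=y+(\log p_Y)'(y)$, valid for every $y$. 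If $\BE[X\mid Y]=p(Y)$ for a polynomial $p$ with $m:=\deg p\ge 2$, then (as $p_Y>0$ everywhere) the equality $p(y)=y+(\log p_Y)'(y)$ holds for all $y$, so $(\log p_Y)'(y)=p(y)-y$ is a polynomial of degree $m$. Integrating, $\log p_Y=-h$ for a polynomial $h$ with $\deg h=m+1\ge 3$ (integrability of $p_Y$ forces $\deg h$ even with positive leading coefficient, but only $\deg h\ge 3$ is used below).

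The crux is to read off the \emph{conditional} moment generating function by completing the square. Since the conditional law of $X$ given $Y=y$ is proportional to $e^{-(y-x)^2/2}\,dP_X(x)$, for every $y$ and every real $s$,
\[
    \BE\left[ e^{sX} \mid Y=y \right]
    = \frac{\int e^{sx}\,\tfrac{1}{\sqrt{2\pi}}e^{-(y-x)^2/2}\,dP_X(x)}{p_Y(y)}
    = e^{sy+s^2/2}\,\frac{p_Y(y+s)}{p_Y(y)} .
\]
Substituting $p_Y=e^{-h}$, the right-hand side equals $e^{g_y(s)}$ with $g_y(s)=sy+s^2/2+h(y)-h(y+s)$, a \emph{polynomial} in $s$ whose degree equals $\deg h=m+1\ge 3$: the top term $-h(y+s)$ has $s$-degree $m+1$ and is not cancelled by the quadratic $s^2/2$ or the linear $sy$. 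Because $X$ is nonconstant (otherwise $\BE[X\mid Y]$ would be constant, contradicting $m\ge2$), the conditional law of $X$ given $Y=y$ is nondegenerate, and the displayed formula shows its MGF is finite on all of $\BR$. Thus $g_y$ is a finite polynomial cumulant generating function of a nondeterministic distribution, so the MGF form of Marcinkiewicz's theorem (Theorem~\ref{wr} and the subsequent remark) forces $\deg_s g_y\le 2$, contradicting $\deg_s g_y\ge 3$. Hence $m\le1$, which is the first assertion.

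For the ``if and only if,'' note that if $X$ is Gaussian then $(X,Y)$ is jointly Gaussian and $\BE[X\mid Y]$ is affine, while if $X$ is constant then $\BE[X\mid Y]$ is constant; in both cases it is a polynomial. Conversely, if $\BE[X\mid Y]$ is a polynomial, the first part gives $\BE[X\mid Y=y]=\alpha y+\beta$ with degree at most $1$, so the score identity yields $\log p_Y(y)=\frac{\alpha-1}{2}y^2+\beta y+\gamma$. Integrability of $p_Y$ excludes $\alpha\ge1$, so $Y$ is Gaussian; then $\varphi_X(t)=\varphi_Y(t)e^{t^2/2}$ is the characteristic function of $\calN(\BE[X],\sigma_X^2)$ (using $\sigma_Y^2=\sigma_X^2+1\ge1$), whence $X$ is Gaussian, degenerating to a constant exactly when $\sigma_X^2=0$.

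I expect the main obstacle to be the conceptual step in the second paragraph: the completion-of-the-square identity is elementary once written, but recognizing that it converts the hypothesis ``$p_Y=e^{-\text{polynomial}}$'' into a genuine \emph{cumulant} generating function of bona fide probability laws---thereby making Marcinkiewicz applicable---is what makes the proof work. The only other point requiring care is the rigorous justification of the score identity, namely differentiation under the integral sign and finiteness of $\int x\,\tfrac{1}{\sqrt{2\pi}}e^{-(y-x)^2/2}\,dP_X(x)$; both follow from $\BE[|X|]<\infty$ together with the boundedness and rapid decay of the Gaussian kernel.
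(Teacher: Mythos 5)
Your proof is correct, and it takes a genuinely different---and leaner---route than the paper's for the main step. The two arguments share their endgame: both apply Marcinkiewicz's theorem to a Gaussian-tilted copy of $P_X$. Your conditional law of $X$ given $Y=y$ is, at $y=0$, exactly the paper's measure $Q(B)=\frac{1}{a}\int_B e^{-x^2/2}\,dP_X(x)$, and your completed-square identity $\BE[e^{sX}\mid Y=y]=e^{sy+s^2/2}\,p_Y(y+s)/p_Y(y)$ specializes at $y=0$ to the paper's equation~\eqref{sz}. The real difference lies in how the hypothesis that $\BE[X\mid Y]$ is a polynomial of degree $m\ge 2$ gets converted into ``$p_Y=e^{-h}$ with $\deg h=m+1$.'' The paper works in the Fourier domain: it derives a differential equation for $\varphi_Y$ from the orthogonality relation $\BE[e^{itY}(X-q(Y))]=0$, must first invoke Lemma~\ref{arv} to bootstrap $m$-fold integrability of $X$ so that the derivatives $\varphi_Y^{(k)}$ exist, then shows each $\varphi_Y^{(k)}$ is integrable, and finally Fourier-transforms the ODE into a first-order equation for $\widehat{\varphi_Y}=2\pi p_Y$. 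You bypass all of this with the score (Tweedie) identity $\BE[X\mid Y=y]=y+(\log p_Y)'(y)$---a fact the paper proves anyway as Lemma~\ref{ii} in the appendix, but never uses in this proof---followed by a single integration; the only analytic input is differentiation under the integral sign, justified by boundedness of $z\mapsto ze^{-z^2/2}$, with no moment assumptions beyond $\BE[|X|]<\infty$. Your converse is also more uniform: the same identity handles the constant case ($\alpha=0$ gives $\varphi_X(t)=e^{i\beta t}$) and the linear case together, whereas the paper treats the constant case via the variance formula of~\cite{DytsoDer} and re-solves its ODE for the linear one. One small point to make explicit in a final write-up: the hypothesis gives $p(y)=\BE[X\mid Y=y]$ only for $P_Y$-almost every $y$; since both sides are continuous and $p_Y>0$ everywhere (so $\supp(Y)=\BR$), the equality upgrades to every $y$, which is what licenses the integration step.
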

\begin{remark}
Note that the fact that the MMSE estimator $\BE[X\mid Y]$ is a polynomial if and only if $X$ is Gaussian or constant is derivable from the fact that $\BE[X^2 \mid Y=y] = O(y^2),$ shown in~\cite[Proposition 1.2]{Fozunbal} under the assumption that the input RV $X$ has finite variance. Theorem~\ref{np} proves this conclusion under the more general setup when $X$ is assumed to be only integrable.
\end{remark}
\begin{proof}
Suppose, for the sake of contradiction, that
\begin{equation} \label{nk}
    \BE[X \mid Y] = q(Y)
\end{equation}
for some polynomial with real coefficients $q$ of degree $m:=\deg q >1.$ The contradiction we derive will be that the probability measure defined by
\begin{equation} \label{nz}
    Q(B) := \frac{1}{a}\int_B e^{-x^2/2} \, dP_X(x)
\end{equation}
for every Borel subset $B\subset \BR,$ where $a= \BE\left[e^{-X^2/2}\right]$ is the normalization constant, would necessarily have a cumulant generating function that is a polynomial of degree $m+1>2.$ Let $R$ be a RV distributed according to $Q.$ We note that the polynomial $q$ is uniquely determined by \eqref{nk} because $Y$ is continuous, for if $q(Y)=g(Y)$ for a polynomial $g$ then the support of $Y$ must be a subset of the roots of $q-g.$ 

The proof strategy is to compute the PDF $p_Y$ in two ways. One way is to compute $p_Y$  as a convolution
\begin{equation} \label{st}
    p_Y(y) = \frac{1}{\sqrt{2\pi}} \BE\left[ e^{-(X-y)^2/2} \right].
\end{equation}
This equation shows by Lebesgue's dominated convergence that $p_Y$ is continuous. The second way to compute $p_Y$ is via the inverse Fourier transform of $\varphi_Y.$ We consider the Fourier transform that takes an integrable function $\varphi$ to $\widehat{\varphi}(y):=\int_{\mathbb{R}} \varphi(t)e^{-iy t}\, dt,$ so the inverse Fourier transform takes an integrable function $p$ to $(2\pi)^{-1}\int_{\BR} p(y)e^{ity}\, dy.$ Now, $\varphi_Y=\varphi_X\varphi_N$ is integrable; indeed, $|\varphi_X|\le 1$ and $\varphi_N(t)=e^{-t^2/2}.$ Also, being a characteristic function, $\varphi_Y$ is continuous too. Therefore, by the Fourier inversion theorem, since $\varphi_Y/(2\pi)$ is the inverse Fourier transform of $p_Y,$ we obtain that $p_Y = \widehat{\varphi_Y}/(2\pi).$ Equating this latter equation with~\eqref{st}, then multiplying both sides by $\sqrt{2\pi}e^{y^2/2}/a,$ that $R\sim Q$ (see~\eqref{nz}) implies
\begin{equation} \label{sz}
    \BE\left[ e^{Ry} \right] = \frac{1}{a\sqrt{2\pi}}e^{y^2/2} \widehat{\varphi_Y}(y).
\end{equation}
Equation \eqref{sz} holds for every $y\in \BR.$ The rest of the proof derives a contradiction by showing that $\widehat{\varphi_Y}=e^G$ for some polynomial $G$ of degree $m+1>2.$

Integrability of $X$ implies integrability of $\BE[X\mid Y],$ so for every $t\in \BR$
\begin{equation} \label{nj}
    \BE\left[ e^{itY} (X-\BE[X\mid Y]) \right] = 0.
\end{equation}
Substituting $X=Y-N$ and $\BE[X \mid Y] = q(Y)$ into \eqref{nj},
\begin{equation}
    \BE\left[ e^{itY} (Y-N-q(Y)) \right] = 0.
\end{equation}
Because the RVs $e^{itY} (Y-q(Y))$ and $e^{itY}N$ are integrable, we may split the expectation to obtain
\begin{equation} \label{nl}
    \BE\left[ e^{itY}(Y-q(Y)) \right] - \BE\left[e^{itY}N \right] = 0.
\end{equation}
We rewrite equation \eqref{nl} in terms of the characteristic functions of $Y$ and $N.$ 

Since $q(Y)$ is integrable, Lemma~\ref{arv} implies that $Y$ is $m$-times integrable. In particular, we have that $\BE\left[ \left| (X+z)^m \right|\right] <\infty$ for some $z\in \BR.$ By Lemma~\ref{arv} again, $X$ is $m$-times integrable. Hence, for each $k\in [m]$ and $Z\in \{X,N,Y\},$ that $\BE\left[ |Z|^k \right]<\infty$ implies that the $k$-th derivative $\varphi_Z^{(k)}$ exists everywhere and
\begin{equation} \label{nm}
    (-i)^k \varphi_Z^{(k)}(t) = \BE\left[ e^{itZ} Z^k \right].
\end{equation}

For the term $\BE\left[ e^{itY}N \right]$ in \eqref{nl}, plugging in $Y=X+N,$ we infer from \eqref{nm} that
\begin{equation} \label{arn}
    \BE\left[ e^{itY}N\right] = \varphi_X(t) \BE\left[ e^{itN}N\right] = -i \varphi_X(t)\varphi_N'(t).
\end{equation}
But $\varphi_N(t) = e^{-t^2/2},$ so $\varphi_N'(t) = -t \varphi_N(t),$ hence \eqref{arn} yields
\begin{equation} \label{nn}
    \BE\left[ e^{itY}N\right] = it \varphi_X(t)\varphi_N(t) = it \varphi_Y(t).
\end{equation}
Let $\alpha_k$ for $k\in [m]$ be real constants such that $q(u) = \sum_{k\in [m]} \alpha_k u^k$ identically over $\BR,$ so $\alpha_m\neq 0.$ For the first term in \eqref{nl}, utilizing \eqref{nm} repeatedly we obtain
\begin{equation} \label{no}
    \BE\left[ e^{itY}(Y-q(Y)) \right] = -i\sum_{k \in [m]} c_k \varphi_Y^{(k)}(t)
\end{equation}
where we define the constants
\begin{equation} \label{nu}
    c_k := (-i)^{k+1}\alpha_k + \delta_{1,k} = \left\{ \begin{array}{cl} 
    (-i)^{k+1}\alpha_k & \text{if } k\in [m]\setminus \{1\}, \\
    1-\alpha_1 & \text{if } k=1.
    \end{array} \right.
\end{equation}
Plugging \eqref{nn} and \eqref{no} in \eqref{nl}, we get the differential equation
\begin{equation} \label{nq}
    t \varphi_Y(t) + \sum_{k \in [m]} c_k \varphi_Y^{(k)}(t) = 0.
\end{equation}
We will transform the differential equation \eqref{nq} into a linear differential equation in the Fourier transform of $\varphi_Y.$ For this, we need first to show that for each $k\in [m]$ the derivative $\varphi_Y^{(k)}$ is integrable so that its Fourier transform is well-defined.

Now, repeated differentiation of $\varphi_Y(t)=\varphi_X(t)e^{-t^2/2}$ shows that for each $k\in [m]$ there is a polynomial $r_k$ in $k+2$ variables such that
\begin{equation} \label{si}
    \varphi_Y^{(k)}(t) = r_k\left( t, \varphi_X(t),\varphi_X'(t), \cdots,\varphi_X^{(k)}(t) \right) e^{-t^2/2}.
\end{equation}
Indeed, we start with $r_0(t,u)=u$ because $\varphi_Y(t)=\varphi_X(t)e^{-t^2/2}.$ Now, suppose \eqref{si} holds for some $k\in [m-1].$ The derivative (with respect to $t$) of the $r_k$ term is
\begin{equation}
    \frac{d}{dt} r_k\left( t, \varphi_X(t), \cdots,\varphi_X^{(k)}(t) \right) = s_k\left( t,\varphi_X(t),\cdots,\varphi_X^{(k+1)}(t) \right)
\end{equation}
for some polynomial $s_k$ in $k+3$ variables. Therefore, differentiating \eqref{si}, we get
\begin{equation}
    \varphi_Y^{(k+1)}(t) = r_{k+1}\left( t, \varphi_X(t),\varphi_X'(t), \cdots,\varphi_X^{(k+1)}(t) \right) e^{-t^2/2}
\end{equation}
where
\begin{align}
    r_{k+1}\left(t,u_0,\cdots,u_{k+1}\right) &:= s_k\left( t,u_0,\cdots,u_{k+1}\right) - t \cdot r_k\left( t,u_0,\cdots,u_k\right)
\end{align}
is a polynomial in $k+3$ variables. Therefore \eqref{si} holds for all $k\in [m].$ Now, for each $j\in [m],$ we have by \eqref{nm} the uniform bound $| \varphi_X^{(j)}(t) | \le \BE\left[ |X|^j \right].$ Therefore, for each $k\in [m],$ letting $v_k$ be the same polynomial as $r_k$ but with the coefficients replaced with their absolute values, the triangle inequality applied to~\eqref{si} yields the bound $| \varphi_Y^{(k)}(t) | \le \eta_k(t) e^{-t^2/2}$ where $\eta_k(t) := v_k\left(|t|,1,\BE[|X|],\cdots,\BE\left[|X|^k\right]\right)$ is a (positive) polynomial in $|t|.$ Since $\int_\BR \eta_k(t)e^{-t^2/2} \, dt <\infty,$ we obtain that $\varphi_Y^{(k)}$ is integrable for each $k\in [m].$

Taking the Fourier transform in the differential equation \eqref{nq} we infer
\begin{equation} \label{nt}
    i\widehat{\varphi_Y}'(y) + \widehat{\varphi_Y}(y) \sum_{k\in [m]} c_k (iy)^k = 0.
\end{equation}
We rewrite this equation in terms of the $\alpha_k$ (see~\eqref{nu}) as
\begin{equation} \label{wf}
    \widehat{\varphi_Y}'(y) - \widehat{\varphi_Y}(y) \sum_{k\in [m]} (\alpha_k-\delta_{1,k}) y^k = 0.
\end{equation}
Equation~\eqref{wf} necessarily implies
\begin{equation}
    \widehat{\varphi_Y}(y) = D \exp\left( \sum_{k\in [m]}  \frac{\alpha_k-\delta_{1,k}}{k+1} y^{k+1}\right)
\end{equation}
for some constant $D.$ Since $p_Y=\widehat{\varphi_Y}/(2\pi),$ we necessarily have $D> 0.$ Therefore, we obtain the desired form for $\widehat{\varphi_Y},$ namely, $\widehat{\varphi_Y}=e^G$ where $G\in \SP_{m+1}\setminus \SP_{m}$ is given by\footnote{It can also be shown that we necessarily have $\alpha_m<0$ and $m$ is odd, but these points are moot since we eventually have a contradiction.}
\begin{equation}
    G(y) := \sum_{k\in [m]} \frac{\alpha_k-\delta_{1,k}}{k+1} y^{k+1} + \log(D).
\end{equation}
Plugging in this formula for $\widehat{\varphi_Y}$ in~\eqref{sz}, we obtain that the cumulant-generating function of the RV $R$ is the degree-$(m+1)$ polynomial $G(y)+y^2/2-\log(a\sqrt{2\pi}),$ contradicting Marcinkiewicz's theorem that a cumulant-generating function has degree at most $2$ if it were a polynomial (see, e.g., \cite[Theorem 2.5.3]{Bryc}). This concludes the proof by contradiction that $\BE[X\mid Y]$ cannot be a polynomial of degree at least $2.$

For the second statement in the theorem, we consider the remaining two cases that $\BE[X\mid Y]$ is a linear expression in $Y$ or is a constant.  If $\BE[X\mid Y]$ is constant, then differentiating and taking the expectation in~\cite{DytsoDer}
\begin{equation} \label{sh}
    \frac{d}{dy} ~ \BE\left[ X \mid Y = y \right] = \mathrm{Var}\left[ X \mid Y=y \right]
\end{equation}
yields that $\|X-\BE[X\mid Y]\|_2=0,$ i.e., $X=\BE[X\mid Y]$ is constant. Finally, under the assumption that $X$ has finite variance, $\BE[X \mid Y]$ is linear if and only if $X$ is Gaussian (see, e.g., \cite{Guo2011}). We note that if one requires only that $X$ be integrable, then one may deduce directly from the differential equation \eqref{nq} that a linear $\BE[X\mid Y]$ implies a Gaussian $X$ in this case too, and, for completeness, we end with a proof of this fact.

Assume that $\BE[X\mid Y]=\alpha_1 Y + \alpha_0$ is linear (so $\alpha_1\neq 0$). The differential equation \eqref{nq} becomes
\begin{equation} \label{sj}
    (t-i\alpha_0)\varphi_Y(t) + (1-\alpha_1)\varphi_Y'(t) = 0.
\end{equation}
From \eqref{sj}, we see that $\alpha_1\neq 1,$ because $\varphi_Y$ is nonzero on an open neighborhood around the origin (since $\varphi_Y(0)=1$ and $\varphi_Y$ is continuous). Therefore,
\begin{equation} \label{sk}
    \varphi_Y(t) = C e^{\frac{1}{\alpha_1-1}\left( \frac12 t^2 - i \alpha_0 t \right)},
\end{equation}
for some constant $C.$ Taking $t=0$ in \eqref{sk}, we see that $C=1.$ Therefore, the characteristic function of $Y$ is equal to the characteristic function of a $\calN\left( \frac{\alpha_0}{1-\alpha_1},\frac{1}{1-\alpha_1}\right)$ random variable (by taking $t\to \infty$ in~\eqref{sk}, we get $\alpha_1<1$). In fact, since $\varphi_Y = \varphi_X \cdot \varphi_N,$ we obtain
\begin{equation}
    \varphi_X(t) = e^{-\frac12\cdot \frac{\alpha_1}{1-\alpha_1} \cdot t^2 + it \cdot \frac{\alpha_0}{1-\alpha_1}}.
\end{equation}
Taking $t\to \infty,$ we see that $\alpha_1/(1-\alpha_1)>0,$ i.e., $\alpha_1\in (0,1)$ (note that $\alpha_1\neq 0$ by the assumption that $\BE[X\mid Y]$ is linear).  Therefore, uniqueness of characteristic functions implies that $X$ is Gaussian too.
\end{proof}

\begin{remark}
The proof of Theorem~\ref{np} is included for completeness, as it is independent of the inequality shown in~\cite[Proposition 1.2]{Fozunbal} that $\BE[X^2\mid Y=y] = O(y^2).$ Alternatively, we may build on this inequality to obtain another proof of Theorem~\ref{np}, as follows. By the Cauchy-Schwarz inequality, we deduce $\BE[X \mid Y=y] = O(|y|)$ as $y\to \infty.$ Therefore, $\BE[X\mid Y]$ cannot be a polynomial of degree exceeding $1$ if $X$ has finite variance. Lemma~\ref{arv} can be used to show that the general case in which $X$ is only integrable can be reduced to the finite-variance case. So, assume that $X$ is integrable, but not necessarily of finite variance. Suppose that $\BE[X\mid Y]$ is a polynomial of degree $m \ge 2.$ Since $X$ is integrable, $\BE[X\mid Y]$ is integrable too. By Lemma~\ref{arv}, we conclude that $Y$ is $m$-times integrable. Thus, $X+y$ is $m$-times integrable for at least one $y\in \BR.$ By Lemma~\ref{arv} again, $X$ is $m$-times integrable. In particular, $X$ has finite variance, and the desired result follows by the finite-variance case.
\end{remark}

\section{Polynomial MMSE: Definition, Convergence to the MMSE, and Explicit Formulas}\label{az}

We begin by defining the PMMSE, which will be an approximation of the MMSE. We  show convergence of the PMMSE to the MMSE, and give an explicit formula for the PMMSE.

\subsection{Definition of the PMMSE}

There are two equivalent ways to define the PMMSE estimate. First, it is an orthogonal projection onto subspaces of polynomials of bounded degree. Second, it is a natural generalization of the LMMSE to higher-degree polynomials. We first expound on both views of PMMSE, and then show equivalence of the two approaches. Finally, we present a formal definition of the PMMSE in Definition \ref{ho}.

If $\SP_n(Y)\subset L^2(\sigma(Y))$ (i.e., if $\BE\left[Y^{2n}\right]<\infty$), then $\SP_n(Y),$ being a finite-dimensional subspace, is closed. It is natural then to consider the orthogonal projection onto $\SP_n(Y),$ which we temporarily denote by 
\begin{equation}
    E_n^A[\wc\mid Y] : L^2(\calF) \to \SP_n(Y).
\end{equation}
A standard result in functional analysis states that such an orthogonal projection exists and is unique (see,~e.g.,~\cite{Stein2019}). In particular, separability of $L^2(\calF)$ yields the formula 
\begin{equation} \label{hl}
    E_n^A[\wc\mid Y] = \sum_{i\in [d]} \langle \wc , p_i(Y) \rangle ~ p_i(Y)
\end{equation}
where $\{p_i(Y)\}_{i\in [d]}$ is any orthonormal basis of $\SP_n(Y)$ (so $d = \mathrm{dim}\, \SP_n(Y)-1$) \cite{Stein2019}. Recall that, restricting attention to only finite-variance RVs, the conditional expectation $\BE[\wc \mid Y]$ is an orthogonal projection operator 
\begin{equation}
    \BE[\wc \mid Y] : L^2(\calF) \to L^2(\sigma(Y)).
\end{equation}
One might hope then that denseness of polynomials,
\begin{equation}
    \overline{\bigcup_{n\in \BN} \SP_n(Y)} = L^2(\sigma(Y)),
\end{equation}
would imply convergence of the orthogonal projection operators $E_n^A[\wc \mid Y]$ to the orthogonal projection operator $\BE[\wc \mid Y].$ This result will be shown to hold in Theorem \ref{ar}.

Another point of view is to introduce higher-degree generalizations of LMMSE estimation, i.e., estimating over $\SP_n(Y)$ rather than only within $\SP_1(Y).$ More precisely, note that the $\mm(X \mid Y)$ is obtained via solving the optimization problem
\begin{equation}
    \label{eq:mmse}
    \mm(X \mid Y) = \min_{f\in  L^2(\sigma(Y))} \BE \left[ \left( X - f \right)^2 \right]
\end{equation}
where the optimization variable ranges over the whole space $L^2(\sigma(Y))$ (and the minimum is uniquely attained at $\BE[X\mid Y]$). On the other extreme, the LMMSE is obtained via
\begin{equation}
    \label{eq:lmmse}
    \LL(X \mid Y) = \min_{a,b \in \BR} \BE \left[ \left( X - (aY+b) \right)^2 \right],
\end{equation}
where the optimization space encompasses only functions that are linear in $Y$ (and the minimum is uniquely attained provided that $Y$ is nondeterministic). We consider the problems that lie in-between \eqref{eq:mmse} and \eqref{eq:lmmse}, namely, the optimization problems where the variable ranges over all  polynomials of degree at most $n$:
\begin{equation} \label{hm}
    E_n^B[X \mid Y] := \underset{p(Y)\in \SP_n(Y)}{\mathrm{argmin}} ~ \BE \left[ \left( X - p(Y) \right)^2 \right].
\end{equation}
This definition implicitly assumes the existence of a unique (in $L^2(\sigma(Y))$) minimizer of the right-hand side of \eqref{hm}. Proposition \ref{iw} states that this indeed is the case and, in fact, $E_n^A[X\mid Y]$ is the unique minimizer.

Indeed, that the two operators $E_n^A$ and $E_n^B$ coincide is a restatement of a standard result in functional analysis, which posits that orthogonal projections in Hilbert spaces are the closest elements to the projected elements \cite[Section 4.4]{Stein2019}. We state this equivalence here and, for completeness, provide a proof in the appendices.

\begin{prop} \label{iw}
Fix $n\in \BN$ and two RVs $X$ and $Y$ such that $\BE\left[X^2\right],\BE\left[Y^{2n}\right]<\infty.$ Equations \eqref{hl} and \eqref{hm} define the same operator, i.e., there exists a unique minimizer $E_n^B[X \mid Y]$ of the right hand side of \eqref{hm} and it satisfies
\begin{equation}
    E_n^A[X \mid Y] = E_n^B[X \mid Y].
\end{equation}
\end{prop}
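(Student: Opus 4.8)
The plan is to recognize Proposition~\ref{iw} as a direct instance of the Hilbert space projection theorem, organized so that both operators $E_n^A[X\mid Y]$ and $E_n^B[X\mid Y]$ are identified with the unique nearest point to $X$ in the closed subspace $\SP_n(Y).$ First I would verify that the hypotheses of the projection theorem are met. Since $\BE[X^2]<\infty$ we have $X\in L^2(\calF).$ Since $\BE[Y^{2n}]<\infty,$ each monomial $Y^k$ for $k\in [n]$ lies in $L^2(\sigma(Y)),$ so $\SP_n(Y)\subset L^2(\sigma(Y))\subset L^2(\calF)$ is a subspace of finite dimension $\dim \SP_n(Y) = \min(n+1,|\supp(Y)|).$ Being finite-dimensional, $\SP_n(Y)$ is closed in the Hilbert space $L^2(\calF).$

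Next I would invoke the projection theorem (as in \cite{Stein2019}): for a closed subspace $M$ of a Hilbert space and any point $X,$ there exists a unique $v^\ast \in M$ minimizing $\|X-v\|_2$ over $v\in M,$ and $v^\ast$ is characterized as the unique element of $M$ satisfying the orthogonality relation $\langle X-v^\ast, v\rangle = 0$ for all $v\in M.$ Applying this with $M=\SP_n(Y)$ produces a unique nearest element $v^\ast.$ By definition, the orthogonal projection operator is exactly the map sending $X$ to this nearest element, and formula~\eqref{hl} is merely one way to compute it via an orthonormal basis; hence $E_n^A[X\mid Y] = v^\ast.$

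Finally I would connect this to $E_n^B.$ Because $\BE\left[(X-p(Y))^2\right] = \|X-p(Y)\|_2^2$ and $t\mapsto t^2$ is strictly increasing on $[0,\infty),$ minimizing $\BE\left[(X-p(Y))^2\right]$ over $p(Y)\in \SP_n(Y)$ is equivalent to minimizing $\|X-p(Y)\|_2$ over the same set. Consequently the minimizer $E_n^B[X\mid Y]$ of~\eqref{hm} exists, is unique as an element of $L^2(\sigma(Y)),$ and equals $v^\ast = E_n^A[X\mid Y],$ which is the desired conclusion.

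There is no deep obstacle here—the entire content is the standard projection theorem, and the work is bookkeeping to confirm the spaces and norms line up. The one point requiring genuine care is the meaning of \emph{unique}: when $|\supp(Y)|\le n$ the polynomial $p$ attaining the minimum is not unique as a polynomial in $\SP_n$ (as in the binary-$Y$ example noted earlier, where $t\mapsto t$ and $t\mapsto t^2$ both represent the same random variable), even though the random variable $p(Y)\in \SP_n(Y)$ is unique. I would therefore state the existence and uniqueness claim at the level of elements of $L^2(\sigma(Y))$ rather than at the level of polynomials, to avoid conflating the two.
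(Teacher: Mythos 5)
Your proposal is correct, and it rests on the same underlying fact as the paper's proof---the Hilbert projection theorem---but it differs in what is proved versus cited. You invoke the projection theorem as a black box: finite-dimensionality of $\SP_n(Y)$ gives closedness, the theorem gives a unique nearest point $v^\ast$ characterized by orthogonality, and formula~\eqref{hl} computes that same point, so $E_n^A[X\mid Y]=v^\ast=E_n^B[X\mid Y].$ The paper instead unfolds the theorem's proof for self-containedness: it extends an orthonormal basis of $\SP_n(Y)$ to one of the separable space $L^2(\calF),$ shows minimality of $E_n^A[X\mid Y]$ by a Parseval-type computation comparing $\|X-E_n^A[X\mid Y]\|^2$ with $\|X-q(Y)\|^2$ for arbitrary $q(Y)\in\SP_n(Y),$ and establishes uniqueness via the parallelogram law applied to two putative minimizers. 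Your route is shorter and makes clear that nothing beyond a standard result is needed (which the paper itself concedes just before stating the proposition); the paper's route buys an elementary, self-contained argument whose ingredients (the basis-extension and the parallelogram-law trick) are reused implicitly elsewhere in the development. One small point: when you assert that \eqref{hl} ``is merely one way to compute'' the projection, it is worth noting the one-line verification that $X-\sum_{i}\langle X,p_i(Y)\rangle p_i(Y)$ is orthogonal to each $p_j(Y),$ hence to all of $\SP_n(Y),$ so that the orthogonality characterization applies. Your closing caution about stating uniqueness at the level of elements of $L^2(\sigma(Y))$ rather than of polynomials in $\SP_n$ is exactly right and mirrors the remark the paper places after the proposition.
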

\begin{proof}
See Appendix \ref{kn}.
\end{proof}

\begin{remark}
    For this equivalence to hold, we require neither density of polynomials nor that monomials be linearly independent. Further, the polynomial $E_n^A[X\mid Y]$ is both the projection of $X$ onto $\SP_n(Y)$ and, from $E_n^A[X\mid Y]=E_n^B[X\mid Y],$ this polynomial also is the unique closest element in $\SP_n(Y)$ to $X.$ However, once we write $E_n^A[X\mid Y] = \sum_{k\in [n]} b_k Y^k,$ the constants $b_k$ might not be unique. For example, if $Y$ is binary and $n=2,$ then $Y^2=Y,$ so for any $b_0,b_1,b_2\in \BR$ for which $E_n^A[X\mid Y] = b_0+b_1Y+b_2Y^2$ we also have
    \begin{align}
        E_n^A[X\mid Y] = b_0+(b_1-1)Y+(b_2+1)Y^2.
    \end{align}
    In particular, there is no unique quadratic $p\in \SP_2$ for which $E_n^A[X\mid Y]= p(Y).$ Nevertheless, in the problems of interest to us, uniqueness is attained (e.g., if $Y$ is continuous); in fact, the coefficients are unique if and only if $|\supp(Y)|>n$ holds (equivalently, $1,Y,\cdots,Y^n$ are linearly independent, i.e., there is no hyperplane in $\BR^{n+1}$ that almost surely contains $\bY^{(n)}$). 
\end{remark}

We choose the estimation-theoretic point of view to define the conditional expectation approximants.

\begin{definition}[Polynomial MMSE] \label{ho}
Fix $n\in \BN$ and two RVs $X$ and $Y$ such that both $\BE\left[X^2\right]$ and $\BE \left[Y^{2n}\right]$ are finite. We define the $n$-th order \textit{polynomial minimum mean-squared error} (PMMSE) for estimating $X$ given $Y$ by
\begin{equation}\label{ak}
\text{pmmse}_n(X\mid Y) := \underset{\bc \in \BR^{n+1}}{\min} \; \BE \left[ \left( X- \bc^T\bY^{(n)} \right)^2 \right] .
\end{equation}
We define the $n$-th order PMMSE \emph{estimate} of $X$ given $Y$ by 
\begin{equation}
    E_n[X\mid Y] := \bc^T \bY^{(n)} \in \SP_n(Y)
\end{equation}
for any minimizer $\bc \in \BR^{n+1}$ in \eqref{ak}.
\end{definition}
\begin{remark} \label{pf}
The minimum in \eqref{ak} is always attained, and the polynomials $\bc^T\bY^{(n)}$ represent the same element of $\SP_n(Y)$ for all minimizers $\bc$ of \eqref{ak}. These two facts are direct consequences of Proposition \ref{iw}.
\end{remark}
\begin{remark} \label{sa}
We may define the pointwise estimator $E_n[X\mid Y=y]$ for $y\in \supp(Y)$ by
\begin{equation} \label{rz}
    E_n[X\mid Y=y] := \sum_{j\in [n]} c_j y^j
\end{equation}
where $\bc = (c_0,\cdots,c_n)^T$ is any minimizer in~\eqref{ak}. To see that~\eqref{rz} uniquely defines $E_n[X\mid Y=y],$ note that there are two distinct polynomials $p,q\in \SP_n$ such that $E_n[X \mid Y]=p(Y)=q(Y)$ if and only if $|\supp(Y)|\le n$ (and then $\supp(Y)$ is a subset of the roots of $p-q$). Therefore, when $|\supp(Y)|>n,$ there is a unique minimizer $\bc$ in~\eqref{ak} (and the definition of $E_n[X\mid Y=y]$ in~\eqref{rz} is extended to all $y\in \BR$). If $\supp(Y)=\{y_1,\cdots,y_s\}$ for $s\le n,$ then two vectors $\bc$ and $\bd$ minimize~\eqref{ak} if and only if there is some polynomial $r\in \SP_{n-s}$ such that $\sum_{j\in [n]} (d_j-c_j)u^j = (u-y_1)\cdots (u-y_s)r(u).$ In such case, using either $\bc$ or $\bd$ in~\eqref{rz} yields the same value of $E_n[X \mid Y=y]$ for every $y\in \supp(Y).$
\end{remark}

Unlike the case of the MMSE, working with the PMMSE is tractable and allows for explicit formulas. The formula for $t \mapsto \pp_n(X \mid \sqrt{t}X+N)$ stated in Theorem \ref{iv} reveals that this mapping is a rational function of $t$ (where $N\sim \calN(0,1)$ is independent of $X$).

The first question we investigate is whether the convergence
\begin{equation}
    \lim_{n\to \infty} E_n[X\mid Y] = \BE[X\mid Y]
\end{equation}
holds. Theorem \ref{ar} (stated below) proves that this convergence holds when polynomials of $Y$ are dense in $L^2(\sigma(Y)).$

\subsection{Convergence of the PMMSE to the MMSE}

The strong convergence (i.e., in the strong operator topology) of operators $E_n[\wc \mid Y]\to \BE [\wc \mid Y]$ immediately follows from linear independence of the monomials $\{Y^k\}_{k\in \BN}$ and density of polynomials in $L^2(\sigma(Y)).$  Indeed, from the complete linearly-independent set $\{Y^k\}_{k\in \BN}$ one may construct an orthonormal basis of $L^2(\sigma(Y))$ consisting of polynomials $\{q_0(Y) \equiv 1 \} \cup \{q_k(Y)\in \SP_k(Y) \setminus \SP_{k-1}(Y)\}_{k\in \BN_{\ge1}}.$  Then, for any $X\in L^2(\calF),$ we have the expansion
\begin{equation} \label{ll}
    \BE[X \mid Y] = \sum_{k\in \BN} \left\langle \BE[X\mid Y],q_k(Y) \right\rangle q_k(Y).
\end{equation}
For each $n\in \BN,$ the term
\begin{equation} \label{lk}
    \sum_{k\in [n]} \left\langle \BE[X\mid Y],q_k(Y) \right\rangle q_k(Y)
\end{equation}
is the orthogonal projection of $\BE[X\mid Y]$ onto $\SP_n(Y),$ because $\{q_k(Y)\}_{k\in [n]}$ is an orthonormal subset of $\SP_n(Y)$ of maximum size. Since $\BE[X\mid Y]$ is the orthogonal projection of $X$ onto $L^2(\sigma(Y)),$ then the function in \eqref{lk} is the orthogonal projection of $X$ onto $\SP_n(Y),$ i.e., it is $E_n[X\mid Y].$ Therefore, \eqref{ll} implies the convergence
\begin{equation}
    \BE[X\mid Y] = \lim_{n\to \infty} E_n[X\mid Y]
\end{equation}
in $L^2(\sigma(Y)).$ Furthermore, the limit $E_n[X \mid Y]\to \BE [X \mid Y]$ yields that $\pp_n(X\mid Y)\to \mm(X \mid Y)$ as $n\to \infty.$ This follows from the orthogonality principle of $\BE[X\mid Y],$ which gives
\begin{equation}
    \pp_n(X\,|\,Y) - \mm(X\,|\, Y) = \|\BE[X \,|\, Y] - E_n[X\,|\, Y] \|_2^2.
\end{equation}

A verifiable condition ensuring denseness of polynomials is, for example, Carleman's condition on $Y.$ Moreover, linear independence of the monomials $\{Y^k\}_{k\in \BN}$ is equivalent to having $|\supp(Y)|=\infty.$ Therefore, we have the following PMMSE-to-MMSE convergence result.

\begin{theorem}\label{ar}
If $X$ and $Y$ are RVs such that $\BE [X^2]<\infty$ and Carleman's condition is satisfied by $Y,$ then
\begin{equation} \label{hp}
    \lim_{n\to \infty} E_n[X \mid Y] = \BE [X \mid Y]
\end{equation}
in the $L^2(\sigma(Y))$-norm, and
\begin{equation} \label{hq}
\lim_{n\rightarrow \infty} \pp_n(X \mid Y) = \mm(X \mid Y).
\end{equation}
\end{theorem}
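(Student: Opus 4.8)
The plan is to reduce the theorem to two ingredients: the $L^2$-convergence \eqref{hp}, which is a completeness statement about the polynomial system in $L^2(\sigma(Y))$, and the passage from \eqref{hp} to \eqref{hq}, which is a one-line consequence of the orthogonality principle. I would dispatch the second step first, since it is purely formal. Since $\BE[X^2]<\infty$ we have $\BE[X\mid Y]\in L^2(\sigma(Y))$, and writing
\[
    X - E_n[X\mid Y] = \bigl(X - \BE[X\mid Y]\bigr) + \bigl(\BE[X\mid Y] - E_n[X\mid Y]\bigr),
\]
the first summand is orthogonal to all of $L^2(\sigma(Y))$ by the orthogonality principle of conditional expectation, while the second lies in $L^2(\sigma(Y))$. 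The Pythagorean identity then gives
\[
    \pp_n(X\mid Y) = \mm(X\mid Y) + \bigl\|\BE[X\mid Y] - E_n[X\mid Y]\bigr\|_2^2,
\]
so \eqref{hq} follows from \eqref{hp} by letting $n\to\infty$.

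For \eqref{hp}, the key observation is that, by Proposition~\ref{iw}, $E_n[X\mid Y]$ is the orthogonal projection of $X$ onto the closed subspace $\SP_n(Y)$; because $\SP_n(Y)\subseteq L^2(\sigma(Y))$ and $\BE[X\mid Y]$ is the projection of $X$ onto $L^2(\sigma(Y))$, the tower property of nested projections makes $E_n[X\mid Y]$ equally the projection of the \emph{fixed} vector $\BE[X\mid Y]$ onto $\SP_n(Y)$. It therefore suffices to show that $\bigcup_n \SP_n(Y)$ is dense in $L^2(\sigma(Y))$, whence the projections of $\BE[X\mid Y]$ onto the increasing subspaces converge to it. I would split into two cases. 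If $|\supp(Y)|<\infty$, then $\dim L^2(\sigma(Y)) = |\supp(Y)|$ and the dimension formula $\dim\SP_n(Y)=\min(n+1,|\supp(Y)|)$ forces $\SP_n(Y)=L^2(\sigma(Y))$ for $n\ge |\supp(Y)|-1$, so $E_n[X\mid Y]=\BE[X\mid Y]$ exactly and there is nothing to prove. If $|\supp(Y)|=\infty$, the monomials $\{Y^k\}_{k\in\BN}$ are linearly independent in $L^2(\sigma(Y))$, so Gram--Schmidt produces orthonormal polynomials $q_k(Y)\in\SP_k(Y)\setminus\SP_{k-1}(Y)$ whose span over $k\in[n]$ is exactly $\SP_n(Y)$; the partial Fourier sum $\sum_{k\in[n]}\langle \BE[X\mid Y],q_k(Y)\rangle\,q_k(Y)$ is then precisely $E_n[X\mid Y]$.

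The crux is to show that $\{q_k(Y)\}_{k\in\BN}$ is a \emph{complete} orthonormal system, i.e.\ that polynomials are dense in $L^2(P_Y)$, and this is exactly where Carleman's condition enters. I would invoke the classical chain: Carleman's condition makes the Hamburger moment problem for $P_Y$ determinate, and determinacy implies density of polynomials in $L^2(P_Y)$ (see~\cite{Berg1981}). Transporting this through the identification $\SP_n(Y)\subset L^2(P_Y)$ yields completeness of $\{q_k(Y)\}$, so $\BE[X\mid Y]=\sum_{k\in\BN}\langle \BE[X\mid Y],q_k(Y)\rangle\,q_k(Y)$ and the partial sums $E_n[X\mid Y]$ converge to it in $L^2(\sigma(Y))$.

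The main obstacle is this density step, and it is genuinely an external input: the passage from Carleman's condition to $L^2$-density of polynomials rests on moment-determinacy theory rather than on any elementary estimate, and everything else is bookkeeping with finite-dimensional projections and the Pythagorean theorem. A secondary point demanding care is that Carleman's condition does \emph{not} by itself guarantee an infinite support—indeed a bounded finitely-supported $Y$ satisfies it, since $\BE[Y^{2n}]^{-1/(2n)}$ tends to a positive limit—so the finite-support case must be isolated and disposed of trivially, as above, before the orthonormal-polynomial machinery can be deployed.
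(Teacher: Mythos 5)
Your proof is correct and follows essentially the same route as the paper's: dispose of the finite-support case separately, use Carleman's condition as the external input guaranteeing density of polynomials in $L^2(P_Y)$, identify $E_n[X\mid Y]$ with the partial Fourier sums of $\BE[X\mid Y]$ in the Gram--Schmidt orthonormal polynomial basis, and pass from \eqref{hp} to \eqref{hq} by the orthogonality principle and the Pythagorean identity. The only cosmetic differences are that you handle the finite-support case by a dimension count rather than polynomial interpolation, and you reach the density step via moment-determinacy (Carleman $\Rightarrow$ determinate $\Rightarrow$ dense) rather than citing it directly, both of which are equivalent to what the paper does.
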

\begin{proof}
See Appendix \ref{lm}.
\end{proof}

A corollary of the PMMSE-to-MMSE convergence in Theorem \ref{ar} is the pointwise (in the SNR) convergence of the PMMSE to the MMSE in Gaussian channels (if, e.g., the input's MGF exists). Then, for each fixed $t\ge 0$ (see \eqref{lx})
\begin{equation}
    \lim_{n\to \infty} \pp_n(X,t) = \mm(X,t)
\end{equation}
provided that the MGF of $X$ exists. In fact, this convergence is uniform in the SNR $t.$ We prove this uniform convergence in Theorem \ref{an}.

\subsection{PMMSE Formula} \label{lo}

We prove next explicit PMMSE formulas that serve as the foundation for expressing information measures in terms of  moments. Even though these explicit formulas follow from standard results on orthogonal projections, we re-derive them here for the sake of completeness. We build on these formulas in Section \ref{jh} to derive rationality of $t \mapsto \pp_n(X,t)$ along with characterizations of the coefficients of this rational function. Those characterizations, in turn, will simplify the proof of consistency of the estimators for information measures introduced in Section~\ref{bb}.

\begin{theorem} \label{hw}
For two RVs $X$ and $Y$ and $n\in \BN,$ if $\BE\left[ X^2 \right]$ and $\BE\left[ Y^{2n} \right]$ are finite and $|\supp(Y)|>n,$ then the PMMSE estimator is given by
\begin{equation} \label{ip}
    E_n[X \mid Y] = \BE  \left[ X\bY^{(n)} \right]^T \bM_{Y,n}^{-1} \, \bY^{(n)},
\end{equation}
where $\bM_{Y,n} = ( \BE[Y^{i+j}] )_{(i,j)\in [n]^2}$ is the Hankel matrix of moments defined in~\eqref{jd}, and the PMMSE satisfies
\begin{equation} \label{iq}
    \pp_n(X \mid Y) = \BE\left[ X^2\right] -\BE  \left[ X\bY^{(n)} \right]^T  \bM_{Y,n}^{-1} ~ \BE  \left[ X\bY^{(n)} \right].
\end{equation}
\end{theorem}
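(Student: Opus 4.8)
The plan is to treat the definition \eqref{ak} as a finite-dimensional least-squares problem and solve it by completing the square, which simultaneously delivers the minimizer (hence the estimator \eqref{ip}) and the optimal value \eqref{iq} with no calculus. Write $\bb := \BE[X\bY^{(n)}]$ for the cross-moment vector. First I would check that all the relevant quantities are finite: since $\BE[Y^{2n}]<\infty$, Lyapunov's inequality gives $\BE[Y^{2k}]<\infty$ for every $k\in[n]$, so every entry $\BE[Y^{i+j}]$ of $\bM_{Y,n}$ (with $i+j\le 2n$) is finite; and by Cauchy--Schwarz $\BE[|XY^k|]\le \BE[X^2]^{1/2}\BE[Y^{2k}]^{1/2}<\infty$, so $\bb\in\BR^{n+1}$ is well-defined. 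Expanding the square in \eqref{ak} and using $\bM_{Y,n}=\BE[\bY^{(n)}(\bY^{(n)})^T]$ then gives, for every $\bc\in\BR^{n+1}$,
\[
    \BE\left[\left(X-\bc^T\bY^{(n)}\right)^2\right] = \BE[X^2] - 2\bb^T\bc + \bc^T\bM_{Y,n}\bc.
\]

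The crucial structural input is invertibility of $\bM_{Y,n}$. Because $|\supp(Y)|>n$, Lemma~\ref{kv} guarantees that $\bM_{Y,n}$ is positive-definite, hence invertible with symmetric inverse. Writing $\bM:=\bM_{Y,n}$, I would then complete the square,
\[
    \BE\left[\left(X-\bc^T\bY^{(n)}\right)^2\right] = \BE[X^2] - \bb^T\bM^{-1}\bb + \left(\bc-\bM^{-1}\bb\right)^T\bM\left(\bc-\bM^{-1}\bb\right),
\]
the verification of which is a one-line expansion using symmetry of $\bM^{-1}$. Since $\bM\succ 0$, the final quadratic term is nonnegative and vanishes exactly when $\bc=\bM^{-1}\bb$. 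This pins down the unique minimizer $\bc^\star=\bM_{Y,n}^{-1}\bb$ (consistent with the existence and uniqueness already granted by Proposition~\ref{iw}), and reading off the constant part yields the optimal value $\pp_n(X\mid Y)=\BE[X^2]-\bb^T\bM_{Y,n}^{-1}\bb$, which is precisely \eqref{iq}.

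Finally, to obtain \eqref{ip} I would substitute $\bc^\star$ into $E_n[X\mid Y]=(\bc^\star)^T\bY^{(n)}$ and use symmetry of $\bM_{Y,n}^{-1}$ to write $(\bc^\star)^T\bY^{(n)} = \bb^T\bM_{Y,n}^{-1}\bY^{(n)} = \BE[X\bY^{(n)}]^T\bM_{Y,n}^{-1}\bY^{(n)}$. Here I should remark that the hypothesis $|\supp(Y)|>n$ makes $\bc^\star$ genuinely unique in $\BR^{n+1}$ (not merely unique as an element of $\SP_n(Y)$), so $E_n[X\mid Y]$ is unambiguously the stated polynomial. I do not anticipate a serious obstacle: the whole argument is the normal-equations computation, and the only two points requiring care are the integrability bookkeeping in the first step and the appeal to Lemma~\ref{kv} for invertibility, without which $\bM_{Y,n}^{-1}$ would not exist and the formulas \eqref{ip}--\eqref{iq} would be meaningless.
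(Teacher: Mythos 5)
Your proof is correct, but it takes a different route from the proof the paper gives in the main text. The paper's primary argument is geometric: it invokes Lemma~\ref{io} to produce an explicit orthonormal basis of $\SP_n(Y)$, namely the entries of $\bM_{Y,n}^{-1/2}\bY^{(n)}$ (via the Cholesky factor), and then reads off \eqref{ip} from the general Hilbert-space expansion \eqref{hl} of an orthogonal projection, obtaining \eqref{iq} afterwards by expanding $\BE[(X-E_n[X\mid Y])^2]$. You instead work entirely at the level of the finite-dimensional quadratic form: expand $\BE[(X-\bc^T\bY^{(n)})^2]=\BE[X^2]-2\bb^T\bc+\bc^T\bM_{Y,n}\bc$, use Lemma~\ref{kv} for positive-definiteness, and complete the square, which delivers the unique minimizer $\bc^\star=\bM_{Y,n}^{-1}\bb$ and the optimal value \eqref{iq} in a single step. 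This is closest in spirit to the paper's \emph{alternative} proof in Appendix~\ref{lq}, which solves the same least-squares problem by differentiation (gradient, Hessian, strict convexity); your version is that argument made calculus-free, and it has the minor advantage of producing the minimizer and the minimum simultaneously, as well as giving uniqueness of the coefficient vector in $\BR^{n+1}$ directly from $\bM_{Y,n}\succ 0$ without appealing to Proposition~\ref{iw}. What the paper's geometric proof buys in exchange is reuse of the projection framework: identifying $E_n[\wc\mid Y]$ as an orthogonal projection with an explicit orthonormal basis is what later powers the operator-theoretic properties (Propositions~\ref{ro} and~\ref{me}) and the convergence argument in Theorem~\ref{ar}. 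Your integrability bookkeeping (Lyapunov for the Hankel entries, Cauchy--Schwarz for the cross-moments) is also sound and is a point the paper largely leaves implicit.
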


Equations \eqref{ip} and \eqref{iq} yield the relation
\begin{equation} \label{jv}
    \pp_n(X\mid Y) = \BE\left[X^2\right] - \BE\left[ X E_n[X \mid Y] \right].
\end{equation}
To expound on the expressions in \eqref{ip} and \eqref{iq}, we derive next explicit formulas for the PMMSE and PMMSE estimate for $n=1$ and $n=2.$ By definition of the PMMSE, these expressions recover the LMMSE (and LMMSE estimate) and ``quadratic" MMSE (and ``quadratic" MMSE estimate).

\begin{example}
For $n=1,$ we have from \eqref{ip} that
\begin{equation}
    E_1[X \mid Y] = \left( \BE[X],\BE[XY] \right) \left( \begin{array}{cc}
    1 & \BE[Y] \\
    \BE[Y] & \BE\left[Y^2 \right] 
    \end{array} \right)^{-1}  \left( \begin{array}{c}
    1 \\
    Y
    \end{array} \right).
\end{equation}
Computing the matrix inverse and multiplying out, we obtain
\begin{equation} \label{ju}
    E_1[X\mid Y] = \BE[X] + \frac{\mathrm{cov}(X,Y)}{\sigma_Y^2} \left( Y- \BE[Y] \right),
\end{equation}
where $\mathrm{cov}(X,Y):= \BE[XY]-\BE[X]\BE[Y]$ is the covariance between $X$ and $Y.$ Formula \eqref{ju} indeed gives the LMMSE estimate. Via the relation in \eqref{jv}, we recover
\begin{equation} \label{jw}
    \pp_1(X\mid Y) = \sigma_X^2 - \frac{\mathrm{cov}(X,Y)^2}{\sigma_Y^2} = \sigma_X^2 (1-\rho_{X,Y}^2),
\end{equation}
where $\rho_{X,Y}:= \mathrm{cov}(X,Y)/(\sigma_X\sigma_Y)$ is the Pearson correlation coefficient between $X$ and $Y.$ Formula \eqref{jw} verifies that $\pp_1(X \mid Y)$ is the LMMSE. \hfill \qedsymbol
\end{example}

\begin{example}
We will use the notation
\begin{equation}
    \calY_k := \BE\left[Y^k\right]
\end{equation}
for short. For $n=2,$ \eqref{ip} implies that
\begin{equation}
    E_2[X \mid Y] 
    = \left( \BE[X],\BE[XY],\BE[XY^2] \right) \left( \begin{array}{ccc}
    1 & \calY_1 & \calY_2 \\
    \calY_1 & \calY_2 & \calY_3 \\
    \calY_2 & \calY_3 & \calY_4
    \end{array} \right)^{\hspace{-1mm}-1}  \left( \begin{array}{c} 
    1 \\
    Y \\
    Y^2
    \end{array} \right).
\end{equation}
Multiplying out, we obtain that $E_2[X\mid Y]$ is the quadratic
\begin{equation}
    E_2[X\mid Y] = \frac{\alpha_0}{\delta} + \frac{\alpha_1}{\delta}Y+\frac{\alpha_2}{\delta}Y^2
\end{equation}
where the values of $\alpha_0,\alpha_1,\alpha_2,$ and $\delta$ are
\begin{align}
    \alpha_0 &= (\calY_2\calY_4 - \calY_3^2)\BE[X] +  (\calY_2\calY_3-\calY_1\calY_4)\BE[XY] + (\calY_1\calY_3-\calY_2^2)\BE[XY^2] \\
    \alpha_1 &= (\calY_2\calY_3-\calY_1\calY_4)\BE[X]  + (\calY_4-\calY_2^2)\BE[XY] + (\calY_1 \calY_2 - \calY_3) \BE[XY^2] \\
    \alpha_2 &= (\calY_1\calY_3-\calY_2^2)\BE[X] + (\calY_1\calY_2-\calY_3) \BE[XY] + (\calY_2 - \calY_1^2) \BE[XY^2]
\end{align}
and
\begin{equation} \label{kd}
    \delta =  \calY_2\calY_4 - \calY_1^2 \calY_4 - \calY_2^3 - \calY_3^2 + 2 \calY_1 \calY_2 \calY_3 .
\end{equation}
Note that $\delta=\det \bM_{Y,2}.$ It is not immediately clear that $\delta\neq0,$ but we show in Lemma \ref{kv} that invertibility of $\bM_{Y,n}$ (for any $n\in \BN$) is equivalent to the condition $|\supp(Y)|>n.$ Equation \eqref{jv} then yields the formula
\begin{equation}
\pushQED{\qed} 
    \pp_2(X\mid Y) = \BE\left[X^2\right]-\delta^{-1} \sum_{k=0}^2 \alpha_k \BE\left[ XY^k \right]. \qedhere
\popQED
\end{equation}
\end{example}

We derive Theorem \ref{hw} in two ways according to how the PMMSE estimate is interpreted: as an orthogonal projection onto polynomial subspaces or as a minimizer of $L^2$-distance. For both of these proofs, and for other arguments in the sequel, we need the following straightforward result on invertibility of the Hankel matrix of moments.

\begin{lemma} \label{kv}
For a RV $Y$ and $n\in \BN$ such that $\BE[Y^{2n}]<\infty,$ the Hankel matrix of moments $\bM_{Y,n}$ is invertible if and only if $|\supp(Y)|>n.$
\end{lemma}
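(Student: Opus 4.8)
The plan is to recognize $\bM_{Y,n}=\BE\left[\bY^{(n)}\left(\bY^{(n)}\right)^T\right]$ as the Gram matrix of the random variables $1,Y,\cdots,Y^n$ with respect to the inner product of $L^2(\sigma(Y))$, and then to translate invertibility of this Gram matrix into linear independence of the monomials, which in turn is governed by the size of $\supp(Y)$. The finiteness hypothesis $\BE[Y^{2n}]<\infty$ guarantees that the entries of $\bM_{Y,n}$ are finite and that $\bY^{(n)}\in L^2(\sigma(Y))$, so all the inner products below are well-defined.

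First I would note that $\bM_{Y,n}$ is symmetric positive-semidefinite, since for every $\bc\in \BR^{n+1}$ one has $\bc^T\bM_{Y,n}\bc=\BE\left[\left(\bc^T\bY^{(n)}\right)^2\right]\ge 0$. For such a matrix, the standard fact that the quadratic form vanishes exactly on the kernel shows that $\bM_{Y,n}$ fails to be invertible precisely when there is a nonzero $\bc$ with $\bc^T\bM_{Y,n}\bc=0$, i.e., with $\bc^T\bY^{(n)}=0$ almost surely. Writing $p(u)=\sum_{k\in [n]} c_k u^k$, this says exactly that there is a \emph{nonzero} polynomial $p$ of degree at most $n$ with $p(Y)=0$ a.s.

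The crux is then the equivalence between the existence of such a $p$ and the condition $|\supp(Y)|\le n$. For one direction, if $p(Y)=0$ a.s.\ then the zero set $Z:=\{u\in\BR ~;~ p(u)=0\}$ is closed (indeed finite, with at most $n$ elements, since $p\not\equiv 0$ has degree at most $n$) and satisfies $P_Y(Z)=1$; by minimality of the support as the smallest closed set of full $P_Y$-measure, $\supp(Y)\subseteq Z$, whence $|\supp(Y)|\le n$. For the converse, if $\supp(Y)=\{y_1,\cdots,y_s\}$ with $s\le n$, I would take $p(u)=\prod_{i=1}^s (u-y_i)$, a nonzero polynomial of degree $s\le n$ vanishing on $\supp(Y)$; since $P_Y(\supp(Y))=1$, this gives $p(Y)=0$ a.s., hence a nonzero null vector $\bc$ for $\bM_{Y,n}$.

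Combining these yields that $\bM_{Y,n}$ is singular if and only if $|\supp(Y)|\le n$, which is the contrapositive of the claimed equivalence. I do not anticipate a serious obstacle; the one point requiring care is the measure-theoretic step identifying an almost-surely-zero polynomial with one that vanishes pointwise on the (closed, full-measure) support, for which the characterization of $\supp(Y)$ in~\eqref{im} as the smallest closed set of full $P_Y$-measure is exactly the tool needed.
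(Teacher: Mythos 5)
Your proposal is correct and follows essentially the same route as the paper: both identify $\bM_{Y,n}$ as a positive-semidefinite Gram matrix, characterize singularity by the existence of a nonzero $\bc$ with $\bc^T\bY^{(n)}=0$ almost surely, and translate this into the condition $|\supp(Y)|\le n.$ The only difference is that you spell out the final equivalence (via the zero set of a nonzero polynomial of degree at most $n$ and the explicit interpolating polynomial $\prod_{i=1}^s(u-y_i)$) which the paper's proof states without detail.
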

\begin{proof}
See Appendix \ref{ln}.
\end{proof}

The next lemma outlines an orthonormal basis for $\SP_n(Y).$ This basis simplifies the proof of Theorem \ref{hw} by allowing the use of standard orthogonal projection results in Hilbert spaces.

\begin{lemma} \label{io}
For a RV $Y$ and $n\in \BN$ such that $\BE[Y^{2n}]<\infty$ and $|\supp(Y)|>n,$ an orthonormal basis for $\SP_n(Y)$ is given by the entries of the random vector $\bM_{Y,n}^{-1/2} \bY^{(n)}$ (see~\eqref{lr} for the definition of $\bM_{Y,n}^{-1/2}$).
\end{lemma}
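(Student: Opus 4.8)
The plan is to verify three things: that each entry of $\bZ := \bM_{Y,n}^{-1/2}\bY^{(n)}$ lies in $\SP_n(Y)$, that these entries are orthonormal in $L^2(\sigma(Y))$, and that there are exactly $\dim \SP_n(Y)$ of them so that they span the space. First I would invoke Lemma~\ref{kv}: the hypothesis $|\supp(Y)|>n$ makes $\bM_{Y,n}$ invertible, and being a Gram matrix it is positive semidefinite, hence positive definite; thus its Cholesky factor $\bM_{Y,n}^{1/2}$ exists, is lower triangular with positive diagonal, and is invertible, so $\bM_{Y,n}^{-1/2}=(\bM_{Y,n}^{1/2})^{-1}$ is well-defined. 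The finiteness $\BE[Y^{2n}]<\infty$ guarantees $1,Y,\dots,Y^n\in L^2(\sigma(Y))$ and that all entries of $\bM_{Y,n}$ are finite, so every object below makes sense. Since $\bM_{Y,n}^{-1/2}$ is a constant real $(n+1)\times(n+1)$ matrix, each entry of $\bZ$ is a fixed real-linear combination of the monomials $1,Y,\dots,Y^n$, hence an element of $\SP_n(Y)$; this settles the first point.

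For orthonormality I would compute the Gram matrix of the entries of $\bZ$ directly. Writing the inner product as $\langle f,g\rangle = \BE[fg]$, the matrix of pairwise inner products is $\BE[\bZ\bZ^T]$, and pulling the constant matrices outside the expectation gives
\begin{equation}
\BE[\bZ\bZ^T] = \bM_{Y,n}^{-1/2}\,\BE\!\left[\bY^{(n)}(\bY^{(n)})^T\right](\bM_{Y,n}^{-1/2})^T = \bM_{Y,n}^{-1/2}\,\bM_{Y,n}\,(\bM_{Y,n}^{-1/2})^T.
\end{equation}
Substituting the defining relation $\bM_{Y,n}=\bM_{Y,n}^{1/2}(\bM_{Y,n}^{1/2})^T$ from~\eqref{ls} and using $(\bM_{Y,n}^{-1/2})^T=((\bM_{Y,n}^{1/2})^T)^{-1}$, the right-hand side telescopes:
\begin{equation}
\bM_{Y,n}^{-1/2}\,\bM_{Y,n}\,(\bM_{Y,n}^{-1/2})^T = \left[(\bM_{Y,n}^{1/2})^{-1}\bM_{Y,n}^{1/2}\right]\left[(\bM_{Y,n}^{1/2})^T\big((\bM_{Y,n}^{1/2})^T\big)^{-1}\right] = \bI_{n+1}.
\end{equation}
Hence, denoting the entries of $\bZ$ by $Z_0,\dots,Z_n$, we get $\langle Z_i,Z_j\rangle=\delta_{i,j}$; that is, the entries of $\bZ$ are orthonormal.

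Finally, orthonormal vectors are linearly independent, so the entries of $\bZ$ form a linearly independent subset of $\SP_n(Y)$ of size $n+1$. Because $|\supp(Y)|>n$ gives $\dim\SP_n(Y)=\min(n+1,|\supp(Y)|)=n+1$, this set is a maximal linearly independent family and therefore a basis; combined with the orthonormality just established, it is an orthonormal basis of $\SP_n(Y)$. (Equivalently, the invertibility of $\bM_{Y,n}^{-1/2}$ shows that the entries of $\bZ$ and those of $\bY^{(n)}$ span the same subspace, which is $\SP_n(Y)$ by definition.) I expect no genuine obstacle here: the only points requiring care are recording why $\bM_{Y,n}$ is positive definite, so that the Cholesky factor $\bM_{Y,n}^{1/2}$ exists, and keeping the transposes straight in the telescoping computation of the Gram matrix.
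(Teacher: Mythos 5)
Your proposal is correct and follows essentially the same route as the paper's proof: invoke Lemma~\ref{kv} for invertibility, compute $\BE[\bZ\bZ^T]=\bM_{Y,n}^{-1/2}\bM_{Y,n}(\bM_{Y,n}^{-1/2})^T=\bI_{n+1}$ via the Cholesky relation~\eqref{ls}, and conclude spanning from the invertibility of $\bM_{Y,n}^{-1/2}$ (your dimension-count phrasing is an equivalent way to finish, and you note the paper's spanning argument parenthetically).
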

\begin{proof}
See Appendix \ref{lp}.
\end{proof}
\begin{remark}
This basis is the output of Gram-Schmidt orthonormalization applied to $\{Y^j\}_{j\in [n]}.$
\end{remark}

We are now ready to prove a preliminary formula for the PMMSE in view of the general expansion of orthogonal projections \eqref{hl}. 

\begin{proof}[Proof (of Theorem \ref{hw})]
By assumption on $Y,$ Lemma \ref{io} yields that the entries of the random vector $\bV := \bM_{Y,n}^{-1/2}\bY^{(n)}$ form an orthonormal basis of $\SP_n(Y).$ The expansion \eqref{hl} of orthogonal projections yields the formula
\begin{equation}
    E_n[X \mid Y] = \BE\left[ X \bV^T \right] \bV.
\end{equation}
Substituting $\bV= \bM_{Y,n}^{-1/2}\bY^{(n)}$ we obtain \eqref{ip}, i.e.,
\begin{equation*}
    E_n[X \mid Y] = \BE  \left[ X\bY^{(n)} \right]^T \bM_{Y,n}^{-1} \, \bY^{(n)}.
\end{equation*}
Then, expanding $\pp_n(X\mid Y)=\BE [(X-E_n[X\mid Y])^2],$ we obtain \eqref{iq}, i.e.,
\begin{equation*}
    \pp_n(X \mid Y) = \BE \left[ X^2\right]-\BE  \left[ X\bY^{(n)} \right]^T \hspace{-1mm}\bM_{Y,n}^{-1} \BE  \left[ X\bY^{(n)} \right],
\end{equation*}
and the proof is complete.
\end{proof}
\begin{remark}
The formulas in Theorem \ref{hw} were derived via geometric arguments, but we note that an alternative, analytic derivation directly utilizing Definition \ref{ho} is possible. This proof is via differentiation with respect to the polynomial coefficients in $E_n[X \mid Y]$ in the same way as the LMMSE is usually derived. We provide this analytic proof in Appendix \ref{lq}. 
\end{remark}

The PMMSE formula proved in Theorem \ref{hw} will aid in proving properties of the PMMSE in Section \ref{ca}. Moreover, Theorem \ref{hw} will be instantiated for the Gaussian channel in Section \ref{jh}. This specialization will be used to develop new representation of information measures in terms of moments.

\subsection{Connection to Polynomial Regression}

The goal of single-variable polynomial regression is to model a RV $X$ as a polynomial in a RV $Y$
\begin{equation}
    X = \beta_0+\beta_1Y+\cdots+\beta_nY^n+\varepsilon,
\end{equation}
where $\varepsilon$ is a RV capturing the modeling error. Here, the $\beta_j$ are constants to be determined from data. Given access to samples $\{(x_i,y_i)\}_{i=1}^m,$ this model can be estimated via the equation
\begin{equation}
    \bX = \bY\bbeta + \bvarepsilon,
\end{equation}
where $\bX = (x_1,\cdots,x_m)^T,$ $\bY=(y_i^j)_{i\in \{1,\cdots,m\},j\in [n]},$ $\bvarepsilon = (\varepsilon_1,\cdots,\varepsilon_m)^T$ where the $\varepsilon_j$ are samples from $\varepsilon,$ and $\bbeta=(\beta_0,\cdots,\beta_n)^T.$ It is assumed that the number of distinct $y_i$ is strictly larger than $n.$ A value of $\bbeta$ that minimizes $\|\bvarepsilon\|$ is
\begin{equation} \label{rk}
    \bbeta = \bX^T\bY (\bY^T\bY)^{-1}.
\end{equation}
Indeed, this formula follows from the PMMSE estimate formula in Theorem \ref{hw}, namely,
\begin{equation} \label{rl}
    \bbeta = \BE\left[ U \bV^{(n)} \right] \bM_{V,n}^{-1}
\end{equation}
where we introduce RVs $U$ and $V$ such that $(U,V)$ is uniform over $\{(x_i,y_i)\}_{i=1}^m.$ It immediately follows that
\begin{equation} \label{rm}
    \bX^T\bY = m\BE\left[ U \bV^{(n)} \right]^T
\end{equation}
and
\begin{equation} \label{rn}
    (\bY^T\bY)^{-1} = \frac{1}{m}\bM_{V,n}^{-1}.
\end{equation}
Multiplying \eqref{rm} and \eqref{rn}, we obtain \eqref{rk} in view of~\eqref{rl}. Therefore, the polynomial regression approach solves the restricted problem of finding the PMMSE when both $X$ and $Y$ are discrete with probability mass functions that evaluate to rational numbers, i.e., when the distribution of $(X,Y)$ is uniform over a dataset $\{ (x_i,y_i) \}_{i=1}^m.$

\section{Basic Properties of the PMMSE} \label{ca}

In this section, we investigate the behavior of the PMMSE under affine transformations and exhibit a few additional properties of the PMMSE that parallel those of the MMSE. 

\subsection{PMMSE and affine Transformations}

We discuss next the effect of affine transformations on the PMMSE. Recall that for RVs $X$ and $Y$ such that $\BE[X^2]<\infty,$ and for constants $\alpha,\beta\in \BR,$ we have~\cite{Guo2005}
\begin{equation} \label{jy}
    \mm(X+\alpha \mid Y+\beta) = \mm(X\mid Y),
\end{equation}
i.e., the MMSE is shift-invariant in both entries. Also, if $\beta \neq 0,$ then
\begin{equation} \label{jz}
    \mm(\alpha X \mid \beta Y) = \alpha^2 \mm(X\mid Y).
\end{equation}
These two properties of the MMSE hold, in essence, because $\mm(\wc \mid Y)$ measures the distance to the space of Borel-measurable functions of $Y,$ which is invariant under (injective) affine transformations of $Y,$ i.e., $\calM(\sigma(a Y+b)) = \calM(\sigma(Y))$ for $a,b\in \BR$ with $a\neq 0.$ These properties still hold when the search space is restricted to only the subspace of polynomials in $Y$ of a certain degree, i.e., the $\pp_n(\wc \mid Y)$ measures the distance to $\SP_n(Y)$ and $\SP_n(a Y+b) = \SP_n(Y)$ for $a,b \in \BR$ with $a\neq 0.$ Therefore, the  two properties in \eqref{jy} and \eqref{jz} also hold for the PMMSE. The following proposition follows directly from Definition \ref{ho}, and we note that appealing to formula \eqref{iq} yields a shorter proof.

\begin{prop} \label{hn}
Let $X$ and $Y$ be two RVs and $n\in \BN,$ and assume that both $\BE\left[X^2\right]$ and $\BE\left[Y^{2n}\right]$ are finite. For any $(\alpha,\beta)\in \BR^2,$ 
\begin{equation}\label{ai}
\pp_n(X+\alpha \mid Y+\beta)=\pp_n(X \mid Y)
\end{equation}
and, when $\beta \neq 0,$
\begin{equation}\label{aj}
\pp_n(\alpha X \mid \beta Y)=\alpha^2 \, \pp_n(X \mid Y).
\end{equation} 
\end{prop}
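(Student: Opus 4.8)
The plan is to argue directly from the variational characterization in Definition~\ref{ho}, writing
\begin{equation*}
    \pp_n(X\mid Y) = \min_{q(Y)\in \SP_n(Y)} \BE\left[ \left( X - q(Y) \right)^2 \right],
\end{equation*}
whose minimum is attained by Remark~\ref{pf}. The single structural fact driving both identities is the invariance of the feasible set under injective affine reparametrizations of the conditioning variable, namely $\SP_n(aY+b) = \SP_n(Y)$ whenever $a\neq 0.$ I would establish this first: for $q\in \SP_n$ the composition $u\mapsto q(au+b)$ is again a polynomial of degree $\le n$ (here $a\neq 0$ is used), so $q(aY+b)\in \SP_n(Y);$ conversely any $\tilde q(Y)\in\SP_n(Y)$ is realized as $q(aY+b)$ by taking $q(u):=\tilde q\bigl((u-b)/a\bigr),$ so the two sets coincide. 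Combined with the fact that $\SP_n(Y)$ is a vector space---in particular closed under adding constants and under scalar multiplication---this is all the machinery that is needed.

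For the shift identity~\eqref{ai}, I would use $\SP_n(Y+\beta)=\SP_n(Y)$ to express $\pp_n(X+\alpha\mid Y+\beta)$ as a minimization over $q(Y)\in\SP_n(Y),$ and then absorb the constant $\alpha$ through the substitution $\tilde q(Y):=q(Y)-\alpha.$ Since $\SP_n(Y)$ contains the constants, $q(Y)\mapsto q(Y)-\alpha$ maps $\SP_n(Y)$ bijectively onto itself, so the objective $\BE[(X+\alpha-q(Y))^2]$ becomes $\BE[(X-\tilde q(Y))^2]$ over the very same feasible set, yielding $\pp_n(X\mid Y).$ For the scaling identity~\eqref{aj} with $\beta\neq 0,$ I would use $\SP_n(\beta Y)=\SP_n(Y)$ and dispatch $\alpha$ separately: the case $\alpha=0$ is immediate (both sides vanish), while for $\alpha\neq 0$ the substitution $\tilde q(Y):=\alpha^{-1}q(Y)$ is a bijection of $\SP_n(Y)$ (scalar multiplication by $\alpha^{-1}\neq 0$), so factoring $\alpha^2$ out of $\BE[(\alpha X-q(Y))^2]=\alpha^2\,\BE[(X-\alpha^{-1}q(Y))^2]$ gives $\alpha^2\,\pp_n(X\mid Y).$

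I do not expect a genuine obstacle here; the entire content is the invariance $\SP_n(aY+b)=\SP_n(Y),$ and the only point requiring care is verifying that each reparametrization of the optimization variable maps the feasible set $\SP_n(Y)$ bijectively onto itself, so that passing to the minimum commutes with the substitution. An alternative route is to invoke formula~\eqref{iq} directly: writing $\bigl((aY+b)^j\bigr)_{j\in[n]} = A\,\bY^{(n)}$ for the invertible lower-triangular change-of-basis matrix $A$ with $A_{jk}=\binom{j}{k}a^k b^{j-k},$ one obtains $\bM_{aY+b,n}=A\bM_{Y,n}A^T$ and $\BE[X\,A\bY^{(n)}]=A\,\BE[X\bY^{(n)}],$ whereupon the quadratic form in~\eqref{iq} is unchanged by cancellation of $A,$ cleanly handling the transformation of $Y.$ This route, however, requires the nondegeneracy hypothesis $|\supp(Y)|>n$ of Theorem~\ref{hw} and still needs the $X$-side transformation tracked through $\BE[X^2]$ and $\BE[X\bY^{(n)}],$ together with a separate (easy) degenerate case $|\supp(Y)|\le n$ in which $\SP_n(Y)=L^2(\sigma(Y))$ forces $\pp_n=\mm$ and the claim reduces to the known MMSE identities~\eqref{jy} and~\eqref{jz}; the variational argument above avoids this casework entirely.
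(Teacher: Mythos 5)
Your proof is correct and takes essentially the same approach as the paper's: the paper's appendix proof likewise argues variationally from Definition~\ref{ho} that the set of error random variables is unchanged under the affine reparametrizations, only implementing the invariance in coefficient coordinates (an invertible lower-triangular matrix $\bM$ for the shift, a diagonal matrix $\bL$ for the scaling) rather than at the level of the subspace $\SP_n(Y)\subset L^2(\calF)$ as you do. The identity $\SP_n(aY+b)=\SP_n(Y)$ you build on is exactly the paper's own explanation preceding the proposition, and your alternative route through formula~\eqref{iq} matches the paper's remark that appealing to that formula yields a shorter proof.
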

\begin{proof}
    See Appendix \ref{kb}.
\end{proof}

The behavior of the PMMSE under affine transformations shown in Proposition \ref{hn} has desirable implications on the approximations we introduce in Sections \ref{jj} and \ref{ba} for differential entropy and mutual information. For example, recall that differential entropy satisfies
\begin{equation} \label{ka}
    h(aY+b) = h(Y) + \log|a|
\end{equation}
for any constants $a$ and $b$ with $a\neq 0.$ Because of Proposition \ref{hn}, the same property in \eqref{ka} holds for the approximations $h_n$ in \eqref{kt}  (also see Proposition \ref{js}) for differential entropy, i.e.,
\begin{equation}
    h_n(aY+b) = h_n(Y) + \log|a|.
\end{equation}

\subsection{Operator Properties}

The operator $E_n[\wc \mid Y]$ satisfies several properties analogously to the conditional expectation $\BE[\wc \mid Y].$ We note that $E_n[\wc \mid Y]$ is not in general a conditional expectation operator itself, i.e., there are some $n\in \BN$ and $Y\in L^{2n}(\calF)$ such that for every sub-$\sigma$-algebra $\Sigma\subset \calF$ we have $E_n[\wc \mid Y] \neq \BE[\wc \mid \Sigma].$ One way to see this is that $E_n[\wc \mid Y]$ might not preserve positivity. For example, if $X\sim \mathrm{Unif}(0,1)$ and $Y=X+N$ for $N\sim \calN(0,1)$ independent of $X,$ we have that $E_1[X\mid Y] = (Y+6)/13$ (see~\eqref{ju}). Therefore, the probability that $E_1[X \mid Y]<0$ is $P_Y((-\infty,-6))>0.$ In other words, although $X$ is non-negative, $E_1[X \mid Y]$ is not; in contrast, $\BE[X \mid \Sigma]$ is non-negative for every sub-$\sigma$-algebra $\Sigma \subset \calF.$

Since $E_n[\wc \mid Y]$ is an orthogonal projection, it satisfies the following operator properties~\cite[Section 4.4]{Stein2019}.

\begin{prop} \label{ro}
Let $n\in \BN,$ and fix a RV $Y$ such that $\BE [Y^{2n}]<\infty$ and $|\supp(Y)|>n.$ The mapping $E_n[\wc \mid Y]:L^2(\calF)\to \SP_n(Y)$
\begin{enumerate}[label=(\roman*)]
    \item is the unique orthogonal projection onto $\SP_n(Y),$
    \item is a self-adjoint, idempotent, bounded linear operator, and
    \item has operator norm 1.
\end{enumerate}
\end{prop}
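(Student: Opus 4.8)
The plan is to establish each of the three assertions directly from the fact, recorded in Proposition~\ref{iw} and Lemma~\ref{io}, that $E_n[\wc \mid Y]$ is the orthogonal projection of $L^2(\calF)$ onto the finite-dimensional subspace $\SP_n(Y)$, and then invoke the standard Hilbert-space theory of orthogonal projections. The hypotheses $\BE[Y^{2n}]<\infty$ and $|\supp(Y)|>n$ are exactly what is needed: the former guarantees $\SP_n(Y)\subset L^2(\sigma(Y))$ so that the projection lands in $L^2(\calF)$, and the latter (via Lemma~\ref{kv}) makes $\bM_{Y,n}$ invertible so that $\SP_n(Y)$ is a genuine $(n+1)$-dimensional, hence closed, subspace admitting the orthonormal basis furnished by Lemma~\ref{io}.

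For part (i), I would recall that in a Hilbert space the orthogonal projection onto a closed subspace is unique: any operator $P$ satisfying $\mathrm{ran}(P)\subseteq \SP_n(Y)$ and $f - Pf \perp \SP_n(Y)$ for all $f$ must agree with the projection. Since $E_n[\wc\mid Y]$ was defined (Definition~\ref{ho}, Proposition~\ref{iw}) precisely as this projection, uniqueness is immediate. For part (ii), I would use the orthonormal basis $\{p_i(Y)\}_{i\in[d]}$ from Lemma~\ref{io} together with the explicit formula~\eqref{hl} to read off each property: linearity and boundedness follow from the finite sum of bounded functionals $\langle \wc, p_i(Y)\rangle$; idempotence $E_n^2 = E_n$ follows because each $p_i(Y)\in\SP_n(Y)$ is fixed by the projection and $\langle p_i(Y),p_j(Y)\rangle = \delta_{ij}$; self-adjointness follows from the symmetric expression $\langle E_n f, g\rangle = \sum_i \langle f,p_i(Y)\rangle\langle g,p_i(Y)\rangle = \langle f, E_n g\rangle$.

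For part (iii), the operator norm is at most $1$ because the projection cannot increase $L^2$-norm (Bessel/Pythagoras: $\|E_n f\|_2^2 + \|f - E_n f\|_2^2 = \|f\|_2^2$), and it is at least $1$ because $E_n$ fixes every nonzero element of $\SP_n(Y)$; since $|\supp(Y)|>n\ge 0$ forces $\SP_n(Y)$ to contain nonzero elements (e.g.\ the constant $1$), the norm equals exactly $1$. Honestly, there is no real obstacle here: the statement is a textbook consequence of $E_n[\wc\mid Y]$ being an orthogonal projection, and the paper itself cites \cite[Section 4.4]{Stein2019} for these very facts. The only thing requiring care is making sure the standing hypotheses are used to justify that $\SP_n(Y)$ is closed and nontrivial so the projection is well-defined with norm exactly $1$; given Lemmas~\ref{kv} and~\ref{io} this is routine, so I would keep the proof short and reference the cited functional-analysis results rather than re-deriving them.
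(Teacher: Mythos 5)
Your proof is correct and follows essentially the same route as the paper, which simply observes that $E_n[\wc \mid Y]$ is an orthogonal projection and invokes the standard Hilbert-space facts from \cite[Section 4.4]{Stein2019}. The extra details you supply (the orthonormal-basis verification via Lemma~\ref{io} and the norm computation using the constant $1 \in \SP_n(Y)$) are sound and merely flesh out what the paper leaves to the citation.
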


In the following proposition, we show properties of the operator $E_n$ that are analogous to those of the conditional expectation operator.

\begin{prop} \label{me}
Fix square-integrable RVs $X,Y,$ and $Z,$ and a natural $n\in \BN$ such that $\BE\left[ Z^{2n} \right] <\infty$ and $|\supp(Z)|>n.$ The following hold:
\begin{enumerate}[label=(\roman*)]

    \item \label{rp} Total expectation:
    \begin{equation} \label{je}
        \BE [E_n[X \mid Z]] = \BE [X].
    \end{equation}
    
    \item \label{rq} Orthogonality: For any polynomial $p\in \SP_n,$ 
    \begin{equation}
        \BE [(X-E_n[X \mid Z])p(Z)]=0.
    \end{equation}
    
    \item \label{rr} Linearity: For constants $a,b\in \BR$
    \begin{equation}
        E_n[aX+bY \mid Z] = aE_n[X \mid Z] + bE_n[Y \mid Z].
    \end{equation}
    
    \item \label{wn} Contractivity: We have the inequality
    \begin{equation}
        \left\| E_n[X \mid Z] \right\|_2 \le \|X\|_2.
    \end{equation}
    
    \item \label{rt} Idempotence: 
    \begin{equation}
        E_n\left[ E_n\left[ X \mid Z \right] \mid Z \right] = E_n[X \mid Z]. 
    \end{equation}
    
    \item \label{wo} Self-Adjointness: $E_n[\wc \mid Z]$ is self-adjoint
    \begin{equation}
        \BE\left[ E_n[X\mid Z] Y \right] = \BE\left[ X E_n[Y\mid Z] \right].
    \end{equation}
    
    \item \label{rs} Independence: If $X$ and $Z$ are independent, then
    \begin{equation} \label{jf}
        E_n[X \mid Z] = \BE[X].
    \end{equation}
    
    \item \label{ru} Markov Chain: If $X$---$Y$---$Z$ forms a Markov chain, then
    \begin{equation}
        E_n\left[ \BE[X \mid Y] \mid Z \right] = E_n[X\mid Z].
    \end{equation}
\end{enumerate}
\end{prop}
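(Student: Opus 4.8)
The plan is to derive all eight properties from the single structural fact, available from Proposition~\ref{ro}, that $E_n[\wc \mid Z]$ is the orthogonal projection of $L^2(\calF)$ onto the closed finite-dimensional subspace $\SP_n(Z)$. The unifying tool throughout is the defining characterization of this projection: for any $W\in L^2(\calF)$, the random variable $E_n[W\mid Z]$ is the unique member of $\SP_n(Z)$ satisfying $\langle W - E_n[W\mid Z], p(Z)\rangle = 0$ for every $p\in \SP_n$. Most parts then reduce either to this relation directly or to the standard Hilbert-space properties of projections.

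First I would dispatch the items that are immediate. Property~\ref{rq} (orthogonality) is exactly the defining relation, since $p(Z)\in \SP_n(Z)$. Property~\ref{rp} (total expectation) follows by taking $p\equiv 1$ in~\ref{rq}, using $1\in \SP_n(Z)$, which yields $\BE[X - E_n[X\mid Z]]=0$. Linearity~\ref{rr}, self-adjointness~\ref{wo}, and idempotence~\ref{rt} (projecting an element that already lies in $\SP_n(Z)$ fixes it) are restatements of item~(ii) of Proposition~\ref{ro} for the vectors $X$ and $Y$, while contractivity~\ref{wn} is item~(iii) (equivalently, the Pythagorean identity $\|X\|_2^2 = \|E_n[X\mid Z]\|_2^2 + \|X-E_n[X\mid Z]\|_2^2$).

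The two remaining properties need a short computation beyond the projection formalism. For independence~\ref{rs}, I would verify that the constant $\BE[X]\in \SP_n(Z)$ satisfies the orthogonality relation: for each $p\in \SP_n$, independence of $X$ and $Z$ gives $\BE[(X-\BE[X])p(Z)] = \BE[X-\BE[X]]\,\BE[p(Z)] = 0$, where integrability of $Xp(Z)$ comes from Cauchy--Schwarz since $\BE[X^2]<\infty$ and $\BE[Z^{2n}]<\infty$ force $p(Z)\in L^2(\calF)$; uniqueness of the projection then gives $E_n[X\mid Z]=\BE[X]$. For the Markov-chain property~\ref{ru}, the key reduction is that it suffices to show $\BE[X\mid Y]$ and $X$ have the same projection onto $\SP_n(Z)$, i.e. $\BE[\BE[X\mid Y]\,p(Z)] = \BE[X\,p(Z)]$ for every $p\in \SP_n$. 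On the left, pulling the $Y$-measurable factor $\BE[X\mid Y]$ into a conditional expectation over $Y$ gives $\BE[\BE[X\mid Y]\,\BE[p(Z)\mid Y]]$; on the right, the tower property combined with the conditional independence $X \perp Z \mid Y$ furnished by the Markov chain gives $\BE[X\,p(Z)] = \BE[\BE[Xp(Z)\mid Y]] = \BE[\BE[X\mid Y]\,\BE[p(Z)\mid Y]]$ as well. The two expressions coincide, and uniqueness of the projection completes the argument.

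I expect the main obstacle to be property~\ref{ru}: unlike the others, it is not a formal consequence of projection theory but relies on correctly translating the Markov hypothesis into the factorization $\BE[Xp(Z)\mid Y]=\BE[X\mid Y]\,\BE[p(Z)\mid Y]$ and on confirming the integrability needed (so that the tower property and this factorization are valid for the test functions $p(Z)$, $p\in \SP_n$) before invoking uniqueness of the projection.
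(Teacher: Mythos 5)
Your proof is correct; the difference from the paper's argument is largely mechanical rather than conceptual. For total expectation, orthogonality, independence, and the Markov-chain property, the paper works with the explicit Hankel-matrix formula $E_n[X \mid Z] = \BE\left[ X\bZ^{(n)} \right]^T \bM_{Z,n}^{-1} \bZ^{(n)}$ from Theorem~\ref{hw}, deriving the identities $\bM_{Z,n}^{-1}\BE\left[\bZ^{(n)}\right] = \be_0$ and $\bM_{Z,n}^{-1}\BE\left[Z^j\bZ^{(n)}\right] = \be_j$, and for the Markov chain it verifies $\BE\left[XZ^j\right] = \BE\left[\BE[X\mid Y]Z^j\right]$ monomial by monomial and then multiplies on the right by $\bM_{Z,n}^{-1}\bZ^{(n)}$. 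You instead argue coordinate-free: you characterize $E_n[\wc\mid Z]$ by the orthogonality relation and conclude via uniqueness of the orthogonal projection, so you never touch $\bM_{Z,n}$ or its invertibility. Your Markov-chain computation (tower property plus the conditional-independence factorization $\BE[Xp(Z)\mid Y]=\BE[X\mid Y]\,\BE[p(Z)\mid Y]$) has the same content as the paper's, just phrased for general $p\in\SP_n$ rather than for monomials; and for linearity, contractivity, idempotence, and self-adjointness both you and the paper simply cite Proposition~\ref{ro}. What your route buys is robustness and brevity: existence and uniqueness of the projection onto the finite-dimensional subspace $\SP_n(Z)$ hold even without linear independence of $1,Z,\dots,Z^n$, so the hypothesis $|\supp(Z)|>n$ enters only through the citation of Proposition~\ref{ro}. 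What the paper's route buys is a set of concrete matrix identities that are exercised and reused elsewhere (e.g., in the Gaussian-channel formulas of Section~\ref{jh}).
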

\begin{proof}
    See Appendix \ref{mf}.
\end{proof}

\begin{remark}
In view of the properties \ref{rp} and \ref{rs}, one may define the unconditional version of $E_n$ as
\begin{equation}
    E_n[X] := \BE[X]
\end{equation}
for $X\in L^2(\calF).$ With this definition, the total expectation property \ref{rp} becomes
\begin{equation}
    E_n[E_n[X \mid Z]]=E_n[X],
\end{equation}
and the independence property \ref{rs} becomes
\begin{equation}
    E_n[X \mid Z] = E_n[X],
\end{equation}
for independent $X$ and $Z.$ This definition of $E_n[X]$ is consistent with defining it as $E_n[X\mid 1],$ because $\BE[X]$ is the closest constant to $X$ in $L^2(\calF)$-norm.
\end{remark}

The following proposition shows that the polynomial in $X+Z$ closest to $X$ is always of odd degree, provided that $X$ and $Z$ are both symmetric RVs.\footnote{A RV $Y$ is symmetric if $P_{Y-a}=P_{-(Y-a)}$ for some $a\in \BR.$}
\begin{prop} \label{wk}
Fix $k \in \BN_{\ge 1}$ and RVs $X$ and $Z$ satisfying $\BE\left[Z^2\right],\BE\left[ X^{4k} \right] <\infty,$ and $|\supp(X+Z)|>2k.$ If $X$ and $Z$ are both symmetric, then we have that
\begin{equation} \label{wi}
    E_{2k}[X \mid X+Z] = E_{2k-1}[X \mid X+Z].
\end{equation}
In such case, we also have that
\begin{equation} \label{wj}
    \pp_{2k}(X\mid X+Z) = \pp_{2k-1}(X\mid X+Z).
\end{equation}
\end{prop}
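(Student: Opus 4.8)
The plan is to exploit the symmetry of both $X$ and $Z$ to show that the optimal degree-$2k$ polynomial estimator assigns zero coefficient to the even-power monomials, so that it actually lies in $\SP_{2k-1}(X+Z)$. Without loss of generality I would first reduce to the case where both symmetry centers are at the origin: by Proposition~\ref{hn}, shifting $X$ by $-a$ and $Z$ by $-b$ (where $a,b$ are the respective centers of symmetry) changes neither $E_n[\wc\mid\wc]$ in the relevant sense nor the PMMSE, and it makes $X$ and $Z$ each satisfy $P_{X}=P_{-X}$ and $P_{Z}=P_{-Z}$. Let $W:=X+Z$; since $X$ and $Z$ are symmetric (and, being added, independent is not assumed here—but symmetry of each is what matters), $W$ is symmetric too, i.e.\ $P_W=P_{-W}$.

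The heart of the argument is a parity observation about the cross-moments appearing in the PMMSE formula \eqref{ip}. Writing $E_{2k}[X\mid W]=\BE[X\bW^{(2k)}]^T\bM_{W,2k}^{-1}\bW^{(2k)}$, I would examine the vector $\BE[X\bW^{(2k)}]$ whose $j$-th entry is $\BE[XW^{j}]$, and the Hankel matrix $\bM_{W,2k}$ whose $(i,j)$ entry is $\BE[W^{i+j}]$. The key facts are: (1) because $W$ is symmetric, all odd moments vanish, so $\BE[W^{i+j}]=0$ whenever $i+j$ is odd; and (2) because the pair $(X,W)$ satisfies $(X,W)\overset{d}{=}(-X,-W)$ (this follows from $X,Z$ being symmetric, so that negating both $X$ and $Z$ preserves the joint law of $(X,X+Z)$), we have $\BE[XW^{j}]=\BE[(-X)(-W)^{j}]=(-1)^{j+1}\BE[XW^{j}]$, forcing $\BE[XW^{j}]=0$ for every even $j$. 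Thus the ``source'' vector is supported on odd indices and the Gram matrix decouples into even-indexed and odd-indexed blocks.

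From these parity facts I would conclude that the solution vector $\bc:=\bM_{W,2k}^{-1}\BE[X\bW^{(2k)}]$ is supported only on odd indices. Concretely, permute the basis $\{1,W,\dots,W^{2k}\}$ into even powers $\{1,W^2,\dots,W^{2k}\}$ and odd powers $\{W,W^3,\dots,W^{2k-1}\}$; under this permutation $\bM_{W,2k}$ becomes block-diagonal (the even--odd cross blocks have entries $\BE[W^{\mathrm{odd}}]=0$), and the source vector has zero even-block component. Solving the block system then yields zero coefficients on all even powers, so $E_{2k}[X\mid W]=\sum_{j\text{ odd},\,j\le 2k-1}c_jW^{j}$, which is a polynomial of degree at most $2k-1$; since the optimal degree-$(2k-1)$ estimator is the unique minimizer over $\SP_{2k-1}(W)\subseteq\SP_{2k}(W)$ and we have exhibited the degree-$2k$ minimizer already lying in $\SP_{2k-1}(W)$, the two coincide, giving \eqref{wi}. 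Finally, \eqref{wj} follows immediately because equal estimators produce equal estimation errors, $\pp_{2k}(X\mid W)=\BE[(X-E_{2k}[X\mid W])^2]=\BE[(X-E_{2k-1}[X\mid W])^2]=\pp_{2k-1}(X\mid W)$.

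The main obstacle I anticipate is handling the invertibility and the block structure cleanly. The hypothesis $|\supp(X+Z)|>2k$ guarantees via Lemma~\ref{kv} that $\bM_{W,2k}$ is invertible, so the unique minimizer has a unique coefficient vector and the block-diagonal inversion is legitimate; I would want to make sure the permutation-to-block-diagonal step is justified rigorously rather than by inspection, perhaps by noting that the even/odd splitting corresponds to decomposing $\SP_{2k}(W)$ into the orthogonal (under symmetry) span of even and odd powers, since $\BE[W^{i}W^{j}]=0$ when $i$ is even and $j$ is odd. The only subtlety requiring care is verifying that $(X,W)\overset{d}{=}(-X,-W)$ genuinely holds from the separate symmetries of $X$ and $Z$; this needs the (implicit) independence or at least a joint-symmetry assumption, and I would state precisely why negating the underlying pair $(X,Z)$ preserves the joint law of $(X,X+Z)$.
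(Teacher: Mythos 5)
Your proof is correct, and it takes a genuinely different (and in fact stronger) route than the paper's. Both arguments share the same setup: reduce to symmetry about $0$ via shift-invariance, and then exploit the two parity facts $\BE[W^{j}]=0$ for odd $j$ and $\BE[XW^{\ell}]=0$ for even $\ell$ (writing $W=X+Z$). From there the paper proceeds computationally: it isolates only the \emph{leading} coefficient of $E_{2k}[X\mid W]$, expands the relevant entries of $\bM_{W,2k}^{-1}$ by the Leibniz/cofactor formula, and shows that for each odd $\ell$ every permutation term contains a factor $\BE[W^{r+\pi(r)}]$ with $r+\pi(r)$ odd, hence vanishes; this forces the coefficient of $W^{2k}$ to be zero, which suffices for \eqref{wi}. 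You instead reorder the monomial basis into even and odd powers, observe that the even--odd cross-blocks of the Hankel matrix consist of odd moments of $W$ and hence vanish, and that the source vector $\BE[X\bW^{(2k)}]$ is supported on odd indices; block inversion (legitimate since $|\supp(W)|>2k$ makes $\bM_{W,2k}$, and hence each diagonal block, invertible) then kills \emph{all} even coefficients at once. Your conclusion is therefore stronger---$E_{2k}[X\mid W]$ is an odd polynomial, not merely of degree $\le 2k-1$---and it replaces the permutation-sign bookkeeping with a transparent structural fact: under a symmetric law for $W$, the even and odd polynomial subspaces are orthogonal and the estimation problem decouples. Both proofs finish identically, via uniqueness of the minimizer (Proposition~\ref{iw}) and $\pp_m(X\mid W)=\|X-E_m[X\mid W]\|_2^2$. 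One further point in your favor: you explicitly flag that $\BE[XW^{\ell}]=0$ for even $\ell$ (and symmetry of $W$ itself) requires the \emph{joint} symmetry $(X,Z)\overset{d}{=}(-X,-Z)$---e.g.\ via independence---and not merely marginal symmetry of $X$ and $Z$; the paper's proof uses these facts without comment, and they can genuinely fail for dependent, marginally symmetric pairs, so this caveat is a real observation rather than pedantry.
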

\begin{proof}
See Appendix~\ref{wl}.
\end{proof}

\subsection{Convergence Theorems} 

Analogous to conditional expectation, dominated convergence, monotone convergence, and Fatou's lemma all hold for the PMMSE estimate. The notation $E_n[X \mid Y=y]$ is used here similarly to how the notation $\BE[X \mid Y=y]$ is customarily used (see Remark~\ref{sa}). 

\begin{prop}[Convergence Theorems] \label{xk}
Fix a sequence of square-integrable RVs  $\{X_k\}_{k\in \BN},$ and let $n\in \BN$ and the RV $Y$ be such that $\BE\left[ Y^{2n} \right]<\infty$ and $|\supp(Y)|>n.$ The following hold for every $y\in \BR$
\begin{enumerate}[label=(\roman*)]
    
    \item \label{xi} Monotone Convergence: If $\{X_k\}_{k\in \BN}$ is monotone, and either $Y\ge 0$ or $Y\le 0$ holds almost surely, then the pointwise limit $X=\lim_{k\to \infty} X_k$ satisfies     \begin{equation}
        E_n[X \mid Y=y] = \lim_{k\to \infty} E_n \left[ X_k \mid Y=y \right].
    \end{equation}
    
    \item \label{xj} Dominated Convergence: If there is a square-integrable RV $M$ such that $\sup_{k\in \BN} |X_k|\le M,$ and if the pointwise limit $X:=\lim_{k\to \infty} X_k$ exists, then
    \begin{equation}
        E_n[X \mid Y=y] = \lim_{k\to \infty} E_n \left[ X_k \mid Y=y \right].
    \end{equation}
\end{enumerate}
\end{prop}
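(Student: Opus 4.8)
The plan is to reduce both parts to the convergence of finitely many cross-moments, exploiting the closed form of the pointwise estimator. Since $\BE[Y^{2n}]<\infty$ and $|\supp(Y)|>n$, substituting $Y=y$ into the formula of Theorem~\ref{hw} (cf. Remark~\ref{sa}) gives
\begin{equation}
E_n[X\mid Y=y] = \BE\!\left[X\bY^{(n)}\right]^T\,\bM_{Y,n}^{-1}\,\by^{(n)},
\end{equation}
where $\by^{(n)}=(1,y,\dots,y^n)^T$ is deterministic and both $\bM_{Y,n}^{-1}$ and $\by^{(n)}$ are independent of the RV being estimated. Hence, for fixed $y$, the scalar $E_n[X\mid Y=y]$ is a fixed linear functional of the cross-moment vector $\BE[X\bY^{(n)}]=(\BE[XY^j])_{j\in[n]}$, with coefficients given by the entries of $\bM_{Y,n}^{-1}\by^{(n)}$. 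Consequently, once I show $\lim_{k\to\infty}\BE[X_kY^j]=\BE[XY^j]$ for each $j\in[n]$, taking the same linear combination on both sides yields $\lim_k E_n[X_k\mid Y=y]=E_n[X\mid Y=y]$. Throughout I take $X$ to be square-integrable, which is what makes the left-hand side well-defined via Definition~\ref{ho}; in part~\ref{xj} this is automatic since $|X|\le M\in L^2(\calF)$.

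For the dominated-convergence part~\ref{xj}, I would argue directly on the products. Since $X_k\to X$ pointwise and $Y^j$ is fixed, $X_kY^j\to XY^j$ almost surely. The bound $|X_kY^j|\le M|Y|^j$ holds, and $M|Y|^j$ is integrable: by Cauchy--Schwarz, $\BE[M|Y|^j]\le \|M\|_2\,\BE[Y^{2j}]^{1/2}<\infty$, where $\BE[Y^{2j}]<\infty$ because $j\le n$ and $\BE[Y^{2n}]<\infty$. The classical dominated convergence theorem then gives $\BE[X_kY^j]\to\BE[XY^j]$ for every $j\in[n]$, and the reduction above completes the case.

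For the monotone-convergence part~\ref{xi}, the crucial point—and the place where the sign hypothesis on $Y$ enters—is that the products $X_kY^j$ need not have a fixed sign unless $Y^j$ does. Assuming $Y\ge 0$ (or $Y\le 0$) almost surely guarantees that each $Y^j$ has constant sign a.s. I would reduce to the nondecreasing case (the nonincreasing case follows by replacing $X_k$ with $-X_k$ and using linearity of the estimator in $X$, immediate from the displayed formula, cf. Proposition~\ref{me}\ref{rr}). Writing $X_kY^j=X_0Y^j+(X_k-X_0)Y^j$, the term $X_0Y^j$ is integrable by Cauchy--Schwarz as above, while $(X_k-X_0)\ge 0$ is nondecreasing. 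Thus $(X_k-X_0)Y^j$ is a monotone sequence of fixed sign (that of $Y^j$), so applying the monotone convergence theorem to the nonnegative sequence $\pm(X_k-X_0)Y^j$ yields $\BE[(X_k-X_0)Y^j]\to\BE[(X-X_0)Y^j]$, which is finite since $X\in L^2(\calF)$. Adding back $\BE[X_0Y^j]$ gives $\BE[X_kY^j]\to\BE[XY^j]$ for each $j\in[n]$.

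The main obstacle is precisely this sign bookkeeping in the monotone case: without a definite sign for $Y^j$ one cannot invoke MCT on the products $X_kY^j$, which is exactly why the hypothesis restricts $Y$ to be a.s. nonnegative or a.s. nonpositive. Everything else is a routine consequence of the linearity of $E_n[\,\cdot\mid Y=y]$ in its first argument and the integrability afforded by $\BE[Y^{2n}]<\infty$ together with Cauchy--Schwarz.
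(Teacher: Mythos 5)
Your proposal is correct and follows essentially the same route as the paper: reduce $E_n[\,\cdot\mid Y=y]$ via the formula of Theorem~\ref{hw} to a fixed ($k$-independent) linear combination of the cross-moments $\BE[X_kY^j],$ then apply the classical monotone/dominated convergence theorems to the sequences $\{X_kY^j\}_{k\in\BN},$ using the sign hypothesis on $Y$ for monotonicity and the bound $M|Y|^j$ for domination. Your explicit subtraction of $X_0Y^j$ in the monotone case is precisely the detail the paper's proof compresses into the remark that ``$X_0$ is integrable.''
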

\begin{proof}
Note that in~\ref{xi} the sequences $\{X_kY^j\}_{k\in \BN},$ for each fixed $j\in [n],$ are monotone almost surely. Also, $X_0$ is integrable, as we are assuming that $X_0\in L^2(\calF).$ Note also that in~\ref{xj} each sequence $\{X_kY^j\}_{k\in \BN},$ for $j\in [n],$ is dominated by $M|Y|^j,$ which is integrable since both $M$ and $Y^j$ are square-integrable. Thus, monotone convergence and dominated convergence both hold in $L^1(\calF)$ for each of the sequences $\{X_kY^j\}_{k\in \BN},$ where $j\in [n]$ is fixed. In addition, the formula
\begin{equation}
    E_n\left[ X_k \mid Y=y \right] = \BE\left[ X_k \bY^{(n)} \right]^T \bM_{Y,n}^{-1} \left( \begin{array}{c} 1 \\ y \\ \vdots \\ y^n \end{array} \right) = \sum_{j=0}^n c_j \BE\left[X_kY^j\right]
\end{equation}
expresses $E_n\left[ X_k \mid Y=y \right]$ as an $\BR$-linear combination of $\{X_kY^j\}_{j\in [n]}$ (where the $c_j$ do not depend on $k$). Thus, the convergence theorems in~\ref{xi} and~\ref{xj} also hold.
\end{proof}
\begin{remark}
A version of Fatou's lemma that holds for a subset of values of $y$ is also derivable. Namely, suppose that there is a RV $M\in L^1(\calF)$ such that $X_kY^j \ge -M$ for every $(k,j)\in \BN\times [n].$ Then, the same argument in the proof of Proposition~\ref{xk} shows that 
\begin{equation}
    \liminf_{k\to \infty} E_n[X_k \mid Y=y] \ge E_n\left[ \liminf_{k\to \infty} X_k \;\middle|\; Y=y \right]
\end{equation}
for every $y\in \BR$ such that $\bM_{Y,n}^{-1}(1,y,\cdots,y^n)^T$ consists of non-negative entries. For example, when $n=1,$ Fatou's lemma holds for $y \ge \BE[Y]$ if $\BE[Y] \le 0,$ and it holds for $y\in [\BE[Y],\BE[Y^2]/\BE[Y]]$ if $\BE[Y]>0.$
\end{remark}

\section{PMMSE for Gaussian Channels} \label{jh}

We now take a closer look at the special case of Gaussian channels, where
\begin{equation}
    Y = \sqrt{t}X+N
\end{equation}
for $t\ge 0$ and $N \sim \calN(0,1)$ independent of $X.$ We analyze both the PMMSE estimate $E_n[X \mid \sqrt{t}X+N]$ and the PMMSE
\begin{equation}
    \pp_n(X,t) := \pp_n(X \mid \sqrt{t}X+N).
\end{equation}
We also use the shorthand
\begin{align}
    \mm(X,t) &:= \mm(X\mid \sqrt{t}X+N) \\
    \LL(X,t) &:= \LL(X\mid \sqrt{t}X+N).
\end{align}
The Gaussian channel allows us to extrapolate---via the I-MMSE relation---new formulas for differential entropy and mutual information primarily in terms of moments (see Sections \ref{jj} and \ref{ba}), which then pave the way for new estimators for these quantities (see Section \ref{bb}).

Approximating the MMSE with the PMMSE in Gaussian channels is valid whenever the MGF of the input exists. In other words, the pointwise (in $t$) limit
\begin{equation} 
    \mm(X,t) = \lim_{n\to \infty} \pp_n(X,t)
\end{equation}
is a direct consequence of Theorem \ref{ar}. Furthermore, as will be shown in Theorem \ref{an}, the convergence of the PMMSE to the MMSE is in fact uniform in $t.$ Uniform convergence follows from rationality of the PMMSE as a function of $t.$ This rationality result, stated in Theorem \ref{iv}, will be the focus of this section. 

The mapping over the positive half-line defined by $t\mapsto \pp_n(X,t)$ will be shown to be a rational function that starts at $\sigma_X^2$ when $t=0$ and satisfies $\pp_n(X,t) < 1/t$ for $t>0.$ If $|\supp(X)|>n,$ then we also have the asymptotic $\pp_n(X,t) \sim 1/t$ as $t \to \infty.$ A simplified statement of the main theorem of this section (Theorem \ref{iv}) is as follows.

\begin{theorem} \label{ir}
Fix a natural $n \ge 1,$ and let $X$ be a RV satisfying $\mathbb{E}\left[X^{2n}\right]<\infty.$ The mapping $t\mapsto \mathrm{pmmse}_n(X,t)$ over $[0,\infty)$ is given by a rational function
\begin{equation} \label{in}
    \pp_n(X,t) = \frac{\sigma_X^2 G(n+2) + \cdots + (\det  \bM_{X,n}) t^{d_n-1}}{G(n+2) + \cdots + (\det  \bM_{X,n}) t^{d_n}}
\end{equation}
where $d_n:=\binom{n+1}{2}$ and $G(k):=\prod_{j=1}^{k-2} j!$ (for integers $k\ge 1$) is the Barnes $G$-function~\cite{Adamchik2001}.
\end{theorem}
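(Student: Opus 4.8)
The plan is to reduce the PMMSE to a ratio of two determinants, and then study each as a polynomial in $t.$ Set $Y=\sqrt{t}X+N$ and $\bb:=\BE[X\bY^{(n)}],$ and consider the $(n+2)\times(n+2)$ Gram matrix of $(X,1,Y,\dots,Y^n),$
\begin{equation}
\bM'_t := \begin{pmatrix} \BE[X^2] & \bb^T \\ \bb & \bM_{Y,n}\end{pmatrix}.
\end{equation}
Since $N$ is continuous, so is $Y$ for every $t\ge 0,$ hence $|\supp(Y)|=\infty>n$ and $\bM_{Y,n}$ is invertible by Lemma~\ref{kv}. Taking the Schur complement of the lower-right block and comparing with the PMMSE formula~\eqref{iq} gives $\det \bM'_t=(\det\bM_{Y,n})\,\pp_n(X,t),$ i.e.
\begin{equation}
\pp_n(X,t) = \frac{\det \bM'_t}{\det \bM_{Y,n}}.
\end{equation}
It then remains to show that both determinants are polynomials in $t$ and to identify their constant and leading coefficients.

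For polynomiality I would exploit the symmetry of the noise. Because $N\sim\calN(0,1)$ is symmetric and independent of $X,$ the pair $(X,N)$ has the same law as $(X,-N)$; writing $s:=\sqrt{t},$ this shows that replacing $s$ by $-s$ sends $\BE[Y^m]\mapsto(-1)^m\BE[Y^m]$ and $\BE[XY^k]\mapsto(-1)^k\BE[XY^k].$ Consequently, under $s\mapsto -s$ one has $\bM_{Y,n}\mapsto D\bM_{Y,n}D$ and $\bM'_t\mapsto D'\bM'_tD'$ for suitable sign-diagonal matrices $D,D'$ with $\pm 1$ entries; since $(\det D)^2=(\det D')^2=1,$ both $\det\bM_{Y,n}$ and $\det\bM'_t$ are invariant under $s\mapsto -s.$ Being even in $s=\sqrt{t},$ they are genuine polynomials in $t.$

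To pin down degrees and boundary coefficients I would use the Andr\'eief--Heine identity $\det\bM_{Y,n}=\frac{1}{(n+1)!}\BE[\prod_{0\le i<j\le n}(Y_i-Y_j)^2],$ where $Y_l=\sqrt{t}X_l+N_l$ and $(X_l,N_l)$ are i.i.d.\ copies of $(X,N).$ Each factor $(Y_i-Y_j)^2$ has degree at most $1$ in $t,$ so $\det\bM_{Y,n}$ has $t$-degree at most $d_n$; the coefficient of $t^{d_n}$ comes from selecting $\sqrt{t}(X_i-X_j)$ in every factor and equals $\frac{1}{(n+1)!}\BE[\prod_{i<j}(X_i-X_j)^2]=\det\bM_{X,n},$ while setting $t=0$ gives $\det\bM_{N,n}=\prod_{k=1}^n k!=G(n+2)$ (the Gaussian Hankel determinant). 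For the numerator, the same identity yields $\det\bM'_t=\frac{1}{(n+2)!}\BE[\Delta^2],$ where $\Delta$ is the determinant of the $(n+2)\times(n+2)$ matrix with rows $X_l,1,Y_l,\dots,Y_l^n$ over columns $l=0,\dots,n+1$; its constant term is $\pp_n(X,0)\,G(n+2)=\sigma_X^2\,G(n+2),$ using that $Y=N$ is independent of $X$ at $t=0$ (the independence property in Proposition~\ref{me}).

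The main obstacle is the leading coefficient of the numerator. Expanding $\Delta$ multilinearly in $s$ via $Y_l^k=\sum_a\binom{k}{a}s^aX_l^aN_l^{k-a},$ the naive top power $s^{d_n}$ forces the row $X_l$ to coincide with the top part of the $Y_l$-row, so its coefficient vanishes; the first surviving order is $s^{d_n-1},$ attained by the unique non-degenerate choice that turns the $Y_l$-row into $N_l$ and all higher rows into $X_l^k.$ The resulting coefficient $\Delta_\infty$ is the determinant of the Vandermonde rows $1,X_l,\dots,X_l^n$ together with one extra row $N_l$; expanding $\Delta_\infty$ along the $N$-row and using that the $N_l$ are i.i.d., zero-mean, and independent of the $X_l$ collapses all cross terms, giving $\BE[\Delta_\infty^2]=\sum_{l}\BE[V_l^2]=(n+2)!\det\bM_{X,n},$ where $V_l$ is the Vandermonde determinant on $\{X_{l'}\}_{l'\ne l}.$ Hence the coefficient of $t^{d_n-1}$ in $\det\bM'_t$ is $\det\bM_{X,n},$ and assembling the four coefficients produces~\eqref{in}. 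I expect the careful bookkeeping of this near-degenerate leading-order extraction—verifying that exactly one monomial configuration survives at order $s^{d_n-1}$—to be the most delicate part.
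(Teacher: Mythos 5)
Your proof is correct, and it takes a genuinely different route at exactly the step the paper finds hardest. Note first that the two proofs share a starting point: by the Schur complement, your bordered Gram determinant $\det \bM'_t$ is precisely the paper's numerator $\left(\calX_2 - F_{X,n}(t)\right)\det\bM_{\sqrt{t}X+N,n},$ so the rational representation is the same object. The difference is in how the coefficients are extracted. The paper proves polynomiality in $t$ via Leibniz expansions plus a permutation-parity lemma (Lemma~\ref{ac}), whereas your sign-diagonal conjugation $\bM'_t \mapsto D'\bM'_t D'$ under $s\mapsto -s$ gets evenness in $s=\sqrt{t}$ in two lines. More importantly, to prove $a_X^{n,d_n-1}=\det\bM_{X,n}$ the paper must first restrict to continuous $X$ (so it can invoke the MMSE-dimension asymptotic $\mm(X,t)\sim 1/t$ of~\cite{Wu2011}) and then remove that restriction via the truncated-Hamburger-moment-problem argument of Proposition~\ref{qx} (or, alternatively, the lengthy permutation-combinatorial computation of Appendix~\ref{aw}); your Andr\'eief-identity representation $\det\bM'_t = \frac{1}{(n+2)!}\BE\left[\Delta^2\right]$ handles every $X$ with $\BE\left[X^{2n}\right]<\infty$ in one pass. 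I verified the bookkeeping you flagged as delicate: writing $\Delta=\sum_j \Delta_j s^j,$ the coefficient $\Delta_{d_n}$ vanishes because two rows coincide (which also yields $a_X^{n,d_n}=0$ without the paper's $\pp_n(X,t)\le 1/t$ bound); at order $s^{d_n-1},$ any configuration $\sum_k a_k = d_n-1$ with $a_{k^*}=k^*-1$ for some $k^*\ge 2$ forces $a_1=1$ and duplicates the $X$-row, so only $k^*=1$ survives, with binomial factor $1$; and since $\Delta_{d_n-1}$ is the only term contributing to $s^{2d_n-2}$ in $\Delta^2,$ the zero-mean independence of the $N_l$ collapses $\BE\left[\Delta_{d_n-1}^2\right]$ to $\sum_l \BE\left[V_l^2\right]=(n+2)(n+1)!\det\bM_{X,n}$ by Andr\'eief once more. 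What the paper's longer route buys is the finer structure of Theorem~\ref{iv}---homogeneity and shift-invariance of all intermediate coefficients $a_X^{n,j},b_X^{n,j},$ which feed the consistency and sample-complexity analysis of Section~\ref{bb}---while your argument delivers Theorem~\ref{ir} itself more economically, uniformly in $X,$ and with no appeal to MMSE asymptotics or the moment problem.
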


\begin{remark}
We note that the dots in \eqref{in} are not to imply a specific pattern; rather, the statement of the theorem emphasizes only the rationality of the function along with the leading and constant coefficients. These will be enough to conclude the results we present about convergence and asymptotic behavior. The middle terms can be computed via the formulas presented in Theorem \ref{iv}. For example, the denominator in~\eqref{in} equals $\det \bM_{\sqrt{t}X+N,n}$ for $N\sim \calN(0,1)$ independent of $X.$ We also note that, for each $n\in \BN,$ the constant term $G(n+2)$ satisfies
\begin{equation}
    G(n+2) = \det \bM_{N,n} = \prod_{k=1}^n k!.
\end{equation}
\end{remark}

\begin{example}
The results of Theorem \ref{ir}, when applied to $n=1,$ recover the LMMSE formula
\begin{equation} \label{jx}
    \pp_1(X,t) = \LL(X,t) = \frac{\sigma_X^2}{1+\sigma_X^2 t},
\end{equation}
because $d_1=1,$ $G(3)=1,$ and $\det  \bM_{X,1}=\sigma_X^2.$ Note that \eqref{jx} directly follows from the general formula~\eqref{jw} upon setting $Y=\sqrt{t}X+N.$ \hfill \qedsymbol
\end{example}

We discuss three direct corollaries of Theorem \ref{ir} next. The remainder of the section is then  devoted to proving Theorem \ref{ir}. First, upon taking $t=0$ or $t\to \infty$ in \eqref{in}, we immediately obtain that the first-order asymptotic of the PMMSE (for any $n\in \BN$) is equivalent to that of the LMMSE, which is also the asymptotic of the MMSE for continuous RVs \cite{Wu2011}.

\begin{corollary} \label{is}
For $n\in \mathbb{N}$ and a RV $X$ such that $\mathbb{E}\left[ X^{2n} \right]<\infty,$ we have that 
\begin{equation} \label{qt}
    \pp_n(X,0) = \sigma_X^2,
\end{equation}
and, for every $t>0,$
\begin{equation} \label{qu}
    \pp_n(X,t) \le \frac{\sigma_X^2}{1+\sigma_X^2 t} < \frac{1}{t}.
\end{equation}
If in addition we have $|\supp(X)|>n,$ then
\begin{equation} \label{qv}
    \mathrm{pmmse}_n(X,t) = \frac{1}{t} + O(t^{-2})
\end{equation}
as $t\to \infty.$
\end{corollary}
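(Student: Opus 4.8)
The plan is to read off all three displays directly from the rational-function representation \eqref{in} established in Theorem \ref{ir}, supplementing this with the degree-monotonicity of the PMMSE and with the nonvanishing criterion for $\det \bM_{X,n}$ supplied by Lemma \ref{kv}. I would dispatch \eqref{qt}, \eqref{qu}, and \eqref{qv} in turn, since each is a short consequence of the explicit formula.

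For \eqref{qt} I would substitute $t=0$ into \eqref{in}: the constant terms of the numerator and denominator are $\sigma_X^2 G(n+2)$ and $G(n+2)$, and since $G(n+2)=\prod_{k=1}^n k!>0$ the ratio collapses to $\sigma_X^2$. (Alternatively, at $t=0$ the channel output is $N$, independent of $X$ with $|\supp(N)|=\infty>n$, so property \ref{rs} of Proposition \ref{me} gives $E_n[X\mid N]=\BE[X]$ and hence $\pp_n(X,0)=\sigma_X^2$.)

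For \eqref{qu}, the strict inequality $\sigma_X^2/(1+\sigma_X^2 t)<1/t$ for $t>0$ is immediate after clearing the positive denominators, reducing to $0<1$. For the first inequality I would invoke monotonicity of the PMMSE in its degree: since $n\ge 1$ we have $\SP_1(Y)\subseteq \SP_n(Y)$ with $Y=\sqrt{t}X+N$, so minimizing the mean-square error over the larger subspace can only decrease it, whence $\pp_n(X,t)\le \pp_1(X,t)$. By \eqref{jx} the right-hand side is the LMMSE, equal to $\sigma_X^2/(1+\sigma_X^2 t)$, which completes the chain.

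For \eqref{qv} the extra hypothesis $|\supp(X)|>n$ is precisely the condition under which Lemma \ref{kv} guarantees $\det \bM_{X,n}\neq 0$. In that case both the numerator and the denominator of \eqref{in} are honest polynomials, of degrees $d_n-1$ and $d_n$, sharing the same nonzero leading coefficient $\det \bM_{X,n}$. Writing the numerator as $(\det \bM_{X,n})\,t^{d_n-1}\bigl(1+O(1/t)\bigr)$ and the denominator as $(\det \bM_{X,n})\,t^{d_n}\bigl(1+O(1/t)\bigr)$ and forming the quotient yields $\pp_n(X,t)=t^{-1}\bigl(1+O(1/t)\bigr)=1/t+O(t^{-2})$. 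I expect the only delicate point—and hence the main obstacle—to be verifying that the leading coefficients of numerator and denominator genuinely coincide and are nonzero, so that the leading behavior is exactly $1/t$ and the first correction is $O(t^{-2})$ rather than $O(t^{-1})$; this is exactly where the hypothesis $|\supp(X)|>n$, through Lemma \ref{kv}, is indispensable, and without it one could only conclude $\pp_n(X,t)=O(t^{-1})$.
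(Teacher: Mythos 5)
Your proposal is correct and follows essentially the same route as the paper's proof: read \eqref{qt} off \eqref{in} at $t=0$, get \eqref{qu} from the degree-monotonicity $\pp_n(X,t)\le\pp_1(X,t)=\LL(X,t)$ together with \eqref{jx}, and get \eqref{qv} by noting that $|\supp(X)|>n$ makes the common leading coefficient $\det\bM_{X,n}$ of numerator and denominator in \eqref{in} nonzero. Your write-up merely spells out the asymptotic expansion that the paper leaves implicit, so there is nothing to correct.
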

\begin{proof}
Equation \eqref{qt} follows from \eqref{in} by setting $t=0.$ Inequality \eqref{qu} follows since $\pp_n(X,t)\le \pp_1(X,t).$ Also, if $|\supp(X)|>n$ then $\det \bM_{X,n} \neq 0,$ so \eqref{qv} follows from \eqref{in}.
\end{proof}

Second, by rationality of the PMMSE, and since the denominator $\det \bM_{\sqrt{t}X+N,n}$ is a polynomial that is strictly positive over $t\in [0,\infty),$ we obtain analyticity of the PMMSE.
\begin{corollary} \label{hs}
For $n\in \mathbb{N}$ and a RV $X$ such that $\mathbb{E}\left[ X^{2n} \right]<\infty,$ the map $t\mapsto \mathrm{pmmse}_n(X,t)$ is real analytic at each $t\in [0,\infty).$ 
\end{corollary}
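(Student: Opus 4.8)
The plan is to read the analyticity directly off the rational form established in Theorem \ref{ir}. That theorem writes $\pp_n(X,t)$ as a quotient of two polynomials in $t$ (equation \eqref{in}), and the remark following it identifies the denominator as $\det \bM_{\sqrt{t}X+N,n}$, the determinant of the Hankel matrix of moments of the channel output $Y=\sqrt{t}X+N$ (with $N\sim \calN(0,1)$ independent of $X$). Since polynomials are entire, the only thing that can obstruct real analyticity on $[0,\infty)$ is a zero of this denominator; hence it suffices to show that $\det \bM_{\sqrt{t}X+N,n}>0$ for every $t\in[0,\infty)$.

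First I would fix $t\ge 0$ and argue that $Y=\sqrt{t}X+N$ is a continuous RV, so that $|\supp(Y)|=\infty>n$. Indeed, $P_Y$ is the convolution of $P_{\sqrt{t}X}$ with the standard Gaussian density, hence is absolutely continuous with respect to Lebesgue measure (at $t=0$ this is immediate, since $Y=N$). With $|\supp(Y)|>n$ in hand, Lemma \ref{kv} yields that $\bM_{Y,n}$ is invertible; and since $\bM_{Y,n}=\BE[\bY^{(n)}(\bY^{(n)})^T]$ is positive semidefinite, invertibility upgrades it to positive definite, so $\det \bM_{\sqrt{t}X+N,n}>0$.

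Finally, I would invoke the standard fact that a ratio of two polynomials is real analytic at every point where the denominator does not vanish: the numerator and denominator in \eqref{in} are polynomials in $t$, and we have just shown the denominator is strictly positive throughout $[0,\infty)$, so $t\mapsto \pp_n(X,t)$ is real analytic on $[0,\infty)$. There is essentially no obstacle here beyond the positivity of the denominator, which itself reduces to the two already-available ingredients—continuity of a Gaussian perturbation and Lemma \ref{kv}—while the rational representation from Theorem \ref{ir} does all the heavy lifting.
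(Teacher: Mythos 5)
Your proof is correct and follows essentially the same route as the paper's: both read analyticity off the rational representation in Theorem \ref{ir} and rule out zeros of the denominator $\det \bM_{\sqrt{t}X+N,n}$ by noting that $\sqrt{t}X+N$ has infinite support, so the Hankel matrix is invertible (Lemma \ref{kv}). Your added details—justifying continuity of the Gaussian perturbation via convolution and upgrading invertibility to positive definiteness—are fine elaborations of steps the paper leaves implicit.
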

\begin{proof}
A rational function is analytic at each point in its domain. For each $t\ge 0,$ $|\supp(\sqrt{t}X+N)|=\infty$ where $N\sim \calN(0,1)$ independent of $X.$ Therefore, $ \bM_{\sqrt{t}X+N}$ is invertible for every $t\ge 0,$ i.e., the denominator in \eqref{in} is never zero for $t\ge 0.$
\end{proof}

Our final by-product of Theorem \ref{ir} builds upon Corollaries \ref{is} and \ref{hs} to obtain the uniform convergence (in the SNR) of the PMMSE to the MMSE.
\begin{theorem}\label{an}
If the MGF of a RV $X$ exists, then we have the uniform convergence
\begin{equation} \label{ef}
    \lim_{n\to \infty} ~ \sup_{t \ge 0} ~ \mathrm{pmmse}_n(X,t) - \mathrm{mmse}(X,t)  = 0.
\end{equation}
\end{theorem}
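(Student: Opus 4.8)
The plan is to reduce the statement to a Dini-type argument on a compact interval, with the unbounded tail $t\to\infty$ controlled separately by the explicit $1/t$ decay of the PMMSE. Write $d_n(t):=\pp_n(X,t)-\mm(X,t)$. Since $X$ has an MGF, so does $\sqrt{t}X+N$ for every $t\ge 0$ (the MGF of a sum of independent RVs with MGFs is finite on a nonempty interval), hence $\sqrt{t}X+N$ satisfies Carleman's condition and Theorem~\ref{ar} gives the pointwise limit $\pp_n(X,t)\to\mm(X,t)$, i.e. $d_n(t)\to 0$ for each fixed $t$. Moreover this convergence is monotone: the nesting $\SP_{n-1}(\sqrt{t}X+N)\subseteq \SP_n(\sqrt{t}X+N)$ forces $\pp_n(X,t)\le \pp_{n-1}(X,t)$, while $\mm(X,t)\le \pp_n(X,t)$ because the MMSE minimizes over all Borel functions. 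Thus for each $t$ the sequence $d_n(t)\ge 0$ decreases to $0$.

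First I would record the two continuity facts needed to run Dini's theorem. Each map $t\mapsto \pp_n(X,t)$ is real-analytic on $[0,\infty)$ by Corollary~\ref{hs}, and $t\mapsto \mm(X,t)$ is continuous on $[0,\infty)$ by the regularity of the MMSE in the Gaussian channel~\cite{Guo2005}; consequently each $d_n$ is continuous there. Next I would dispose of the tail: for every $t>0$ and every $n$, inequality~\eqref{qu} of Corollary~\ref{is} gives $\pp_n(X,t)\le 1/t$, and since $\mm(X,t)\ge 0$ we obtain the $n$-uniform bound
\[
0 \le d_n(t) \le \pp_n(X,t) \le \frac{1}{t}.
\]
Hence, given $\varepsilon>0$, setting $T:=1/\varepsilon$ yields $\sup_{t\ge T} d_n(t)\le \varepsilon$ for all $n$ at once.

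It then remains to handle the compact interval $[0,T]$. There the sequence $\{d_n\}$ of continuous functions decreases pointwise to the continuous limit $0$, so Dini's theorem gives uniform convergence on $[0,T]$: there is an integer $n_0$ with $\sup_{t\in[0,T]} d_n(t)\le \varepsilon$ for all $n\ge n_0$. Combining the two regimes, $\sup_{t\ge 0} d_n(t)\le \varepsilon$ for all $n\ge n_0$, which is exactly~\eqref{ef}.

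The hard part is not computational; the whole difficulty is conceptual, namely that $[0,\infty)$ is noncompact, so Dini's theorem does not apply directly. The decisive ingredient is the uniform (in $n$) upper bound $\pp_n(X,t)\le 1/t$ supplied by Corollary~\ref{is}, which confines the entire tail of every $d_n$ below $\varepsilon$ simultaneously and lets Dini take over on the remaining compact piece. The only externally imported fact is the continuity of $t\mapsto\mm(X,t)$, which I take from the standard Gaussian-channel MMSE regularity results; everything else follows from monotonicity of the PMMSE in the degree, its analyticity in $t$, and the pointwise PMMSE-to-MMSE convergence already established.
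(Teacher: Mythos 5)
Your proof is correct and takes essentially the same route as the paper's: both rest on the pointwise convergence from Theorem~\ref{ar}, monotonicity of $\pp_n(X,t)$ in $n$, continuity of $t\mapsto \pp_n(X,t)$ and $t\mapsto\mm(X,t)$, the $1/t$ decay supplied by Corollary~\ref{is}, and a compactness argument. The only difference is packaging: the paper applies Cantor's intersection theorem to the nested compact level sets $g_n^{-1}([\varepsilon,\infty))$ (bounded because $g_1$ vanishes at infinity), whereas you truncate the tail explicitly via $\pp_n(X,t)\le 1/t$ and invoke Dini's theorem on the remaining compact interval---two interchangeable formulations of the same idea.
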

\begin{proof}
See Appendix \ref{br}.
\end{proof}
\begin{remark}
The assumption that the MGF of $X$ exists is imposed so that $\sqrt{t}X+N$ satisfies Carleman's condition (for $N\sim \calN(0,1)$ independent of $X,$ and $t\ge 0$ fixed), which holds because $\sqrt{t}X+N$ will then have a MGF. It is not true in general that Carleman's condition is satisfied by the sum of two independent RVs each satisfying Carleman's condition, see~\cite[Proposition 3.1]{Berg1985}.
\end{remark}

In the remainder of this section, we prove Theorem \ref{ir}.

\subsection{Setup for the Proof of Theorem \ref{ir}}

We prove a slight strengthening of Theorem \ref{ir} by characterizing  the coefficients of the rational function $\pp_n(X,t),$ for an arbitrary fixed $2n$-times integrable RV $X.$ For convenience, we start with some additional notation.

We denote the moments of $X$ by $\calX_k$ so that for each $k\in \BN$
\begin{equation}
    \calX_k := \BE\left[ X^k \right].
\end{equation}
Note that $\calX_0=1$ holds regardless of what the RV $X$ is. It is convenient to look at the following notion of weighted-degree polynomial expression in the moments $\{ \calX_k \}_{k\in \BN}.$ 

Recall that (see \eqref{kd})
\begin{equation}
    \det \bM_{X,2} =  \calX_4\calX_2 - \calX_4 \calX_1^2   - \calX_3^2  + 2 \calX_3 \calX_2 \calX_1 - \calX_2^3.
\end{equation}
Thus, $\det \bm{M}_{X,2}$ is a polynomial in the moments of $X$ with integer coefficients. Further, each monomial appearing in the expression for $\det \bm{M}_{X,2}$ is of the form $\calX_a \calX_b \calX_c$ for $a+b+c=6.$ We formalize this observation next.

\begin{definition}
For $(\ell,m,k)\in \BN^3,$ let $\Pi_{\ell,m,k}$ denote the set of unordered partitions of $\ell$ into at most $m$ parts each of which is at most $k$
\begin{equation}
    \Pi_{\ell,m,k} := \left\{ \blambda \in \BN^m ~;~ k\ge \lambda_1 \ge \cdots \ge \lambda_m, ~ \blambda^T \mathbf{1} = \ell \right\}.
\end{equation}
\end{definition}
\begin{example}
The only unordered partition of $5$ into at most $2$ parts each of which at most $3$ is $5 = 3+2.$ Thus, we have $\Pi_{5,2,3} = \{ (3,2) \}.$ Another example is that the partitions of $6$ into at most $3$ parts each of which at most $4$ are
\begin{equation} \label{ke}
    \pushQED{\qed} 
    \Pi_{6,3,4} = \{(4,2,0),(4,1,1),(3,3,0),(3,2,1),(2,2,2)\}.
    \qedhere
    \popQED
\end{equation}
\end{example}

Note the resemblance between the partitions in \eqref{ke} comprising $\Pi_{6,3,4}$ and the terms appearing in the expression for $\det \bm{M}_{X,2},$ 
\begin{equation} \label{kf}
    \det \bM_{X,2} =  \calX_4\calX_2 - \calX_4 \calX_1^2   - \calX_3^2  + 2 \calX_3 \calX_2 \calX_1 - \calX_2^3.
\end{equation}
Namely, a term $\prod_{i=1}^3 \calX_{\lambda_i}$ with $\lambda_1\ge \lambda_2 \ge \lambda_3$ appears in $\det \bM_{X,2}$ if and only if $\blambda=(\lambda_1,\lambda_2,\lambda_3)$ is in $\Pi_{6,3,4}.$

Leibniz's formula for the determinant can be used to show that, in general, $\det \bM_{X,n}$ is an integer linear combination of terms $\prod_{i=1}^{n+1} \calX_{\lambda_i}$ where $\blambda \in \Pi_{n(n+1),n+1,2n},$ i.e., we may write
\begin{equation} \label{qn}
    \det \bM_{X,n} = \sum_{\blambda \in \Pi_{n(n+1),n+1,2n}}  d_{\blambda} \hspace{2mm} \prod_{i=1}^{n+1} \calX_{\lambda_i}
\end{equation}
for some integers $d_{\blambda}.$ Each term $\prod_{i=1}^{n+1} \calX_{\lambda_i}$ in \eqref{qn} shares the property that 
\begin{equation}
    \sum_{i=1}^{n+1}\lambda_i=n(n+1)
\end{equation}
is constant. Looking at $\calX_k$ as an indeterminate of ``degree" $k,$ we may view $\det \bM_{X,n}$ as a ``homogeneous" polynomial in the moments of $X$ (of ``degree" $n(n+1)$). In other words, we may write $\det \bm{M}_{X,n}$ as an integer linear combination of terms of the form $\prod_{i=1}^{2n} \BE\left[ X^i \right]^{\alpha_i}$ for integers $\alpha_i$ such that $\alpha_1+\cdots+2n\alpha_{2n}$ is constant (and equal to $n(n+1)$). Then, for any constant $c,$ $\det \bm{M}_{cX,n} = c^{n(n+1)} \det \bm{M}_{X,n}$; in fact, this homogeneity holds for each term in the sum, $\prod_{i=1}^{2n} \BE\left[ (cX)^i \right]^{\alpha_i} = c^{n(n+1)} \prod_{i=1}^{2n} \BE\left[ X^i \right]^{\alpha_i}.$

\begin{definition} \label{nd}
For $(\ell,m,k)\in \BN^3,$ we define the set of homogeneous integer-coefficient polynomials of weighted-degree $\ell$ of width at most $m$ in the first $k$ moments $\calX_1,\cdots,\calX_k$ of $X$ as
\begin{equation}
    H_{\ell,m,k}(X) := \left\{ \sum_{\blambda \in \Pi_{\ell,m,k}} d_{\blambda} \prod_{i=1}^m \calX_{\lambda_i} ~;~ d_{\blambda} \in \BZ \right\}. 
\end{equation}
If $\Pi_{\ell,m,k}=\emptyset,$ we set $H_{\ell,m,k}(X)=\BZ.$
\end{definition}
\begin{remark}
An element in $H_{\ell,m,k}(X)$ will be an integer linear combination of terms $\prod_{i=1}^m \calX_{\lambda_i}.$ Each of these terms is a product of at most $m$ of the moments of $X$ (hence the terminology \emph{width}). The highest moment that can appear is $\calX_k,$ because $\blambda \in \Pi_{\ell,m,k}.$ Each summand shares the property that $\sum_{i=1}^m \lambda_i = \ell.$
\end{remark}

\begin{example}
Because $\Pi_{4,2,3} = \{ (3,1),(2,2) \},$ we have
\begin{equation}
    H_{4,2,3}(X) = \{a \calX_3 \calX_1 + b \calX_2^2  ~ ; ~ a,b\in \BZ\}.
\end{equation}
Also, since $\Pi_{2,2,2} = \{ (2,0),(1,1) \},$ 
\begin{equation}
    \sigma_X^2 = \calX_2 - \calX_1^2 \in H_{2,2,2}(X).
\end{equation}
In general, Leibniz's formula yields (see \eqref{qn})
\begin{equation}
    \pushQED{\qed}
    \det \bM_{X,n}  \in H_{n(n+1),n+1,2n}(X).
    \qedhere
    \popQED
\end{equation}
\end{example}

For brevity, we introduce the following functions. Let $N\sim \calN(0,1)$ be independent of $X.$ For $k \in [n],$ if $\|X\|_{k+1} < \infty,$ we define the function $  v_{X,k}:[0,\infty)\to \BR$ at each $t\ge 0$ by
\begin{align} \label{ix}
    v_{X,k}(t) &:= \BE \left[ X\left(\sqrt{t}X+N\right)^k \right].
\end{align}
For example, $v_{X,0}(t) = \calX_1,$ $v_{X,1}(t)=\sqrt{t} \calX_2,$ and $v_{X,2}(t) = t\calX_3 + \calX_1$ for $X\in L^3(\calF).$ We also define vector-valued functions $\bv_{X,n}:[0,\infty)\to \BR^{n+1}$ for $n\in \BN$ and $X\in L^{n+1}(\calF)$ via
\begin{equation} \label{my}
    \bv_{X,n} := (v_{X,0},\cdots,v_{X,n})^T.
\end{equation}
In view of Theorem \ref{hw} and this definition of $\bv_{X,n},$ we may represent the PMMSE as 
\begin{equation}
    \pp_n(X,t) = \BE \left[ X^2 \right] - \bv_{X,n}(t)^T  \bM_{\sqrt{t}X+N,n}^{-1} \bv_{X,n}(t).
\end{equation}
Therefore, defining $F_{X,n}:[0,\infty)\to [0,\infty)$ by
\begin{align}
F_{X,n}(t) &:= \bv_{X,n}(t)^T  \bM_{\sqrt{t}X+N,n}^{-1} \bv_{X,n}(t), \label{mx}
\end{align}
we have the equation
\begin{equation} \label{iy}
    \pp_n(X,t) = \BE \left[ X^2 \right] - F_{X,n}(t).
\end{equation}
The functions $F_{X,n}$ are non-negative because the matrices $ \bM_{\sqrt{t}X+N,n}$ are positive-definite (see Lemma \ref{io}). In view of \eqref{iy},  PMMSE is fully characterized by $F_{X,n},$ and we focus on this function in the next subsections. 

\subsubsection{An Exact Characterization of PMMSE in Gaussian Channels}

We utilize Cramer's rule along with the Leibniz formula for determinants to prove the following characterization of the PMMSE in Gaussian channels. This characterization is a generalization of Theorem~\ref{ir}.
\begin{theorem} \label{iv}
Fix a natural $n\ge 1$ and a RV $X$ satisfying $\BE[X^{2n}]<\infty,$ and set $d_n:= \binom{n+1}{2}.$ The mapping $t\mapsto \pp_n(X,t)$ over $[0,\infty)$ is a rational function
\begin{equation} \label{iu}
    \pp_n(X,t) = \frac{\sum_{j\in [d_n-1]} ~ a_{X}^{n,j} ~ t^j}{\sum_{j \in [d_n]} ~ b_{X}^{n,j} ~ t^j},
\end{equation}
with the constants $a_X^{n,j}$ and $b_X^{n,j}$ satisfying
\begin{equation} \label{mp}
    a_X^{n,j} \in H_{2j+2,\, \min(n,2j)+2,\, 2\min(n,j+1)\, }(X)
\end{equation}
and
\begin{equation} \label{mq}
    b_X^{n,j} \in H_{2j,\, \min(n+1,2j),\, 2\min(n,j)\,}(X),
\end{equation}
where $H_{\ell,m,k}(X)$ is as given in Definition \ref{nd}. Also, we have the formulas
\begin{align}
    b_X^{n,0} &= G(n+2) \label{lz} \\
    b_X^{n,1} &= G(n+2)d_n \sigma_X^2  \label{ma} \\
    b_X^{n,d_n} &= \det  \bM_{X,n} \\
    a_X^{n,0} &=  G(n+2)  \sigma_X^2\label{ly} \\
    a_X^{n,d_n-1} &= \det  \bM_{X,n}, \label{mb}
\end{align}
where $G(k+2)=\prod_{j=1}^{k-2} j!$ is the Barnes $G$-function. Furthermore, all of the constants $a_X^{n,j}$ and $b_X^{n,j}$ are shift-invariant, i.e., for any $s \in \BR$ we have that
\begin{equation} \label{sb}
    a_{X+s}^{n,j} = a_X^{n,j} \quad \text{and} \quad b_{X+s}^{n,j} = b_X^{n,j}.
\end{equation}
\end{theorem}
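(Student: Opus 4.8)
The plan is to reduce the entire statement to the analysis of two determinants and then extract all the claimed structure from a single bookkeeping principle governing their Leibniz expansions. Starting from \eqref{iy}, I have $\pp_n(X,t)=\BE[X^2]-\bv_{X,n}(t)^T\bM_{\sqrt tX+N,n}^{-1}\bv_{X,n}(t)$. I would introduce the random vector $\bW:=(X,1,Y,\dots,Y^n)^T$ with $Y=\sqrt tX+N$ and the bordered matrix $\widetilde\bM:=\BE[\bW\bW^T]$, whose top-left entry is $\BE[X^2]$, whose remaining first row/column are the $v_{X,k}(t)$ of \eqref{ix}, and whose trailing block is $\bM_{\sqrt tX+N,n}$. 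Since $\sqrt tX+N$ has infinite support, this trailing block is invertible for every $t\ge0$ by Lemma \ref{io} (equivalently Lemma \ref{kv}), so the Schur-complement identity yields $\det\widetilde\bM=\pp_n(X,t)\cdot\det\bM_{\sqrt tX+N,n}$. I therefore take the numerator in \eqref{iu} to be $\det\widetilde\bM$ and the denominator to be $\det\bM_{\sqrt tX+N,n}$, and define $a_X^{n,j},b_X^{n,j}$ as their literal coefficients after the substitution $u=\sqrt t$.

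The core is a bookkeeping principle. After substituting $u=\sqrt t$, I expand both determinants by the Leibniz formula and expand each factor $\BE[\bW_i\bW_j]$ into monomials in $X,N$ obtained from $Y=\sqrt tX+N$. The key observation is that the power of $u$ produced by a monomial equals the power of $X$ it carries through the $Y$-factors, while $\BE[X^aN^b]=\calX_a\BE[N^b]$ contributes weighted-degree $a$. In $\det\bM_{\sqrt tX+N,n}$ every $X$ enters through a $Y$, so the coefficient of $u^p$ is homogeneous of weighted-degree $p$ in the moments; in $\det\widetilde\bM$ the slot $\bW_0=X$ contributes exactly two bare factors of $X$ in every Leibniz term, whence power-of-$X$ $=$ power-of-$u$ $+\,2$ and the coefficient of $u^p$ is homogeneous of weighted-degree $p+2$. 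A parity count (the number of indices with $i+\pi(i)$ odd is even) shows only even powers of $u$ survive, so both determinants are polynomials in $t$; reading off the $t^j=u^{2j}$ coefficients gives weighted-degree $2j$ for $b_X^{n,j}$ and $2j+2$ for $a_X^{n,j}$. The width and maximal-moment ceilings in \eqref{mp}--\eqref{mq} then follow by counting the number of moment factors ($n+1$, resp.\ $n+2$) against the weighted-degree budget, placing the coefficients in the spaces $H_{\ell,m,k}(X)$ of Definition \ref{nd}.

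Next I would settle degrees and extreme coefficients. The maximal $u$-degree of $\det\bM_{\sqrt tX+N,n}$ is $\sum_i(i+\pi(i))=n(n+1)$, i.e.\ $t$-degree $d_n$, with leading coefficient $\det(\calX_{i+j})=\det\bM_{X,n}$, giving $b_X^{n,d_n}=\det\bM_{X,n}$; the constant-in-$u$ matrix is $\bM_{N,n}$, giving $b_X^{n,0}=\det\bM_{N,n}=G(n+2)$ via the Gaussian-moment Hankel determinant, which is \eqref{lz}. For $\det\widetilde\bM$ the maximal $u$-degree is again $n(n+1)$, but its coefficient is $\det(\calX_{e(i)+e(j)})$ with effective indices $e=(1,0,1,2,\dots,n)$; the value $1$ is repeated (the $X$-slot and the $Y^1$-slot), so two rows coincide and this top coefficient vanishes --- hence the numerator has $t$-degree at most $d_n-1$, as required by \eqref{iu}. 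The constant term $a_X^{n,0}$ equals the Schur complement of $\bM_{N,n}$ in the $u=0$ bordered matrix (using that the $u=0$ value of $\bv_{X,n}$ is $\calX_1$ times the first column of $\bM_{N,n}$), namely $(\calX_2-\calX_1^2)\det\bM_{N,n}=\sigma_X^2G(n+2)$, which is \eqref{ly}. For $b_X^{n,1}$, weighted-degree $2$ together with shift-invariance forces a multiple of $\sigma_X^2$; isolating the $\calX_2$-part as $G(n+2)\,\mathrm{tr}(\bM_{N,n}^{-1}\bB)$ with $\bB_{ij}=\binom{i+j}{2}\BE[N^{i+j-2}]$ and evaluating $\mathrm{tr}(\bM_{N,n}^{-1}\bB)=d_n$ yields \eqref{ma}.

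Shift-invariance \eqref{sb} I would obtain structurally rather than from any normalization of the rational function: under $X\mapsto X+s$ one has $\bM_{\sqrt t(X+s)+N,n}=\bT\bM_{\sqrt tX+N,n}\bT^T$ for the unit lower-triangular binomial matrix $\bT_{ik}=\binom ik(\sqrt t\,s)^{i-k}$, and $\widetilde\bM$ is conjugated by the matrix $\bS$ sending $\bW$ to $(X+s,1,Y',\dots,(Y')^n)^T$, which expanded along its first column has $\det\bS=1$; since both conjugating matrices are unimodular, both determinants are unchanged as polynomials in $t$, so every coefficient is shift-invariant. The genuinely delicate point is the leading numerator coefficient \eqref{mb}, $a_X^{n,d_n-1}=\det\bM_{X,n}$, which is the sub-leading ($u^{n(n+1)-2}$) coefficient of $\det\widetilde\bM$ lying just below the degenerate top coefficient killed by the two-equal-rows phenomenon; I expect this to be the main obstacle. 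I would handle it by isolating the near-maximal Leibniz contributions --- those losing exactly one factor of $X^2$ relative to the maximal choice --- and showing that, after the degenerate leading term is removed, the surviving sum collapses to the $(n+1)\times(n+1)$ determinant $\det(\calX_{i+j})=\det\bM_{X,n}$. The LMMSE case \eqref{jx} and the scaling relation $\pp_n(\alpha X,t)=\alpha^2\pp_n(X,\alpha^2t)$, a consequence of Proposition \ref{hn}, serve as consistency checks for this computation and for the homogeneity weights above.
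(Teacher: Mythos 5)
Most of your architecture is sound, and in several places it is genuinely different from (and cleaner than) the paper's proof. The bordered matrix $\widetilde{\bM}=\BE[\bW\bW^T]$ with $\bW=(X,1,Y,\dots,Y^n)^T$ together with the Schur-complement identity $\det\widetilde{\bM}=\pp_n(X,t)\det\bM_{\sqrt{t}X+N,n}$ packages the numerator as a single determinant, where the paper instead manipulates $F_{X,n}(t)\det\bM_{\sqrt{t}X+N,n}$ through Cramer's rule and cofactor expansions (Lemma~\ref{kl}). Your parity bookkeeping extends correctly to the bordered matrix (every Leibniz term carries exactly two bare factors of $X$, and the total $u$-parity is even whether or not $\sigma$ fixes the $X$-slot), so rationality and the homogeneity memberships \eqref{mp}--\eqref{mq} come out right. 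Your unimodular-conjugation argument for shift-invariance ($\bM_{\sqrt{t}(X+s)+N,n}=\bT\bM_{\sqrt{t}X+N,n}\bT^T$ with $\det\bT=1$) replaces the paper's use of the Vandermonde integral representation \eqref{dy}; your two-equal-rows argument for the vanishing of the top numerator coefficient replaces the paper's analytic argument ($\pp_n(X,t)\le\LL(X,t)\to0$); and your evaluation $\mathrm{tr}(\bM_{N,n}^{-1}\bB)=d_n$ (which indeed follows from $\bB=\tfrac12(D\bM_{N,n}+\bM_{N,n}D)$, $D=\mathrm{diag}(0,1,\dots,n)$) replaces the paper's term-by-term Leibniz computation for \eqref{ma}. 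All of these check out.

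The genuine gap is \eqref{mb}, $a_X^{n,d_n-1}=\det\bM_{X,n}$, which you correctly flag as the main obstacle but then dispose of in one sentence ("isolate the near-maximal Leibniz contributions \dots and show the surviving sum collapses to $\det\bM_{X,n}$"). This collapse is not a routine computation: the sub-leading coefficient of $\det\widetilde{\bM}$ is a sum over all permutations and over all choices of a single entry dropped by $u^2$, and exhibiting the cancellations that reduce it to $\det(\calX_{i+j})$ is precisely the content of the paper's Appendix~\ref{aw}, which requires a decomposition into several sub-sums and six separate identities, each proved by a different permutation bijection (composing with transpositions $(1\;i)$, inversion $\pi\mapsto\pi^{-1}$, etc.). Nothing in your sketch indicates how these cancellations are organized, and the two-equal-rows degeneracy that kills the top coefficient does not by itself determine the next coefficient. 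Note that the paper's main text avoids this computation entirely by an analytic detour you could also take: for continuous $X$, squeeze $\mm(X,t)\le\pp_n(X,t)\le\LL(X,t)$ and use the MMSE-dimension fact $\mm(X,t)\sim 1/t$ of~\cite{Wu2011} to force $a_X^{n,d_n-1}=b_X^{n,d_n}=\det\bM_{X,n}$; then, since both sides are fixed polynomials in $(\BE[X],\dots,\BE[X^{2n}])$, extend to arbitrary $X$ via the truncated-Hamburger-moment density argument (Lemma~\ref{sf} and Proposition~\ref{qx}). As it stands, your proposal proves everything in the theorem except \eqref{mb}, and \eqref{mb} is the one claim for which a plan is not yet a proof.
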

\begin{remark} \label{qw}
We list next additional details that are omitted from the statement of the theorem but are evident in its proof. First, if $N\sim \calN(0,1)$ is independent of $X,$ then the denominator and numerator in \eqref{iu} are $\det  \bM_{\sqrt{t}X+N,n}$ and
\begin{equation}
   \left( \BE\left[ X^2 \right] - F_{X,n}(t) \right) \det  \bM_{\sqrt{t}X+N,n},
\end{equation}
respectively (see~\eqref{mx} for the definition of $F_{X,n}$). Second, we show a stricter relation than the one in \eqref{mp}, namely,
\begin{equation} \label{ne}
    a_X^{n,j} \in H_{2j+2,\min(n,2j)+2,\tau_n(j)}(X)
\end{equation}
where $\tau_n(j) \le 2\min(n,j+1)$ is defined by
\begin{equation}
    \tau_n(j) = \left\lbrace \begin{array}{cl}
    2 & \text{if } j=0, \\
    2j+1 & \text{if } 1\le j \le \frac{n}{2}, \\
    2j & \text{if } \frac{n+1}{2} \le j \le n, \\
    2n & \text{if } j>n. 
    \end{array} \right.
\end{equation}
For example, \eqref{ne} says that $a_X^{2,1} \in H_{4,4,3}(X),$ whereas \eqref{mp} only gives the weaker relation $a_X^{2,1} \in H_{4,4,4}(X).$ Note that $H_{4,4,3}(X) \subsetneq H_{4,4,4}(X),$ and, in fact, $a_X^{2,1} = \sigma_X^4 = \calX_2^2-2\calX_1^2\calX_2+\calX_1^4 \in H_{4,4,2}(X).$ Third, we give formulas for all of the coefficients $a_X^{n,j}$ and $b_X^{n,j}$ in expanded polynomial form. Consider tuples $\bk=(k_0,\cdots,k_n)\in [2n]^{n+1},$ let $\sn$ be the symmetric group of permutations on $[n],$ and denote the sign of a permutation $\pi \in \sn$ by $\mathrm{sgn}(\pi).$ We show that, for each $j\in [d_n],$
\begin{equation} \label{tr}
    b_X^{n,j} \hspace{2mm} = \hspace{1cm} \sum_{\mathclap{\substack{
    \pi \in  \sn \\ 
    k_r \in [r+\pi(r)], ~ \forall r\in [n] \\
    k_0+\cdots+k_n = 2j }}} \hspace{6mm}  \beta_{\pi,\bk} ~ \calX_{k_0}\cdots \calX_{k_n},
\end{equation}
where the $\beta_{\pi,\bk}$ are integers given by
\begin{equation}
    \beta_{\pi,\bk} = \mathrm{sgn}(\pi) \prod_{r\in [n]} \binom{r+\pi(r)}{k_r} \BE[N^{r+\pi(r)-k_r}].
\end{equation}
Also, for each $j\in [d_n-1],$ denoting the restricted sums
\begin{equation}
    s_i(\bk) = \sum_{r\in [n]\setminus \{i\}} k_r,
\end{equation}
we derive the formula
\begin{align} \label{tq}
    a_X^{n,j} \hspace{3mm} = \hspace{1cm} \sum_{\mathclap{\substack{
    \pi \in  \sn \\ 
    k_r \in [r+\pi(r)], ~ \forall r\in [n] \\
    k_0+\cdots+k_n = 2j }}} \hspace{8mm}  \beta_{\pi,\bk} \calX_2 \calX_{k_0}\cdots \calX_{k_n} \hspace{3mm} - \hspace{1cm} \sum_{\mathclap{\substack{
    (i,\pi) \in  [n]\times \sn \\
    (w,z) \in [i]\times [\pi(i)] \\
    k_r \in [r+\pi(r)], ~ \forall r\in [n] \setminus\{i\} \\
    w+z+s_i(\bk) = 2j}}} \hspace{1cm} \gamma_{i,\pi,\bk,w,z} \calX_{w+1}\calX_{z+1}  \prod_{r\in [n]\setminus\{i\}}  \calX_{k_r}.
\end{align}
where the integers $\gamma_{i,\pi,\bk,w,z}$ are given by
\begin{equation}
     \gamma_{i,\pi,\bk,w,z} =  (-1)^{i+\pi(i)}\mathrm{sgn}(\pi)\binom{i}{w} \binom{\pi(i)}{z}\BE[N^{i-w}]\BE[N^{\pi(i)-z}]  \prod_{r\in [n]\setminus\{i\}}  \binom{r+\pi(r)}{k_r} \BE[N^{r+\pi(r)-k_r}].
\end{equation}
Finally, the Barnes $G$-function is also the determinant of the Hankel matrix of Gaussian moments
\begin{equation}
    G(n+2) = \det \bM_{N,n} = \prod_{k=1}^n k!.
\end{equation}
\end{remark}

\subsubsection{Proof Steps} 

Note that Theorem~\ref{iv} subsumes Theorem~\ref{ir}, because Theorem~\ref{iv} asserts the PMMSE-rationality claim in Theorem~\ref{ir} and gives an additional characterization of the coefficients of the numerator and denominator. We present the proof of Theorem~\ref{iv} in a series of intermediate results: 

\begin{enumerate}[label=(\roman*)]
    \item We show that both functions $t \mapsto \det  \bM_{\sqrt{t}X+N,n}$ and $t \mapsto F_{X,n}(t)\det  \bM_{\sqrt{t}X+N,n}$ are polynomials in $t$ of degree at most $d_n := \binom{n+1}{2}$ (Lemmas~\ref{kk} and \ref{kl}). In view of \eqref{iy}, this implies  that the PMMSE is a rational function
\begin{equation}
    \pp_n(X,t) = \frac{\sum_{j=0}^{d_n} a_{X}^{n,j} t^j}{\sum_{j=0}^{d_n} b_{X}^{n,j} t^j}
\end{equation}
for some constants $a_{X}^{n,j}$ and $b_{X}^{n,j},$ where the denominator is 
\begin{equation}
    \sum_{j=0}^{d_n} b_{X}^{n,j} t^j = \det  \bM_{\sqrt{t}X+N,n}
\end{equation}
and the numerator is 
\begin{equation}
    \sum_{j=0}^{d_n} a_{X}^{n,j} t^j = \left( \calX_2 - F_{X,n}(t) \right) \det  \bM_{\sqrt{t}X+N,n}.
\end{equation}
Shift-invariance~\eqref{sb} and formulas \eqref{lz}--\eqref{ly} follow immediately (Lemma~\ref{qy}). Also, $a_X^{n,d_n}=0$ follows because $\pp_n(X,t) \le \LL(X,t)$ and $\LL(X,t)\to 0$ as $t\to \infty.$

\item We show that $a_X^{n,d_n-1}=\det \bM_{X,n}$ (equation \eqref{mb}) holds under the assumption that $X$ is continuous. This is done by leveraging results on the MMSE dimension of continuous RVs~\cite{Wu2011}.

\item We derive the polynomial formulas for the constants $a_X^{n,j}$ and $b_X^{n,j}$ stated in Remark \ref{qw}.

\item We finish the proof for a general RV $X$ by leveraging results on the truncated Hamburger moment problem~\cite{Curto91}. In particular, we prove that if the restriction of a multivariate polynomial to moments of arbitrary continuous RVs vanishes then the polynomial must be identically zero (Proposition \ref{qx}).
\end{enumerate}

Proofs of the component lemmas in the following subsection are deferred to Appendices~\ref{ki}--\ref{sg}.

\subsection{Proof of Theorem \ref{iv}} \label{kg}

Throughout the proof, $N\sim \calN(0,1)$ is independent of $X.$ Let $\sn$ denote the symmetric group of permutations on the $n+1$ elements $[n].$ We utilize the following auxiliary result on the parity of $i+\pi(i)$ for a permutation $\pi \in \sn.$
 
\begin{lemma}\label{ac}
For any permutation $\pi \in \sn,$ there is an even number of elements $i\in [n]$ such that $i+\pi(i)$ is odd, i.e., the integer
\begin{equation} \label{kj}
    \delta(\pi) := \left| \{ i\in [n] \; ; \; i+\pi(i) ~~ \mathrm{is}~\mathrm{odd}\}\right|
\end{equation}
is even. 
\end{lemma}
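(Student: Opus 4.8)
The plan is to translate the parity condition into a statement about how the bijection $\pi$ mixes even and odd indices, and then exploit the fact that a bijection preserves cardinalities. The starting observation is that $i+\pi(i)$ is odd if and only if $i$ and $\pi(i)$ have opposite parities, so $\delta(\pi)$ counts exactly the indices $i\in[n]$ at which $\pi$ changes parity.

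First I would partition $[n]=\{0,1,\dots,n\}$ into its even part $E$ and its odd part $O$. I would then introduce two counts: let $a$ be the number of $i\in E$ with $\pi(i)\in O$ (``even-to-odd'' transitions) and let $b$ be the number of $i\in O$ with $\pi(i)\in E$ (``odd-to-even'' transitions). By the parity observation, every index contributing to $\delta(\pi)$ is of exactly one of these two types, so $\delta(\pi)=a+b$.

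The crux of the argument is that $\pi$ is a bijection and hence maps $[n]$ onto itself, so the image contains exactly $|E|$ even elements. These even image-elements arise in two disjoint ways: from the $|E|-a$ even indices that $\pi$ keeps inside $E$, and from the $b$ odd indices that $\pi$ sends into $E$. Since every even element is hit exactly once, $(|E|-a)+b=|E|$, which forces $a=b$. Therefore
\begin{equation*}
    \delta(\pi) = a+b = 2a,
\end{equation*}
which is even, as claimed.

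I do not expect a genuine obstacle here; the whole content is the conservation identity $a=b$, which is immediate once the even-input/odd-input bookkeeping of the image is set up correctly. The only point requiring a little care is making the two transition counts genuinely disjoint and exhaustive for $\delta(\pi)$, which the parity dichotomy handles. As an alternative one could decompose $\pi$ into disjoint cycles and note that traversing any cycle returns to the starting parity, so each cycle contributes an even number of parity changes; summing over cycles again yields that $\delta(\pi)$ is even.
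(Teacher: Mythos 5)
Your proof is correct and is essentially the same as the paper's: the paper also splits $[n]$ into even and odd indices (in fact into an arbitrary partition $A\cup B$), notes that the parity-changing indices are exactly the even-to-odd and odd-to-even transitions, and concludes from bijectivity that these two counts coincide, which is precisely your identity $a=b$. Your write-up even makes explicit the counting step $(|E|-a)+b=|E|$ that the paper leaves implicit.
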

\begin{proof}
 See Appendix \ref{ki}.
\end{proof}

\subsubsection{Rationality of the PMMSE} \label{yi}

We introduce the following auxiliary polynomials, where $R$ is a RV independent of $N\sim \calN(0,1).$ For $\ell$ even, we set
\begin{equation}\label{cj}
e_{R,X,\ell}(t):= \mathbb{E}\left[ R\left( \sqrt{t}X+N \right)^\ell \right],
\end{equation}
and for $\ell$ odd we set (for $t>0$)
\begin{equation}\label{ck}
o_{R,X,\ell}(t):= t^{-1/2}\mathbb{E}\left[ R\left( \sqrt{t}X+N \right)^\ell \right] .
\end{equation}
That $e_{R,X,\ell}$ and $o_{R,X,\ell}$ are polynomials in $t$ can be seen as follows. Recall that $\mathbb{E}[N^r]=0$ for odd $r\in \BN.$ If $\ell$ is even then expanding the right hand side of \eqref{cj} yields
\begin{equation}\label{ch}
    e_{R,X,\ell}(t) = \sum_{k \; \text{even}} \binom{\ell}{k} t^{k/2}   \BE\left[ RX^k \right] \BE \left[ N^{\ell-k} \right],
\end{equation}
whereas if $\ell$ is odd then expanding the right hand side of \eqref{ck} yields
\begin{equation}\label{ci}
    o_{R,X,\ell}  =  \sum_{k \; \text{odd}} \binom{\ell}{k} t^{(k-1)/2}  \BE \left[ RX^k \right] \BE \left[N^{\ell-k}\right].
\end{equation}
Both expressions on the right hand side of \eqref{ch} and \eqref{ci} are polynomials of degree at most $\lfloor \ell/2 \rfloor.$ Further, the coefficient of $t^{\lfloor \ell/2 \rfloor}$ in either polynomial is $\mathbb{E}\left[ R X^{\ell} \right].$

We show first that the function $t \mapsto \det  \bM_{\sqrt{t}X+N,n}$ is a polynomial in $t,$ and find the leading coefficient. For the proof, we utilize the polynomials $e_{1,X,\ell}$ and $o_{1,X,\ell}$ (i.e., $R=1$) as defined in \eqref{cj} and \eqref{ck}.

\begin{lemma} \label{kk}
For a RV $X$ and $n\in \BN_{>0}$ such that $\BE\left[ X^{2n} \right]<\infty,$ and for $N\sim \calN(0,1)$ independent of $X,$ the function $t\mapsto  \det \bM_{\sqrt{t}X+N,n}$ over $t\in [0,\infty)$ is a polynomial of degree at most $d_n = \binom{n+1}{2}.$ Further, the coefficient of $t^{d_n}$ is $\det \bM_{X,n},$ which is nonzero if and only if $|\supp(X)|>n.$
\end{lemma}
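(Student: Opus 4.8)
The plan is to expand $\det \bM_{\sqrt{t}X+N,n}$ by the Leibniz formula and to argue term by term that each summand is an honest polynomial in $t$ of degree at most $d_n.$ First I would record how each entry depends on $t.$ Writing $Y=\sqrt{t}X+N,$ the $(i,j)$ entry is $\BE[Y^{i+j}],$ so applying the expansions \eqref{ch} and \eqref{ci} with $R=1$ gives
\begin{equation}
    \BE\left[ Y^{i+j} \right] = \begin{cases} e_{1,X,i+j}(t) & \text{if } i+j \text{ is even}, \\ \sqrt{t}\; o_{1,X,i+j}(t) & \text{if } i+j \text{ is odd}, \end{cases}
\end{equation}
where $e_{1,X,\ell}$ and $o_{1,X,\ell}$ are polynomials of degree at most $\lfloor \ell/2 \rfloor$ whose coefficient of $t^{\lfloor \ell/2\rfloor}$ is $\calX_\ell.$ Thus every entry is either a polynomial in $t$ or $\sqrt{t}$ times one, and in either case its leading behavior as $t\to\infty$ is $\calX_{i+j}\,t^{(i+j)/2}.$

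The crux will be showing that the half-integer powers of $t$ cancel inside each Leibniz term; this is the one step I expect to be genuinely nontrivial, and it is precisely what Lemma \ref{ac} is for. Writing
\begin{equation}
    \det \bM_{\sqrt{t}X+N,n} = \sum_{\pi \in \sn} \mathrm{sgn}(\pi) \prod_{i\in [n]} \BE\left[ Y^{i+\pi(i)} \right],
\end{equation}
I would observe that for a fixed $\pi$ exactly the indices $i$ with $i+\pi(i)$ odd contribute a factor $\sqrt{t},$ and there are $\delta(\pi)$ of them. Since Lemma \ref{ac} guarantees $\delta(\pi)$ is even, these factors multiply to $t^{\delta(\pi)/2},$ an integer power of $t.$ Hence each $\pi$-summand, and therefore the whole determinant, is a polynomial in $t.$ Everything after this point is bookkeeping.

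To bound the degree, note each factor $\BE[Y^{i+\pi(i)}]$ has $t$-degree at most $(i+\pi(i))/2$ (counting the $\sqrt{t}$ in the odd case as $1/2$), so the $\pi$-summand has degree at most
\begin{equation}
    \sum_{i\in [n]} \frac{i+\pi(i)}{2} = \frac{1}{2}\left( \sum_{i\in [n]} i + \sum_{i\in [n]} \pi(i) \right) = \sum_{i=0}^n i = \binom{n+1}{2} = d_n,
\end{equation}
using that $\pi$ permutes $[n].$ For the coefficient of $t^{d_n}$ I would take, in each factor, its leading term $\calX_{i+\pi(i)}\,t^{(i+\pi(i))/2},$ so the $\pi$-summand contributes $\mathrm{sgn}(\pi)\prod_{i\in[n]}(\bM_{X,n})_{i,\pi(i)}\cdot t^{d_n}.$ Summing over $\pi$ and invoking the Leibniz formula once more identifies the coefficient of $t^{d_n}$ as exactly $\det \bM_{X,n}.$ Finally, Lemma \ref{kv} gives $\det \bM_{X,n}\neq 0$ if and only if $|\supp(X)|>n,$ which finishes the argument.
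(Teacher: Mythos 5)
Your proof is correct and follows essentially the same route as the paper's: the Leibniz expansion, the parity lemma (Lemma \ref{ac}) to turn the $\delta(\pi)$ factors of $\sqrt{t}$ into an integer power of $t,$ the identical degree count $\sum_{i\in[n]}(i+\pi(i))/2 = d_n,$ and extraction of the coefficient of $t^{d_n}$ by taking leading terms and invoking Leibniz once more. The only (harmless) difference is that you cite Lemma \ref{kv} for $\det \bM_{X,n}\neq 0 \Leftrightarrow |\supp(X)|>n,$ which is in fact the more apt reference than the paper's citation of Lemma \ref{io} at that step.
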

\begin{proof}
By Leibniz's formula,
\begin{equation}\label{cl}
    \det  \bM_{\sqrt{t}X+N,n} = \sum_{\pi \in \sn}  \mathrm{sgn}(\pi) \prod_{r\in [n]} \BE\left[ \left( \sqrt{t}X+N \right)^{r+\pi(r)} \right].
\end{equation}
With the auxiliary polynomials $e_{1,X,\ell}$ and $o_{1,X,\ell}$ as defined in (\ref{cj}) and (\ref{ck}), and $\delta$ as defined in \eqref{kj}, we may write
\begin{equation}
\det  \bM_{\sqrt{t}X+N,n} = \sum_{\pi \in \sn} \mathrm{sgn}(\pi) t^{\delta(\pi)/2}  \quad \quad
\prod_{\mathclap{\substack{i \; : \; i+\pi(i) \; \text{odd} }}} ~~ o_{1,X,i+\pi(i)}(t) \quad \quad \prod_{\mathclap{\substack{j \; : \; j+\pi(j) \; \text{even} }}}  ~~ e_{1,X,j+\pi(j)}(t),
\end{equation}
thereby showing that $\det  \bM_{\sqrt{t}X+N,n}$ is a polynomial in $t$ by evenness of each $\delta(\pi)$ (Lemma \ref{ac}). Furthermore, for each permutation $\pi \in \sn,$
\begin{align}
\deg \left( t^{\delta(\pi)/2} \hspace{3mm} \prod_{\mathclap{\substack{ i+\pi(i) \; \text{odd} }}} o_{1,X,i+\pi(i)}(t) \hspace{3mm} \prod_{\mathclap{\substack{ j+\pi(j) \; \text{even} }}}  e_{1,X,j+\pi(j)}(t) \right) &\le \frac{\delta(\pi)}{2} + \hspace{-2mm}\sum_{ i+\pi(i) \; \text{odd}} \hspace{-2mm} \frac{i+\pi(i)-1}{2} + \hspace{-2mm} \sum_{ j+\pi(j) \; \text{even}} \hspace{-2mm} \frac{j+\pi(j)}{2} \\
&= \frac{1}{2} \sum_{k=0}^n k + \pi(k) = \frac{n(n+1)}{2} = d_n.
\end{align}
Therefore, we also have
\begin{equation}
    \deg\left( \det \bM_{\sqrt{t}X+N,n} \right) \le d_n.
\end{equation}
Finally, taking the terms of highest degrees (in $\sqrt{t}$) in (\ref{cl}), we obtain that the coefficient of $t^{d_n}$ in $\det  \bM_{\sqrt{t}X+N,n}$ is 
\begin{equation}
    \sum_{\pi \in \sn} \mathrm{sgn}(\pi) \prod_{r\in [n]} \calX_{r+\pi(r)},
\end{equation}
which is equal to $\det  \bM_{X,n}$ by the Leibniz determinant formula. This coefficient is non-negative because $\bM_{X,n}$ is positive-semidefinite, and it is nonzero if and only if $|\mathrm{supp}(X)|>n$ by Lemma \ref{io}.
\end{proof}

The same method of proof in Lemma \ref{kk} can be used to show that $F_{X,n}(t)\det  \bM_{\sqrt{t}X+N,n}$ is a polynomial in $t$ and to characterize its leading coefficient. In this case, we utilize $e_{X,X,\ell}$ and $o_{X,X,\ell}$ (i.e., $R=X$).

\begin{lemma} \label{kl}
For $n\in \BN,$ $N\sim \calN(0,1),$ and a RV $X$ independent of $N$ and satisfying $\BE\left[ X^{2n} \right] < \infty,$ the function $t \mapsto F_{X,n}(t) \det \bM_{\sqrt{t}X+
N,n}$ is a polynomial of degree at most $d_n = \binom{n+1}{2}$ and the coefficient of $t^{d_n}$ is
\begin{equation} \label{sd}
    \sum_{i \in [n]}  \sum_{\pi \in \sn} \mathrm{sgn}(\pi) \calX_{i+1}\calX_{\pi(i)+1} \prod_{r\in [n]\setminus \{i\}} \calX_{r+\pi(r)}.
\end{equation}
\end{lemma}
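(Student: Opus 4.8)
The plan is to reduce $F_{X,n}(t)\det\bM_{\sqrt{t}X+N,n}$ to a single determinant and then apply the Leibniz expansion exactly as in the proof of Lemma~\ref{kk}. Since $\bM_{\sqrt{t}X+N,n}$ is positive-definite for every $t\ge 0$ (Lemma~\ref{io}), we have $\bM_{\sqrt{t}X+N,n}^{-1}=\mathrm{adj}(\bM_{\sqrt{t}X+N,n})/\det\bM_{\sqrt{t}X+N,n}$, so that $F_{X,n}(t)\det\bM_{\sqrt{t}X+N,n}=\bv_{X,n}(t)^T\,\mathrm{adj}(\bM_{\sqrt{t}X+N,n})\,\bv_{X,n}(t)$. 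This quadratic form in the adjugate is conveniently packaged as a bordered determinant: with the $(n+2)\times(n+2)$ symmetric matrix
\[
    \widetilde{\bM}(t):=\begin{pmatrix} 0 & \bv_{X,n}(t)^T \\ \bv_{X,n}(t) & \bM_{\sqrt{t}X+N,n} \end{pmatrix},
\]
the Schur-complement identity gives $F_{X,n}(t)\det\bM_{\sqrt{t}X+N,n}=-\det\widetilde{\bM}(t)$. Every entry of $\widetilde{\bM}(t)$ has the form $\BE[R(\sqrt{t}X+N)^\ell]$ with $R\in\{1,X\}$ (the inner block uses $R=1$, the border uses $R=X$ via $v_{X,k}(t)=\BE[X(\sqrt{t}X+N)^k]$), hence is $e_{R,X,\ell}(t)$ or $\sqrt{t}\,o_{R,X,\ell}(t)$ according to the parity of $\ell$, exactly as in Lemma~\ref{kk}.

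Index the border by $-1$ and the inner block by $[n]$, and apply Leibniz to $\det\widetilde{\bM}(t)$. Because the $(-1,-1)$ entry vanishes, only permutations $\widetilde\sigma$ of $\{-1\}\cup[n]$ that move $-1$ survive; each such $\widetilde\sigma$ is determined by $j:=\widetilde\sigma(-1)\in[n]$, the index $i\in[n]$ with $\widetilde\sigma(i)=-1$, and a bijection $\sigma:[n]\setminus\{i\}\to[n]\setminus\{j\}$, contributing
\[
    \mathrm{sgn}(\widetilde\sigma)\,v_{X,i}(t)\,v_{X,j}(t)\prod_{r\in[n]\setminus\{i\}}\BE\!\left[(\sqrt{t}X+N)^{r+\sigma(r)}\right].
\]
To see this is a polynomial in $t$, I would count the $\sqrt{t}$-factors: one from $v_{X,i}$ (resp. $v_{X,j}$) when $i$ (resp. $j$) is odd, and one for each $r\in[n]\setminus\{i\}$ with $r+\sigma(r)$ odd. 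Applying the parity argument behind Lemma~\ref{ac} verbatim to $\widetilde\sigma$ on the $(n+2)$-element set—$\sum_r(r+\widetilde\sigma(r))$ is even, so an even number of the summands $r+\widetilde\sigma(r)$ are odd—and accounting for the two border summands $-1+j$ and $i-1$ (whose oddness corresponds to $j,i$ being \emph{even}), one checks that the total number of $\sqrt{t}$-factors is congruent to $2$ modulo $2$, i.e. even. This establishes polynomiality.

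For the degree and leading coefficient, the top-order term of $v_{X,k}(t)$ is $t^{k/2}\calX_{k+1}$ and that of $\BE[(\sqrt{t}X+N)^{r+\sigma(r)}]$ is $t^{(r+\sigma(r))/2}\calX_{r+\sigma(r)}$. Since $\sigma$ maps $[n]\setminus\{i\}$ onto $[n]\setminus\{j\}$, the exponents sum to $i+j+(n(n+1)-i-j)=n(n+1)$, giving $t$-degree at most $d_n$, with the coefficient of $t^{d_n}$ obtained by multiplying these top coefficients. Reindexing each triple $(i,j,\sigma)$ by the permutation $\pi\in\sn$ with $\pi(i)=j$ and $\pi|_{[n]\setminus\{i\}}=\sigma$, and using that inserting the extra index $-1$ into a cycle of $\pi$ flips the sign—$\mathrm{sgn}(\widetilde\sigma)=-\mathrm{sgn}(\pi)$, since the sign equals $(-1)^{\#\mathrm{elts}-\#\mathrm{cycles}}$ and the cycle count is unchanged—the overall factor $-\det\widetilde{\bM}$ yields precisely the claimed coefficient $\sum_{i\in[n]}\sum_{\pi\in\sn}\mathrm{sgn}(\pi)\,\calX_{i+1}\calX_{\pi(i)+1}\prod_{r\in[n]\setminus\{i\}}\calX_{r+\pi(r)}$.

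The main obstacle is the sign/parity bookkeeping: correctly matching $\mathrm{sgn}(\widetilde\sigma)$ to $\mathrm{sgn}(\pi)$ and tracking the two border summands in the parity count, where an off-by-one would spuriously leave $\sqrt{t}$-terms or flip the sign of the leading coefficient. The cleanest way to avoid errors is to invoke the cycle-counting formula for the sign, together with the observation that the generalized parity statement of Lemma~\ref{ac}—a permutation of \emph{any} finite set of integers has an even number of indices $r$ with $r+\widetilde\sigma(r)$ odd—applies directly to the bordered index set $\{-1\}\cup[n]$.
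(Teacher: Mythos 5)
Your proof is correct, and it reaches the paper's formula by a genuinely different packaging of the same Leibniz-type expansion. The paper's own proof (Appendix~\ref{km}) goes through Cramer's rule: it writes $F_{X,n}(t)\det\bM_{\sqrt{t}X+N,n}=\bv_{X,n}(t)^T\bC_{X,n}(t)\bv_{X,n}(t)$ with $\bC_{X,n}$ the cofactor matrix, then analyzes each product $v_{X,i}\,c_{X,n}^{(i,j)}\,v_{X,j}$ separately, proving polynomiality and the degree bound $\tfrac{n(n+1)}{2}-\lfloor\tfrac{i+j}{2}\rfloor$ for each cofactor by cases on the parity of $i+j,$ and finally sums over $(i,j)$ using the partition $\sn=\bigcup_{j\in[n]}T_n^{(i,j)}.$ You instead compress the quadratic form in the adjugate into the single bordered determinant $-\det\widetilde{\bM}(t)$ via the Schur-complement identity and apply Leibniz once on the augmented index set $\{-1\}\cup[n]$: the surviving permutations biject with pairs $(i,\pi)\in[n]\times\sn$ (with $j=\pi(i),$ the transposition case $i=j$ included consistently), the sign conversion $\mathrm{sgn}(\widetilde\sigma)=-\mathrm{sgn}(\pi)$ by cycle counting cancels the minus sign from the Schur identity, and polynomiality follows from the generalization of Lemma~\ref{ac} to permutations of an arbitrary finite set of integers (true because $\sum_r\bigl(r+\widetilde\sigma(r)\bigr)$ is twice an integer). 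I checked the delicate points: the identity $F_{X,n}(t)\det\bM_{\sqrt{t}X+N,n}=-\det\widetilde{\bM}(t)$ is legitimate since $\bM_{\sqrt{t}X+N,n}$ is invertible for every $t\ge 0$ (Lemma~\ref{io}), and your parity bookkeeping is right: the number of $\sqrt{t}$-factors differs from the (even) number of odd summands $r+\widetilde\sigma(r)$ by $2$ minus twice the number of even indices among $i,j,$ hence is even, so no stray half-integer powers of $t$ survive and the coefficient of $t^{d_n}$ is exactly the product of the factors' leading coefficients, summed with signs. As for what each route buys: yours is shorter and makes the signs automatic (one Leibniz expansion, no cofactor-by-cofactor case analysis); the paper's cofactor machinery is heavier but yields intermediate objects it reuses — the per-cofactor degree bounds, and the quadratic-form-in-cofactor identity~\eqref{mz} that Appendix~\ref{re} exploits to derive expanded formulas for all the coefficients $a_X^{n,j},$ not just the leading one.
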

\begin{proof}
See Appendix \ref{km}.
\end{proof}
\begin{remark}
The coefficient of $t^{d_n}$ given by~\eqref{sd} is shown in the remainder of this section to simplify to $\calX_2 \det \bM_{X,n}.$ See also Appendix~\ref{aw} for another proof.
\end{remark}

We next combine Lemmas \ref{kk} and \ref{kl} to build the rational polynomial structure of $\pp_n(X,t).$ Recall from \eqref{iy} that 
\begin{equation} \label{mh}
    \pp_n(X,t) = \calX_2 - F_{X,n}(t).
\end{equation}
Multiplying and dividing by $\det \bM_{\sqrt{t}X+N,n},$ 
\begin{equation} \label{mr}
    \pp_n(X,t) = \frac{\left( \calX_2 - F_{X,n}(t) \right)\det \bM_{\sqrt{t}X+N,n}}{\det \bM_{\sqrt{t}X+N,n}}.
\end{equation}
From Lemmas \ref{kk} and \ref{kl}, we know that both expressions in the numerator and the denominator in \eqref{mr} are polynomials in $t,$ each of degree at most $d_n.$ Therefore, we deduce that $\pp_n(X,t)$ is a rational function
\begin{equation} \label{mg}
    \pp_n(X,t) = \frac{\sum_{j\in [d_n]} ~ a_{X}^{n,j} ~ t^j}{\sum_{j \in [d_n]} ~  b_{X}^{n,j} ~ t^j}
\end{equation}
where the constants $a_X^{n,j}$ and $b_X^{n,j}$ are defined by
\begin{align}
    \left( \calX_2 - F_{X,n}(t) \right)\det \bM_{\sqrt{t}X+N,n} &= \sum_{j\in[d_n]} a_{X}^{n,j} ~ t^j \label{mv} \\
    \det \bM_{\sqrt{t}X+N,n} &= \sum_{j\in[d_n]} b_{X}^{n,j} ~ t^j. \label{mu}
\end{align}

In \eqref{mg}, the numerator has terms of degree up to $d_n.$ However, as stated in Theorem \ref{iv} (see equation \eqref{iu}), we only need a numerator of degree $d_n-1,$ i.e., we have that
\begin{equation} \label{mi}
    a_X^{n,d_n} = 0.
\end{equation}
To see that \eqref{mi} holds, note that
\begin{equation}
    \pp_n(X,t) \le \LL(X,t) \le \frac{1}{t}.
\end{equation}
Hence, $\pp_n(X,t) \to 0$ as $t\to \infty,$ thereby yielding~\eqref{mi} in view of~\eqref{mg}. 

The following lemma derives equations \eqref{lz}--\eqref{ly}, as stated in Theorem~\ref{iv}.

\begin{lemma} \label{qy}
Consider a RV $X$ and a positive integer $n$ such that $\BE[X^{2n}]<\infty,$ and set $d_n = \binom{n+1}{2}.$ Equations~\eqref{lz}--\eqref{ly} hold, i.e.,
\begin{align}
    b_X^{n,0} &= G(n+2) \\
    b_X^{n,1} &= G(n+2) d_n \sigma_X^2  \label{ra} \\
    b_X^{n,d_n} &= \det  \bM_{X,n} \label{qz} \\
    a_X^{n,0} &= G(n+2) \sigma_X^2 .
\end{align}
Further, for any $j\in [d_n]$ and $s\in \BR,$ we have the shift-invariance \eqref{sb}
\begin{equation} \label{rb}
    a_{X+s}^{n,j} = a_X^{n,j} \quad \text{and} \quad b_{X+s}^{n,j} = b_X^{n,j}
\end{equation}
\end{lemma}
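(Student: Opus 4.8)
The plan is to read the five assertions off the two defining polynomial identities \eqref{mv}--\eqref{mu}, i.e. $\sum_j b_X^{n,j}t^j=\det\bM_{\sqrt t X+N,n}$ and, using \eqref{mh}, $\sum_j a_X^{n,j}t^j=\pp_n(X,t)\det\bM_{\sqrt t X+N,n}$. Four of the coefficients are boundary evaluations. Setting $t=0$ in the denominator gives $b_X^{n,0}=\det\bM_{N,n}=G(n+2)$ (the Gaussian-Hankel identity recorded in Remark~\ref{qw}). The leading coefficient $b_X^{n,d_n}=\det\bM_{X,n}$ is exactly Lemma~\ref{kk}. Setting $t=0$ in the numerator gives $a_X^{n,0}=\pp_n(X,0)\,b_X^{n,0}$, and since $\pp_n(X,0)=\sigma_X^2$ by Corollary~\ref{is}, this is $G(n+2)\sigma_X^2$.

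For the shift-invariance \eqref{rb} I would first prove that the Hankel determinant is translation-invariant. Writing $(W+c)^k=\sum_{i}\binom{k}{i}c^{k-i}W^i$ exhibits $\bW^{(n)}_{+c}=L_c\bW^{(n)}$ for the unit lower-triangular Pascal matrix with $(L_c)_{ki}=\binom{k}{i}c^{k-i}$, whence $\bM_{W+c,n}=L_c\bM_{W,n}L_c^T$ and $\det\bM_{W+c,n}=\det\bM_{W,n}$ because $\det L_c=1$. Applying this with $W=\sqrt t X+N$ and $c=\sqrt t\,s$ (so $W+c=\sqrt t(X+s)+N$) yields $\det\bM_{\sqrt t(X+s)+N,n}=\det\bM_{\sqrt t X+N,n}$ as polynomials in $t$, i.e. $b_{X+s}^{n,j}=b_X^{n,j}$. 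The $a$-coefficients then follow from the numerator identity: Proposition~\ref{hn} with $\alpha=s$, $\beta=\sqrt t\,s$ (using $\sqrt t(X+s)+N=(\sqrt t X+N)+\sqrt t\,s$) gives the PMMSE shift-invariance $\pp_n(X+s,t)=\pp_n(X,t)$, and multiplying the two invariant factors gives $a_{X+s}^{n,j}=a_X^{n,j}$.

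The one genuinely computational claim---and the main obstacle---is $b_X^{n,1}=G(n+2)d_n\sigma_X^2$, and the shift-invariance just established is what makes it clean. Since both $b_X^{n,1}$ and $\sigma_X^2$ are translation-invariant, I may assume $\calX_1=0$, so $\sigma_X^2=\calX_2$. With the mean removed, every entry $M_{ij}(t):=\BE[(\sqrt t X+N)^{i+j}]$ with $i+j$ odd is $O(t^{3/2})$ (its lowest surviving term is the $k=3$ term), hence differentiable at $t=0$ with vanishing derivative; thus $M(t)$ is genuinely differentiable at the boundary point $t=0$ and Jacobi's formula applies, $b_X^{n,1}=\tfrac{d}{dt}\big|_{0}\det M(t)=\det\bM_{N,n}\,\mathrm{tr}\!\big(\bM_{N,n}^{-1}M'(0)\big)$.

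To evaluate the trace I would use the Gaussian recurrence $\BE[N^\ell]=(\ell-1)\BE[N^{\ell-2}]$, which collapses $M'(0)_{ij}=\binom{i+j}{2}\calX_2\BE[N^{i+j-2}]$ into $\tfrac{\calX_2}{2}(i+j)\BE[N^{i+j}]$ for $i+j$ even; since both sides vanish when $i+j$ is odd, this holds for all $i,j$, giving $M'(0)=\tfrac{\calX_2}{2}\big(D\bM_{N,n}+\bM_{N,n}D\big)$ with $D=\mathrm{diag}(0,1,\dots,n)$. By cyclicity, $\mathrm{tr}(\bM_{N,n}^{-1}D\bM_{N,n})=\mathrm{tr}(\bM_{N,n}^{-1}\bM_{N,n}D)=\mathrm{tr}(D)$, so the trace telescopes to $\calX_2\,\mathrm{tr}(D)=\calX_2\binom{n+1}{2}=\calX_2 d_n$, yielding $b_X^{n,1}=G(n+2)d_n\calX_2=G(n+2)d_n\sigma_X^2$. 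I expect the only delicate points to be the justification of differentiability at $t=0$ (which is precisely why the reduction to $\calX_1=0$ is made rather than merely convenient) and the two trace identities; everything else is endpoint evaluation and the Pascal-matrix conjugation.
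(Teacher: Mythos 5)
Your proof is correct and follows the same skeleton as the paper's argument in Appendix~\ref{rd}: evaluate the two defining polynomials at $t=0$, quote Lemma~\ref{kk} for the leading coefficient, prove shift-invariance first, and use it to reduce the computation of $b_X^{n,1}$ to the centered case $\calX_1=0$. Two of your steps, however, are implemented by genuinely different means. For shift-invariance the paper invokes the i.i.d.\ representation $\det \bM_{Z,n}=\frac{1}{(n+1)!}\,\BE\bigl[\prod_{0\le i<j\le n}(Z_i-Z_j)^2\bigr]$ (equation~\eqref{dy}), which is manifestly translation-invariant; your Pascal-matrix congruence $\bM_{W+c,n}=L_c\bM_{W,n}L_c^T$ with $\det L_c=1$ reaches the same conclusion from nothing more than the binomial theorem, so it is more self-contained. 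For $b_X^{n,1}$ the paper truncates every entry of $\bM_{\sqrt{t}X+N,n}$ at first order in $t$ and expands by Leibniz's formula, so that the Gaussian recurrence $\binom{m}{2}\BE[N^{m-2}]=\frac{m}{2}\BE[N^{m}]$ turns the coefficient into $\frac{\sigma_X^2}{2}\sum_{\pi}\mathrm{sgn}(\pi)\sum_k (k+\pi(k))\prod_i \BE[N^{i+\pi(i)}]$, and the inner sum evaluates to $n(n+1)$ for every $\pi$; your Jacobi-formula route packages the identical recurrence as the matrix identity $M'(0)=\frac{\calX_2}{2}(D\bM_{N,n}+\bM_{N,n}D)$ and lets cyclicity of the trace do the counting, $\mathrm{tr}(\bM_{N,n}^{-1}M'(0))=\calX_2\,\mathrm{tr}(D)=\calX_2 d_n$, avoiding permutation bookkeeping entirely. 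Your remark that centering is what makes the individual entries (and not merely the determinant) right-differentiable at $t=0$ is exactly the right justification for applying Jacobi's formula at the boundary, and is the precise analogue of why the paper's truncation argument needs $\calX_1=0$.

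One citation needs repair. You obtain $a_X^{n,0}$ from $\pp_n(X,0)=\sigma_X^2$ by citing Corollary~\ref{is}, but in the paper that corollary is deduced from Theorem~\ref{ir}, which is subsumed by Theorem~\ref{iv}, whose proof uses the present lemma; as cited, the step is circular. The fix is immediate and does not change your argument: $\pp_n(X,0)=\pp_n(X\mid N)=\sigma_X^2$ follows directly from the independence property~\ref{rs} of Proposition~\ref{me} (the best polynomial estimate of $X$ given an independent $N$ is the constant $\BE[X]$), or, as the paper does, from the one-line computation $F_{X,n}(0)=\calX_1^2$ using the fact that $\bv_{X,n}(0)=\calX_1\BE\bigl[\bN^{(n)}\bigr]$ and that $\BE\bigl[\bN^{(n)}\bigr]$ is the leftmost column of $\bM_{N,n}$.
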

\begin{proof}
See Appendix \ref{rd}.
\end{proof}

We have yet to prove the claim in the theorem (see equation \eqref{mb}) that $a_X^{n,d_n-1} = \det \bM_{X,n}.$ We prove this equation next for continuous $X,$ then generalize for every RV $X.$

\subsubsection{Proof of~\eqref{mb} for Continuous RVs} \label{rf}

Assume for now that $X$ is continuous. In particular, $|\supp(X)|=\infty,$ so $\det \bM_{X,n}\neq 0.$ Now, note that the PMMSE is bounded by the LMMSE and the MMSE,
\begin{equation} \label{mj}
    \mm(X,t) \le \pp_n(X,t) \le \LL(X,t).
\end{equation}
We have that $\LL(X,t) \sim 1/t,$ and $\mm(X,t) \sim 1/t$ as $X$ is assumed to be continuous~\cite{Wu2011}. Thus, by \eqref{mj}, $\pp_n(X,t)\sim 1/t$ as $t\to \infty.$ Moreover, in the denominator of $\pp_n(X,t)$ in \eqref{mg} we have that, by Lemma~\ref{qy},
\begin{equation}
    b_X^{n,d_n} = \det \bM_{X,n} \neq 0,
\end{equation}
i.e., the denominator is of degree exactly $d_n.$ Therefore, from \eqref{mg} and \eqref{mi}, we deduce
\begin{equation}
    a_X^{n,d_n-1} = b_X^{n,d_n} = \det \bM_{X,n}.
\end{equation}
We have thus shown the desired equation~\eqref{mb} when $X$ is continuous.

Next, we return to the general case (i.e., not necessarily continuous $X$). Our approach is first to realize that the equation $a_X^{n,d_n-1}= \det \bM_{X,n}$ asserts the vanishing of a multivariate polynomial in the moments of every continuous $2n$-times integrable RV $X.$ Then, we show that such locus is too large for any nonzero polynomial, i.e., that such a polynomial must vanish identically. This result would imply, in particular, that $a_X^{n,d_n-1}= \det \bM_{X,n}$ holds even when $X$ is not continuous.

\subsubsection{Coefficient Formulas} \label{rg}

We develop next multivariate-polynomial expressions for the coefficients in the PMMSE as given by \eqref{mg}, which show that those coefficients are homogeneous polynomials in the moments; in particular, we prove relations \eqref{mp} and \eqref{mq}. First, we investigate $\det \bM_{\sqrt{t}X+N,n}$ when seen as a polynomial in $t.$ 

We have the expansion (see \eqref{cl})
\begin{equation} \label{amz}
    \det  \bM_{\sqrt{t}X+N,n} = \sum_{\pi \in \sn} \mathrm{sgn}(\pi) \prod_{r\in [n]} \BE\left[ \left( \sqrt{t}X+N \right)^{r+\pi(r)} \right]
\end{equation}
by the Leibniz formula. In the expressions that follow, we denote the tuple $\bk=(k_0,\cdots,k_n).$ Expanding the powers inside the expectation and computing the expectation, we get a formula of the form
\begin{equation} \label{na}
    \det \bM_{\sqrt{t}X+N,n} \quad = \quad \quad \sum_{\mathclap{\substack{
    \pi \in  \sn \\ 
    k_r \in [r+\pi(r)], ~ \forall r\in [n] }}} \quad \quad t^{(k_0+\cdots+k_n)/2} \calX_{k_0}\cdots \calX_{k_n} \beta_{\pi;\bk},
\end{equation}
where the $\beta_{\pi;\bk}$ are integers given by\footnote{From this formula, one may deduce an alternative proof of Lemma \ref{kk}. The term $\beta_{\pi;\bk}$ is nonzero if and only if all the differences $r+\pi(r)-k_r$ are even. Suppose, for the sake of contradiction, that this is true for some fixed permutation $\pi\in \sn$ and naturals $k_0,\cdots,k_n$ for which $k_0+\cdots+k_n$ is odd. Then, there is an odd number of odd numbers $k_r.$ But, by Lemma \ref{ac}, there is an even number of odd numbers $r+\pi(r).$ Therefore, there is an $r\in [n]$ for which $r+\pi(r)$ and $k_r$ have different parities, contradicting evenness of $r+\pi(r)-k_r.$}
\begin{equation}
    \beta_{\pi,\bk} :=  \mathrm{sgn}(\pi) \prod_{r\in [n]} \binom{r+\pi(r)}{k_r} \BE[N^{r+\pi(r)-k_r}].
\end{equation}
The summation may be restricted  further but, for the purpose of proving Theorem~\ref{iv}, it suffices to show that the coefficients are homogeneous polynomials in the moments. By Lemma \ref{kk}, only the summands for which the integer $k_0+\cdots+k_n$ is even can be non-trivial because $\det \bM_{\sqrt{t}X+N,n}$ is a polynomial in $t.$ Thus, we have
\begin{equation}
    \det \bM_{\sqrt{t}X+N,n} = \sum_{j\in [d_n]} t^j \hspace{10mm} \sum_{\mathclap{\substack{
    \pi \in  \sn \\ 
    k_r \in [r+\pi(r)], ~ \forall r\in [n] \\
    k_0+\cdots+k_n = 2j }}} \hspace{4mm}  \beta_{\pi;\bk} \calX_{k_0}\cdots \calX_{k_n}.
\end{equation}
Because the coefficients $b_X^{n,j}$ were defined by (see \eqref{mu})
\begin{equation} \label{mw}
    \det \bM_{\sqrt{t}X+N,n} = \sum_{j\in [d_n]} b_X^{n,j} t^j,
\end{equation}
then, we obtain that for each $j\in [d_n]$
\begin{equation}
    b_X^{n,j} \hspace{2mm} = \hspace{1cm} \sum_{\mathclap{\substack{
    \pi \in  \sn \\ 
    k_r \in [r+\pi(r)], ~ \forall r\in [n] \\
    k_0+\cdots+k_n = 2j }}} \hspace{6mm}  \beta_{\pi;\bk} \calX_{k_0}\cdots \calX_{k_n}.
\end{equation}
In particular, the relation in \eqref{mq} follows, namely,
\begin{equation}
    b_X^{n,j} \in H_{2j,\min(n+1,2j),2\min(n,j)}(X).
\end{equation}

The coefficients $a_X^{n,j}$ in the numerator of \eqref{mg} may be treated similarly, so we defer the details to Appendix~\ref{re}.

\subsubsection{Proof of \eqref{mb} for a General Random Variable}

We have shown in Section \ref{rf} that $a_X^{n,d_n-1}= \det \bM_{X,n}$ when $X$ is continuous. We generalize this fact here for any RV $X.$ In view of Section \ref{rg}, this equation takes the form $q(\BE[X],\cdots,\BE[X^{2n}])=0$ for some fixed polynomial $q.$ We demonstrate here that  $a_X^{n,d_n-1}= \det \bM_{X,n}$ generalizes to every $X$ by showing that the polynomial $q$ must vanish identically. 

The solution to the truncated Hamburger moment problem implies that for $\mu_0=1$ and any tuple $(\mu_1,\cdots,\mu_{2h+1})\in \BR^{2h+1},$ if the Hankel matrix $\left( \mu_{i+j} \right)_{(i,j)\in [h]^2}$ is positive definite, then there is a finitely-support discrete RV $W$ such that $\BE\left[W^k\right] = \mu_k$ for each $k\in [2h+1]$ (see~\cite[Theorem 3.1]{Curto91}). A consequence of this result for continuous RVs, which we use in the sequel, is the following.

\begin{lemma} \label{sf}
Fix $m\in \BN_{>0},$ set $\ell = \lfloor m/2 \rfloor$ and $\mu_0=1,$ and let $(\mu_1,\cdots,\mu_m)\in \BR^m$ be such that the matrix $(\mu_{i+j})_{(i,j)\in \left[ \ell \right]^2}$ is positive definite. For every $\varepsilon > 0,$ there exists a continuous RV $Z$ such that $\left| \BE\left[ Z^k \right] - \mu_k \right| < \varepsilon $ for every $k\in [m].$
\end{lemma}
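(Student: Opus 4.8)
The plan is to combine the cited solution of the truncated Hamburger moment problem with a Gaussian-smoothing argument: the truncated moment result produces a \emph{discrete} random variable matching the prescribed moments exactly, and convolving it with a vanishingly small Gaussian yields a \emph{continuous} random variable whose moments are arbitrarily close.

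First I would invoke the truncated moment theorem (\cite[Theorem 3.1]{Curto91}) with $h = \ell$. If $m$ is odd then $m = 2\ell+1$, so the given tuple $(\mu_1,\cdots,\mu_m)$ already has length $2h+1$. If $m$ is even then $m = 2\ell$, and I pad the tuple with an arbitrary value $\mu_{2\ell+1} := 0$ to obtain a length-$(2h+1)$ tuple. In either case the positive-definiteness hypothesis concerns only the matrix $(\mu_{i+j})_{(i,j)\in[\ell]^2}$, which is exactly the assumed positive-definite matrix and exactly the hypothesis of the cited theorem for $h=\ell$; note that padding the top odd moment does not affect this hypothesis, since the matrix involves only $\mu_0,\cdots,\mu_{2\ell}$. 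Hence there is a finitely-supported discrete RV $W$ with $\BE[W^k] = \mu_k$ for all $k\in[2\ell+1]$, and in particular for all $k\in[m]$ because $m\le 2\ell+1$.

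Next, fix $G\sim\calN(0,1)$ independent of $W$ and, for $\sigma>0$, set $Z_\sigma := W + \sigma G$. Since $W$ has finite support, the law of $Z_\sigma$ is a finite mixture of shifted, scaled copies of the normal density, so $Z_\sigma$ is continuous, its density being $z\mapsto \sum_{w}P(W=w)\,\sigma^{-1}\phi((z-w)/\sigma)$ with $\phi$ the standard normal density. Expanding by the binomial theorem and using independence, for each $k\in[m]$,
\begin{equation}
    \BE[Z_\sigma^k] = \sum_{j=0}^{k}\binom{k}{j}\sigma^{k-j}\BE[W^j]\,\BE[G^{k-j}] = \mu_k + \sum_{j=0}^{k-1}\binom{k}{j}\sigma^{k-j}\BE[W^j]\,\BE[G^{k-j}],
\end{equation}
and the remaining sum is $O(\sigma)$ as $\sigma\to 0^+$, because every term carries a factor $\sigma^{k-j}$ with $k-j\ge 1$ while all moments of $W$ and $G$ up to order $m$ are finite. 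As $[m]$ is finite, I can pick $\sigma>0$ small enough that $|\BE[Z_\sigma^k]-\mu_k|<\varepsilon$ holds simultaneously for every $k\in[m]$; taking $Z:=Z_\sigma$ completes the argument.

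There is no serious obstacle here. The only points requiring care are the parity bookkeeping—padding when $m$ is even so that the cited theorem, stated for an odd index length $2h+1$, applies—and verifying that the assumed positive-definiteness is precisely the hypothesis of the truncated moment theorem for $h=\ell$. The continuity of $Z_\sigma$ and the convergence of its moments to $\mu_k$ are routine.
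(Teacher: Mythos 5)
Your proof is correct and follows essentially the same route as the paper's: invoke the truncated Hamburger moment theorem of \cite[Theorem 3.1]{Curto91} to obtain a finitely supported discrete $W$ matching the prescribed moments exactly, then smooth it into a continuous RV whose moments are within $\varepsilon.$ The only differences are cosmetic---the paper mollifies each atom with a small uniform box (PDF $\sum_a \frac{q_a}{2\eta}1_{[a-\eta,a+\eta]}$) rather than adding $\sigma G$ with $G\sim\calN(0,1),$ and your explicit parity bookkeeping (padding $\mu_{2\ell+1}$ when $m$ is even so the odd-length hypothesis of the cited theorem applies) is a detail the paper leaves implicit.
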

\begin{proof}
See Appendix~\ref{se}.
\end{proof}

In the other direction, if $\mu_0=1$ and $(\mu_1,\cdots,\mu_{2h}) \in \BR^{2h}$ come from a continuous RV, i.e., $\BE\left[ Z^{k} \right] = \mu_k$ for each $k\in [2h],$ then it must be that the Hankel matrix $\bH = \left( \mu_{i+j} \right)_{(i,j)\in [h]^2}$ is positive definite; indeed, since $|\supp(Z)|=\infty,$ we have that for every nonzero real vector $\bv = (v_0,\cdots,v_h)^T$
\begin{equation}
    \bv^T \bH \bv = \left\|  \sum_{k\in [h]} v_k Z^k \right\|_2^2 > 0 .
\end{equation}
For each integer $m \ge 2,$ let $\calR_m \subset L^m(\calF)$ be the set of all continuous RVs $X$ such that $\BE[|X|^m]<\infty.$ Consider the set $\calC_m\subset \BR^m$ defined by
\begin{equation}
    \calC_m = \{ (\BE[X],\cdots,\BE[X^{m}])  ~ ; ~ X \in \calR_m\}.
\end{equation}

\begin{prop} \label{qx}
Let $p$ be a polynomial in $m$ variables with real coefficients. If 
\begin{equation}
    p\left( \BE[X],\cdots,\BE[X^m]\right) = 0
\end{equation}
for every continuous RV $X$ satisfying $\BE[|X|^m]<\infty,$ then $p$ is the zero polynomial.
\end{prop}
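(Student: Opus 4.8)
The plan is to show that the zero set of $p$ contains a nonempty open subset of $\BR^m$; the identity theorem for polynomials then forces $p$ to vanish identically. The natural open set to target is the collection of moment vectors whose associated truncated Hankel matrix is positive definite, since Lemma~\ref{sf} guarantees that $\calC_m$ is dense in precisely this region, and the containment $\calC_m \subseteq (\text{this region})$ was essentially recorded just before the statement.

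Concretely, set $\ell = \lfloor m/2 \rfloor$ and let
\[
    \calU := \left\{ (\mu_1,\cdots,\mu_m) \in \BR^m ~;~ \left( \mu_{i+j} \right)_{(i,j)\in [\ell]^2} \text{ is positive definite, with } \mu_0 := 1 \right\}.
\]
First I would verify that $\calU$ is open and nonempty. Openness holds because positive definiteness of a symmetric matrix is an open condition (e.g.\ all leading principal minors are strictly positive, and these are polynomials in the entries $\mu_1,\cdots,\mu_{2\ell}$, with $\mu_0=1$ a fixed constant); the coordinates $\mu_{2\ell+1},\cdots,\mu_m$ do not appear in the Hankel matrix and are unconstrained, so $\calU$ is the product of an open set with a Euclidean factor and is therefore open in $\BR^m$. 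Nonemptiness follows by taking the moments of any continuous RV, e.g.\ a standard Gaussian.

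Next I would establish the two containments linking $\calC_m$ and $\calU$. The inclusion $\calC_m \subseteq \calU$ is exactly the computation preceding the statement: for a continuous $X \in \calR_m$ we have $|\supp(X)| = \infty$, so $\bv^T \bH \bv = \big\|\sum_{k\in [\ell]} v_k X^k\big\|_2^2 > 0$ for every nonzero $\bv$, whence the truncated Hankel matrix of the moments of $X$ is positive definite (note $X\in L^m$ guarantees the needed moments up to $\mu_{2\ell}$ exist). In the other direction, $\calC_m$ is dense in $\calU$: given any $(\mu_1,\cdots,\mu_m)\in \calU$ and any $\varepsilon>0$, Lemma~\ref{sf}---whose hypothesis is precisely membership in $\calU$---produces a continuous RV $Z$ with $\left|\BE[Z^k]-\mu_k\right|<\varepsilon$ for all $k\in [m]$, so the point $(\BE[Z],\cdots,\BE[Z^m]) \in \calC_m$ lies within $\varepsilon$ of $(\mu_1,\cdots,\mu_m)$ in the sup-norm. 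Hence $\calU \subseteq \overline{\calC_m}$. Since $p$ is continuous and vanishes on $\calC_m$ by hypothesis, it vanishes on $\overline{\calC_m}$, and therefore on the nonempty open set $\calU$. A polynomial that vanishes on a nonempty open subset of $\BR^m$ is identically zero (its zero set otherwise has empty interior---by induction on the number of variables, or because a nonzero real polynomial has Lebesgue-null zero set), so $p$ is the zero polynomial.

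I expect the main obstacle to be bookkeeping rather than deep difficulty, since the substantive content is supplied by Lemma~\ref{sf}. The points requiring care are checking that $\calU$ is genuinely open in \emph{all} $m$ coordinates (handling the parity of $m$ and the unconstrained higher coordinates when $m$ is odd) and that $\calC_m$ indeed sits inside $\calU$. The one conceptual step that must be invoked correctly is the density-plus-continuity argument: vanishing on the dense subset $\calC_m$ transfers to vanishing on the open set $\calU$, which is what finally makes the polynomial identity theorem applicable.
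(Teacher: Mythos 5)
Your proof is correct, and it takes a genuinely different route from the paper's. The paper proceeds by induction on the number of variables $m$: it writes $p(u_1,\cdots,u_m)=\sum_{j\in[d]} p_j(u_1,\cdots,u_{m-1})u_m^j$, fixes a moment vector $\bmu\in\calC_{m-1}$, observes that the truncated Hankel condition can be met by taking the last coordinate to be $t+k$ for a suitable threshold $t$ and any positive integer $k$, invokes Lemma~\ref{sf} together with continuity of $p$ to get $\sum_{j} p_j(\bmu)(t+k)^j=0$ for infinitely many $k$, concludes $p_j(\bmu)=0$ from the univariate identity theorem, and then applies the induction hypothesis to each $p_j$. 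Your argument instead isolates the open set $\calU$ of vectors with positive definite truncated Hankel matrix, uses Lemma~\ref{sf} to show $\calU\subseteq\overline{\calC_m}$, transfers the vanishing of $p$ from $\calC_m$ to $\calU$ by continuity, and finishes with the standard fact that a real polynomial vanishing on a nonempty open subset of $\BR^m$ is identically zero. Both proofs rest entirely on Lemma~\ref{sf}; yours is shorter and more conceptual, making transparent that density of $\calC_m$ in an open set is the only substantive input, at the cost of invoking the multivariate identity theorem as a black box (which is itself usually proved by the same kind of variable-by-variable induction). The paper's version is self-contained in that respect—it effectively re-proves that identity theorem inline, and only needs the density lemma along a single coordinate direction rather than on all of $\calU$. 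Your bookkeeping on the parity of $m$ (the unconstrained coordinate $\mu_m$ when $m$ is odd), on openness of $\calU$, and on $\calC_m\subseteq\calU$ is all accurate.
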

\begin{proof}
See Appendix~\ref{sg}.
\end{proof}

Since $a_{X}^{n,d_n-1}-\det \bM_{X,n}=p\left( \BE[X],\cdots,\BE[X^{2n}]\right)$ for some polynomial $p,$ and since we have shown that $p$ vanishes over $\calC_m,$ we conclude that $p$ vanishes identically.

\begin{corollary} \label{sl}
For any RV $X$ and $n\in \BN$ such that $\BE[X^{2n}]<\infty,$ we have that $a_X^{n,d_n-1}=\det \bM_{X,n}.$
\end{corollary}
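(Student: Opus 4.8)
The plan is to reduce the general case to the continuous case already settled in Section~\ref{rf} by a polynomial-identity argument, exactly as sketched in the paragraph preceding the statement.

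First I would record that both quantities in the claimed identity are one and the same fixed polynomial expression in the moments of $X$, valid for every $2n$-times integrable RV. By the Leibniz formula (see~\eqref{qn}), $\det\bM_{X,n}$ is an integer-coefficient polynomial in $\calX_1,\dots,\calX_{2n}$, and the coefficient computation of Section~\ref{rg} (with Appendix~\ref{re} establishing~\eqref{mp}) expresses $a_X^{n,d_n-1}$ as an integer-coefficient polynomial in the same moments. These expansions were obtained purely from writing $\det\bM_{\sqrt{t}X+N,n}$ and $(\calX_2-F_{X,n}(t))\det\bM_{\sqrt{t}X+N,n}$ as polynomials in $t$, a manipulation that only requires $\BE[X^{2n}]<\infty$. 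Hence there is a single polynomial $p$ in $m=2n$ variables, with coefficients independent of $X$, such that
\begin{equation}
    a_X^{n,d_n-1}-\det\bM_{X,n}=p\!\left(\BE[X],\dots,\BE[X^{2n}]\right)
\end{equation}
for every RV $X$ with $\BE[X^{2n}]<\infty$.

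Next I would invoke the continuous case. In Section~\ref{rf} we already showed, using $\mm(X,t)\sim 1/t$ for continuous $X$~\cite{Wu2011} together with $b_X^{n,d_n}=\det\bM_{X,n}\neq 0$, that $a_X^{n,d_n-1}=\det\bM_{X,n}$ whenever $X$ is continuous. In the language above, this says precisely that $p$ vanishes at the moment vector of every continuous $2n$-times integrable RV, i.e.\ $p$ vanishes identically on the set $\calC_{2n}$. Finally, Proposition~\ref{qx} asserts that a polynomial in $m$ variables vanishing on the moment vectors of all continuous $m$-times integrable RVs must be the zero polynomial; applying it with $m=2n$ yields $p\equiv 0$, and therefore $a_X^{n,d_n-1}=\det\bM_{X,n}$ for every RV $X$ with $\BE[X^{2n}]<\infty$.

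The main obstacle I anticipate is justifying the universality claimed in the first step: that one fixed polynomial $p$ governs all $X$ simultaneously. This is what lets the continuous-case identity propagate to the general case, and it rests on the coefficient formulas of Section~\ref{rg} being genuine algebraic identities in the moments, so that the integers $\beta_{\pi;\bk}$ and their numerator analogues are determined once and for all and never see the law of $X$. The analytic heavy lifting is then outsourced entirely to Proposition~\ref{qx}, whose proof in turn leans on the truncated Hamburger moment problem through Lemma~\ref{sf}; both may be cited here as already established.
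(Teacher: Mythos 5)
Your proposal is correct and follows essentially the same route as the paper: the paper also treats $a_X^{n,d_n-1}-\det\bM_{X,n}$ as a fixed polynomial in the moments (via the coefficient expansions of Section~\ref{rg}), settles the continuous case in Section~\ref{rf} through the MMSE-dimension asymptotic, and then invokes Proposition~\ref{qx} to force that polynomial to vanish identically. The only difference is cosmetic---the paper distributes these steps across Sections~\ref{rf}--\ref{rg} and states the corollary as their immediate consequence, while you assemble them into a single self-contained argument.
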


With Corollary~\ref{sl}, the proof of Theorem~\ref{iv} is complete.

\begin{remark}
We provide in Appendix~\ref{aw} an alternative proof for the formulas $a_X^{n,d_n-1}=\det \bM_{X,n}$ and $a_X^{n,d_n}=0$ via a self-contained algebraic argument.
\end{remark}

\section{Conditional Expectation Derivatives} \label{pc}

We develop formulas for the higher-order derivatives of the conditional expectation, and establish upper bounds. The bounds in Theorem \ref{og} on the norm of the derivatives of the conditional expectation will be crucial in Section \ref{ji} for establishing a Bernstein approximation theorem that shows how well polynomials can approximate the conditional expectation in the mean-square sense.

\begin{theorem} \label{og}
Fix an integrable RV $X$ and an independent $N\sim \calN(0,1),$ and set $Y=X+N.$ Let $r\ge 2$ be an integer, let $C_r$ be as defined in \eqref{rx}, and denote $q_r:=\lfloor (\sqrt{8r+9}-3)/2 \rfloor$ and $\gamma_r:=(2rq_r)!^{1/(4q_r)}.$ We have the bound
\begin{equation}
    \left\| \frac{d^{r-1}}{dy^{r-1}}~ \BE[X\mid Y=y] \right\|_{2}   \le ~ 2^r C_r \min\left( \gamma_r, \|X\|_{2rq_r}^r \right).
\end{equation}
\end{theorem}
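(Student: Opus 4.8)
The plan is to first identify the derivative with a conditional cumulant and then estimate in $L^2(P_Y)$. Writing $g(y):=\BE[X\mid Y=y]$, I would use the conditional-cumulant expressions for the derivatives of $g$ (Proposition~\ref{ars}; the base case $r=2$ is \eqref{sh}, since $g'(y)=\mathrm{Var}[X\mid Y=y]$). The cleanest self-contained route is to observe that the posterior law of $X$ given $Y=y$ is the exponential tilt (by $e^{yx}$) of the measure $Q$ with $dQ\propto e^{-x^2/2}\,dP_X$ from Theorem~\ref{np}; since $e^{sx-x^2/2}$ is bounded, $R\sim Q$ has an entire cumulant-generating function $K_R$, one gets $g(y)=K_R'(y)$, and hence $g^{(r-1)}(y)=K_R^{(r)}(y)$ is exactly the $r$-th cumulant of the posterior. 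Expanding this cumulant in the conditional central moments $\mu_k(y):=\BE[(X-g(y))^k\mid Y=y]$ by Möbius inversion on the partition lattice—where blocks of size $1$ drop out because the posterior is centered—yields
\begin{equation}
    g^{(r-1)}(y)=\sum_{\blambda\in\Pi_r} e_{\blambda}\prod_{k=2}^{\ell}\mu_k(y)^{\lambda_k},
\end{equation}
with $e_{\blambda}$ as in \eqref{sp} and $\sum_{\blambda\in\Pi_r}|e_{\blambda}|=\sum_{\blambda\in\Pi_r}c_{\blambda}=C_r$.

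Next I would estimate each term in $L^2(P_Y)$. Fix $\blambda\in\Pi_r$ and let $D=\{k:\lambda_k>0\}$ be its set of distinct part-sizes. The crucial combinatorial fact is that $|D|\le q_r$: the minimal total of $q$ distinct sizes $\ge 2$ is $2+3+\cdots+(q+1)$, so having $|D|$ distinct parts forces $\tfrac12|D|(|D|+3)\le r$, which is exactly the defining inequality of $q_r$. Applying the generalized H\"older inequality across the $|D|$ factors (each with exponent $|D|$), then conditional Jensen twice—first $|\mu_k(y)|\le\BE[|X-g(y)|^k\mid Y=y]$, then with the convex map $t\mapsto t^{2\lambda_k|D|}$—bounds $\BE[|\mu_k(Y)|^{2\lambda_k|D|}]$ by $\|X-g(Y)\|_{2k\lambda_k|D|}^{2k\lambda_k|D|}$. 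Since $k\lambda_k\le r$ and $|D|\le q_r$ give $2k\lambda_k|D|\le 2rq_r$, monotonicity of $L^p$-norms together with $\sum_{k\in D}k\lambda_k=r$ collapse the product to
\begin{equation}
    \Big\|\prod_{k\in D}\mu_k(Y)^{\lambda_k}\Big\|_2\le \|X-g(Y)\|_{2rq_r}^{\,r}.
\end{equation}
Summing over $\blambda$ via the triangle inequality gives $\|g^{(r-1)}(Y)\|_2\le C_r\,\|X-g(Y)\|_{2rq_r}^{\,r}$.

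It remains to convert $\|X-g(Y)\|_{2rq_r}$ into the two quantities in the stated minimum. For the $\|X\|_{2rq_r}^r$ bound, conditional Jensen gives $\|g(Y)\|_{2rq_r}\le\|X\|_{2rq_r}$, so $\|X-g(Y)\|_{2rq_r}\le 2\|X\|_{2rq_r}$. For the universal bound I would exploit the Gaussian structure through the identity
\begin{equation}
    X-\BE[X\mid Y]=-(N-\BE[N\mid Y]),
\end{equation}
which follows from $\BE[X\mid Y]+\BE[N\mid Y]=\BE[Y\mid Y]=Y$. Hence $\|X-g(Y)\|_{2rq_r}=\|N-\BE[N\mid Y]\|_{2rq_r}\le 2\|N\|_{2rq_r}$, and with $m:=rq_r$ one has $\|N\|_{2rq_r}^{r}=((2m-1)!!)^{1/(2q_r)}\le((2m)!)^{1/(4q_r)}=\gamma_r$, using $(2m-1)!!\le\sqrt{(2m)!}$ (equivalently $\binom{2m}{m}\le 4^m$). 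Combining the two estimates yields $\|g^{(r-1)}(Y)\|_2\le 2^rC_r\min(\gamma_r,\|X\|_{2rq_r}^r)$.

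I expect the main obstacle to be the first step: proving the conditional-cumulant/central-moment expansion rigorously and, above all, guaranteeing that the $\mu_k(y)$ are finite for a.e.\ $y$ when $X$ is \emph{merely integrable}, so that the whole expression makes sense. This is precisely where the identity $X-\BE[X\mid Y]=-(N-\BE[N\mid Y])$ does double duty: it shows the conditional central moments of $X$ coincide (up to sign) with those of the Gaussian $N$, which are automatically finite, simultaneously justifying the expansion and delivering the $X$-free bound $\gamma_r$. The remaining delicacy is the H\"older bookkeeping—matching the number of distinct parts to $q_r$ so that no factor exceeds the $2rq_r$-th moment—but this becomes routine once $q_r$ is correctly identified as the maximal number of distinct part-sizes in a partition of $r$ into parts of size at least $2$.
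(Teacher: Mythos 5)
Your proof is correct and follows essentially the same route as the paper's: the central-moment expansion of $g^{(r-1)}$ from Proposition~\ref{ars}, the combinatorial bound $|D|\le q_r$ on the number of distinct part-sizes, generalized H\"older plus conditional Jensen with monotonicity of norms to obtain $C_r\,\|X-\BE[X\mid Y]\|_{2rq_r}^{r}$, and finally the moment bound yielding $2^r\min\left(\gamma_r,\|X\|_{2rq_r}^r\right).$ The only differences are cosmetic: where the paper cites~\cite{Guo2011} for $\|X-\BE[X\mid Y]\|_k\le 2\min\left((k!)^{1/(2k)},\|X\|_k\right),$ you derive it directly from the swap identity $X-\BE[X\mid Y]=-(N-\BE[N\mid Y])$ and Gaussian moment estimates, and your tilted-CGF justification of Proposition~\ref{ars} is precisely the cumulant argument the paper attributes to~\cite{Dytso2021}.
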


For $2\le r \le 7,$ we obtain the first few values of $q_r$ as $1,1,1,2,2,2,$ and we have $q_r \sim \sqrt{2r}$ as $r\to \infty$ (see Remark~\ref{sy} for a way to reduce $q_r$). To prove Theorem~\ref{og}, we first express the derivatives of $y\mapsto \BE[X\mid Y=y]$ as polynomials in the moments of the RV $X_y-\BE[X_y],$ where $X_y$ denotes the RV obtained from $X$ by conditioning on $\{Y=y\}.$

\begin{prop} \label{ars}
Fix an integrable RV $X$ and an independent $N\sim \calN(0,1),$ and let $Y=X+N.$ For each $(y,k)\in \BR\times \BN,$ denote $f(y) := \BE[X \mid Y=y]$ and
\begin{equation}
    g_k(y) := \mathbb{E}\left[ \left( X - \mathbb{E}[X \mid Y] \right)^k \mid Y = y \right].
\end{equation}
For $(\lambda_2,\cdots,\lambda_\ell)=\blambda\in \BN^*,$ denote $\bg^{\blambda} := \prod_{i=2}^\ell g_i^{\lambda_i},$ with the understanding that $g_i^{0}=1.$ Then, for every integer $r\ge 2,$ we have that
\begin{equation} \label{sw}
    f^{(r-1)} = \sum_{\blambda \in \Pi_r} e_{\blambda} \bg^{\blambda},
\end{equation}
where the integers $e_{\blambda}$ are as defined in~\eqref{ss}--\eqref{sp}.
\end{prop}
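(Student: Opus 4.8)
The plan is to reduce the statement to the classical formula expressing cumulants in terms of central moments, after first showing that differentiating the conditional expectation in $y$ is the \emph{same} operation as raising the order of the posterior cumulant. Throughout I would fix $y$ and work with the explicit representative $\BE[h(X)\mid Y=y]=\BE[h(X)e^{-(X-y)^2/2}]/\Phi(y)$, where $\Phi(y):=\BE[e^{-(X-y)^2/2}]$; this is a genuine smooth function of $y$ because the Gaussian weight makes $X^{j}e^{-(X-y)^2/2}$ bounded, hence $P_X$-integrable, for every $j$. In particular $f$ and each $g_k$ are smooth, and no moment assumption on $X$ beyond integrability is needed, since the Gaussian factor supplies all posterior moments.

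Introduce the tilted transform $\Psi(y,s):=\BE[e^{sX-(X-y)^2/2}]$ and $\Lambda:=\log\Psi$. Since $sX-(X-y)^2/2\to-\infty$ quadratically in $X$, $\Psi(y,s)<\infty$ for all $(y,s)\in\BR^2$, and differentiation under the expectation is justified on compacts by dominated convergence; thus $\Psi$ and $\Lambda$ are smooth. For each $y$ the posterior cumulants are $\kappa_k(y)=\partial_s^k\Lambda(y,s)|_{s=0}$ for $k\ge 1$, with $\kappa_1=f$ and $\kappa_2=g_2$. The Gaussian structure yields the key identity $\partial_y\Psi=\partial_s\Psi-y\Psi$ (both sides equal $\BE[(X-y)e^{sX-(X-y)^2/2}]$), hence $\partial_y\Lambda=\partial_s\Lambda-y$. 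Differentiating this relation in $s$ a total of $k\ge 1$ times annihilates the $-y$ term and gives $\partial_y\partial_s^k\Lambda=\partial_s^{k+1}\Lambda$; evaluating at $s=0$ produces the recursion $\kappa_k'=\kappa_{k+1}$ for all $k\ge 1$. The case $k=1$ recovers $f'=g_2$, i.e.\ \eqref{sh}. Iterating from $\kappa_1=f$ gives $f^{(r-1)}=\kappa_r$, so it remains to expand the $r$-th posterior cumulant in the central moments $g_2,g_3,\dots$.

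For this I would invoke the Möbius (moment--cumulant) inversion on the partition lattice: $\kappa_r=\sum_{\sigma}(-1)^{|\sigma|-1}(|\sigma|-1)!\prod_{B\in\sigma}\mu_{|B|}$, where $\sigma$ ranges over set partitions of an $r$-element set, $|\sigma|$ is its number of blocks, and $\mu_j$ are the posterior central moments (so $\mu_1=0$ and $\mu_j=g_j$ for $j\ge 2$). Because the first central moment vanishes, every set partition containing a block of size $1$ contributes zero, leaving exactly the set partitions into blocks of size $\ge 2$; these are indexed by $\Pi_r$. Collecting by partition type $\blambda=(\lambda_2,\dots,\lambda_\ell)$ with $m=\lambda_2+\cdots+\lambda_\ell$ blocks, the number of set partitions of that type is $r!/\prod_k((k!)^{\lambda_k}\lambda_k!)$ and each carries the scalar $(-1)^{m-1}(m-1)!$. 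Hence the coefficient of $\bg^{\blambda}=\prod_{k\ge 2}g_k^{\lambda_k}$ in $\kappa_r$ is $(-1)^{m-1}(m-1)!\,r!/\prod_k((k!)^{\lambda_k}\lambda_k!)$, which is exactly $(-1)^{m-1}c_{\blambda}=e_{\blambda}$ after rewriting $(m-1)!/\prod_k\lambda_k!=m^{-1}\binom{m}{\lambda_2,\dots,\lambda_\ell}$ and $r!/\prod_k(k!)^{\lambda_k}=\binom{r}{2,\dots,2;\dots;\ell,\dots,\ell}$ (see \eqref{ss}--\eqref{sp}). This gives $f^{(r-1)}=\sum_{\blambda\in\Pi_r}e_{\blambda}\bg^{\blambda}$.

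Two parts carry the real content. The first, and more conceptual, is the recursion $\kappa_k'=\kappa_{k+1}$: one must confirm smoothness of $\Psi$ in $y$ and the legitimacy of interchanging the $y$-derivative with the operation of taking $s$-derivatives at $s=0$, which is where uniform (on compacts) domination by $P_X$-integrable envelopes of the form $(\text{polynomial in }X)\,e^{-(X-y)^2/2}$ is used. The second is the combinatorial bookkeeping matching the moment--cumulant coefficients to $e_{\blambda}$; although routine, it should be carried out carefully to confirm the multiplicity encoding of $\Pi_r$ and the normalization by $1/m$ (the cyclic factor) in \eqref{ss}. If a self-contained argument is preferred to citing Möbius inversion, the central-moment cumulant expansion can instead be proved by induction on $r$ directly from $\kappa_k'=\kappa_{k+1}$ together with the $y$-derivatives of the $g_k$, but the lattice formula is cleaner.
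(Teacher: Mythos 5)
Your proof is correct, but it takes a genuinely different route from the paper's. The paper never identifies $f^{(r-1)}$ with a cumulant: it derives the differential identities $f'=g_2$ and $g_r'=g_{r+1}-rg_2g_{r-1},$ runs a purely symbolic induction showing that $f^{(r-1)}$ is a homogeneous integer combination of the $\bg^{\blambda},$ extracts a recursion for the unknown coefficients, and then proves by a separate combinatorial argument (counting cyclically-invariant ordered set partitions) that the $e_{\blambda}$ of~\eqref{ss}--\eqref{sp} satisfy the same recursion. Your argument replaces all of that bookkeeping with two ingredients: the tilted-transform identity $\partial_y\Lambda=\partial_s\Lambda-y,$ which immediately gives $\kappa_k'=\kappa_{k+1}$ and hence $f^{(r-1)}=\kappa_r$ (the $r$-th posterior cumulant), and the classical M\"obius-inversion expansion of a cumulant in central moments, where singleton blocks drop out because $\mu_1=0$ and the surviving coefficient $(-1)^{m-1}(m-1)!\,r!/\prod_k\bigl((k!)^{\lambda_k}\lambda_k!\bigr)$ is exactly $e_{\blambda}$ (your coefficient matching is right: $(m-1)!/\prod_k\lambda_k!$ is the $\tfrac1m\binom{m}{\lambda_2,\cdots,\lambda_\ell}$ factor and $r!/\prod_k(k!)^{\lambda_k}$ is the multinomial in~\eqref{ss}). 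The trade-off: the paper's proof is self-contained, never importing the moment--cumulant formula, while yours is shorter, explains conceptually where the coefficients come from, and yields the cleaner standalone recursion $\kappa_k'=\kappa_{k+1}$; indeed, the paper's own remark after the proposition credits Dytso et al.~\cite{Dytso2021} with essentially this cumulant-based proof, calling it ``shorter and more elegant.'' Your analytic justifications (Gaussian damping supplies smoothness and all posterior moments with only $X$ integrable; mixed partials of the smooth, strictly positive $\Psi$ commute) are the right ones and mirror what the paper does around~\eqref{aab}.
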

\begin{proof}
See Appendix~\ref{aad}.
\end{proof}
\begin{remark}
We note that in parallel to this work, Dytso et. al. also derived in~\cite{Dytso2021} the formula in Proposition~\ref{ars} via a shorter and more elegant proof. Further, it is shown in~\cite{Dytso2021} that formula~\eqref{sw} is the expansion of the $r$-th cumulant of $X_y$ in terms of the moments of $X_y$ via Bell polynomials.
\end{remark}

Now we are ready to prove Theorem~\ref{og}. 

\begin{proof}[Proof of Theorem~\ref{og}]
We use the notation of Proposition~\ref{ars}. Fix $(\lambda_2,\cdots,\lambda_\ell)=\blambda \in \Pi_r.$ By the generalization of H\"{o}lder's inequality stating $\|\psi_1\cdots \psi_k\|_1 \le \prod_{i=1}^k \|\psi_i\|_k,$ we have that
\begin{equation}  \label{sv}
    \left\| \bg^{\blambda}(Y) \right\|_{2}^{2} = \left\| \prod_{\lambda_i \neq 0} g_i^{2\lambda_i}(Y) \right\|_{1} \le \prod_{\lambda_i\neq 0} \left\| g_i^{2\lambda_i}(Y) \right\|_{s}
\end{equation}
where $s$ is the number of nonzero entries in $\blambda.$ By Jensen's inequality for conditional expectation, for each $i$ such that $\lambda_i\neq 0$ we have that
\begin{equation} \label{su}
    \left\| g_i^{2\lambda_i}(Y) \right\|_{s} \le \| X - \BE[X\mid Y] \|_{2i \lambda_i s}^{2i\lambda_i}.
\end{equation}
Now, $r = \sum_{i=2}^\ell i\lambda_i \ge \sum_{i=2}^{s+1} i = \frac{(s+1)(s+2)}{2}-1,$ so we have that $s^2+3s-2r \le 0,$ i.e., $s\le q_r.$ Further, $i\lambda_i \le r$ for each $i.$ Hence, monotonicity of norms and inequalities \eqref{sv} and \eqref{su} imply the uniform (in $\blambda$) bound
\begin{equation}
    \left\| \bg^{\blambda}(Y) \right\|_{2} \le  \| X - \BE[X\mid Y] \|_{2rq_r}^{r}.
\end{equation}
Observe that $\| X - \BE[X\mid Y] \|_{k}\le 2 \min\left( (k!)^{1/(2k)}, \|X\|_k\right)$ (see \cite{Guo2011}), so applying the triangle inequality in~\eqref{sw} we obtain
\begin{align}
    \left\| f^{(r-1)}(Y) \right\|_2 &\le \sum_{\blambda \in \Pi_r} c_{\blambda} \left\| \bg^{\blambda}(Y) \right\|_2 \le 2^rC_r ~ \min\left( \gamma_r, \|X\|_{2rq_r}^r \right),
\end{align}
where $\gamma_r=(2rq_r)!^{1/(4q_r)},$ as desired.
\end{proof}
\begin{remark} \label{sy}
A closer analysis reveals that $i\lambda_i s$ in~\eqref{su} cannot exceed $\beta_r:=t_r^2(t_r+1/2)$ where $t_r:=(\sqrt{6r+7}-1)/3.$ For $r\to \infty,$ we have $rq_r/\beta_r \sim 3^{3/2}/2 \approx 2.6.$ The reduction  when, e.g., $r=7,$ is from $rq_r=14$ to $\beta_r=10.$
\end{remark}

\section{A Bernstein Approximation Theorem for Conditional Expectation} \label{ji}

If $p \in \SD$ (see Definition~\ref{kx}) and $X\sim p,$ then we show that the error in approximating $\BE[X\mid Y]$ by the best polynomial $E_n[X\mid Y],$ where $Y$ is a Gaussian perturbation of $X,$ decays faster than any polynomial in $n.$

\begin{theorem} \label{kw}
Fix $p \in \SD,$ let $X\sim p,$ suppose $N\sim\calN(0,1)$ is independent of $X,$ and set $Y=X+N.$ There exists a sequence $\{ D(p,k) \}_{k\in \mathbb{N}}$ of constants such that for all integers  $n\ge \max(k-1,1)$ we have
\begin{equation}
    \left\|E_n[X\mid Y] - \mathbb{E}[X \mid Y] \right\|_2 \le \frac{D(p,k)}{n^{k/2}}.
\end{equation}
\end{theorem}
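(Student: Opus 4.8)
The plan is to recognize $\|E_n[X\mid Y] - \BE[X\mid Y]\|_2$ as the error of best weighted-$L^2$ polynomial approximation of the function $f(y):=\BE[X\mid Y=y]$, and then to invoke Jackson-type estimates from the theory of polynomial approximation against Freud weights \cite{Lotsch2009}, feeding in the derivative bounds of Theorem~\ref{og} as the required smoothness input. Since $E_n[X\mid Y]$ is the orthogonal projection of $\BE[X\mid Y]\in L^2(\sigma(Y))$ onto $\SP_n(Y)$ (see~\eqref{kp}), we have
\begin{equation}
    \left\| E_n[X\mid Y] - \BE[X\mid Y]\right\|_2 = \inf_{q\in \SP_n} \left\| q - f\right\|_{L^2(P_Y)},
\end{equation}
which is exactly the degree-$n$ best approximation error of $f$ in the weighted space $L^2(P_Y)$.

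First I would establish that $p_Y$ is an admissible (Hermite-type) Freud weight. Writing $Y=X+N$ with $X\sim p$ compactly supported on some $[-R,R]$ and $N\sim\calN(0,1)$ independent of $X$, the density factorizes as $p_Y(y)=e^{-y^2/2}g(y)$ with $g(y)=(2\pi)^{-1/2}\int_{-R}^{R} p(x)e^{xy-x^2/2}\,dx$, so that $p_Y=e^{-Q_Y}$ with $Q_Y$ even and $Q_Y(y)=\tfrac12 y^2+O(|y|)$ as $|y|\to\infty$. The hypotheses bundled into $p\in\SD$ --- compact support, evenness, and monotonicity on the support --- are precisely what is needed to guarantee that $p_Y$ is even, strictly positive, smooth, and that $Q_Y$ has the growth and regularity placing it in the weight class to which the Jackson machinery of \cite{Lotsch2009} applies.

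Next I would apply the weighted Jackson inequality. For a Freud weight with exponent $\alpha$ the Mhaskar--Rakhmanov--Saff number satisfies $a_n\asymp n^{1/\alpha}$, and the best weighted-$L^2$ approximation obeys a bound of the form $\inf_{q\in\SP_n}\|q-f\|_{L^2(P_Y)}\lesssim (a_n/n)^k\,\|f^{(k)}\|_{L^2(P_Y)}$, obtained through the order-$k$ modulus of smoothness and its associated $K$-functional (the step-weight that appears is bounded on $\BR$, so it may be absorbed into the constant). Here $Q_Y(y)\sim y^2/2$ gives $\alpha=2$, hence $a_n\asymp\sqrt{n}$ and $(a_n/n)^k\asymp n^{-k/2}$, which is the source of the claimed exponent; the restriction $n\ge\max(k-1,1)$ is exactly the regime in which the degree-$k$ Jackson estimate is valid.

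Finally, the smoothness input is supplied by Theorem~\ref{og}: because $X$ is compactly supported, $\|X\|_m<\infty$ for every $m$, so $\|f^{(k)}\|_2=\|f^{(k)}\|_{L^2(P_Y)}\le 2^{k+1}C_{k+1}\,\gamma_{k+1}$ is a finite constant depending only on $p$ and $k$. Combining with the Jackson bound yields $\|E_n[X\mid Y]-\BE[X\mid Y]\|_2\le D(p,k)\,n^{-k/2}$ for all $n\ge\max(k-1,1)$. The main obstacle is the middle step: matching the convolution weight $p_Y$ to the precise admissibility hypotheses of the cited weighted-approximation theorem (confirming that $Q_Y$ lies in the required Freud class and that the $\SD$-conditions deliver the needed regularity and monotonicity), and importing the Jackson inequality with the correct scaling $a_n\asymp\sqrt{n}$. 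The identifications in the remaining steps are routine given Theorem~\ref{og} and the projection characterization~\eqref{kp}.
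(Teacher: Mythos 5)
Your proposal is correct and follows essentially the same route as the paper: identify $\|E_n[X\mid Y]-\BE[X\mid Y]\|_2$ as the best weighted-$L^2$ polynomial approximation error via the projection characterization~\eqref{kp}, verify that the Gaussian-convolved density $p_Y$ is a Freud weight (the paper's Theorem~\ref{if}, proved using $Q'(y)=\BE[N\mid Y=y]$ and the $\SD$ conditions), invoke the weighted Jackson/Bernstein estimate (Theorem~\ref{arq}) with the Mhaskar--Rakhmanov--Saff bound $a_n=O_p(\sqrt{n})$ from~\eqref{sx}, and feed in the derivative bounds of Theorem~\ref{og} as the smoothness input. The one step you flag as the main obstacle --- confirming $Q_Y$ satisfies the precise Freud-class hypotheses --- is exactly what the paper's Theorem~\ref{if} establishes, so the gap you identify is real work but is filled in the intended way.
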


The proof relies on results on the Bernstein approximation problem in weighted $L^p$ spaces. In particular, we consider the Freud case~\cite[Definition 3.3]{Lubinsky2007}, where the weight is of the form $e^{-Q}$ for $Q$ of polynomial growth, e.g., a Gaussian weight.

\begin{definition}[Freud Weights,~\cite{Lubinsky2007}] \label{ig}
A function $W:\mathbb{R} \to (0,\infty)$ is called a \emph{Freud Weight}, and we write $W\in \SF,$ if it is of the form $W = e^{-Q}$ for $Q: \mathbb{R}\to \mathbb{R}$ satisfying:
\begin{enumerate}[label=(\arabic*)]
    \item $Q$ is even, \label{aak}
    \item $Q$ is differentiable, and $Q'(y)>0$ for $y>0,$ \label{aal}
    \item $y\mapsto yQ'(y)$ is strictly increasing over $(0,\infty),$ \label{aam}
    \item $yQ'(y)\to 0$ as $y\to 0^+,$ and \label{aan}
    \item \label{aao} there exist $\lambda,a,b,c > 1$ such that for every $y>c$ 
    \begin{equation}
        a \le \frac{Q'(\lambda y)}{Q'(y)} \le b.
    \end{equation} 
\end{enumerate}
\end{definition}

The convolution of a weight in $\SD$ with the Gaussian weight $\varphi(x) := e^{-x^2/2}/\sqrt{2\pi}$ is a Freud weight. This can be shown by noting that with $p_Y=e^{-Q}$ we have $Q'(y)=\BE[N \mid Y=y].$

\begin{theorem} \label{if}
If $p\in \SD$ and $X\sim p,$ then the probability density function of $X+N,$ for $N\sim \mathcal{N}(0,1)$ independent of $X,$ is a Freud weight.
\end{theorem}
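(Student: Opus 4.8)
The plan is to set $p_Y$ equal to the density of $Y=X+N$, which is the Gaussian mollification $p_Y(y)=\BE[\varphi(y-X)]$; since $X$ has compact support, $p_Y$ is strictly positive and $C^\infty$, so $Q:=-\log p_Y$ is a well-defined $C^\infty$ function with $p_Y=e^{-Q}$. Condition \ref{aak} (evenness of $Q$) is immediate: $p$ even and $N$ symmetric give $-Y\overset{d}{=}Y$, hence $p_Y$ and $Q$ are even. The engine for the remaining conditions is the identity flagged in the text, which I would prove by differentiating $p_Y(y)=\BE[\varphi(y-X)]$ under the expectation (justified by compact support) to get $-p_Y'(y)/p_Y(y)=\BE[(y-X)\mid Y=y]$, i.e.
\[
Q'(y)=\BE[N\mid Y=y]=y-m(y),\qquad m(y):=\BE[X\mid Y=y].
\]
By \eqref{sh}, $m'(y)=\mathrm{Var}[X\mid Y=y]\ge 0$, so $m$ is non-decreasing; it is odd by symmetry (so $m(0)=0$), and $|m(y)|\le M:=\max\supp(p)$ because $X$ is supported in $[-M,M]$.

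Conditions \ref{aal}, \ref{aan}, \ref{aao} then follow quickly. For \ref{aal} I must show $Q'(y)>0$ for $y>0$, equivalently that $p_Y$ is strictly decreasing on $(0,\infty)$. Using that $p\in\SD$ is symmetric and unimodal, I would write $p$ as a scale mixture of symmetric uniforms, $p=\int_0^\infty \tfrac{1}{2a}\mathbf 1_{[-a,a]}\,d\mu(a)$ (the layer-cake/Khinchin decomposition), whence $p_Y=\int_0^\infty u_a\,d\mu(a)$ with $u_a:=\tfrac1{2a}\mathbf 1_{[-a,a]}*\varphi$; a one-line computation gives $u_a'(y)=\tfrac1{2a}\big(\varphi(y+a)-\varphi(y-a)\big)<0$ for $y>0$ since $|y+a|>|y-a|$, and integrating against $\mu$ yields $p_Y'<0$ on $(0,\infty)$. (The discrete case is handled by the analogous decomposition into symmetric uniform masses on $\{-j,\dots,j\}$.) Condition \ref{aan} is then trivial: $Q'$ is odd and $C^\infty$ with $Q'(0)=0$, so $yQ'(y)=O(y^2)\to 0$ as $y\to0^+$. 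For \ref{aao} I use $Q'(y)=y-m(y)=y+O(1)$, so $Q'(y)/y\to1$ and hence $Q'(\lambda y)/Q'(y)\to\lambda$ as $y\to\infty$; choosing e.g. $\lambda=2$ gives $a,b,c>1$ with $a\le Q'(2y)/Q'(y)\le b$ for all $y>c$.

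The main obstacle is condition \ref{aam}, that $y\mapsto yQ'(y)$ is strictly increasing on $(0,\infty)$. Differentiating and using $Q''(y)=1-m'(y)=1-\mathrm{Var}[X\mid Y=y]$, this amounts to
\[
\frac{d}{dy}\big(yQ'(y)\big)=Q'(y)+yQ''(y)=2y-m(y)-y\,\mathrm{Var}[X\mid Y=y]>0 .
\]
Since $0\le m(y)<y$ (by monotonicity of $m$ and by \ref{aal}), it suffices to control the conditional variance $v(y):=\mathrm{Var}[X\mid Y=y]$; in particular $v(y)\le 1$ would give $2y-m(y)-yv(y)>2y-y-y=0$. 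I would establish this variance bound from unimodality. At $y=0$ the posterior of $X$ is $\propto p(x)\varphi(x)$, and the reweighting inequality
\[
\int (x^2-1)\varphi(x)p(x)\,dx=\int (x^2-1)\varphi(x)\big(p(x)-p(1)\big)\,dx\le 0
\]
(the integrand is $\le0$ pointwise, because $p$ is non-increasing in $|x|$ while $x^2-1$ changes sign at $|x|=1$, and $\int(x^2-1)\varphi=0$) yields $v(0)\le 1$.

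The crux is to propagate such a bound to all $y>0$, where the posterior $\propto p(x)\varphi(y-x)$ is a product of two unimodal factors peaked at $0$ and at $y$ respectively. Here I would again exploit the mixture-of-uniforms representation: conditioning on the scale $a$ makes the posterior of $X$ a Gaussian truncated to $[y-a,y+a]$, which has variance $\le 1$, and I would then absorb the between-scale spread of the conditional means using the inequality $m(y)<y$. Controlling this between-scale contribution is the delicate step and the main technical risk of the proof. Once \ref{aam} is in hand, all five conditions of Definition \ref{ig} hold and therefore $p_Y\in\SF$.
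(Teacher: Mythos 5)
Your overall engine coincides with the paper's: both proofs run through Lemma~\ref{ii}, $Q'(y)=y-\BE[X\mid Y=y]=y-m(y)$, and your handling of conditions~\ref{aak}, \ref{aan}, \ref{aao} is correct (the paper does \ref{aao} with explicit constants $\lambda=M+2$ instead of your asymptotic $Q'(y)\sim y$; the content is the same). Your Khinchin mixture-of-uniforms argument for~\ref{aal} is also fine, provided $p$ is non-increasing on all of $[0,\infty)$, i.e.\ $\supp(p)$ is an interval --- which is how Definition~\ref{kx} must be read anyway, since with gaps in the support $p_Y$ can be multimodal and even \ref{aal} fails.

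The step you flag as the ``main technical risk,'' however, is not merely unproven but false, so your route to condition~\ref{aam} cannot be repaired. The bound $v(y):=\mathrm{Var}[X\mid Y=y]\le 1$ fails for densities in $\SD$: take $p=\tfrac12 q_\epsilon+\tfrac12 q_M$, where $q_a$ is the uniform density on $[-a,a]$, $\epsilon$ is tiny and $M\ge 10$; this $p$ is even, compactly supported, and non-increasing on $[0,M]$, hence $p\in\SD$. Let $y^*$ solve $\varphi(y^*)=1/(2M)$ (so $y^*\approx\sqrt{2\log M}$, far from the edge $M$), where $\varphi$ is the standard Gaussian density. Given $Y=y^*$, the posterior of $X$ puts weight $\approx\tfrac12$ on the spike (conditional mean $\approx 0$, variance $\approx 0$) and weight $\approx\tfrac12$ on the wide component (conditional mean $\approx y^*$, variance $\approx 1$), so $m(y^*)\approx y^*/2$ and, by the law of total variance, $v(y^*)\approx\tfrac12+(y^*)^2/4\gg 1$. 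Worse, substituting into your own identity,
\[
\frac{d}{dy}\bigl(yQ'(y)\bigr)\Big|_{y=y^*}=2y^*-m(y^*)-y^*v(y^*)\approx y^*\Bigl(1-\frac{(y^*)^2}{4}\Bigr)<0
\]
as soon as $y^*>2$, i.e.\ $M>\sqrt{2\pi}\,e^2/2\approx 9.3$. One can see this without conditional moments: near $y^*$ one has $p_Y(y)\approx\tfrac12\varphi(y)+\tfrac{1}{4M}$, hence $yQ'(y)\approx y^2\varphi(y)/\bigl(\varphi(y)+\tfrac{1}{2M}\bigr)$, which rises to $\approx(y^*)^2/2$ at $y^*$ and then collapses. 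So condition~\ref{aam} itself fails for this $p\in\SD$: what you identified is a gap in the theorem, not just in your argument. For what it is worth, the paper's own proof does not close it either --- it proves \ref{aao} carefully but only \emph{asserts} that \ref{aak}--\ref{aan} follow from Lemma~\ref{ii} and monotonicity of $p$, which is precisely where the example above bites. The statement becomes true, with essentially your argument minus the variance bound, if membership in $\SD$ is strengthened to even, compactly supported, \emph{log-concave} $p$: then $p_Y=p\ast\varphi$ is log-concave by Pr\'ekopa's theorem, so $Q''\ge 0$ and $yQ'(y)$ is strictly increasing as a product of a positive strictly increasing function and a positive nondecreasing one.
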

\begin{proof}
See Appendix~\ref{aaf}.
\end{proof}

To be able to state the theorem we borrow from the Bernstein approximation literature, we need first to define the Mhaskar–Rakhmanov–Saff number.

\begin{definition} \label{lb}
If $Q:\mathbb{R}\to \mathbb{R}$ satisfies conditions~\ref{aal}--\ref{aan} in Definition~\ref{ig}, and if $yQ'(y)\to \infty$ as $y\to \infty,$ then the \emph{$n$-th Mhaskar–Rakhmanov–Saff number $a_n(Q)$ of $Q$} is defined as the unique positive root $a_n$ of the equation
\begin{equation}
    n = \frac{2}{\pi} \int_0^1 \frac{a_ntQ'(a_nt)}{\sqrt{1-t^2}} \, dt.
\end{equation}
\end{definition}
\begin{remark}
The condition $yQ'(y) \to \infty$ as $y\to \infty$ in Definition \ref{lb} is satisfied if $e^{-Q}$ is a Freud weight. Indeed, in view of properties~\ref{aal}--\ref{aam} in Definition \ref{ig}, the quantity $\ell := \lim_{y\to \infty} yQ'(y)$ is well-defined and it belongs to $(0,\infty].$ If $\ell\neq \infty,$ then because $\lim_{y\to \infty} \lambda y Q'(\lambda y) = \ell$ too, property~\ref{aao} would imply that $a \le 1/\lambda \le b$ contradicting that $\lambda,a>1.$ Therefore, $\ell=\infty.$
\end{remark}

For example, the weight $W(y)=e^{-y^2},$ for which $Q(y)=y^2,$ has $a_n(Q)=\sqrt{n}$ because $\int_0^1 t^2/\sqrt{1-t^2} \, dt = \frac{\pi}{4}.$ If $X\sim p \in \SD,$ say $\supp(p)\subset [-M,M],$ and $p_{Y}=e^{-Q}$ (where $N\sim \calN(0,1)$ is independent of $X,$ and $Y=X+N$), then (see Appendix~\ref{aag})
\begin{equation} \label{sx}
    a_n(Q) \le \left( 2 M +\sqrt{2} \right)\sqrt{n}.
\end{equation}

We apply the following Bernstein approximation theorem \cite[Corollary 3.6]{Lubinsky2007} to prove Theorem~\ref{kw}. 

\begin{theorem} \label{arq}
Fix $W\in \SF,$ and let $u$ be an $r$-times continuously differentiable function such that $u^{(r)}$ is absolutely continuous. Let $a_n=a_n(Q)$ where $W=e^{-Q},$ and fix $1\le s \le \infty.$ Then, for some constant $D(W,r,s)$ and every $n\ge \max(r-1,1)$ 
\begin{equation}
    \inf_{q\in \mathscr{P}_n} \|(q-u)W\|_{L^s(\mathbb{R})} \le D(W,r,s) \left( \frac{a_n}{n} \right)^r \|u^{(r)}W\|_{L^s(\mathbb{R})}.
\end{equation}
\end{theorem}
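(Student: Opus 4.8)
The plan is to deduce this direct (Jackson-type) estimate from the $K$-functional characterization of best weighted polynomial approximation for Freud weights. For $t>0$ define the weighted $K$-functional
\begin{equation}
    K_r(u,t)_{W,s} := \inf_{g} \left( \|(u-g)W\|_{L^s(\BR)} + t^r \|g^{(r)}W\|_{L^s(\BR)} \right),
\end{equation}
where the infimum runs over all $g$ whose $(r-1)$-st derivative is locally absolutely continuous and for which both weighted norms are finite. Writing $E_n(u)_{W,s} := \inf_{q\in \SP_n} \|(u-q)W\|_{L^s(\BR)}$ for the quantity to be bounded, the theorem reduces to two inequalities: the direct (Jackson) inequality
\begin{equation} \label{eqjk}
    E_n(u)_{W,s} \le C(W,r,s)\, K_r\!\left(u,\tfrac{a_n}{n}\right)_{W,s}
\end{equation}
valid for all $n\ge \max(r-1,1)$, and an elementary upper bound on $K_r$ available because $u$ is already smooth.

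Granting \eqref{eqjk}, the theorem follows at once. The hypotheses say $u$ is $r$-times continuously differentiable with $u^{(r)}$ absolutely continuous (and, for the right-hand side to be meaningful, $\|u^{(r)}W\|_{L^s(\BR)}<\infty$), so $u$ is itself an admissible competitor in the infimum defining $K_r(u,a_n/n)_{W,s}$. Taking $g=u$ there annihilates the approximation term and yields
\begin{equation}
    K_r\!\left(u,\tfrac{a_n}{n}\right)_{W,s} \le \left(\tfrac{a_n}{n}\right)^r \|u^{(r)}W\|_{L^s(\BR)}.
\end{equation}
Substituting into \eqref{eqjk} gives the stated bound with $D(W,r,s)=C(W,r,s)$. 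Thus all the content lies in \eqref{eqjk}.

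To establish \eqref{eqjk} I would follow the standard route of Freud-weight theory. First invoke the Mhaskar--Saff infinite-finite range inequality: for any $P\in \SP_n$ one has $\|PW\|_{L^s(\BR)} \le C\,\|PW\|_{L^s([-a_n,a_n])}$, so that the weighted norm of a degree-$n$ polynomial is, up to constants, carried by the Mhaskar--Rakhmanov--Saff interval $[-a_n,a_n]$; the relevant length-scale for degree-$n$ approximation is therefore $a_n/n$. Next construct a near-optimal linear polynomial operator $L_n$ into $\SP_{cn}$ that reproduces polynomials of degree $\le r-1$ and is uniformly bounded in the weighted norm, $\|(L_n f)W\|_{L^s(\BR)} \le C\,\|fW\|_{L^s(\BR)}$; a de la Vallée Poussin--type mean formed from the orthonormal polynomials for $W^2$ (or, equivalently, a mollification against a polynomial kernel of controlled localization) serves this purpose. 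Applying $L_n$ to the optimal $K$-functional competitor $g$ and writing $u-L_n u = (u-g) - L_n(u-g) + (g - L_n g)$, the first two terms are controlled by the uniform boundedness of $L_n$, while the last is estimated by a local Taylor expansion of $g$ at scale $a_n/n$ together with the weighted Markov--Bernstein inequality $\|P'W\|_{L^s(\BR)}\le C\,(n/a_n)\,\|PW\|_{L^s(\BR)}$; summing the local contributions reproduces the factor $(a_n/n)^r\|g^{(r)}W\|_{L^s(\BR)}$, and the doubling $a_{cn}\asymp a_n$ converts $E_{cn}$ back to $E_n$.

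The main obstacle is precisely the construction and the uniform-in-$n$ weighted boundedness of $L_n$: this is the hard analytic core of Freud-weight approximation, resting on sharp two-sided bounds for the Christoffel function of $W^2$, weighted Nikolskii and Markov--Bernstein inequalities, and the regularity encoded in the Freud conditions \ref{aak}--\ref{aao} (through which \ref{aao} supplies the doubling behavior $a_{2n}\asymp a_n$ that makes the scale $a_n/n$ robust). Since our application only ever uses the estimate for a fixed smooth $u$, the delicate part is confined to \eqref{eqjk}; once it is granted, the passage to the final bound is immediate, as shown above. This is exactly the content of \cite[Corollary 3.6]{Lubinsky2007}, which we therefore invoke directly rather than reprove.
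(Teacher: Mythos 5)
Your proposal ends by invoking \cite[Corollary 3.6]{Lubinsky2007} directly, which is exactly what the paper does: Theorem~\ref{arq} is stated there as a quoted result from the literature with no independent proof. The $K$-functional reduction you sketch beforehand (taking $g=u$ to bound $K_r(u,a_n/n)_{W,s}$ by $(a_n/n)^r\|u^{(r)}W\|_{L^s(\BR)}$) is a correct account of why the cited estimate applies to smooth $u$, so your treatment is sound and essentially coincides with the paper's.
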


\begin{proof}[Proof of Theorem~\ref{kw}] 
Fix $k\in \BN$ and $n\ge \max(k-1,1).$ We apply Theorem~\ref{arq} for the function $u(y)=\BE[X\mid Y=y],$ the weight $W=\sqrt{p_Y},$ and for $s=2.$ By our choice of weight, $\|hW\|_{L^2(\BR)} = \|h(Y)\|_2$ for any Borel $h:\BR\to \BR.$  Recall from~\eqref{kp} that $E_n[X\mid Y]$ minimizes $\| q(Y) - \BE[X\mid Y\|_2$ over $q(Y)\in \SP_n(Y).$ By~\eqref{sx}, we have the bound $a_n = O_p(\sqrt{n}).$ Furthermore, by Theorem~\ref{og}, $\|(d^k/dy^k)\BE[X\mid Y]\|_2 = O_k(1).$ Note that $W\in \SF,$ because $W^2=p_Y\in \SF$ by Theorem~\ref{if}. Therefore, by Theorem~\ref{arq}, we obtain a constant $D(p,k)$ such that
\begin{equation}
    \|E_n[X\mid Y] - \mathbb{E}[X \mid Y]\|_2 \le \frac{D(p,k)}{n^{k/2}},
\end{equation}
as desired.
\end{proof}

\section{Entropy from Moments} \label{jj}

In this section, we leverage the I-MMSE relation to express the differential entropy in terms of moments. We prove in Theorem~\ref{jn} that, for any continuous RV $Y$ that has a MGF, there is a sequence of rational functions $\rho_n:[0,\infty) \to \mathbb{R},$ $n\in \mathbb{N},$ whose coefficients are multivariate polynomials in the first $2n$ moments of $Y,$ such that
\begin{equation} \label{ie}
    h(Y) = \lim_{n\to \infty} \int_0^{\infty} \rho_n(t) \, dt.
\end{equation}
Further, the convergence in~\eqref{ie} is monotone from above. The starting point in deriving this formula is the I-MMSE relation, which we briefly review next.

\subsection{The I-MMSE Relation}

The I-MMSE relationship can be stated as follows. 

\begin{theorem}[I-MMSE relation, \cite{Guo2005}] \label{bt}
For any square-integrable RV $X,$ an independent $N\sim \calN(0,1),$ and $\gamma>0,$ we have that
\begin{equation}
I(X~;~\sqrt{\gamma}X+N) = \frac12 \int_0^\gamma \mathrm{mmse}(X,t) \, dt.
\end{equation}
\end{theorem}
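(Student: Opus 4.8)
The plan is to prove the identity by differentiating the mutual information in the SNR parameter, identifying that derivative with $\tfrac12\,\mm(X,\gamma)$, and then integrating. Writing $Y_\gamma:=\sqrt{\gamma}X+N$, independence of $N$ and $X$ gives $h(Y_\gamma\mid X)=h(N)=\tfrac12\log(2\pi e)$, so
\[
 I(X;Y_\gamma)=h(Y_\gamma)-\tfrac12\log(2\pi e).
\]
Hence it suffices to establish $\tfrac{d}{d\gamma}h(Y_\gamma)=\tfrac12\,\mm(X,\gamma)$ for $\gamma>0$ and integrate from $0$ to $\gamma$, anchoring at $I(X;Y_\gamma)\to I(X;N)=0$ as $\gamma\to0^{+}$.

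For the derivative I would pass to a noise-variance parametrization. Setting $t:=1/\gamma$ and $U_t:=X+\sqrt{t}\,N$, one has $Y_\gamma=\sqrt{\gamma}\,U_t$, so the scaling rule $h(aU)=h(U)+\log|a|$ gives $h(Y_\gamma)=h(U_t)-\tfrac12\log t$. De Bruijn's identity for the Gaussian-smoothed variable $U_t$ reads $\tfrac{d}{dt}h(U_t)=\tfrac12 J(U_t)$, where $J(U):=\BE\!\left[\left(\tfrac{d}{du}\log p_U(U)\right)^2\right]$ is the Fisher information of the location family. Using $\tfrac{dt}{d\gamma}=-t^2$, the chain rule yields
\[
 \frac{d}{d\gamma}h(Y_\gamma)=-t^2\left(\tfrac12 J(U_t)-\tfrac{1}{2t}\right).
\]

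The crux is converting $J(U_t)$ into an MMSE. Differentiating $p_{U_t}(u)=\BE\!\left[\tfrac{1}{\sqrt{2\pi t}}e^{-(u-X)^2/(2t)}\right]$ under the integral exhibits the score as a conditional mean,
\[
 \frac{d}{du}\log p_{U_t}(u)=-\tfrac1t\,\BE[\,u-X\mid U_t=u\,]=-\tfrac{1}{\sqrt t}\,\BE[N\mid U_t=u],
\]
so $J(U_t)=\tfrac1t\,\BE[\BE[N\mid U_t]^2]=\tfrac1t\bigl(1-\mm(N\mid U_t)\bigr)$. The orthogonality identity $\BE[U_t\mid U_t]=U_t$ forces $X-\BE[X\mid U_t]=-\sqrt t\,(N-\BE[N\mid U_t])$, whence $\mm(N\mid U_t)=\tfrac1t\,\mm(X\mid U_t)$; and since scaling the observation leaves the conditional expectation unchanged, $\mm(X\mid U_t)=\mm(X\mid Y_\gamma)=\mm(X,\gamma)$. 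Substituting $J(U_t)=\gamma\bigl(1-\gamma\,\mm(X,\gamma)\bigr)$ (using $1/t=\gamma$) into the displayed derivative collapses, after elementary algebra, exactly to $\tfrac12\,\mm(X,\gamma)$, and integrating finishes the proof.

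The main obstacle is analytic rather than algebraic: rigorously justifying De Bruijn's identity and the differentiation-under-the-integral steps. For $t>0$ the Gaussian convolution makes $p_{U_t}$ smooth and strictly positive, and one must verify that $J(U_t)$ is finite, that $p_{U_t}$ and its $u$- and $t$-derivatives are dominated well enough to interchange limits (the heat equation $\partial_t p=\tfrac12\partial_u^2 p$ is the clean bookkeeping device), and that the boundary terms in the entropy-derivative calculation vanish. Square-integrability of $X$ is precisely what guarantees $\mm(X,\gamma)$ is finite and the final integral converges; the delicate endpoint is $\gamma\to0^{+}$ (equivalently $t\to\infty$), where one needs continuity of $\gamma\mapsto I(X;Y_\gamma)$ down to $\gamma=0$ to justify the integration constant. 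As a cross-check I would keep in reserve the incremental-channel argument of Guo--Shamai--Verd\'u, which evaluates $I(X;Y_{\gamma+\delta})-I(X;Y_\gamma)$ to first order in a small added Gaussian increment $\delta$ and sidesteps De Bruijn entirely.
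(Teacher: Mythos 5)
The paper never proves Theorem~\ref{bt}: it is imported verbatim from Guo, Shamai, and Verd\'u~\cite{Guo2005} and used as a black box downstream (e.g.\ to derive~\eqref{hz} and~\eqref{hy}), so there is no internal proof to compare yours against; your proposal must be judged on its own terms, and on those terms it is correct. What you have written is the classical de Bruijn-identity route, and the algebra does close: with $t=1/\gamma$, the score identity gives $J(U_t)=\tfrac1t\bigl(1-\mm(N\mid U_t)\bigr)$, taking conditional expectations in $X+\sqrt{t}N=U_t$ gives $\mm(N\mid U_t)=\tfrac1t\,\mm(X\mid U_t)=\tfrac1t\,\mm(X,\gamma)$, and substituting into $\tfrac{d}{d\gamma}h(Y_\gamma)=-t^2\bigl(\tfrac12 J(U_t)-\tfrac1{2t}\bigr)$ indeed collapses to $\tfrac12\,\mm(X,\gamma)$; the endpoint anchor is supplied by the Gaussian maximum-entropy bound $0\le I(X;Y_\gamma)\le\tfrac12\log\bigl(1+\gamma\sigma_X^2\bigr)\to 0$ as $\gamma\to 0^+,$ and integrability near $0$ by $\mm(X,t)\le\sigma_X^2.$ This is a genuinely different route from the primary argument in~\cite{Guo2005}, the incremental-channel proof you keep in reserve: there one realizes $Y_\gamma$ as a degraded observation of $Y_{\gamma+\delta}$ through an infinitesimal independent Gaussian channel and shows $I(X;Y_{\gamma+\delta})-I(X;Y_\gamma)=\tfrac{\delta}{2}\,\mm(X,\gamma)+o(\delta),$ which never touches differential entropy or Fisher information and therefore requires no regularity of $p_{U_t}$ beyond what the added noise provides for free. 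Your route buys brevity and transparency once de Bruijn's identity is granted; its cost is exactly the analytic debt you candidly flag---justifying de Bruijn for arbitrary finite-variance $X,$ the interchanges of $\partial_u$ and $\partial_t$ with expectation, finiteness of $J(U_t),$ and continuity at the endpoint---and a complete write-up would have to discharge those items (or fall back on the incremental-channel argument, which sidesteps them).
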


The I-MMSE relation directly yields the following formula for differential entropy: for a square-integrable continuous RV $Y$ we have that~\cite{Guo2005}
\begin{equation} \label{hz}
    h(Y) = \frac12 \log\left( 2\pi e \sigma_Y^2 \right) - \frac12 \int_0^\infty \frac{\sigma_Y^2}{1+\sigma_Y^2 t} - \mathrm{mmse}(Y,t) \, dt.
\end{equation}
Noting that for $a,b>0$
\begin{equation} \label{hx}
    \int_0^{\infty} \frac{a}{1+at} - \frac{b}{1+bt} \, dt = \lim_{\gamma \to \infty} \log \frac{1+a\gamma}{1+b\gamma} = \log \frac{a}{b},
\end{equation}
we obtain
\begin{equation} \label{hy}
    h(Y) = \frac12 \int_0^\infty  \mm(Y,t) - \frac{1}{2\pi e + t} \, dt.
\end{equation}
The MMSE term in the above expression can be approximated by the PMMSE, resulting in an expression for differential entropy as a function of moments of $Y$. We explore this connection next.

\subsection{A New Formula for Differential Entropy}

From~\eqref{hz} and~\eqref{hy}, and since $\mm(Y,t)\le \LL(Y,t),$ replacing the MMSE with the LMMSE gives the upper bound on differential entropy $h(Y)$
\begin{equation}
    h(Y) \le h_1(Y) := \frac12 \int_0^\infty  \mathrm{lmmse}(Y ,t) - \frac{1}{2\pi e + t} \, dt  = \frac12 \log\left( 2\pi e \sigma_Y^2 \right) = h(\mathcal{N}(0,\sigma_Y^2)),
\end{equation}
which is the maximum possible differential entropy for a continuous RV with a prescribed variance of $\sigma_Y^2.$ We take this a step further and introduce for each $n\in \mathbb{N}$ (assuming $\mathbb{E}[Y^{2n}]<\infty$)
\begin{equation} \label{jp}
    h_n(Y) := \frac12 \int_0^\infty  \pp_n(Y,t) - \frac{1}{2\pi e + t} \, dt.
\end{equation}
From the definition of the PMMSE, we have the monotonicity 
\begin{equation}
    \pp_1(Y,t) \ge \pp_2(Y,t) \ge \cdots \ge \mm(Y,t).
\end{equation}
Hence, we obtain the monotonicity 
\begin{equation}
    h_1(Y) \ge h_2(Y) \ge \cdots \ge h(Y)
\end{equation}
for a RV $Y$ having moments of all orders. In fact, if $Y$ has a MGF, then $h_n(Y) \to h(Y)$ by the monotone convergence theorem in view of the limit $\pp_n(Y,t) \to \mm(Y,t)$ shown in Theorem~\ref{an}. Hence, we have the following formula for differential entropy in terms of the moments.

\begin{theorem} \label{jn}
Let $Y$ be a continuous RV whose MGF exists. Then, we have a decreasing sequence
\begin{equation}
    \frac{1}{2} \log \left( 2\pi e \sigma_Y^2 \right) = h_1(Y)\ge h_2(Y) \ge \cdots \ge h(Y)
\end{equation}
converging to the differential entropy
\begin{equation} \label{ia}
    \lim_{n\to \infty} h_n(Y) = h(Y).
\end{equation}
\end{theorem}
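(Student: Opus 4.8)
The plan is to reduce the statement to a single application of the monotone convergence theorem on the integrands, using the pointwise (in $t$) behavior of the PMMSE that has already been established. First I would record the pointwise monotonicity: for each fixed $t\ge 0$ the nesting $\SP_1(\sqrt{t}Y+N)\subseteq \SP_2(\sqrt{t}Y+N)\subseteq\cdots$ together with the minimization characterization in Definition~\ref{ho} gives $\pp_1(Y,t)\ge \pp_2(Y,t)\ge\cdots\ge \mm(Y,t)$, since enlarging the search space of polynomials can only decrease the minimal mean-square error, and every $\pp_n(Y,t)$ dominates the unconstrained minimum $\mm(Y,t)$. Integrating the inequality $\pp_{n}(Y,t)-\tfrac{1}{2\pi e+t}\ge \pp_{n+1}(Y,t)-\tfrac{1}{2\pi e+t}\ge \mm(Y,t)-\tfrac{1}{2\pi e+t}$ against $dt$ over $[0,\infty)$ then yields the claimed decreasing chain $h_1(Y)\ge h_2(Y)\ge\cdots\ge h(Y)$.

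Before passing to the limit I would check that each term is finite and that one endpoint integrand is Lebesgue-integrable. Write $g_n(t):=\pp_n(Y,t)-\tfrac{1}{2\pi e+t}$ and $g_\infty(t):=\mm(Y,t)-\tfrac{1}{2\pi e+t}$. Since $Y$ is continuous we have $|\supp(Y)|=\infty>n$, so by Corollary~\ref{is} each rational function $\pp_n(Y,t)$ satisfies $\pp_n(Y,t)=\tfrac1t+O(t^{-2})$ as $t\to\infty$ and equals $\sigma_Y^2$ at $t=0$; hence $g_n(t)=O(t^{-2})$ at infinity and is bounded near $0$, so $g_n\in L^1([0,\infty))$ and every $h_n(Y)$ is finite. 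In particular $g_1(t)=\LL(Y,t)-\tfrac{1}{2\pi e+t}$ is integrable with $\int_0^\infty g_1\,dt=\log(2\pi e\sigma_Y^2)$, as computed via~\eqref{hx} just before the theorem.

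For the convergence itself I would invoke the pointwise limit $\pp_n(Y,t)\to\mm(Y,t)$: the existence of the MGF of $Y$ forces $\sqrt{t}Y+N$ to have an MGF for every fixed $t$, hence to satisfy Carleman's condition, so Theorem~\ref{ar} (or the uniform version in Theorem~\ref{an}) gives $\pp_n(Y,t)\to\mm(Y,t)$ for each $t$. Now consider the auxiliary sequence $g_1-g_n=\LL(Y,t)-\pp_n(Y,t)$: by the monotonicity above it is nonnegative, increasing in $n$, and converges pointwise to $\LL(Y,t)-\mm(Y,t)=g_1-g_\infty$. The monotone convergence theorem then gives $\int_0^\infty(g_1-g_n)\,dt\uparrow\int_0^\infty(g_1-g_\infty)\,dt$, and subtracting this increasing limit from the finite constant $\int_0^\infty g_1\,dt$ yields $\int_0^\infty g_n\,dt\to\int_0^\infty g_\infty\,dt$, i.e.\ $h_n(Y)\to\tfrac12\int_0^\infty g_\infty\,dt=h(Y)$ by the entropy formula~\eqref{hy}. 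The only delicate point is bookkeeping in the extended reals: the limit $\int_0^\infty(g_1-g_\infty)\,dt$ may be $+\infty$, corresponding to $h(Y)=-\infty$, but since each $h_n(Y)$ is finite and $\int_0^\infty g_1\,dt$ is finite the subtraction is legitimate and the stated convergence holds in $[-\infty,\infty)$. I expect this extended-real-valued bookkeeping, rather than any analytic difficulty, to be the main thing to handle carefully.
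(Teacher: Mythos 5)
Your proposal is correct and takes essentially the same route as the paper's proof: both apply the monotone convergence theorem to the nonnegative, nondecreasing sequence $\LL(Y,t)-\pp_n(Y,t)$ (your $g_1-g_n$), obtain the pointwise limit $\pp_n(Y,t)\to\mm(Y,t)$ from the MGF/Carleman argument, and transfer the result to $h_n$ via the absolute integrability of $t\mapsto \LL(Y,t)-\tfrac{1}{2\pi e+t}$. Your additional bookkeeping (finiteness of each $h_n(Y)$ via Corollary~\ref{is} and the extended-real handling of $h(Y)=-\infty$) is a harmless elaboration of steps the paper leaves implicit.
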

\begin{proof}
The functions $g_n(t):= \LL(Y,t)-\pp_n(Y,t)$ are nonnegative and nondecreasing. By Theorem~\ref{an}, $g_n \to g$ pointwise, where $g(t):= \LL(Y,t)-\mm(t).$ Therefore, by the monotone convergence theorem, $\int_{\BR} g_n(t) \, dt \to \int_{\BR} g(t) \, dt.$ Adding and subtracting $1/(2\pi e + t)$ to each integrand, and noting that $t \mapsto \LL(Y,t)-1/(2\pi e + t)$ is absolutely integrable, we conclude that $h_n(Y) \to h(Y).$
\end{proof}

\subsection{Behavior of $h_n$ Under Affine Transformations} 

Each approximation $h_n$ behaves under (injective) affine transformations exactly as differential entropy does. Specifically, by the behavior of the PMMSE under affine transformations shown in Proposition~\ref{hn}, we have that
\begin{equation} \label{li}
    \pp_n(\alpha Y+\beta,t) = \alpha^2 \pp_n(Y,\alpha^2 t).
\end{equation}
From \eqref{li}, and after a change of variables in 
\begin{equation}
    h_n(Y) := \frac12 \int_0^\infty  \pp_n(Y,t) - \frac{1}{2\pi e + t} \, dt,
\end{equation}
one can show that $h_n(\alpha Y+\beta)=h_n(Y) + \log|\alpha|.$

\begin{prop} \label{js}
For $Y \in L^{2n}(\calF)$ and $(\alpha,\beta)\in \BR^2$ with $\alpha \neq 0,$ we have that
\begin{equation}
    h_n(\alpha Y+\beta) = h_n(Y) + \log |\alpha|.
\end{equation}
\end{prop}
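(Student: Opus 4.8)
The plan is to reduce the whole statement to a single elementary integral by feeding the affine-transformation rule for the PMMSE into the definition of $h_n$ and performing one change of variables. First I would write out
\begin{equation}
    h_n(\alpha Y+\beta) = \frac12 \int_0^\infty \pp_n(\alpha Y+\beta,t) - \frac{1}{2\pi e + t} \, dt
\end{equation}
and substitute $\pp_n(\alpha Y+\beta,t) = \alpha^2 \pp_n(Y,\alpha^2 t)$ from \eqref{li}; note that $\beta$ has already disappeared at this stage. The next step is the substitution $s = \alpha^2 t$, under which the positive orientation and the limits $0,\infty$ are preserved because $\alpha^2>0$. The point to watch is that the substitution acts on the \emph{entire} integrand: the signal term $\alpha^2 \pp_n(Y,\alpha^2 t)\,dt$ collapses to exactly $\pp_n(Y,s)\,ds$, while the correction term $\frac{1}{2\pi e + t}\,dt$ becomes $\frac{1}{2\pi e\alpha^2 + s}\,ds$. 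This produces
\begin{equation}
    h_n(\alpha Y+\beta) = \frac12 \int_0^\infty \pp_n(Y,s) - \frac{1}{2\pi e\alpha^2 + s} \, ds.
\end{equation}

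Subtracting the definition of $h_n(Y)$ then cancels the identical $\pp_n(Y,s)$ terms and leaves
\begin{equation}
    h_n(\alpha Y+\beta) - h_n(Y) = \frac12 \int_0^\infty \frac{1}{2\pi e + s} - \frac{1}{2\pi e\alpha^2 + s} \, ds.
\end{equation}
I would evaluate this last integral with the elementary identity \eqref{hx}, taking $a = 1/(2\pi e)$ and $b = 1/(2\pi e\alpha^2)$, so that $\frac{a}{1+as} = \frac{1}{2\pi e + s}$ and $\frac{b}{1+bs} = \frac{1}{2\pi e\alpha^2 + s}$. Then \eqref{hx} gives $\log(a/b) = \log(\alpha^2) = 2\log|\alpha|$, and multiplying by $\tfrac12$ yields precisely $h_n(\alpha Y+\beta) = h_n(Y) + \log|\alpha|$.

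The only real subtlety — and the step I would handle most carefully — is that the bookkeeping must stay at the level of convergent integrals. Neither $\int_0^\infty \pp_n(Y,s)\,ds$ nor $\int_0^\infty \frac{1}{2\pi e\alpha^2 + s}\,ds$ converges in isolation, so I must never split the integrand: I keep $\pp_n(Y,s) - \frac{1}{2\pi e\alpha^2 + s}$ together when applying the change of variables, using that this combination is absolutely integrable (for the continuous $Y$ of interest, $\pp_n(Y,s) = s^{-1} + O(s^{-2})$ by Corollary~\ref{is}, matching the tail of the rational correction up to an $O(s^{-2})$ remainder), and only then form the difference of two convergent integrals before reintroducing \eqref{hx}. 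Once absolute integrability is in hand, the substitution and the subtraction are routine, and the disappearance of $\beta$ is automatic since \eqref{li} is already $\beta$-free.
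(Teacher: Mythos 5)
Your proposal is correct and follows essentially the same route as the paper's own proof: both invoke the affine rule $\pp_n(\alpha Y+\beta,t)=\alpha^2\pp_n(Y,\alpha^2 t)$ from Proposition~\ref{hn}, perform the substitution $u=\alpha^2 t$ on the unsplit integrand, and evaluate the leftover correction via~\eqref{hx}, with the same care that only the combined integrands (never $\int_0^\infty \pp_n(Y,s)\,ds$ or the $1/(2\pi e\alpha^2+s)$ term alone) are treated as convergent integrals. The paper phrases the last step as adding two absolutely integrable expressions rather than subtracting $h_n(Y)$, but this is the same computation.
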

\begin{proof}
See Appendix \ref{lj}.
\end{proof}

\section{A New Formula for Mutual Information} \label{ba}

From the formula we developed for differential entropy in Proposition \ref{jn}, we immediately obtain a new formula for mutual information between a discrete RV $X$ and a continuous RV $Y.$ We use the shorthand $Y^{(x)}$ for the RV $Y$ conditioned on $\{ X=x\}$ (as opposed to the subscript notation used in Section~\ref{pc}), as we will later consider i.i.d. samples which we will indicate with subscripts. 

First, note that for each $x$ such that $P_X(x)>0,$ the RV $Y^{(x)}$ is continuous. Indeed, if $B\in \calB$ has Lebesgue measure zero, we have that
\begin{equation} \label{wm}
    0 = P_Y(B) = \sum_{a\in \supp(X)} P_{X,Y}(\{a\}\times B) = \sum_{a\in \supp(X)} P_{Y^{(a)}}(B) P_X(a) \ge  P_{Y^{(x)}}(B) P_X(x).
\end{equation}
Since $P_X(x)>0,$ we infer from~\eqref{wm} that $P_{Y^{(x)}}(B)=0.$ Further, the joint measure $P_{X,Y}$ is absolutely continuous with respect to the product measure $P_X\otimes P_Y.$ Indeed, if $B$ is a Borel subset of $\BR^2,$ and $B_x := \{y\in \BR ~ ; ~ (x,y)\in B\},$ then $(P_X\otimes P_Y)(B) = \sum_{x\in \supp(X)} P_Y(B_x)P_X(x).$ Suppose that $(P_X\otimes P_Y)(B)=0,$ so for each $x\in \supp(X),$ either $P_X(x)=0$ or $P_Y(B_x)=0.$ In addition, we have that $P_{X,Y}(B) = \sum_{x\in \supp(X)} P_{Y^{(x)}}(B_x)P_X(x).$ For every $x\in \supp(X)$ such that $P_X(x)\neq 0,$ we have that 
\begin{equation}
    0 = P_{Y}(B_x) = \sum_{a\in \supp(X)} P_{X,Y}(\{a\}\times B_x) = \sum_{a\in \supp(X)} P_{Y^{(x)}}(B_x) P_X(a) \ge P_{Y^{(x)}}(B_x) P_X(x),
\end{equation}
hence $P_{Y^{(x)}}(B_x)=0.$ Therefore, $P_{X,Y}(B)=0,$ and we conclude that $P_{X,Y}$ is absolutely continuous with respect to $P_{X}\otimes P_Y.$ We have the Radon-Nikodym derivative
\begin{equation}
    \frac{d P_{X,Y}}{d P_X\times P_Y} (x,y) = \frac{p_{Y^{(x)}}(y)}{p_Y(y)}.
\end{equation}
Now, by the disintegration theorem, we have that
\begin{equation} \label{wp}
    I(X;Y) = \sum_{x\in \supp(X)} P_X(x) \int_{\BR} p_{Y^{(x)}}(y) \log \frac{p_{Y^{(x)}}(y)}{p_Y(y)} ~ dy.
\end{equation}
Suppose that $\sigma_Y^2<\infty,$ so we also have that $\sigma_{Y^{(x)}}^2 < \infty$ for each $x\in \supp(X).$ If we also have that $h(Y)>-\infty,$ then we may split the integral in~\eqref{wp} to obtain
\begin{equation}
    I(X;Y) = -\sum_{x\in \supp(X)} P_X(x) \int_{\BR} p_{Y^{(x)}}(y) \log p_Y(y) ~ dy +  \sum_{x\in \supp(X)} P_X(x) \int_{\BR} p_{Y^{(x)}}(y) \log p_{Y^{(x)}}(y) ~ dy.
\end{equation}
In other words,
\begin{equation} \label{jm}
    I(X;Y) = h(Y) - \BE_X\left[ h\left(Y^{(X)}\right) \right],
\end{equation}
where the subscript in $\BE_X$ is to emphasize that the expectation is taken with respect to $X$ only.

Next, we discuss implications of equation \eqref{jm} in view of the formula for differential entropy proved in Proposition \ref{jn} and the behavior of the PMMSE under affine transformations proved in Proposition \ref{hn}. In particular, we show that the ensuing approximants of mutual information are affine-transformation invariant, which is in agreement with how mutual information behaves. In Section~\ref{jc}, we obtain another formula for the mutual information $I(X;Y)$ when both RVs $X$ and $Y$ are continuous.

\subsection{Mutual Information in Terms of Moments}

Under the assumptions used in Theorem~\ref{jn}, and utilizing equation \eqref{jm}, we obtain a formula for mutual information primarily in terms of moments.
\begin{theorem} \label{jo}
For a discrete RV $X$ with finite support and a continuous RV $Y$ with a MGF such that $h(Y)>-\infty,$ the mutual information is given by
\begin{equation}
I(X;Y)  = \frac12  \lim_{n\to \infty}  \int_0^\infty  \pp_n(Y,t) - \BE_X \left[\pp_n(Y^{(X)},t)\right] \, dt.
\end{equation}
\end{theorem}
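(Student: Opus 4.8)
The plan is to combine the decomposition \eqref{jm}, namely $I(X;Y) = h(Y) - \BE_X\!\left[h\!\left(Y^{(X)}\right)\right]$, with the differential-entropy formula of Theorem~\ref{jn} applied both to $Y$ and to each conditional law $Y^{(x)}$. Equation~\eqref{jm} is already available under exactly our hypotheses: $\sigma_Y^2<\infty$ follows from $Y$ having a MGF, $h(Y)>-\infty$ is assumed, and the computation leading to \eqref{wm} shows that each $Y^{(x)}$ (for $x\in\supp(X)$) is continuous. The first thing to verify is therefore that Theorem~\ref{jn} applies to each $Y^{(x)}$, i.e. that $Y^{(x)}$ also has a MGF. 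This follows from positivity of the summands in $\BE\!\left[e^{sY}\right] = \sum_{x\in\supp(X)} P_X(x)\,\BE\!\left[e^{sY^{(x)}}\right]$: since the left-hand side is finite for $s$ in a neighborhood of $0$, every summand with $P_X(x)>0$ is finite there too, so each $Y^{(x)}$ has a MGF.

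Next, for each fixed $n$ I would rewrite the $n$-th approximants using the definition \eqref{jp} of $h_n$. Because $\supp(X)$ is finite, $\BE_X$ is a finite weighted sum, and using $\sum_{x}P_X(x)=1$ (so that $\BE_X\!\left[(2\pi e+t)^{-1}\right]=(2\pi e+t)^{-1}$) I would combine the two copies of the regularizing term. The key manipulation is to keep $\tfrac{1}{2\pi e+t}$ paired with each PMMSE \emph{before} taking differences; the two copies then cancel, giving
\[
h_n(Y) - \BE_X\!\left[h_n\!\left(Y^{(X)}\right)\right] = \frac{1}{2}\int_0^\infty \pp_n(Y,t) - \BE_X\!\left[\pp_n\!\left(Y^{(X)},t\right)\right]\,dt,
\]
where the right-hand integrand is absolutely integrable even though neither $\int_0^\infty \pp_n(Y,t)\,dt$ nor $\int_0^\infty (2\pi e+t)^{-1}\,dt$ converges on its own.

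Finally I would let $n\to\infty$. Since $\supp(X)$ is finite, the limit passes through the finite sum $\BE_X$, and Theorem~\ref{jn} gives $h_n(Y)\downarrow h(Y)$ and $h_n\!\left(Y^{(x)}\right)\downarrow h\!\left(Y^{(x)}\right)$ for each $x$. Hence $\lim_{n\to\infty}\bigl(h_n(Y) - \BE_X[h_n(Y^{(X)})]\bigr) = h(Y) - \BE_X\!\left[h\!\left(Y^{(X)}\right)\right]$, which equals $I(X;Y)$ by \eqref{jm}; this is the claimed formula.

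The main obstacle is the handling of the regularizing term $\tfrac{1}{2\pi e+t}$: each individual integral diverges, so one cannot split the difference $h_n(Y) - \BE_X[h_n(Y^{(X)})]$ into separate integrals of $\pp_n$ and of $(2\pi e+t)^{-1}$ — the cancellation must be carried out at the level of the paired integrands. Everything else is routine given finiteness of $\supp(X)$. One should also note in passing that $I(X;Y)\le H(X)\le\log|\supp(X)|<\infty$ together with $h(Y)<\infty$ forces, via \eqref{jm}, each $h\!\left(Y^{(x)}\right)$ to be finite, so that no $\infty-\infty$ ambiguity arises either in \eqref{jm} or in passing the limit through $\BE_X$.
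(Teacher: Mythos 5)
Your proposal is correct and follows essentially the same route as the paper's proof: the decomposition $I(X;Y)=h(Y)-\BE_X\bigl[h\bigl(Y^{(X)}\bigr)\bigr]$ from \eqref{jm}, the inheritance of the MGF by each conditional law $Y^{(x)}$ via positivity of the summands in $\BE[e^{sY}]=\sum_x P_X(x)\BE\bigl[e^{sY^{(x)}}\bigr]$, and then Theorem~\ref{jn} applied to $Y$ and to each $Y^{(x)}$. Your additional care with the regularizing term $\tfrac{1}{2\pi e+t}$ (cancelling it at the level of paired integrands before splitting) and the finiteness check on each $h\bigl(Y^{(x)}\bigr)$ merely make explicit steps the paper leaves implicit when it passes from $h_n(Y)-\BE_X\bigl[h_n\bigl(Y^{(X)}\bigr)\bigr]$ to the stated integral formula.
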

\begin{proof}
By assumption on the MGF of $Y,$ we have that $\BE[e^{Yt}]<\infty$ whenever $|t|<\delta$ for some fixed $\delta>0.$ Then, whenever $|t|<\delta,$ we also have that for each $x\in \supp(X)$
\begin{equation}
    P_X(x)\BE[e^{Y^{(x)}t}] \le \sum_{a\in \supp(X)} P_X(a) \BE[e^{Y^{(a)} t}] = \BE[e^{Yt}] < \infty.
\end{equation}
Therefore, the MGF of $Y^{(x)}$ for each $x\in \supp(X)$ is finite over the nonempty neighborhood $(-\delta,\delta)$ of $0.$ Therefore, Theorem~\ref{jn} and equation \eqref{jm} yield the desired equation for $I(X;Y).$
\end{proof}

Equipped with the relationship between the moments and $I(X;Y)$ given in Theorem \ref{jo}, we will introduce a moments-based estimator of mutual information in Section~\ref{bb}. Specifically, we approximate the mutual information by fixing $n,$ then further approximate the ensuing expression by replacing moments with sample moments. Therefore, the estimator makes use of the approximants
\begin{equation} \label{xl}
    I_n(X;Y) := \frac12   \int_0^\infty  \pp_n(Y,t) - \BE_X \left[ \pp_n(Y^{(X)},t)  \right]\, dt
\end{equation}
for every $n \in \BN,$ where $X$ is a discrete RV with finite support and $Y$ is a continuous RV that is $2n$-times integrable. Under the premises of Theorem~\ref{jo}, we have the limit
\begin{equation}
    I(X;Y) = \lim_{n\to \infty} I_n(X;Y).
\end{equation}
Also, in view of the definition of $h_n$ in equation \eqref{jp},
\begin{equation} \label{jt}
    I_n(X;Y) = h_n(Y) - \BE_X \left[ h_n(Y^{(X)}) \right].
\end{equation}

\subsection{Properties of $I_n$: Affine Transformations and Independence}

We prove properties of the approximants $I_n$ that resemble those for the mutual information. First, the behavior of the PMMSE under affine transformations exhibited in Proposition \ref{hn} implies that $I_n(X;Y)$ is invariant under injective affine transformations of $Y.$ Indeed, this can be seen immediately from the behavior of $h_n$ in Proposition \ref{js} in view of equation \eqref{jt}.

\begin{corollary}
Suppose $X$ and $Y$ are RVs satisfying the premises of Theorem \ref{jo}. For any constants $(\alpha,\beta) \in \mathbb{R}^2$ with $\alpha \neq 0,$ and for any $n\in \BN$
\begin{equation}
I_n(X; \alpha Y+\beta)=I_n(X;Y).
\end{equation}
\end{corollary}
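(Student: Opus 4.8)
The plan is to reduce the statement to the affine-transformation behavior of the differential-entropy approximants $h_n$ already established in Proposition~\ref{js}, using the decomposition~\eqref{jt}. First I would record the elementary but essential observation that conditioning on $\{X=x\}$ commutes with a deterministic affine transformation applied to $Y$: since $y\mapsto \alpha y+\beta$ acts on $Y$ alone, the law of $\alpha Y+\beta$ given $\{X=x\}$ is the pushforward of $P_{Y^{(x)}}$ under this map, so that $(\alpha Y+\beta)^{(x)}=\alpha Y^{(x)}+\beta$ in distribution for every $x\in\supp(X)$.

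With this in hand, applying~\eqref{jt} to the pair $(X,\alpha Y+\beta)$ gives
\begin{equation}
    I_n(X;\alpha Y+\beta) = h_n(\alpha Y+\beta) - \BE_X\!\left[ h_n\!\left(\alpha Y^{(X)}+\beta\right) \right].
\end{equation}
Then I would invoke Proposition~\ref{js} twice---once for $Y$ and once for each $Y^{(x)}$---to write $h_n(\alpha Y+\beta)=h_n(Y)+\log|\alpha|$ and $h_n\!\left(\alpha Y^{(x)}+\beta\right)=h_n\!\left(Y^{(x)}\right)+\log|\alpha|$. Substituting and noting that the constant $\log|\alpha|$ pulls out of the expectation $\BE_X$ (it does not depend on the sample point), the two copies of $\log|\alpha|$ cancel:
\begin{equation}
    I_n(X;\alpha Y+\beta) = \big(h_n(Y)+\log|\alpha|\big) - \Big(\BE_X\!\left[ h_n\!\left(Y^{(X)}\right) \right]+\log|\alpha|\Big) = h_n(Y) - \BE_X\!\left[ h_n\!\left(Y^{(X)}\right) \right] = I_n(X;Y).
\end{equation}

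There is essentially no hard step here; the points requiring care are bookkeeping ones. I would check that the hypotheses of Proposition~\ref{js} are met for $\alpha Y+\beta$ and for each $\alpha Y^{(x)}+\beta$, i.e.\ that $Y,Y^{(x)}\in L^{2n}(\calF)$; this is inherited from the premises of Theorem~\ref{jo}, since (as used in its proof) the MGF of $Y$ passes to each $Y^{(x)}$ and in particular guarantees all moments are finite. The only place a reader might hesitate is the commuting-with-conditioning identity $(\alpha Y+\beta)^{(x)}=\alpha Y^{(x)}+\beta$, but this is immediate from the definition of the conditional law and invokes none of the Gaussian-channel machinery; I therefore expect no genuine obstacle, only the routine verification of integrability.
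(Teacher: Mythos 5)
Your proof is correct and follows exactly the route the paper takes: the paper derives this corollary "immediately from the behavior of $h_n$ in Proposition~\ref{js} in view of equation~\eqref{jt}," which is precisely your argument of decomposing $I_n$ via~\eqref{jt}, noting $(\alpha Y+\beta)^{(x)}=\alpha Y^{(x)}+\beta$, and cancelling the two $\log|\alpha|$ terms. Your added bookkeeping (the conditioning-commutes-with-affine-maps identity and the integrability check inherited from the MGF hypothesis) only makes explicit what the paper leaves implicit.
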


Also, the approximations $I_n(X;Y)$ detect independence exactly. 

\begin{prop}
Suppose $X$ and $Y$ are RVs satisfying the premises of Theorem \ref{jo}. If $X$ and $Y$ are independent, then for any $n\in \BN$
\begin{equation}
    I_n(X;Y)=0.
\end{equation}
\end{prop}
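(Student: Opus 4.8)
The plan is to exploit the fact that both the PMMSE $\pp_n(\wc,t)$ and the resulting functional $h_n$ depend on their underlying random variable only through its distribution---indeed, by Theorem~\ref{iv}, only through its first $2n$ moments. Under independence of $X$ and $Y,$ conditioning on $\{X=x\}$ does not alter the law of $Y,$ so every conditional random variable $Y^{(x)}$ shares the distribution (hence all moments) of $Y.$ This should collapse the integrand in the definition~\eqref{xl} to zero identically in $t.$

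Concretely, I would first record the key observation: for each $x\in \supp(X),$ independence gives $P_{Y^{(x)}} = P_Y,$ i.e.\ $Y^{(x)}$ has the same distribution as $Y.$ Since the PMMSE $\pp_n(Y^{(x)},t) = \pp_n(Y^{(x)}\mid \sqrt{t}Y^{(x)}+N)$ is a function of the moments of $Y^{(x)}$ alone (Theorem~\ref{iv}), it follows that $\pp_n(Y^{(x)},t) = \pp_n(Y,t)$ for every $x\in \supp(X)$ and every $t\ge 0.$ Averaging over $X$ yields
\begin{equation}
    \BE_X\left[ \pp_n(Y^{(X)},t) \right] = \sum_{x\in \supp(X)} P_X(x)\, \pp_n(Y^{(x)},t) = \pp_n(Y,t),
\end{equation}
so the integrand $\pp_n(Y,t) - \BE_X[\pp_n(Y^{(X)},t)]$ in~\eqref{xl} vanishes for every $t,$ and therefore $I_n(X;Y)=0.$ Equivalently, one may invoke~\eqref{jt}: since $h_n(Y^{(x)})=h_n(Y)$ for each $x$ (as $h_n$ is determined by the distribution of its argument), we get $\BE_X[h_n(Y^{(X)})] = h_n(Y),$ whence $I_n(X;Y) = h_n(Y) - h_n(Y) = 0.$

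There is essentially no analytic obstacle here; the only point requiring care is the justification that $\pp_n(\wc,t)$ (and hence $h_n$) is a distribution-determined---in fact moments-determined---functional, which is exactly the content of Theorem~\ref{iv} and may be cited directly. The remaining step is the elementary measure-theoretic fact that independence forces $P_{Y^{(x)}}=P_Y,$ after which the conclusion is immediate.
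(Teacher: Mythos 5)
Your proof is correct and follows essentially the same route as the paper's: independence gives $P_{Y^{(x)}}=P_Y,$ hence $h_n(Y^{(x)})=h_n(Y)$ (since $h_n$ is distribution-determined), and the conclusion follows from equation~\eqref{jt}. The only cosmetic difference is that you additionally spell out the cancellation at the level of the PMMSE integrand in~\eqref{xl} and cite Theorem~\ref{iv} for moments-determinacy, which the paper leaves implicit.
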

\begin{proof}
By independence, each $Y^{(x)}$ is identically distributed to $Y.$ Therefore, $h_n(Y^{(x)})=h_n(Y)$ for each $x\in \supp(X).$ The desired result then follows from equation \eqref{jt}.
\end{proof}

We give full expressions for the first two approximants of mutual information that are generated by the LMMSE and quadratic MMSE. 

\begin{example}
When $n=1,$ we obtain
\begin{equation}
I_1(X;Y) = \log \sigma_Y - \mathbb{E}_X \left[ \log \sigma_{Y^{(X)}} \right],
\end{equation}
which is the exact formula for $I(X;Y)$ when both $Y$ is Gaussian and each $Y^{(x)}$ (for $x\in \supp(X)$) is Gaussian; indeed, in such a case, the MMSE is just the LMMSE. \hfill $\qed$
\end{example}

\begin{example}
For $n=2,$ we obtain the formula
\begin{align*}
I_2(X;Y) = \frac{1}{6} \log \frac{b_Y^{2,3}}{\prod_{x\in \supp(X)}\left( b_{Y^{(x)}}^{2,3} \right) ^{P_X(x)}} +\frac12 \int_{0}^\infty & \frac{a_Y^{2,1}t}{2+b_Y^{2,1}t+b_Y^{2,2}t^2+b_Y^{2,3}t^3} \nonumber \\
&\quad - \mathbb{E}_X ~ \frac{a_{Y^{(X)}}^{2,1}t}{2+b_{Y^{(X)}}^{2,1}t+b_{Y^{(X)}}^{2,2}t^2+b_{Y^{(X)}}^{2,3}t^3} \, dt
\end{align*}
where we may compute for any $R\in L^4(\calF)$
\begin{equation}
b_R^{2,3} := \left| \begin{array}{ccc}
1 & \mathbb{E}R & \mathbb{E}R^2 \\
\mathbb{E}R & \mathbb{E}R^2 & \mathbb{E}R^3 \\
\mathbb{E}R^2 & \mathbb{E}R^3 & \mathbb{E}R^4 \end{array} \right| = \sigma_R^2\mathbb{E}R^4 + 2(\mathbb{E}R)(\mathbb{E}R^2)\mathbb{E}R^3 - (\mathbb{E}R^2)^3  - (\mathbb{E}R^3)^2,
\end{equation}
which is strictly positive when $|\mathrm{supp}(R)|>2,$ and
\begin{align}
b_R^{2,2}&=-4(\mathbb{E}R)\mathbb{E}R^3 + 3(\mathbb{E}R^2)^2 + \mathbb{E}R^4 \\
b_R^{2,1}&= 6\sigma_R^2 \\
a_R^{2,1}&=4(\mathbb{E}R)^4 - 8(\mathbb{E}R)^2\mathbb{E}R^2+ \frac83(\mathbb{E}R)\mathbb{E}R^3 + 2(\mathbb{E}R^2)^2  - \frac23 \mathbb{E}R^4.
\end{align}
\hfill $\qed$
\end{example}

\section{Generalizations to Arbitrary Bases and Multiple Dimensions} \label{jc}

We extend our approximation results for the conditional expectation from the polynomial-basis setting to arbitrary bases, and from conditioning on random variables to conditioning on arbitrary $\sigma$-algebras. An extension to the multidimensional case is also presented, which straightforwardly yields an approximation theorem for differential entropy of random vectors. Another byproducts of the multidimensional generalization is an expression for mutual information between two continuous random variables completely in terms of moments, which is presented at the end of this section. 

\subsection{Arbitrary Bases and $\sigma$-Algebras}

Up to here, our exposition dealt with the polynomial basis of $L^2(\sigma(Y)).$ However, our  results can be extended to a more general setup. Recall that we have defined
\begin{equation}
    \bM_{Y,n} = \BE\left[ \bY^{(n)} \left( \bY^{(n)} \right)^T \right],
\end{equation}
and derived
\begin{equation} \label{ph}
    \BE[X \mid Y] = \lim_{n\to \infty} \BE\left[ X \bY^{(n)} \right] \bM_{Y,n}^{-1} \bY^{(n)}
\end{equation}
in Theorem~\ref{ar} under two requirements: $Y$ satisfies Carleman's condition, and $|\supp(Y)|=\infty.$ Along similar lines, we derive a generalization where the set of polynomials of $Y$ is replaced with any linearly-independent subset of $L^2(\Sigma)$ having a dense span, where $\Sigma\subset \calF$ is any $\sigma$-algebra. Denseness replaces Carleman's condition, while linear independence replaces the infinite-support requirement.

\begin{theorem} \label{pi}
Fix a $\sigma$-algebra $\Sigma \subset \calF$ and a set $\{\psi_k \}_{k \in \BN} = \calS \subset L^2(\Sigma).$ For each $n\in \BN,$ define the random vector $\bpsi^{(n)} = (\psi_0,\cdots,\psi_n)^T$ and the matrix of inner products
\begin{equation}
    \bM_{\calS,n} := \BE\left[ \bpsi^{(n)} \left( \bpsi^{(n)} \right)^T \right].
\end{equation}
If $\calS$ is linearly independent and $\Span(\calS)$ is dense in $L^2(\Sigma),$ then
\begin{equation}
    \BE[X \mid \Sigma] = \lim_{n\to \infty} \BE\left[ X \bpsi^{(n)} \right]^T \bM_{\calS,n}^{-1} ~ \bpsi^{(n)}
\end{equation}
in $L^2(\Sigma)$ for any RV $X\in L^2(\calF).$
\end{theorem}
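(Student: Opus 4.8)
The plan is to replicate, almost verbatim, the chain of arguments behind Lemma~\ref{io}, Theorem~\ref{hw}, and Theorem~\ref{ar}, with three substitutions: the monomials $\{Y^k\}$ are replaced by the abstract family $\calS=\{\psi_k\}$, the infinite-support hypothesis $|\supp(Y)|=\infty$ is replaced by linear independence of $\calS$, and Carleman's condition is replaced by denseness of $\Span(\calS)$ in $L^2(\Sigma)$. Concretely, I would establish three things: (i) invertibility of $\bM_{\calS,n}$; (ii) that the $n$-th approximant $\BE[X\bpsi^{(n)}]^T\bM_{\calS,n}^{-1}\bpsi^{(n)}$ is exactly the orthogonal projection of $X$ onto $V_n:=\Span(\psi_0,\dots,\psi_n)$; and (iii) that these projections converge in $L^2(\Sigma)$ to $\BE[X\mid\Sigma]$.

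For (i), I would note that $\bM_{\calS,n}=\BE[\bpsi^{(n)}(\bpsi^{(n)})^T]$ is the Gram matrix of $\psi_0,\dots,\psi_n$, so that for any real vector $\bc$ we have $\bc^T\bM_{\calS,n}\bc=\BE[(\bc^T\bpsi^{(n)})^2]=\|\sum_k c_k\psi_k\|_2^2$. Linear independence of $\calS$ in the $L^2(\Sigma)$ (almost-sure) sense forces this to be strictly positive for nonzero $\bc$, so $\bM_{\calS,n}$ is positive definite, hence invertible, and admits a Cholesky factor $\bM_{\calS,n}^{1/2}$ exactly as in the Hankel case. This is the step in which linear independence does the work previously done by Lemma~\ref{kv}.

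For (ii), I would set $\bV:=\bM_{\calS,n}^{-1/2}\bpsi^{(n)}$ and verify, just as in Lemma~\ref{io}, that $\BE[\bV\bV^T]=\bM_{\calS,n}^{-1/2}\bM_{\calS,n}(\bM_{\calS,n}^{-1/2})^T=\bI_{n+1}$, so the entries of $\bV$ are an orthonormal set; invertibility of $\bM_{\calS,n}^{-1/2}$ shows they span $V_n$, hence form an orthonormal basis of $V_n$. Substituting this basis into the projection expansion~\eqref{hl} gives $\BE[X\bV^T]\bV=\BE[X\bpsi^{(n)}]^T\bM_{\calS,n}^{-1}\bpsi^{(n)}$, identifying the $n$-th approximant with the orthogonal projection of $X$ onto $V_n$. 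The hypothesis $X\in L^2(\calF)$ guarantees, via Cauchy--Schwarz, that every inner product $\BE[X\psi_k]$ is finite, so the expression is well defined. For (iii), I would apply Gram--Schmidt to $\calS$ to obtain an orthonormal system $\{e_k\}_{k\in\BN}$ with $e_k\in V_k$ and $V_n=\Span(e_0,\dots,e_n)$; denseness of $\Span(\calS)$ makes $\{e_k\}$ a complete orthonormal basis of $L^2(\Sigma)$. Since $\BE[X\mid\Sigma]$ is the orthogonal projection of $X$ onto $L^2(\Sigma)$ and each $e_k\in L^2(\Sigma)$, the orthogonality property yields $\langle X,e_k\rangle=\langle\BE[X\mid\Sigma],e_k\rangle$, so the partial sums $\sum_{k\in[n]}\langle X,e_k\rangle e_k$ — which are precisely the projections onto $V_n$ from (ii) — converge in $L^2(\Sigma)$ to $\sum_{k\in\BN}\langle\BE[X\mid\Sigma],e_k\rangle e_k=\BE[X\mid\Sigma]$, as claimed.

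I do not expect a genuine obstacle here: each step is a direct transcription of the polynomial argument, with linear independence and denseness playing the decoupled roles of invertibility and completeness, respectively. The only points meriting mild care are that linear independence must be read in the $L^2(\Sigma)$ sense so the Gram matrix is strictly positive definite, and that the mere existence of a countable family whose span is dense tacitly forces $L^2(\Sigma)$ to be separable — which is exactly what makes the Gram--Schmidt construction terminate in a complete orthonormal basis.
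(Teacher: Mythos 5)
Your proposal is correct and follows essentially the same route as the paper's own proof: linear independence gives positive-definiteness of the Gram matrix, the Cholesky/Gram--Schmidt orthonormalization $\bM_{\calS,n}^{-1/2}\bpsi^{(n)}$ identifies the approximant with the orthogonal projection onto $\Span(\psi_0,\cdots,\psi_n)$ (the paper packages this as Lemma~\ref{pk}), and denseness upgrades the nested orthonormal systems to a complete basis of $L^2(\Sigma)$ along which the expansion of $\BE[X\mid\Sigma]$ converges. The only cosmetic difference is that the paper emphasizes the prefix-consistency of the Gram--Schmidt vectors to define a single basis $\{\xi_j\}_{j\in\BN}$ at once, while you invoke uniqueness of the projection onto each $V_n$; both are valid and equivalent.
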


For the proof of Theorem \ref{pi}, we will need the following formula for the closest element in a finite-dimensional subspace of $L^2(\calF)$ to a RV $X\in L^2(\calF),$ which will also be used for the extension of our results to random vectors later in this section.

\begin{lemma} \label{pk}
For any fixed finite-dimensional subspace $\calV\subset L^2(\calF)$ having a basis $\{V_0,V_1,\cdots,V_n\},$ we have that for every $X\in L^2(\calF)$ 
\begin{equation} \label{pj}
    \BE\left[ X \bV \right]^T \BE\left[ \bV \bV^T \right]^{-1} \bV = \underset{V \in \calV}{\mathrm{argmin}} \left\| X - V \right\|_2,
\end{equation}
where $\bV = (V_0,V_1,\cdots,V_n)^T.$
\end{lemma}
\begin{proof}
See Appendix~\ref{wx}.
\end{proof}

In view of Lemma \ref{pk}, we introduce the following notation.
\begin{definition}
Fix a RV $X\in L^2(\calF),$ a $\sigma$-algebra $\Sigma \subset \calF,$ and a linearly-independent set $\{\theta_j\}_{j\in \BN} = \Theta \subset L^2(\Sigma).$ Write $\btheta^{(n)}= (\theta_0,\cdots,\theta_n)^T$ for each $n\in \BN.$ We define the $n$-th approximation of $\BE[X \mid \Sigma]$ with respect to $\Theta$ by
\begin{equation}
    E_{n,\Theta}\left[ X \mid \Sigma \right] := \BE\left[ X \btheta^{(n)} \right] \BE\left[ \btheta^{(n)} \left( \btheta^{(n)} \right)^T \right]^{-1} \btheta^{(n)}.
\end{equation}
\end{definition}

Note that $E_{n,\Theta}[X \mid \Sigma]$ belongs to $\Span(\{\theta_j\}_{j\in [n]}).$ Further, according to Lemma \ref{pk}, $E_{n,\Theta}[X \mid \Sigma]$ is the unique closest element in $\Span(\{\theta_j\}_{j\in [n]})$ to $X$
\begin{equation}
    E_{n,\Theta}[X \mid \Sigma] = \underset{V \, \in ~ \Span(\{\theta_j\}_{j\in[n]})}{\mathrm{argmin}} \left\| X - V \right\|_2.
\end{equation}
If $Y\in L^{2n}(\calF),$  $\Theta=\{Y^j \}_{j\in \BN},$ and $\Sigma = \sigma(Y),$ then
\begin{equation}
    E_{n,\Theta}[X \mid \Sigma] = E_n[X \mid Y].
\end{equation}

The central claim in Theorem \ref{pi} is that if $\Span(\Theta)$ is dense in $L^2(\Sigma)$ then 
\begin{equation}
    \BE[X \mid \Sigma] = \lim_{n\to \infty} E_{n,\Theta}[X \mid \Sigma].
\end{equation}

\begin{proof}[Proof of Theorem \ref{pi}]
For each $n\in \BN,$ define the subspace
\begin{equation}
    \calV^{(n)} = \Span(\{\psi_j\}_{j\in [n]}).
\end{equation}
As in the proof of Lemma \ref{pk}, the entries of the vector
\begin{equation}
    \bxi^{(n)} = \bM_{\calS,n}^{-1/2} \bpsi^{(n)}
\end{equation}
form an orthonormal basis for $\calV^{(n)}.$ Note that $\bxi^{(n)}$ is the Gram-Schmidt orthonormalization of $\bpsi^{(n)}.$ Hence, $\bxi^{(n)}$ is a prefix of $\bxi^{(n+1)}.$ Let $\Theta=\{\xi_j \}_{j\in \BN}$ be the ensuing orthonormal basis for $L^2(\Sigma),$ i.e., $\bxi^{(n)}=(\xi_0,\cdots,\xi_n)^T.$ Then, $\Span(\Theta)$ is dense in $L^2(\Sigma).$ Therefore,
\begin{equation}
    \BE[X \mid \Sigma] = \sum_{j\in \BN} \BE[X \xi_j] \xi_j = \lim_{n\to \infty} \BE[ X \bxi^{(n)}]^T \bxi^{(n)} = \lim_{n\to \infty} E_{n,\Theta}[X \mid \Sigma],
\end{equation}
as desired.
\end{proof}

\subsection{The Multidimensional Case}

We extend our results on the PMMSE of random variables to random vectors. We will modify our notation for this subsection only.

Denote the Borel $\sigma$-algebra of $\BR^m$ by $\calB_m.$
We are still working with the probability space $(\Omega,\calF,P).$ By an $m$-dimensional random vector ($m$-RV) we mean a measurable function from $\left( \Omega,\calF \right)$ to $(\BR^m,\calB_m).$ For a sub-$\sigma$-algebra $\Sigma\subset \calF,$ we denote the set of $(\Sigma,\calB_m)$-measurable functions by $\calM_m(\Sigma)$; so, the set of $m$-RVs is denoted by $\calM_m(\calF).$ Additionally, for each $q\ge 1,$ we define
\begin{equation} \label{tb}
    L^q(\BR^m,\Sigma) := \left\{ \bff \in \calM_m\left( \Sigma \right) ~ ; ~ \int_{\Omega} \|\bff(\omega)\|_q^q \, dP(\omega) < \infty \right\}.
\end{equation}
In~\eqref{tb}, $\|\bff(\omega)\|_q$ refers to the $q$-norm of the vector $\bff(\omega)\in \BR^m.$ In other words, if $\bff = (f_1,\cdots,f_m)^T,$ then $\|\bff(\omega)\|_q^q = \sum_{i=1}^m |f_i(\omega)|^q.$ We will also use $\|\wc\|_q$ for the norm of $L^q(\BR^m,\Sigma),$ i.e., for $\bff\in L^q(\BR^m,\Sigma)$ we have 
\begin{equation}
    \|\bff\|_q^q = \int_{\Omega} \|\bff(\omega)\|_q^q \, dP(\omega).
\end{equation}
The distinction is that we consider the $q$-norm of $\BR^m$ when we specify the input of $\bff$ (i.e., $\bff(\omega)$), and when no input is given to $\bff$ then $\|\bff\|_q$ refers to the norm of $L^q(\BR^m,\Sigma).$ We keep the notation $L^q(\BR,\Sigma)=L^q(\Sigma).$

A function $\bff:\BR^m \to \BR^k$ is called Borel if it is $(\calB_m,\calB_k)$-measurable. For any $\bY \in \calM_m(\calF)$ and integer $k\ge 1$
\begin{equation}
    \calM_k(\sigma(\bY)) = \{ \bff(\bY) ~ ; ~ \bff:\BR^m \to \BR^k \text{ Borel } \}.
\end{equation}
By a generalization of H\"{o}lder's inequality, for any $m$-RV $\bY=(Y_1,\cdots,Y_m)^T \in L^{\beta}(\BR^m,\calF),$ we also have that $Y_1^{\alpha_1}\cdots Y_m^{\alpha_m}\in L^1(\calF)$ for any constants $\alpha_1,\cdots,\alpha_m\ge 0$ such that $\alpha_1+\cdots+\alpha_m\le \beta.$

We extend the notation $\bY^{(n)}$ in \eqref{jg} to random vectors as follows. For an $m$-RV $\bY=(Y_1,\cdots,Y_m)^T,$ we let $\bY^{(n,m)}$ denote the random vector whose entries are monomials in the $Y_j$ of total degree at most $n,$ ordered first by total degree then reverse-lexicographically in the exponents. For example, if $m=3$ so $\bY=(Y_1,Y_2,Y_3)^T,$ then for $n=2$
\begin{equation}
    \bY^{(2,3)} = (1,Y_1,Y_2,Y_3,Y_1^2,Y_1Y_2,Y_1Y_3,Y_2^2,Y_2Y_3,Y_3^2)^T
\end{equation}
because the totally ordered set of exponents $(~\{ \bv \in \BN^3 \mid \mathbf{1}^T\bv \le 2 \}~,~<~)$ 
has the order\footnote{Note that this ordering is not the same as the degree reverse lexicographical order nor its reverse.}
\begin{align*}
    (0,0,0)<(1,0,0)<(0,1,0)<(0,0,1)<(2,0,0)<(1,1,0)<(1,0,1)<(0,2,0)<(0,1,1)<(0,0,2).
\end{align*}
A straightforward stars-and-bars counting argument reveals that the length of $\bY^{(n,m)}$ is $\binom{n+m}{m}.$

Let $\SP_{n,m}$ denote the set of polynomials in $m$ variables with real coefficients of total degree at most $n.$ For an $m$-RV $\bY,$ denote
\begin{equation}
    \SP_{n,m}(\bY) := \left\{ p(\bY) ~ ; ~ p \in \SP_{n,m} \right\}.
\end{equation}
Note that $\SP_{n,1}=\SP_{n}.$ Also, the notation $\bY^{(n,1)},$ while avoided, is disambiguated by interpreting it as $\bY^{(n)},$ i.e., $\bY^{(n,1)}=(1,Y,\cdots,Y^n)^T$ where the subscript on $Y_1$ is dropped. We denote the product sets of $\SP_{n,m}(\bY)$ by $\SP_{n,m}^{\ell}(\bY),$ and consider their elements as vectors rather than tuples. In other words, we denote the set of length-$\ell$ vectors whose coordinates are multivariate polynomial expressions of an $m$-RV $\bY$ with total degree at most $n$ by
\begin{equation}
    \SP_{n,m}^{\ell}(\bY) = \left\{ (p_1(\bY),\cdots,p_\ell(\bY))^T ~ ; ~ p_1,\cdots,p_\ell \in \SP_{n,m} \right\}.
\end{equation}

The multivariate generalization of the PMMSE is defined as follows.

\begin{definition}[Multivariate Polynomial MMSE]
Fix positive integer $\ell,m,$ and $n$. Fix an $\ell$-RV $\bX\in L^2(\BR^{\ell},\calF)$ and an $m$-RV $\bY\in L^{2n}(\BR^{m},\calF),$ and set $k = \binom{n+m}{m}.$ We define the $n$-th order PMMSE for estimating $\bX$ given $\bY$ by
\begin{equation} \label{pg}
\text{pmmse}_n(\bX \mid \bY) := \underset{\bC \in \BR^{\ell \times k}}{\min} \; \left\| \bX- \bC \bY^{(n,m)} \right\|_2,
\end{equation}
and the $n$-th order PMMSE estimate of $\bX$ given $\bY$ by 
\begin{equation} \label{xd}
    E_n[\bX \mid \bY] := \bC \bY^{(n,m)} \in \SP_{n,m}^{\ell}(\bY)
\end{equation}
for any minimizing matrix $\bC \in \BR^{\ell \times k}$ in \eqref{pg}.
\end{definition}
\begin{remark}
For any minimizer $\bC$ in~\eqref{pg}, the $\ell$-RV $\bC \bY^{(n,m)}$ is the unique orthogonal projection of $\bX$ onto $\SP_{n,m}^{\ell}(\bY)$; in particular, $E_n[\bX \mid \bY]$ is well-defined by~\eqref{xd}.
\end{remark}

Denote, for $\bY \in L^{2n}(\BR^m,\calF),$
\begin{equation}
    \bM_{\bY,n} := \BE\left[  \bY^{(n,m)} \left( \bY^{(n,m)} \right)^T  \right].
\end{equation}
For $n\in \BN$ and an $\ell$-RV $(X_1,\cdots,X_\ell)^T=\bX \in L^2(\BR^\ell,\calF),$ if $\bM_{\bY,n}$ is invertible, Lemma~\ref{pk} yields that
\begin{equation} \label{wz}
    E_n[\bX \mid \bY] = \left( \begin{array}{c} 
    E_n[X_1 \mid \bY] \\
    \vdots \\
    E_n[X_\ell \mid \bY] 
    \end{array} \right) = 
    \left( \begin{array}{c}
    \vspace{2mm}
    \BE\left[X_1 \bY^{(n,m)}\right]^T \bM_{\bY,n}^{-1} \bY^{(n,m)} \\
    \vspace{2mm}
    \vdots \\
    \BE\left[X_\ell \bY^{(n,m)}\right]^T \bM_{\bY,n}^{-1} \bY^{(n,m)}
    \end{array} \right).
\end{equation}
We say that the $Y_j$ do not satisfy a polynomial relation if the monomials $\prod_{j=1}^m Y_j^{\alpha_j},$ for $\alpha_1,\cdots,\alpha_m \in \BN,$ are linearly independent, i.e., if the mapping
\begin{equation}
    \varphi : \bigcup_{n\in \BN} \SP_{n,m} \to \bigcup_{n\in \BN} \SP_{n,m}(\bY), \quad\quad \varphi(p) = p(\bY)
\end{equation}
is an isomorphism of vector spaces.

Generalizing our results on RVs to $m$-RVs can be done in view of the following polynomial denseness result.

\begin{theorem}[\cite{Petersen82}] \label{pe}
For any $m$-RV $\bY=(Y_1,\cdots,Y_m)^T$ and $q > 1,$ if 
\begin{equation}
    \overline{\bigcup_{n\in \BN} \SP_n(Y_j)} = L^q(\sigma(Y_j))
\end{equation}
for each $j\in \{1,\cdots,m\},$ then
\begin{equation}
    \overline{\bigcup_{n\in \BN} \SP_{n,m}(\bY)} = L^r(\sigma(\bY))
\end{equation}
for every $r\in [1,q).$
\end{theorem}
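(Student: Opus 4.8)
The plan is to reduce the $m$-dimensional denseness claim to the one-dimensional hypotheses by a telescoping argument, exploiting the strict inequality $r<q$ to keep the relevant Hölder exponents finite. First I would record a structural consequence of the hypothesis: for $\SP_n(Y_j)$ to be contained in $L^q(\sigma(Y_j))$ one needs $\BE[|Y_j|^{qk}]<\infty$ for every $k$, so each $Y_j$ has finite moments of all orders. By the generalized Hölder inequality this forces every monomial $\prod_j Y_j^{\alpha_j}$, and hence every \emph{fixed} polynomial $p(\bY)$, to lie in $L^a(\sigma(\bY))$ for every finite $a\ge 1$. This uniform integrability of fixed polynomials is the key fact driving the argument.

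Next I would reduce the target. Since the rectangle sets $\{Y_1\in A_1,\dots,Y_m\in A_m\}$ (for $A_j\in\calB$) generate $\sigma(\bY)$, finite linear combinations of their indicators $\prod_{j=1}^m 1_{A_j}(Y_j)$ are dense in $L^r(\sigma(\bY))$ for $1\le r<\infty$. Hence it suffices to approximate one product $\prod_{j=1}^m g_j(Y_j)$ with every $g_j$ bounded. As each $g_j(Y_j)$ is then in $L^q(\sigma(Y_j))$, the marginal denseness hypothesis lets me select univariate polynomials $p_j$ with $\|g_j(Y_j)-p_j(Y_j)\|_q$ as small as I wish; crucially I choose them \emph{sequentially}, so that when I analyze the effect of $p_k$ the earlier polynomials $p_1,\dots,p_{k-1}$ are already fixed.

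The heart of the argument is the telescoping identity
\[
\prod_{j=1}^m g_j(Y_j)-\prod_{j=1}^m p_j(Y_j)=\sum_{k=1}^m \Big(\prod_{i<k}p_i(Y_i)\Big)\big(g_k(Y_k)-p_k(Y_k)\big)\Big(\prod_{i>k}g_i(Y_i)\Big),
\]
each term of which I would bound in $L^r$ by the generalized Hölder inequality, placing the exponent $a:=(1/r-1/q)^{-1}$ on the fixed polynomial factor $\prod_{i<k}p_i$ and the exponent $q$ on the small factor $g_k-p_k$, while pulling the bounded factors $\prod_{i>k}g_i$ out in $L^\infty$. Since $r<q$ the exponent $a$ is finite, so $\prod_{i<k}p_i\in L^a$ by the first step and its norm is a fixed constant once $p_1,\dots,p_{k-1}$ have been chosen; the $k$-th term is therefore controlled by a constant times $\|g_k-p_k\|_q$, which I can drive to $0$. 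Summing the $m$ terms shows that the genuine multivariate polynomial $\prod_j p_j(Y_j)\in\SP_{n,m}(\bY)$ approximates $\prod_j g_j(Y_j)$ in $L^r(\sigma(\bY))$, which by the reduction completes the proof.

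The main obstacle is exactly the bookkeeping of exponents: a naive \emph{simultaneous} approximation of all $m$ factors lands the product only in $L^{q/m}$, far short of $L^r$ for $r$ near $q$. The resolution is to fix the already-chosen polynomials first—so each is an honest polynomial with finite norms of every order—and to put the exponent $q$ on the single remaining small factor. This is where the hypothesis $r<q$ enters essentially: at the endpoint $r=q$ the conjugate exponent $a$ degenerates to $\infty$ and the control of the fixed-polynomial factor is lost, consistent with the theorem excluding $r=q$.
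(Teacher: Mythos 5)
The paper does not actually prove Theorem~\ref{pe}: the statement is imported from the cited reference~\cite{Petersen82}, so there is no internal proof to compare your attempt against. Judged on its own merits, your argument is correct and is essentially the classical route to this result. Each step checks out: (i) the hypothesis forces $\BE[|Y_j|^{qk}]<\infty$ for all $k,$ hence every \emph{fixed} polynomial in $\bY$ lies in $L^a(\sigma(\bY))$ for every finite $a\ge 1$ by generalized H\"{o}lder; (ii) finite linear combinations of rectangle indicators $\prod_j 1_{A_j}(Y_j)$ are dense in $L^r(\sigma(\bY))$ for $r<\infty,$ reducing the problem to approximating a single product $\prod_j g_j(Y_j)$ with bounded $g_j$; (iii) the telescoping identity together with the H\"{o}lder split $1/r = 1/a + 1/q$ (where $a=(1/r-1/q)^{-1}<\infty$ exactly because $r<q$) bounds each term by a constant times $\|g_k(Y_k)-p_k(Y_k)\|_q.$ The sequential choice of $p_1,\cdots,p_m$ is indeed the crucial device: it makes $\|\prod_{i<k}p_i(Y_i)\|_a$ a fixed constant before $p_k$ is selected, avoiding the circularity that a simultaneous choice would create, and your closing observation correctly explains why the endpoint $r=q$ must be excluded. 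The only step you should write out fully in a polished version is the rectangle reduction: invoke the standard semi-algebra approximation theorem (for $E\in \sigma(\bY)$ and $\varepsilon>0$ there is a finite disjoint union $A$ of measurable rectangles with $P(E \,\triangle\, A)<\varepsilon$) together with density of simple functions; with that made explicit, the proof is complete and supplies exactly what the paper leaves to the external reference.
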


An immediate corollary that we use in this section is as follows.

\begin{corollary} \label{pd}
Fix an integer $m\ge 1$ and an $m$-RV $\bY=(Y_1,\cdots,Y_m)^T.$ If each of the RVs $Y_1,\cdots,Y_m$ satisfies Carleman's condition, then the set of vectors of polynomials $\bigcup_{n\in \BN} \SP_{n,m}^m(\bY)$ is dense in $L^q(\BR^m,\sigma(\bY))$ for any $q\ge 1.$
\end{corollary}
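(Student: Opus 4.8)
The plan is to reduce the vector-valued multivariate claim to the scalar one-dimensional density question settled by Theorem~\ref{pe}, after supplying the one ingredient the introduction leaves at the level of $L^2$: that Carleman's condition forces polynomial denseness in \emph{every} $L^p(\sigma(Y_j))$. So I would first establish the scalar statement that, for a single RV $Y$ obeying Carleman's condition, $\bigcup_n\SP_n(Y)$ is dense in $L^p(\sigma(Y))$ for all $p\in[1,\infty)$. I regard this as the crux. By duality it suffices to show that no nonzero $g\in L^{p'}(\sigma(Y))$ (with $p'$ the conjugate exponent) is orthogonal to every power $Y^n$. Given such a $g$, the signed measure $d\nu=g\,dP_Y$ has all moments equal to zero, so its Fourier transform $\widehat\nu$ is smooth and all its derivatives vanish at the origin. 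Hölder's inequality yields $|\widehat\nu^{(n)}(t)|\le\BE[|Y|^{np}]^{1/p}\,\|g\|_{p'}$, and a standard monotonicity/integral comparison shows that $\sum_n\BE[|Y|^{np}]^{-1/(np)}=\infty$ whenever $\sum_k\BE[|Y|^{k}]^{-1/k}=\infty$, the latter being equivalent to Carleman's condition. The Denjoy--Carleman quasi-analyticity theorem then forces $\widehat\nu\equiv0$, hence $g=0$; this is exactly where the strength of Carleman's condition (as opposed to mere determinacy, which fails for $p>2$) is used.

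With the scalar lemma available, fix $q\ge1$ and apply Theorem~\ref{pe} with exponent $q+1>1$. Each marginal $Y_j$ has $\bigcup_n\SP_n(Y_j)$ dense in $L^{q+1}(\sigma(Y_j))$, so the theorem gives denseness of $\bigcup_n\SP_{n,m}(\bY)$ in $L^r(\sigma(\bY))$ for every $r\in[1,q+1)$, in particular for $r=q$. This settles joint denseness of \emph{scalar} polynomials in $\bY$.

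It remains to lift this to vector-valued polynomials, which is routine. For $\bff=(f_1,\dots,f_m)^T\in L^q(\BR^m,\sigma(\bY))$ one has $|f_i(\omega)|\le\|\bff(\omega)\|_q$, so each $f_i\in L^q(\sigma(\bY))$; pick $p_i\in\SP_{n_i,m}$ with $\|f_i-p_i(\bY)\|_q<\varepsilon$, set $n=\max_i n_i$ (so every $p_i\in\SP_{n,m}$), and form $\bp=(p_1(\bY),\dots,p_m(\bY))^T\in\SP_{n,m}^m(\bY)$. By the definition of the $L^q(\BR^m,\sigma(\bY))$-norm, $\|\bff-\bp\|_q^q=\sum_{i=1}^m\|f_i-p_i(\bY)\|_q^q<m\varepsilon^q$, and letting $\varepsilon\to0$ finishes the proof. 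The only genuine difficulty lies in the first paragraph: Theorem~\ref{pe} demands marginal denseness in arbitrarily high $L^p$, so the $L^2$-level statement quoted in the introduction is not enough and must be upgraded via quasi-analyticity.
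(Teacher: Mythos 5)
Your proposal is correct and follows essentially the same route as the paper's proof: marginal polynomial density in a higher-exponent space, then Theorem~\ref{pe} (you take exponent $q+1$ where the paper takes $2q$; both exceed $q$ and so both work), then componentwise approximation using the fact that the $L^q(\BR^m,\sigma(\bY))$-norm is the sum of the coordinate $L^q$-norms. The only difference is that you supply a full duality-plus-quasi-analyticity (Denjoy--Carleman) proof of the scalar ingredient---that Carleman's condition forces density of $\bigcup_n \SP_n(Y_j)$ in every $L^p(\sigma(Y_j))$, $1\le p<\infty$---whereas the paper simply quotes this fact from the moment-problem literature.
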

\begin{proof}
See Appendix~\ref{wy}.
\end{proof}

We deduce the following result on the convergence of the multivariate PMMSE to the MMSE.

\begin{theorem} \label{wu}
Fix an $m$-RV $\bY=(Y_1,\cdots,Y_m)^T$ and an $\ell$-RV $\bX\in L^2(\BR^\ell,\calF).$ If each $Y_j$ satisfies Carleman's condition, and if the $Y_j$ do not satisfy a polynomial relation, then we have the $L^2(\BR^\ell,\sigma(\bY))$-limit
\begin{equation}
    \BE[ \bX \mid \bY] = \lim_{n\to \infty} E_n[\bX \mid \bY].
\end{equation}
\end{theorem}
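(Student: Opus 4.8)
The plan is to reduce to the scalar target case and then reuse the orthogonal-projection argument that established Theorem~\ref{ar}. Writing $\bX=(X_1,\dots,X_\ell)^T$ and recalling that, by the closed form \eqref{wz}, the vector estimate $E_n[\bX\mid \bY]$ stacks the coordinate estimates $E_n[X_i\mid \bY]$, the $L^2(\BR^\ell,\sigma(\bY))$-norm decomposes coordinatewise:
\[
\left\| \BE[\bX\mid \bY] - E_n[\bX\mid \bY] \right\|_2^2 = \sum_{i=1}^{\ell} \left\| \BE[X_i\mid \bY] - E_n[X_i\mid \bY] \right\|_{L^2(\sigma(\bY))}^2 .
\]
Hence it suffices to prove $E_n[X\mid \bY]\to \BE[X\mid \bY]$ in $L^2(\sigma(\bY))$ for each fixed coordinate $X:=X_i\in L^2(\calF)$.

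First I would verify that $E_n[X\mid \bY]$ is well defined and is precisely the orthogonal projection of $X$ onto the finite-dimensional subspace $\SP_{n,m}(\bY)\subset L^2(\sigma(\bY))$. Since the $Y_j$ do not satisfy a polynomial relation, the monomials $\prod_j Y_j^{\alpha_j}$ of total degree at most $n$ are linearly independent, so the Gram matrix $\bM_{\bY,n}=\BE[\bY^{(n,m)}(\bY^{(n,m)})^T]$ is positive definite and in particular invertible; this validates \eqref{wz}. Applying Lemma~\ref{pk} with the basis given by the entries of $\bY^{(n,m)}$ identifies $E_n[X\mid \bY]$ as the unique closest element of $\SP_{n,m}(\bY)$ to $X$, i.e.\ the orthogonal projection of $X$ onto $\SP_{n,m}(\bY)$. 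Because $\SP_{n,m}(\bY)\subset L^2(\sigma(\bY))$ and $\BE[X\mid \bY]$ is itself the orthogonal projection of $X$ onto $L^2(\sigma(\bY))$, composing nested projections shows that $E_n[X\mid \bY]$ equals the orthogonal projection of $\BE[X\mid \bY]$ onto $\SP_{n,m}(\bY)$.

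The remaining step is the standard Hilbert-space fact on increasing, dense families of subspaces. The subspaces are nested, $\SP_{n,m}(\bY)\subset \SP_{n+1,m}(\bY)$, and their union is dense in $L^2(\sigma(\bY))$: each $Y_j$ satisfies Carleman's condition by hypothesis, so Corollary~\ref{pd} (read in its scalar-target specialization, equivalently componentwise) gives $\overline{\bigcup_n \SP_{n,m}(\bY)}=L^2(\sigma(\bY))$. For an increasing sequence of closed subspaces with dense union, the associated orthogonal projections converge strongly to the identity: given $\varepsilon>0$, choose $q(\bY)\in \SP_{N,m}(\bY)$ with $\|\BE[X\mid \bY]-q(\bY)\|_2<\varepsilon$; then for all $n\ge N$, since $E_n[X\mid \bY]$ is the closest element of $\SP_{n,m}(\bY)\supseteq \SP_{N,m}(\bY)$ to $\BE[X\mid \bY]$, we obtain $\|\BE[X\mid \bY]-E_n[X\mid \bY]\|_2\le \|\BE[X\mid \bY]-q(\bY)\|_2<\varepsilon$. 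Summing over coordinates yields the claim.

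The one genuinely non-routine ingredient is the $L^2(\sigma(\bY))$-denseness of multivariate polynomials under only the assumption that each marginal $Y_j$ satisfies Carleman's condition; everything else is bookkeeping with projections. That denseness is exactly the content of Corollary~\ref{pd} (built on the transference result Theorem~\ref{pe}), so the main difficulty is isolated there rather than in the present argument. I would, however, emphasize that the ``no polynomial relation'' hypothesis is used only to guarantee invertibility of $\bM_{\bY,n}$—and hence the validity of the explicit formula \eqref{wz}—whereas the denseness, and thus the convergence itself, does not depend on it.
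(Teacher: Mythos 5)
Your proof is correct and follows essentially the same route as the paper's: coordinatewise reduction, invertibility of $\bM_{\bY,n}$ from the no-polynomial-relation hypothesis feeding into Lemma~\ref{pk}, and denseness of $\bigcup_{n\in\BN} \SP_{n,m}(\bY)$ in $L^2(\sigma(\bY))$ via Corollary~\ref{pd}. The only difference is that where the paper invokes Theorem~\ref{pi} for the final step, you inline the equivalent standard Hilbert-space argument (nested projections plus an $\varepsilon$-approximation from a dense union), which is the same content made self-contained; your closing remark that the no-polynomial-relation hypothesis is needed only for the explicit formula, not for the convergence itself, is also accurate.
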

\begin{proof}
See Appendix~\ref{xa}.
\end{proof}

With the definition of the multivariate PMMSE at hand, we show that the PMMSE estimate satisfies a tower property similar to the conditional expectation.

\begin{prop}[Tower Property] \label{xb}
Fix $n\in \BN$ and three RVs $X\in L^2(\calF)$ and $Y_1,Y_2 \in L^{2n}(\calF).$ Suppose that $|\supp(Y_1)|,|\supp(Y_2)|>n.$ Then
\begin{equation} \label{pm}
    E_n\left[ E_n[X \mid Y_1] \mid Y_1,Y_2 \right] = E_n[X \mid Y_1],
\end{equation}
and
\begin{equation} \label{pn}
    E_n\left[ E_n[X \mid Y_1,Y_2] \mid Y_2 \right] = E_n[X \mid Y_2].
\end{equation}
\end{prop}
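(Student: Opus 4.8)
The plan is to recast all three operators in the statement as orthogonal projections onto finite-dimensional polynomial subspaces of $L^2(\calF)$ and then invoke two elementary facts about orthogonal projections in a Hilbert space: a projection fixes every vector already lying in its range, and if $\calW \subseteq \calV$ are nested subspaces then projecting onto $\calV$ and then onto $\calW$ agrees with projecting directly onto $\calW$. Concretely, setting $\bY := (Y_1,Y_2)^T$, the operators $E_n[\wc \mid Y_1]$, $E_n[\wc \mid Y_2]$, and $E_n[\wc \mid Y_1,Y_2] = E_n[\wc \mid \bY]$ are the orthogonal projections onto $\SP_n(Y_1)$, $\SP_n(Y_2)$, and $\SP_{n,2}(\bY)$, respectively (Proposition~\ref{ro} for the univariate operators, using $|\supp(Y_j)|>n$, and the remark following the multivariate PMMSE definition for $E_n[\wc \mid \bY]$). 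The hypotheses $Y_1,Y_2 \in L^{2n}(\calF)$ ensure, via the generalized H\"{o}lder inequality, that every bivariate monomial $Y_1^aY_2^b$ with $a+b\le n$ lies in $L^2(\calF)$, so $\SP_{n,2}(\bY)$ is a genuine finite-dimensional (hence closed) subspace of $L^2(\calF)$ and its projection is well-defined.

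The first key step is the pair of subspace inclusions $\SP_n(Y_1)\subseteq \SP_{n,2}(\bY)$ and $\SP_n(Y_2)\subseteq \SP_{n,2}(\bY)$. These hold because a univariate polynomial $p$ of degree at most $n$ evaluated at $Y_1$ (resp. $Y_2$) equals the bivariate polynomial $q(y_1,y_2):=p(y_1)$ (resp. $q(y_1,y_2):=p(y_2)$) of total degree $\deg p \le n$ evaluated at $\bY$. Equation~\eqref{pm} then follows immediately from idempotence: the random variable $E_n[X\mid Y_1]$ lies in $\SP_n(Y_1)\subseteq \SP_{n,2}(\bY)$, and the orthogonal projection $E_n[\wc \mid \bY]$ leaves every element of its range fixed, so $E_n[E_n[X\mid Y_1]\mid Y_1,Y_2] = E_n[X\mid Y_1]$.

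For~\eqref{pn} I would use the nested-projection identity. Write $W := E_n[X\mid \bY]$ for the projection of $X$ onto $\SP_{n,2}(\bY)$. By the orthogonality characterization of the projection, the residual $X-W$ is orthogonal to all of $\SP_{n,2}(\bY)$, and in particular to the subspace $\SP_n(Y_2)\subseteq \SP_{n,2}(\bY)$. Hence $X$ and $W$ have the same orthogonal projection onto $\SP_n(Y_2)$, i.e. $E_n[W\mid Y_2] = E_n[X\mid Y_2]$, which is precisely~\eqref{pn}.

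Since the entire argument is driven by Hilbert-space geometry, there is no serious analytic obstacle; the only points requiring care are conceptual. First, the subspace inclusions must be verified at the level of \emph{sets of random variables} rather than formal polynomials, which is where the total-degree convention in the definition of $\SP_{n,2}$ matters: a degree-$n$ univariate polynomial stays of total degree at most $n$ when regarded as bivariate, so the inclusion is into $\SP_{n,2}(\bY)$ and not a higher-degree space. Second, one should note that the support hypotheses $|\supp(Y_j)|>n$ are invoked only to identify the univariate operators with orthogonal projections in the clean form of Proposition~\ref{ro} (via Lemma~\ref{kv}); they play no role in the projection identities themselves, which hold as statements in $L^2(\calF)$ independently of whether the associated Hankel matrices are invertible.
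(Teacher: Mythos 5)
Your proposal is correct and follows essentially the same route as the paper's proof: equation~\eqref{pm} via the inclusion $\SP_n(Y_1)\subseteq \SP_{n,2}(\bY)$ and the fact that a projection fixes its range, and equation~\eqref{pn} via the observation that the residual $X - E_n[X\mid \bY]$ is orthogonal to $\SP_{n,2}(\bY)\supseteq \SP_n(Y_2)$, so $X$ and its projection onto $\SP_{n,2}(\bY)$ project identically onto $\SP_n(Y_2)$ (the paper phrases this same fact through explicit orthogonal decompositions $X=p_{1,2}+p_{1,2}^{\perp}$ and $p_{1,2}=q_2+q_2^{\perp}$). The paper additionally records a second, formula-based derivation of~\eqref{pn} using the Hankel-matrix expression for the projection, but its primary argument is the Hilbert-space one you give.
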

\begin{proof}
See Appendix~\ref{xc}.
\end{proof}

Now, to generalize our results on estimation in Gaussian channels, we first note a straightforward generalization of the MMSE dimension to the multidimensional case.

\begin{theorem} \label{qm}
Fix two square-integrable continuous $m$-RVs $\bX$ and $\bN$ that are independent. Suppose that $p_{\bN}$ is bounded and that\footnote{The exponent $m+2$ in~\eqref{tj} may be replaced with $m+1+\varepsilon$ for any $\varepsilon>0,$ see~\cite[Section 3.2]{Stein2019}} 
\begin{equation} \label{tj}
    p_{\bN}(\bz)=O\left( \|\bz\|^{-(m+2)} \right)
\end{equation}
as $\|\bz\|\to \infty.$ Then, we have that
\begin{equation} \label{ps}
    \lim_{t\to \infty} t \cdot \mm\left(\bX \mid \sqrt{t} \bX+\bN\right) = \mathrm{tr}~\Sigma_{\bN}.
\end{equation}
\end{theorem}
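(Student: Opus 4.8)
The plan is to turn the high-SNR limit into a low-noise problem and then show that the associated conditional mean collapses to an unconditional one. First I would exploit scale invariance of the MMSE under rescaling the observation (as in~\eqref{jz} with unit signal scaling): since $\sigma(\sqrt t\,\bX+\bN)=\sigma(\bX+t^{-1/2}\bN)$, setting $\epsilon:=t^{-1/2}$ and $\bW_\epsilon:=\bX+\epsilon\bN$ gives $\mm(\bX\mid\sqrt t\,\bX+\bN)=\mm(\bX\mid\bW_\epsilon)$. Because $\bX=\bW_\epsilon-\epsilon\bN$ with $\bW_\epsilon$ observed, we have $\bX-\BE[\bX\mid\bW_\epsilon]=-\epsilon(\bN-\BE[\bN\mid\bW_\epsilon])$, hence $\mm(\bX\mid\bW_\epsilon)=\epsilon^2\,\mm(\bN\mid\bW_\epsilon)$ and
\[
t\cdot\mm(\bX\mid\sqrt t\,\bX+\bN)=\mm(\bN\mid\bW_\epsilon).
\]
Writing $\mm(\bN\mid\bW_\epsilon)=\mathrm{tr}\,\Sigma_{\bN}-\mathrm{tr}\,\mathrm{Cov}\big(\BE[\bN\mid\bW_\epsilon]\big)$, the theorem becomes the statement $\mathrm{tr}\,\mathrm{Cov}(\BE[\bN\mid\bW_\epsilon])\to0$ as $\epsilon\to0$: intuitively $\bW_\epsilon\to\bX$, which is independent of $\bN$, so the observation becomes uninformative about $\bN$.

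Next I would make this covariance explicit through densities. By independence the conditional law of $\bN$ given $\bW_\epsilon=\bw$ has density proportional to $p_{\bN}(\bn)\,p_{\bX}(\bw-\epsilon\bn)$, so with $\tilde\bn:=\bn-\BE\bN$,
\[
\BE[\bN\mid\bW_\epsilon=\bw]-\BE\bN=\frac{A_\epsilon(\bw)}{B_\epsilon(\bw)},\quad A_\epsilon(\bw):=\int\tilde\bn\,p_{\bN}(\bn)p_{\bX}(\bw-\epsilon\bn)\,d\bn,\quad B_\epsilon(\bw):=\int p_{\bN}(\bn)p_{\bX}(\bw-\epsilon\bn)\,d\bn,
\]
where $B_\epsilon=p_{\bW_\epsilon}$, and therefore $\mathrm{tr}\,\mathrm{Cov}(\BE[\bN\mid\bW_\epsilon])=\int\|A_\epsilon(\bw)\|^2/B_\epsilon(\bw)\,d\bw$. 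A crude Cauchy--Schwarz bound only recovers the trivial $\le\mathrm{tr}\,\Sigma_{\bN}$; the actual content is that the first-order term cancels, since $\int\tilde\bn\,p_{\bN}(\bn)\,d\bn=\mathbf 0$ makes $A_\epsilon$ a convolution of $p_{\bX}$ against a mean-zero kernel, which should vanish in the limit.

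The crux is an approximate-identity argument, and this is where the hypotheses on $p_{\bN}$ enter and where I expect the main difficulty. The scaled kernels $\bn\mapsto\epsilon^{-m}p_{\bN}(\bn/\epsilon)$, together with the vector-valued and $\|\tilde\bn\|^2$-weighted variants, are approximate identities: boundedness of $p_{\bN}$ and $p_{\bN}(\bz)=O(\|\bz\|^{-(m+2)})$ let me dominate $p_{\bN}(\bn)$ and $\|\tilde\bn\|\,p_{\bN}(\bn)=O(\|\bn\|^{-(m+1)})$ by radially decreasing integrable functions on $\BR^m$ (the exponent $m+2$ is exactly tuned to integrability in dimension $m$, matching \cite[Section 3.2]{Stein2019}). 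By the standard a.e.\ convergence of mollifications at Lebesgue points, $B_\epsilon(\bw)\to p_{\bX}(\bw)$ and $A_\epsilon(\bw)\to p_{\bX}(\bw)\int\tilde\bn\,p_{\bN}\,d\bn=\mathbf 0$ for a.e.\ $\bw$, so the integrand tends to $0$ a.e. To pass to the integral despite the non-uniform weight $1/B_\epsilon$, I would invoke the Cauchy--Schwarz domination $\|A_\epsilon(\bw)\|^2/B_\epsilon(\bw)\le G_\epsilon(\bw):=\int\|\tilde\bn\|^2 p_{\bN}(\bn)p_{\bX}(\bw-\epsilon\bn)\,d\bn$, note $\int G_\epsilon\,d\bw=\mathrm{tr}\,\Sigma_{\bN}$ for every $\epsilon$ by Fubini, and that $G_\epsilon\to p_{\bX}\,\mathrm{tr}\,\Sigma_{\bN}$ with the same total mass; generalized dominated convergence (Pratt's lemma, with the a.e.\ convergence of $G_\epsilon$ extracted along subsequences from its $L^1$ convergence, since $\|\tilde\bn\|^2 p_{\bN}$ is only borderline integrable and need not be radially dominable) then forces $\int\|A_\epsilon\|^2/B_\epsilon\,d\bw\to0$.

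Finally I would remark that this is the $m$-dimensional analogue of the continuous-input case of \cite{Wu2011}: both the scaling reduction and the collapse of the conditional mean are dimension-free, and the only genuinely $m$-dependent ingredient is the approximate-identity estimate, for which boundedness and the decay exponent $m+2$ of $p_{\bN}$ supply the radially decreasing integrable dominator demanded by the $\BR^m$ theory. The principal obstacle throughout is precisely the unbounded weight $1/p_{\bW_\epsilon}$ in the conditional-mean representation, which I handle via the dominator $G_\epsilon$ and Pratt's lemma rather than ordinary dominated convergence.
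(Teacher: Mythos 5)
Your proposal is correct in substance, and the second half of your argument takes a genuinely different route from the paper's. Both proofs share the same skeleton up to the midpoint: the reduction $t\cdot\mm(\bX\mid\sqrt{t}\bX+\bN)=\mm(\bN\mid\bX+\epsilon\bN)$ with $\epsilon=t^{-1/2},$ and approximate-identity (mollification) convergence at Lebesgue points of $p_{\bX},$ with the hypotheses on $p_{\bN}$ entering in exactly the same way: boundedness together with the decay $O(\|\bz\|^{-(m+2)})$ is precisely what makes the first-moment-weighted kernel $\bn\mapsto\tilde{\bn}\,p_{\bN}(\bn)$ admissible (the plain kernel $p_{\bN}$ only needs exponent $m+1$). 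From there the paper argues one-sidedly: since the constant estimator gives $\limsup_{\epsilon\to 0^+}\mm(\bN\mid\bX+\epsilon\bN)\le \mathrm{tr}\,\Sigma_{\bN},$ it suffices to prove the matching $\liminf$ bound; the paper truncates $\bN$ to $\bN 1_{[0,M]}(\|\bN\|),$ proves almost-sure convergence of the truncated conditional mean evaluated at the random point $\bX+\epsilon\bN$ (which forces shifted kernels, verified to be approximations to the identity), and closes with Fatou's lemma--so no domination of any unbounded weight is ever needed. You instead keep the exact identity $\mm(\bN\mid\bW_\epsilon)=\mathrm{tr}\,\Sigma_{\bN}-\int\|A_\epsilon\|^2/B_\epsilon\,d\bw,$ which turns the problem into killing a deterministic Lebesgue integral; the price is the weight $1/B_\epsilon,$ and your Cauchy--Schwarz dominator $G_\epsilon$ with constant mass $\mathrm{tr}\,\Sigma_{\bN},$ combined with $L^1$-convergence of $G_\epsilon$ (continuity of translation in $L^1$) and Pratt's lemma along subsequences, is a correct way to handle it. Your route buys an exact, two-sided argument with unshifted kernels, no truncation, and no evaluation at random points; the paper's buys freedom from any domination argument at the cost of the truncation bookkeeping and the shifted-kernel verification.

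One small repair is needed in your write-up: the claim that the integrand $\|A_\epsilon\|^2/B_\epsilon$ tends to $0$ a.e.\ is justified only at Lebesgue points where $p_{\bX}(\bw)>0$; on the set $\{p_{\bX}=0\},$ which can have positive Lebesgue measure (e.g.\ compactly supported $\bX$), both $A_\epsilon(\bw)\to 0$ and $B_\epsilon(\bw)\to 0$ and the ratio is indeterminate. Your own machinery closes this gap: along the subsequence on which $G_{\epsilon_k}\to p_{\bX}\,\mathrm{tr}\,\Sigma_{\bN}$ a.e., the bound $\|A_{\epsilon_k}\|^2/B_{\epsilon_k}\le G_{\epsilon_k}$ forces the integrand to $0$ a.e.\ on $\{p_{\bX}=0\}$ as well (there the limit of $G_{\epsilon_k}$ vanishes), so Pratt's lemma applies on all of $\BR^m$ and the contradiction argument goes through unchanged.
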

\begin{proof}
See Appendix~\ref{qo}.
\end{proof}

The approach for showing the rationality of $t\mapsto \pp_n(X,t)$ for a RV $X\in L^{2n}(\calF)$ in Section~\ref{jh} may be generalized to deduce rationality of $t \mapsto \pp_n(\bX,t)$ for an $m$-RV $\bX\in L^{2n}(\BR^m,\calF).$ Here, we are denoting $\pp_n(\bX,t):= \pp_n(\bX \mid \sqrt{t}\bX+\bN),$ where $\bN\sim \calN(\bm{0},\bI_m)$ is independent of $\bX.$ For brevity, we give a blueprint of how this generalization of rationality can be obtained.

First, Lemma~\ref{kk} may be generalized to yield that $\det \bM_{\sqrt{t}\bX+\bN}$ is a polynomial in $t$ of degree at most $d_{n,m}$ which is given by
\begin{align}
    d_{n,m} &:= \sum_{k\in [n]} k \cdot |\{ (\lambda_1,\cdots,\lambda_m)\in \BN^m ~ ; ~ \lambda_1+\cdots+\lambda_m = k \}| \\
    &= \sum_{k\in [n]} k \binom{k+m-1}{m-1} = \sum_{k\in [n]} m \binom{k+m-1}{m} = m \binom{n+m}{m+1}.
\end{align}
Further, the coefficient of $t^{d_{n,m}}$ in $\det \bM_{\sqrt{t}\bX+\bN}$ is $\det \bM_{\bX}.$ Note that $d_{n,1}=d_n.$ Then, generalizing Lemma~\ref{kl} we obtain an analogous expression to the scalar case given in Theorem~\ref{ir}, namely,
\begin{equation} \label{xe}
    \pp_n(\bX,t) = \frac{\left( \mathrm{tr}~\Sigma_{\bX} \right) \det \bM_{\bN,n}+\cdots+\left( \mathrm{tr}~\Sigma_{\bN} \right) \left(\det \bM_{\bX,n}\right) ~ t^{d_{n,m}-1}}{ \det \bM_{\bN,n}+\cdots+ \left( \det \bM_{\bX,n} \right) ~ t^{d_{n,m}}}.
\end{equation}
To deduce~\eqref{xe}, the multidimensional MMSE dimension result in Theorem~\ref{qm} is used, as follows. Note that $\mathrm{tr}~\Sigma_{\bN} = m$ for $\bN\sim \calN(\mathbf{0},\bI_{m}).$ By Theorem~\ref{qm}, we have that $\mm(\bX,t) \sim m/t.$ It is also true that $\LL(\bX,t) \sim m/t.$ Therefore, $\pp_n(\bX,t)\sim m/t$ for every integer $n\ge 1.$ Note that $\pp_n(\bX,0) = \mathrm{tr}~\Sigma_{\bX}.$ Expression~\eqref{xe} follows.

\subsection{Mutual Information for Continuous Random Variables}

We prove an analogous approximation result to that in Theorem~\ref{jo} for the mutual information between continuous RVs.

\begin{theorem} \label{xr}
For two continuous RVs $X$ and $Y$ whose MGFs exist, if $\min(h(X),h(Y))>-\infty,$ then the mutual information is given by
\begin{equation}
I(X;Y)  = \frac12  \lim_{n\to \infty} \int_0^\infty \pp_n(X,t) + \pp_n(Y,t) - \pp_n(\bW,t) ~ dt,
\end{equation}
where $\bW := (X,Y)^T.$
\end{theorem}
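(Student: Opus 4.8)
The plan is to mirror the proof of the discrete--continuous formula (Theorem~\ref{jo}) by reducing the mutual information to differential entropies and invoking the entropy-from-moments formula, but now handling the joint entropy with the multidimensional machinery of Section~\ref{jc}. Writing $\bW=(X,Y)^T$, the starting point is the decomposition $I(X;Y)=h(X)+h(Y)-h(\bW)$. The two scalar terms are immediately handled by Theorem~\ref{jn}, which gives $h(X)=\lim_n \frac12\int_0^\infty \pp_n(X,t)-\frac{1}{2\pi e+t}\,dt$ and likewise for $Y$; both limits exist since $X$ and $Y$ have MGFs. What remains is a two-dimensional analogue of Theorem~\ref{jn} for $h(\bW)$.

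First I would establish the multidimensional entropy formula $h(\bW)=\lim_n\frac12\int_0^\infty \pp_n(\bW,t)-\frac{2}{2\pi e+t}\,dt$. Its MMSE form $h(\bW)=\frac12\int_0^\infty \mm(\bW,t)-\frac{2}{2\pi e+t}\,dt$ follows from the vector analogue of the I-MMSE relation (Theorem~\ref{bt}) exactly as in~\eqref{hz}--\eqref{hy}: integrate the I-MMSE identity, subtract the LMMSE baseline, and use~\eqref{hx} to absorb the logarithmic divergence. The subtracted kernel carries the factor $2=m$ because the MMSE dimension result (Theorem~\ref{qm}) gives $t\,\mm(\bW,t)\to\mathrm{tr}\,\Sigma_{\bN}=2$, matching $t\,\LL(\bW,t)\to 2$, so that $\mm(\bW,t)-\frac{2}{2\pi e+t}$ is absolutely integrable. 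To pass from the MMSE to the PMMSE I would repeat the monotone-convergence argument of Theorem~\ref{jn}: the functions $t\mapsto\LL(\bW,t)-\pp_n(\bW,t)$ are nonnegative, nondecreasing in $n$, and converge pointwise to $\LL(\bW,t)-\mm(\bW,t)$. This last convergence is Theorem~\ref{wu} applied with observation $\sqrt{t}\bW+\bN$, whose hypotheses I would verify as follows: the coordinates $\sqrt{t}X+N_1$ and $\sqrt{t}Y+N_2$ each have MGFs (sums of independent MGF-bearing variables) and hence satisfy Carleman's condition, and for every $t\ge 0$ the observation is absolutely continuous on $\BR^2$ (a convolution with a nondegenerate Gaussian), so its monomials satisfy no polynomial relation and $\bM_{\sqrt{t}\bW+\bN,n}$ is invertible.

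Combining the three convergent limits then gives
\begin{equation}
    I(X;Y)=\lim_{n\to\infty}\frac12\int_0^\infty \Big(\pp_n(X,t)-\tfrac{1}{2\pi e+t}\Big)+\Big(\pp_n(Y,t)-\tfrac{1}{2\pi e+t}\Big)-\Big(\pp_n(\bW,t)-\tfrac{2}{2\pi e+t}\Big)\,dt,
\end{equation}
and the correction kernels cancel since $-\tfrac{1}{2\pi e+t}-\tfrac{1}{2\pi e+t}+\tfrac{2}{2\pi e+t}=0$, leaving exactly the claimed integrand. As a structural check that explains this cancellation, the scalar and vector I-MMSE relations together show that for every $\gamma>0$ the partial integral $\frac12\int_0^\gamma \mm(X,t)+\mm(Y,t)-\mm(\bW,t)\,dt$ equals $I(\sqrt{\gamma}X+N_1;\sqrt{\gamma}Y+N_2)=I(X+\gamma^{-1/2}N_1;Y+\gamma^{-1/2}N_2)$, which tends to $I(X;Y)$ as $\gamma\to\infty$; this alternative route bypasses $h(\bW)$ entirely and would be my fallback when the joint entropy is not finite.

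The main obstacle is the tail behavior at $t\to\infty$ and the attendant interchange of limit and integral. Everything hinges on the asymptotic $\mm(\bW,t)\sim 2/t$ from Theorem~\ref{qm}, which both guarantees integrability of the entropy integrands and drives the cancellation of the $1/t$ tails of the three PMMSE curves; without it the individual integrals $\int_0^\infty\pp_n(\bW,t)\,dt$ diverge and only the combined integrand is integrable. A genuine subtlety I would address explicitly is that $\bW=(X,Y)^T$ need not be absolutely continuous on $\BR^2$ even when $X$ and $Y$ are each continuous---e.g. when $Y=X$, where $\bW$ lives on the diagonal, Theorem~\ref{qm} does not apply, $h(\bW)=-\infty$, and $I(X;Y)=+\infty$. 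In that degenerate regime the leading coefficient $\det\bM_{\bW,n}$ in~\eqref{xe} vanishes, $\pp_n(\bW,t)$ decays with a strictly smaller MMSE dimension, and the integrand $\pp_n(X,t)+\pp_n(Y,t)-\pp_n(\bW,t)$ stays of order $1/t$, so both sides equal $+\infty$ and the identity persists in the extended reals; treating this case is where the bulk of the technical care lies.
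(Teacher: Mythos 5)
Your proposal is correct and takes essentially the same route as the paper: the multidimensional entropy formula you set out to establish is exactly the paper's Theorem~\ref{xu} (proved immediately before Theorem~\ref{xr} with the same ingredients---the vector I-MMSE relation, Theorem~\ref{qm} for the $2/(2\pi e+t)$ correction and integrability, and Theorem~\ref{wu} plus monotone convergence), while the decomposition $I(X;Y)=h(X)+h(Y)-h(\bW)$ that you assert at the outset is obtained in the paper from the MMSE representation $I(X;Y)=\frac12\int_0^\infty \mm(X,t)+\mm(Y,t)-\mm(\bW,t)\,dt$ of~\cite{Verdu2010}, i.e., precisely the identity you relegate to a ``structural check,'' with $\min(h(X),h(Y))>-\infty$ used to justify splitting the integral. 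Your closing discussion of the degenerate case where $\bW$ fails to be jointly continuous goes beyond the paper, whose proof implicitly treats $\bW$ as a continuous $2$-RV when invoking Theorem~\ref{xu}.
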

\begin{remark}
This formula expresses the mutual information between two continuous RVs entirely in terms of moments, because $\pp_n(\wc,t)$ is determined completely by the moments. This is in contrast to the formula in Theorem~\ref{jo}, which expresses the mutual information between a discrete RV and a continuous RV in terms of moments along with the expectation operator of the discrete RV.
\end{remark}

To prove Theorem~\ref{xr}, we use the following generalization of Theorem~\ref{jn} to higher dimensions. For an $m$-RV $\bV\in L^{2n}(\BR^m,\calF),$ we define
\begin{equation}
    h_n(\bV) := \frac12 \int_0^\infty \pp_n(\bV,t) - \frac{m}{2\pi e + t} \, dt.
\end{equation}

\begin{theorem} \label{xu}
Let $\bV$ be a continuous $m$-RV whose MGF exists. Then, we have a decreasing sequence
\begin{equation}
    \frac{1}{2} \log \left( (2\pi e)^m \det \bSigma_{\bV} \right) = h_1(\bV)\ge h_2(\bV) \ge \cdots \ge h(\bV)
\end{equation}
converging to the differential entropy
\begin{equation}
    \lim_{n\to \infty} h_n(\bV) = h(\bV).
\end{equation}
\end{theorem}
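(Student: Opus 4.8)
The plan is to follow the scalar argument of Theorem~\ref{jn} verbatim, substituting each one-dimensional ingredient by its vector analogue. The first and central step is to establish the $m$-dimensional I-MMSE representation
\[
    h(\bV) = \frac12 \int_0^\infty \mm(\bV,t) - \frac{m}{2\pi e + t}\, dt,
\]
the analogue of~\eqref{hy}. I would derive this from the vector I-MMSE relation $I(\bV;\sqrt{\gamma}\bV+\bN) = \tfrac12 \int_0^\gamma \mm(\bV,t)\, dt$ of~\cite{Guo2005} by comparing the channel fed by $\bV$ with the one fed by a Gaussian $\bV_G$ of the same covariance $\bSigma_{\bV}$ (for which the MMSE equals the LMMSE). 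Subtracting the two mutual informations and using $I(\wc;\sqrt{\gamma}\wc+\bN)=h(\sqrt{\gamma}\wc+\bN)-h(\bN)$ gives $h(\sqrt{\gamma}\bV+\bN) = \tfrac12\log((2\pi e)^m\det(\bI_m+\gamma\bSigma_{\bV})) - \tfrac12\int_0^\gamma \LL(\bV,t)-\mm(\bV,t)\,dt$; letting $\gamma\to\infty$ and matching the $\tfrac{m}{2}\log\gamma$ terms on both sides yields $h(\bV)=\tfrac12\log((2\pi e)^m\det\bSigma_{\bV})-\tfrac12\int_0^\infty \LL(\bV,t)-\mm(\bV,t)\,dt$. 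Finally, since $\LL(\bV,t)=\mathrm{tr}[\bSigma_{\bV}(\bI_m+t\bSigma_{\bV})^{-1}]$, diagonalizing $\bSigma_{\bV}$ and applying the scalar identity~\eqref{hx} to each eigenvalue gives $\int_0^\infty \LL(\bV,t)-\tfrac{m}{2\pi e+t}\,dt=\log((2\pi e)^m\det\bSigma_{\bV})$; this both identifies $h_1(\bV)$ with the Gaussian entropy and collapses the previous two displays into the desired integral formula.

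With the representation in hand, the decreasing chain is immediate: the nested subspaces $\SP_{1,m}(\bV)\subset \SP_{2,m}(\bV)\subset\cdots$ force $\pp_1(\bV,t)\ge \pp_2(\bV,t)\ge\cdots\ge \mm(\bV,t)$, hence $h_1(\bV)\ge h_2(\bV)\ge\cdots\ge h(\bV)$, with $h_1(\bV)=\tfrac12\log((2\pi e)^m\det\bSigma_{\bV})$ from the LMMSE computation above. For the convergence $h_n(\bV)\to h(\bV)$, I would put $g_n(t):=\LL(\bV,t)-\pp_n(\bV,t)\ge 0$, which is nondecreasing in $n$, and invoke the multidimensional PMMSE-to-MMSE convergence: for each fixed $t\ge 0$ the output $\sqrt{t}\bV+\bN$ is continuous and has an MGF, so each of its components satisfies Carleman's condition and, by continuity, its components obey no polynomial relation; thus Theorem~\ref{wu} gives $\pp_n(\bV,t)\to\mm(\bV,t)$ and $g_n(t)\to g(t):=\LL(\bV,t)-\mm(\bV,t)$ pointwise. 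The monotone convergence theorem then gives $\int_0^\infty g_n\to\int_0^\infty g$, and subtracting the absolutely integrable $\tfrac{m}{2\pi e+t}$ from both $\LL(\bV,t)$ terms converts this into $h_n(\bV)\to h(\bV)$, exactly as in Theorem~\ref{jn}.

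The main obstacle is the first step, specifically the clean justification of the integral representation of $h(\bV)$. The scalar derivation rests on the MMSE-dimension fact $\mm(Y,t)\sim 1/t$; here the corresponding input is Theorem~\ref{qm}, whose hypotheses (a bounded density with the decay~\eqref{tj}) must be checked — for $\bN\sim\calN(\bm{0},\bI_m)$ they hold, since its density is bounded and decays faster than any polynomial, so $\mm(\bV,t)\sim m/t$ and $\LL(\bV,t)\sim m/t$. This asymptotic is precisely what makes $t\mapsto \LL(\bV,t)-\tfrac{m}{2\pi e+t}$ (integrand $O(t^{-2})$ at infinity, bounded near $0$) absolutely integrable and $\int_0^\infty \LL(\bV,t)-\mm(\bV,t)\,dt$ finite, which legitimizes the $\gamma\to\infty$ limit exchange and the splitting of the integral. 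One must also verify the elementary continuity $h(\bV+\bN/\sqrt{\gamma})\to h(\bV)$ used in that limit; everything downstream of the representation is a routine vector transcription of the proof of Theorem~\ref{jn}.
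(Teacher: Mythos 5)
Your proof is correct, and its convergence skeleton coincides with the paper's: monotonicity from the nested subspaces $\SP_{n,m}(\bV)$, pointwise $\pp_n(\bV,t)\to\mm(\bV,t)$ via Theorem~\ref{wu}, and the monotone convergence theorem applied to the nonnegative, nondecreasing functions $\LL(\bV,t)-\pp_n(\bV,t)$. The two routes differ in where the work is placed. First, the paper devotes essentially its entire proof to the one hypothesis of Theorem~\ref{wu} that you dispatch in a clause: that the components of $\bZ^{(t)}=\sqrt{t}\bV+\bN$ satisfy no nontrivial polynomial relation. The paper proves this by induction on $m$, writing $q=\sum_k q_k(u_1,\cdots,u_{m-1})u_m^k$, squaring, conditioning on $N_m$, and using continuity of $N_m$ to force the leading coefficient to vanish. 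Your justification --- $\bZ^{(t)}$ is absolutely continuous (a convolution with a Gaussian), and the zero set of a nonzero polynomial is Lebesgue-null, so $q(\bZ^{(t)})=0$ a.s.\ forces $q=0$ --- is valid and shorter, but it imports the measure-zero fact for algebraic hypersurfaces, which is itself usually proved by the same Fubini-type induction the paper spells out; the paper's version is self-contained. Second, you invest most of your effort deriving the representation $h(\bV)=\tfrac12\int_0^\infty \mm(\bV,t)-\tfrac{m}{2\pi e+t}\,dt$ from the vector I-MMSE relation (Gaussian comparison, diagonalizing $\bSigma_{\bV}$ to evaluate the LMMSE integral, and Theorem~\ref{qm} for the asymptotics), whereas the paper simply states this equation inside its proof, implicitly importing the multidimensional analogue of \eqref{hy} from the I-MMSE literature. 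Your derivation is sound --- in particular the eigenvalue computation identifying $h_1(\bV)$ with $\tfrac12\log\left((2\pi e)^m\det\bSigma_{\bV}\right)$ is exactly right --- and it makes the theorem more self-contained; the price is that the step you flag, the continuity $h(\bV+\bN/\sqrt{\gamma})\to h(\bV)$ as $\gamma\to\infty$ (including the degenerate case $h(\bV)=-\infty$), is precisely the nontrivial content that the paper's citation absorbs, so your argument is complete only once that step is supplied.
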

\begin{proof}
In view of monotonicity of $\pp_n(\bV,t)$ in $n,$ and since $h_1(\bV)$ is finite, it suffices by the monotone convergence theorem and the equation
\begin{equation}
    h(\bV) = \frac12 \int_0^\infty \mm(\bV,t) - \frac{m}{2\pi e + t} \, dt
\end{equation}
to show that $\pp_n(\bV,t) \to \mm(\bV,t)$ as $n\to \infty.$ Let $\bN\sim \calN(0,\bI_m)$ be independent of $\bV.$ A simple application of the triangle inequality yields that it suffices to prove the convergence
\begin{equation} \label{xs}
    E_n\left[ \bV \mid \sqrt{t}\bV + \bN \right] \to \BE\left[ \bV \mid \sqrt{t}\bV + \bN \right].
\end{equation}
We deduce~\eqref{xs} from Theorem~\ref{wu}, as follows.

Denote $\bZ^{(t)} := \sqrt{t}\bV+\bN,$ and let $Z_j^{(t)}$ be the $j$-th entry of $\bZ^{(t)}.$ Fix $t\ge 0.$ To apply Theorem~\ref{wu}, we only need to show that the $Z_j^{(t)}$ do not satisfy a nontrivial polynomial relation. We show this by induction on $m.$ The case $m=1$ follows since $Z_1^{(t)}$ is continuous. Assume that we have shown that $Z_1^{(t)},\cdots,Z_{m-1}^{(t)}$ do not satisfy a nontrivial polynomial relation, and that $m\ge 2.$ Suppose, for the sake of contradiction, that $q$ is a polynomial in $m$ variables such that $q(\bZ^{(t)})=0.$ Write $q(u_1,\cdots,u_m) = \sum_{k\in [d]} q_k(u_1,\cdots,u_{m-1})u_m^k$ for some polynomials $q_k$ in $m-1$ variables such that $q_d\neq 0.$ Squaring $q(\bZ^{(t)})=0$ and taking the conditional expectation with respect to $N_m$ we obtain
\begin{equation} \label{xt}
    0 = \BE\left[ q\left( \bZ^{(t)} \right) \right] = \sum_{k\in [2d]} \beta_k N_m^{k}
\end{equation}
for some constants $\beta_k \in \BR$ with $\beta_{2d}:= \| q_d(Z_1^{(t)},\cdots,Z_{m-1}^{(t)}) \|_2^2.$ Since $N_m$ is continuous, equation~\eqref{xt} cannot be a nontrivial polynomial relation for $N_m.$ Thus, we must have $\beta_{2d}=0,$ i.e., $q_d(Z_1^{(t)},\cdots,Z_{m-1}^{(t)})=0.$ By the induction hypothesis, $q_d=0$ identically, a contradiction. Therefore, no nontrivial polynomial relation $q(\bZ^{(t)})=0$ can hold, and the inductive proof is complete. Finally, applying Theorem~\ref{wu}, we deduce the limit in~\eqref{xs}, thereby completing the proof of the theorem.
\end{proof}

Now, we deduce Theorem~\ref{xr} from Theorem~\ref{xu}.

\begin{proof}[Proof of Theorem~\ref{xr}.]
Since $X$ and $Y$ have finite variance, the mutual information $I(X;Y)$ is shown in~\cite{Verdu2010} to satisfy
\begin{equation} \label{ws}
I(X;Y) = \frac12 \hspace{-1pt} \int_0^\infty \hspace{-7pt} \mathbb{E}\left[ \left( \mathbb{E}\left[ X \mid Z_1^{(t)} \right]-\mathbb{E}\left[X \mid Z_1^{(t)},Z_2^{(t)} \right] \right)^2 \right] dt  + \frac12 \hspace{-1.5pt} \int_0^\infty \hspace{-7pt} \mathbb{E}\left[ \left( \mathbb{E}\left[ Y \mid Z_2^{(t)} \right]-\mathbb{E} \left[ Y \mid Z_1^{(t)},Z_2^{(t)} \right] \right)^2 \right]  dt
\end{equation}
where $Z_1^{(t)} := \sqrt{t}X+N_1$ and $Z_2^{(t)}:=\sqrt{t}Y+N_2.$ We may rewrite~\eqref{ws} as
\begin{equation} \label{xo}
    I(X;Y) = \frac12 \int_0^\infty \mm(X,t) + \mm(Y,t) - \mm(\bW,t) ~ dt,
\end{equation}
where $\bW:=(X,Y)^T.$ Adding and subtracting $2/(2\pi e + t)$ to the integrand in~\eqref{xo}, and noting that the assumption $\min(h(X),h(Y))>-\infty$ allows us to split the integral, we obtain
\begin{align}
    I(X;Y) &= \frac12 \int_0^\infty \hspace{-4pt} \mm(X,t) - \frac{1}{2\pi e+t} \, dt + \frac12 \int_0^\infty \hspace{-4pt} \mm(Y,t) - \frac{1}{2\pi e+t} \, dt  - \frac12 \int_0^\infty \hspace{-4pt} \mm(\bW,t) - \frac{2}{2\pi e+t} \, dt \label{xp} \\
    & = h(X)+h(Y)-h(\bW).
\end{align}
Finally, note that the MGF of $\bW$ exists by the assumption that the MGFs of $X$ and $Y$ exist. Thus, by Theorem~\ref{xu}, we have that $h_n(A) \to h(A)$ for $A\in \{X,Y,\bW\}.$ Hence, we obtain the desired formula
\begin{equation}
    I(X;Y) = \lim_{n\to \infty} I_n(X;Y)
\end{equation}
where we define
\begin{align}
    I_n(X;Y) &:= h_n(X) + h_n(Y) - h_n(\bW) \\
    &= \frac12 \int_0^\infty \pp_n(X,t) - \frac{1}{2\pi e+t} \, dt + \frac12 \int_0^\infty \pp_n(Y,t) - \frac{1}{2\pi e+t} \, dt  \nonumber \\
    & \hspace{5.6cm} - \frac12 \int_0^\infty \pp_n(\bW,t) - \frac{2}{2\pi e+t} \, dt \\
    &= \frac12 \int_0^\infty \pp_n(X,t) + \pp_n(Y,t) - \pp_n(\bW,t) ~ dt,
\end{align}
and the proof is complete.
\end{proof}

\section{Application: Estimation of Information Measures from Data} \label{bb}

The approximations introduced in the previous sections naturally motivate estimators for information measures. These estimators are based on (i) approximating moments with sample moments, then (ii) plugging the sample moments into the formulas we have developed for information measures.  Since the formulas for information measures depend continuously on the underlying moments, the resulting estimators  are asymptotically consistent. Moreover, the estimators also behave as the target information measure under affine transformations, being inherently robust to, for example, rescaling of the samples.

We estimate $h(X)$ from i.i.d. samples $X_1,\cdots,X_m$ as $h_n(U)$ for $U\sim \mathrm{Unif}(\{X_1,\cdots,X_m\}).$ More precisely, we introduce the following estimator of differential entropy.

\begin{definition} \label{to}
Let $X,X_1,\cdots,X_m$ be i.i.d. continuous RVs, denote $\calS=\{X_j\}_{j=1}^m,$ and consider the uniform RV $U \sim \mathrm{Unif}(\calS).$ We define the $n$-th estimate $\widehat{h}_n(\calS)$ of the differential entropy $h(X)$ by $\widehat{h}_n(\calS) := h_n(U).$
\end{definition}

The estimator of mutual information $I(X;Y)$ between a discrete $X$ and a continuous $Y$ is defined next. We utilize Theorem~\ref{jo}. We will need to invert the Hankel matrices of moments $(\BE[V^{i+j} \mid U=u])_{i,j\in [n]}$ for each $u\in \supp(U),$ where $(U,V)$ is uniformly distributed over the samples $\calS = \{(X_j,Y_j)\}_{j=1}^m.$ These Hankel matrices are invertible if and only if for each $u\in \{X_j\}_{j=1}^m$ there are more than $n$ distinct samples $(X_j,Y_j)$ for which $X_j=u$; equivalently, the size of the support set of the RV $V$ conditioned on $U=u$ exceeds $n.$ Thus, we remove all values $u$ that appear at most $n$ times in the samples $\calS.$ In other words, we replace $\calS$ with the subset
\begin{equation}
    \mathcal{S}^{(n)} := \left\{ (X',Y')\in\calS \; ; \; |\{ 1\le i \le m \; ; \; X_i = X' \}|>n \right\}.
\end{equation}

\begin{definition} \label{bo}
Let $(X,Y),(X_1,Y_1),\cdots,(X_m,Y_m)$ be i.i.d. $2$-RVs such that $X$ is discrete with finite support and $Y$ is continuous, and denote $\mathcal{S}=\{(X_j,Y_j) \}_{j=1}^m.$ Define $\mathcal{S}^{(1)}\supseteq \mathcal{S}^{(2)} \supseteq \cdots$ by
\begin{equation} \label{eo}
    \mathcal{S}^{(n)} := \left\{ (X',Y')\in\calS \; ; \; |\{ 1\le i \le m \; ; \; X_i = X' \}|>n \right\}.
\end{equation}
For each $n\ge 1$ such that $\mathcal{S}^{(n)}$ is nonempty, let $(U^{(n)},V^{(n)}) \sim \mathrm{Unif}(\mathcal{S}^{(n)}).$ We define the $n$-th estimate $\widehat{I}_n(\calS)$ of the mutual information $I(X;Y)$ by $\widehat{I}_n(\calS) := I_n(U^{(n)};V^{(n)}).$
\end{definition}

We show in this section how to implement these estimators numerically, prove that they are consistent, and discuss their sample complexity. We end the section by empirically comparing their performance with other estimators from the literature. For convenience, define the function $\delta_{X,n}:(0,\infty) \to [0,\infty)$ by
\begin{equation}
    \delta_{X,n}(t) := \det \bm{M}_{\sqrt{t}X+N,n}
\end{equation}
for a $2n$-times integrable RV $X.$ Recall that $\delta_{X,n}$ is the denominator of $\pp_n(X,\wc).$

\subsection{Simplification of the Differential Entropy Formula from Moments for Numerical Stability}

We develop the expressions of our approximations of differential entropy further to avoid possible issues that could arise from numerically computing the improper integral over $[0,\infty).$ To illustrate this issue, consider the expression for $h_2(X).$ Recall from \eqref{lv} that a zero-mean unit-variance RV $X$ satisfies
\begin{equation}
    \pp_2(X,t) = \frac{2+4t+(\calX_4-\calX_3^2-1)t^2}{2+6t+(\calX_4+3)t^2+(\calX_4-\calX_3^2-1)t^3}.
\end{equation}
For example, when $X \sim \mathrm{Unif}([-\sqrt{3},\sqrt{3}]),$ so 
\begin{equation}
    (\calX_1,\calX_2,\calX_3,\calX_4) = \left( 0 , 1 , 0 , \frac{9}{5} \right),
\end{equation}
we obtain
\begin{equation}
    \pp_2(X,t) = \frac{5+10t+2t^2}{5+15t+12t^2+2t^3}.
\end{equation}
Now, consider the expression for $h_2(X)$ in \eqref{jp}, namely,
\begin{equation} \label{qp}
    h_2(X) = \frac12 \int_0^\infty  \frac{5+10t+2t^2}{5+15t+12t^2+2t^3} - \frac{1}{2\pi e + t} \, dt.
\end{equation}
The integral in \eqref{qp} converges, but a numerical computation might not be able to capture this convergence as the expression for the integrand is a difference of non-integrable functions that both decay as $1/t.$ To avoid this possible issue, we subtract a $1/t$ term from both of these non-integrable functions. More precisely, denoting differentiation with respect to $t$ by a prime, we write
\begin{align}
    \pp_2(X,t) &= \frac{5+10t+2t^2 - \frac13 \delta_{X,2}'(t) + \frac13 \delta_{X,2}'(t)}{\delta_{X,2}(t)} \\
    &= \frac{2t}{5+15t+12t^2+2t^3} + \frac13 \frac{d}{dt} \log \delta_{X,2}(t)
\end{align}
and
\begin{equation}
    \frac{1}{2\pi e + t} = \frac{d}{dt} \log (2\pi e + t).
\end{equation}
The integrand $\pp_2(X,t)-1/(2\pi e+t)$ now becomes
\begin{equation}
    \frac{2t}{5+15t+12t^2+2t^3} + \frac{d}{dt} \log \frac{\delta_{X,2}(t)^{1/3}}{2\pi e + t}.
\end{equation}
The advantage in having the integrand in this form is that the first term is well-behaved (it decays as $1/t^2$), and the second term's integral can be given in closed form
\begin{equation}
    \int_0^\infty \left( \log \frac{\delta_{X,2}(t)^{1/3}}{2\pi e + t} \right)' \, dt = \log \left( 2 \pi e \left( \frac25 \right)^{1/3} \right).
\end{equation}
Therefore, equation \eqref{qp} becomes
\begin{equation} \label{qq}
    h_2(X) = \frac12 \log  \frac{2 \pi e}{(5/2)^{1/3}} +  \int_0^\infty \frac{t}{5+15t+12t^2+2t^3} \, dt.
\end{equation}
We use equation \eqref{qq} instead of \eqref{qp} for numerical computation. Note that this resolves the same numerical instability issue when estimating from data: if $\calS = \{X_j\}_{j=1}^m$ is a multiset of i.i.d. samples distributed according to $P_X,$ and if $U \sim \mathrm{Unif}(\calS),$ we compute the estimate $\widehat{h}_2(\calS)=h_2(U)$ of $h_2(X)$ via an expression analogous to that in \eqref{qq} where $X$ is replaced with $U.$ 

The procedure of obtaining expression \eqref{qq} from \eqref{qp} can be carried out for a general $X$ and $n$ such that $\BE[X^{2n}]<\infty$ and $|\supp(X)|>n,$ as follows. Let $\theta_{X,n}:[0,\infty)\to [0,\infty)$ be the polynomial that is the numerator of $\pp_n(X,t),$ i.e., $\theta_{X,n}(t) := \delta_{X,n}(t) \cdot  \pp_n(X,t).$ Thus, we have that
\begin{equation} \label{ur}
    \pp_n(X,t) = \frac{\theta_{X,n}(t)}{\delta_{X,n}(t)}.
\end{equation}
We define the function $\rho_{X,n}:[0,\infty) \to \BR$ by
\begin{equation} \label{tm}
    \rho_{X,n}(t) := \frac{\theta_{X,n}(t) - d_n^{-1} \delta_{X,n}'(t)}{2\delta_{X,n}(t)},
\end{equation}
where $d_n = \binom{n+1}{2}.$ By the analysis of the coefficients in $\pp_n(X,t)$ proved in Theorem~\ref{iv}, we have that $\rho_{X,n}(0)=0$ and \begin{equation}
    \rho_{X,n}(t) = O\left( t^{-2} \right)
\end{equation}
as $t\to \infty.$ In particular, $\rho_{X,n}$ is integrable over $[0,\infty).$ The following formula for differential entropy directly follows from the definition of $h_n$ in~\eqref{jp}.

\begin{lemma} \label{tn}
For any RV $X$ satisfying $\BE[X^{2n}]<\infty$ and $|\supp(X)|>n,$ we have the formula
\begin{equation} \label{xw}
    h_n(X) = \frac12 \log \left( 2\pi e \left( \frac{\det \bM_{X,n}}{\det \bM_{N,n}} \right)^{1/d_n} \right) + \int_0^\infty \rho_{X,n}(t) \, dt,
\end{equation}
where $d_n = \binom{n+1}{2},$ $N\sim \calN(0,1),$ and $\rho_{X,n}$ is as defined in~\eqref{tm}.
\end{lemma}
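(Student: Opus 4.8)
The plan is to isolate, inside the integrand defining $h_n$, the slowly decaying part as an exact logarithmic derivative, leaving behind precisely $\rho_{X,n}$. Starting from the definition~\eqref{jp} of $h_n$ together with the rational form~\eqref{ur} of the PMMSE, I would write $\tfrac12\,\pp_n(X,t)=\theta_{X,n}(t)/(2\delta_{X,n}(t))$ and add and subtract $d_n^{-1}\delta_{X,n}'(t)$ in the numerator. By the definition~\eqref{tm} of $\rho_{X,n}$ this gives the pointwise identity
\[
\tfrac12\,\pp_n(X,t) = \rho_{X,n}(t) + \frac{1}{2d_n}\,\frac{\delta_{X,n}'(t)}{\delta_{X,n}(t)} = \rho_{X,n}(t) + \frac{d}{dt}\!\left[\frac{1}{2d_n}\log\delta_{X,n}(t)\right],
\]
which is legitimate because $\delta_{X,n}(t)=\det\bM_{\sqrt{t}X+N,n}>0$ for every $t\ge 0$ (the matrix is positive definite since $\sqrt{t}X+N$ has infinite support; cf.\ Corollary~\ref{hs}). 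Similarly I would write $\tfrac12(2\pi e+t)^{-1}=\frac{d}{dt}\big[\tfrac12\log(2\pi e+t)\big]$.

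Next I would integrate over $[0,\infty)$. The excerpt already records that $\rho_{X,n}$ is integrable (a consequence of the coefficient analysis in Theorem~\ref{iv}), so $\int_0^\infty\rho_{X,n}(t)\,dt$ is finite, and the remaining contribution is the integral of the total derivative
\[
\frac{d}{dt}\!\left[\frac{1}{2d_n}\log\delta_{X,n}(t) - \frac12\log(2\pi e+t)\right].
\]
By the fundamental theorem of calculus this equals the boundary term $\big[\tfrac{1}{2d_n}\log\delta_{X,n}(t)-\tfrac12\log(2\pi e+t)\big]_0^\infty$, provided both endpoint limits exist.

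Computing the two endpoints is the crux. At $t=0$ one has $\delta_{X,n}(0)=\det\bM_{N,n}$, giving the value $\tfrac{1}{2d_n}\log\det\bM_{N,n}-\tfrac12\log(2\pi e)$. At $t\to\infty$ I would invoke Lemma~\ref{kk}: $\delta_{X,n}$ is a degree-$d_n$ polynomial with leading coefficient $\det\bM_{X,n}\neq 0$ (this is where $|\supp(X)|>n$ is used), so $\tfrac{1}{2d_n}\log\delta_{X,n}(t)=\tfrac{1}{2d_n}\log\det\bM_{X,n}+\tfrac12\log t+o(1)$, while $\tfrac12\log(2\pi e+t)=\tfrac12\log t+o(1)$. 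The two $\tfrac12\log t$ divergences cancel, leaving the finite limit $\tfrac{1}{2d_n}\log\det\bM_{X,n}$. Subtracting the $t=0$ value and combining logarithms yields $\tfrac12\log(2\pi e)+\tfrac{1}{2d_n}\log\frac{\det\bM_{X,n}}{\det\bM_{N,n}}=\tfrac12\log\!\big(2\pi e(\det\bM_{X,n}/\det\bM_{N,n})^{1/d_n}\big)$, which is exactly the closed-form term in~\eqref{xw}.

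The main obstacle is precisely this cancellation of the logarithmically divergent $\tfrac12\log t$ terms at infinity: the individual pieces $\tfrac{1}{2d_n}\log\delta_{X,n}(t)$ and $\tfrac12\log(2\pi e+t)$ both blow up, and only their difference converges. The positivity and the exact leading-order asymptotics of $\delta_{X,n}$ supplied by Lemma~\ref{kk} are what make the split into two separately convergent integrals valid and the boundary evaluation finite; once these are in place, the remainder of the argument is routine bookkeeping with logarithms.
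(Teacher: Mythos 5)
Your proof is correct and takes essentially the same route as the paper: the paper obtains Lemma~\ref{tn} by precisely this decomposition, writing $\tfrac12\pp_n(X,t)-\tfrac12(2\pi e+t)^{-1}=\rho_{X,n}(t)+\frac{d}{dt}\bigl[\tfrac{1}{2d_n}\log\delta_{X,n}(t)-\tfrac12\log(2\pi e+t)\bigr]$ and evaluating the boundary term using $\delta_{X,n}(0)=\det\bM_{N,n}$ together with the degree-$d_n$ leading coefficient $\det\bM_{X,n}$ supplied by Lemma~\ref{kk}. The two points you flag as the crux---the cancellation of the $\tfrac12\log t$ divergences at infinity and the integrability of $\rho_{X,n}$ (from the coefficient analysis of Theorem~\ref{iv})---are exactly the facts the paper records immediately before stating the lemma.
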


A similar conclusion holds for mutual information in view of equation~\eqref{jt} that expresses $I_n$ in terms of $h_n.$

\begin{lemma} \label{xv}
Fix a discrete RV $X$ with finite support, and a $2n$-times integrable continuous RV $Y.$ We have that
\begin{align}
    I_n(X;Y) = \frac{1}{n(n+1)} \log \frac{\det \bM_{Y,n}}{\prod_{x\in \supp(X)} \left( \det \bM_{Y^{(x)},n} \right)^{P_X(x)}} + \int_0^\infty \rho_{Y,n}(t) - \mathbb{E}_X \left[ \rho_{Y^{(X)},n}(t) \right] \, dt, \label{xx}
\end{align}
where for each $x\in \supp(X)$ we denote by $Y^{(x)}$ the RV $Y$ conditioned on $\{ X=x \}.$
\end{lemma}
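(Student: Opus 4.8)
The plan is to reduce the claim to the single-variable entropy formula of Lemma~\ref{tn} via the identity~\eqref{jt}, namely $I_n(X;Y) = h_n(Y) - \BE_X[h_n(Y^{(X)})]$. First I would record why~\eqref{jt} is available here: it follows purely from the definitions~\eqref{jp} and~\eqref{xl} once one notes that the $1/(2\pi e + t)$ correction terms cancel because $\sum_{x\in\supp(X)} P_X(x)=1$, and that each individual $h_n$-integral is finite. The latter is where the hypotheses get checked: $Y$ is continuous and $2n$-times integrable, so $|\supp(Y)|=\infty>n$, and (by the absolute-continuity argument preceding~\eqref{wm}) each conditional law $Y^{(x)}$ is continuous as well, with $\BE[(Y^{(x)})^{2n}]\le P_X(x)^{-1}\BE[Y^{2n}]<\infty$ since $P_X(x)\BE[(Y^{(x)})^{2n}] \le \sum_a P_X(a)\BE[(Y^{(a)})^{2n}] = \BE[Y^{2n}]$. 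Hence Lemma~\ref{tn} applies verbatim to $Y$ and to every $Y^{(x)}$ with $x\in\supp(X)$.

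Next I would substitute the closed form of Lemma~\ref{tn} into~\eqref{jt}. Writing $h_n(Y)$ and each $h_n(Y^{(x)})$ as the sum of a logarithmic term and a $\rho$-integral, the subtraction produces three groups of terms. The constant $\tfrac12\log(2\pi e)$ appears once in $h_n(Y)$ and is subtracted with total weight $\sum_x P_X(x)=1$, so it cancels; likewise the $-\tfrac{1}{2d_n}\log\det\bM_{N,n}$ contribution is common to $Y$ and every $Y^{(x)}$ and cancels for the same reason. What survives from the logarithmic terms is
\[
    \frac{1}{2d_n}\left(\log\det\bM_{Y,n} - \sum_{x\in\supp(X)} P_X(x)\log\det\bM_{Y^{(x)},n}\right),
\]
which, using $\log a - \sum_x P_X(x)\log b_x = \log\bigl(a/\prod_x b_x^{P_X(x)}\bigr)$ together with the identity $2d_n = 2\binom{n+1}{2} = n(n+1)$, is exactly the first term of~\eqref{xx}.

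Finally, the $\rho$-integrals combine into $\int_0^\infty \rho_{Y,n}(t) - \sum_x P_X(x)\rho_{Y^{(x)},n}(t)\,dt = \int_0^\infty \rho_{Y,n}(t) - \BE_X[\rho_{Y^{(X)},n}(t)]\,dt$, giving the second term of~\eqref{xx}; pulling the finite sum over $\supp(X)$ inside the integral is immediate since there are only finitely many summands, each $\rho_{Y^{(x)},n}$ being integrable over $[0,\infty)$ by the $O(t^{-2})$ tail bound noted after~\eqref{tm}. There is essentially no hard step: the argument is bookkeeping layered on top of Lemma~\ref{tn}. The only points requiring genuine care are transferring the continuity and $2n$-integrability hypotheses of Lemma~\ref{tn} to the conditional variables $Y^{(x)}$—handled as above—and confirming the finiteness of each separate integral, so that the subtraction implicit in~\eqref{jt} is a legitimate difference of finite quantities rather than an $\infty-\infty$ indeterminacy.
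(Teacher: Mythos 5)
Your proof is correct and takes essentially the same route as the paper, which obtains Lemma~\ref{xv} precisely by substituting the formula of Lemma~\ref{tn} into the decomposition $I_n(X;Y)=h_n(Y)-\BE_X[h_n(Y^{(X)})]$ from \eqref{jt} and cancelling the common $\tfrac12\log(2\pi e)$ and $\det\bM_{N,n}$ contributions, using $2d_n=n(n+1)$. Your extra bookkeeping---checking that each $Y^{(x)}$ inherits continuity and $2n$-fold integrability so that Lemma~\ref{tn} applies, and that the individual integrals are finite so the subtraction is legitimate---simply makes explicit the details the paper leaves unstated.
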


Note that in Lemmas~\ref{tn} and~\ref{xv}, the determinants $\det \bM_{A,n}$ and the rational functions $\rho_n(A;t),$ for $A\in \{X,Y\}$ or $A\in \{Y^{(x)} ~ ; ~ x\in \supp(X)\},$ are completely determined by the first $2n$ moments of $A.$ To obtain the estimates $\widehat{h}_n$ and $\widehat{I}_n$ given samples, the moments of $A$ are replaced with their respective sample moments in formulas~\eqref{xw} and~\eqref{xx}.

\subsection{Consistency} \label{ub}

As sample moments converge almost surely to the moments, and as our expressions for differential entropy and mutual information depend continuously on the moments, the continuous mapping theorem yields that the estimators of differential entropy and mutual information introduced in the beginning of this section are consistent.

\begin{theorem} \label{bp}
Let $X$ be a continuous RV that has a MGF. Let $\{ X_j \}_{j=1}^{\infty}$ be i.i.d. samples drawn according to $P_X.$ Then, for every $n \in \BN,$ we have the almost-sure convergence
\begin{equation}
    \lim_{m\to \infty} \widehat{h}_n\left( \{ X_j \}_{j=1}^m \right) = h_n(X).
\end{equation}
Furthermore, we have that
\begin{equation} \label{tz}
    h(X) = \lim_{n\to \infty} \lim_{m\to \infty} \widehat{h}_n\left( \{ X_j \}_{j=1}^m \right)
\end{equation}
where the convergence in $m$ is almost-sure convergence.
\end{theorem}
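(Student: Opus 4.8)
The plan is to establish the inner ($m\to\infty$) limit by exhibiting $h_n$ as a continuous function of finitely many moments and applying the continuous mapping theorem, and then to obtain the iterated limit~\eqref{tz} by invoking Theorem~\ref{jn}. Fix $n$ and recall that $\widehat{h}_n(\{X_j\}_{j=1}^m)=h_n(U)$ for $U\sim\mathrm{Unif}(\{X_j\}_{j=1}^m)$. Since $X$ is continuous, with probability one the samples $\{X_j\}_{j\ge 1}$ are pairwise distinct, so $|\supp(U)|=m>n$ once $m>n$; hence Lemma~\ref{tn} applies and expresses $h_n(U)$ as a fixed function $\Phi_n$ of the first $2n$ sample moments $\calU_k:=\BE[U^k]=\tfrac1m\sum_{j=1}^m X_j^k$.

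First I would record the almost-sure convergence of the sample moments. Because $X$ has an MGF, every $\calX_k=\BE[X^k]$ is finite, so for each $k\in\{1,\dots,2n\}$ the strong law of large numbers gives $\calU_k\to\calX_k$ almost surely; intersecting over the finitely many $k$ produces a single almost-sure event on which $(\calU_1,\dots,\calU_{2n})\to(\calX_1,\dots,\calX_{2n})$.

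The core of the argument is to show $h_n(U)\to h_n(X)$ along this convergent sequence of moment vectors. Using Lemma~\ref{tn}, the logarithmic term $\tfrac12\log\!\big(2\pi e(\det\bM_{X,n}/\det\bM_{N,n})^{1/d_n}\big)$ is continuous in the moments wherever $\det\bM_{\cdot,n}>0$, and $\det\bM_{U,n}\to\det\bM_{X,n}>0$ (as $X$ is continuous), so this term converges. The delicate part is the integral term $\int_0^\infty\rho_{X,n}(t)\,dt$. By Theorem~\ref{iv} the coefficients of the numerator $\theta_{X,n}$ and of the denominator $\delta_{X,n}(t)=\det\bM_{\sqrt{t}X+N,n}$ are polynomials in the moments, so $\rho_{U,n}(t)\to\rho_{X,n}(t)$ pointwise for each $t\ge 0$ and $\delta_{U,n}\to\delta_{X,n}$ uniformly on compact $t$-intervals. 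I would then produce a dominating function valid for all large $m$ from two facts: (i) the leading coefficient $\det\bM_{U,n}\to\det\bM_{X,n}>0$ with the lower-order coefficients bounded, so $\delta_{U,n}(t)\ge c\,t^{d_n}$ for $t$ beyond some threshold $T_0$ uniformly in large $m$, while uniform convergence to the strictly positive $\delta_{X,n}$ (positive on $[0,\infty)$ by Corollary~\ref{hs}) bounds $\delta_{U,n}$ below by a positive constant on $[0,T_0]$; and (ii) the identity $a_X^{n,d_n-1}=\det\bM_{X,n}$ from Theorem~\ref{iv} forces the top-degree terms in the numerator of $\rho$ to cancel, giving numerator degree at most $d_n-2$ and hence $|\rho_{U,n}(t)|\le C/(1+t^2)$ with $C$ uniform in large $m$. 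Dominated convergence then gives $\int_0^\infty\rho_{U,n}\to\int_0^\infty\rho_{X,n}$, so $h_n(U)\to h_n(X)$ almost surely, which is the first assertion.

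Finally, the iterated limit~\eqref{tz} follows by combining the first assertion with Theorem~\ref{jn}: since $X$ is continuous with an MGF, $\lim_{n\to\infty}h_n(X)=h(X)$, and substituting $h_n(X)=\lim_{m\to\infty}\widehat h_n(\{X_j\}_{j=1}^m)$ (the convergence being almost sure) yields $h(X)=\lim_{n\to\infty}\lim_{m\to\infty}\widehat h_n(\{X_j\}_{j=1}^m)$. I expect the main obstacle to be the uniform tail control of $\rho_{X,n}$ required for dominated convergence; this is precisely where the coefficient identities of Theorem~\ref{iv}—the fixed positive constant coefficient together with the cancellation producing the $O(t^{-2})$ decay—are indispensable, and some care is needed to make both the lower bound on $\delta_{U,n}$ and the bound on its numerator uniform along the convergent sequence of sample moments.
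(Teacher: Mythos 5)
Your proposal is correct and follows essentially the same route as the paper's proof: both reduce $\widehat h_n$ via Lemma~\ref{tn} to a function of the first $2n$ sample moments, invoke the strong law of large numbers, pass to the limit in the integral term by a dominated-convergence argument that rests on strict positivity of $\delta_{X,n}$ on $[0,\infty)$ and the degree-gap of at least $2$ coming from the coefficient identities of Theorem~\ref{iv}, handle the $\log\det$ term by continuity, and conclude~\eqref{tz} from Theorem~\ref{jn}. The only difference is packaging: the paper proves continuity of the deterministic coefficient-to-integral map on an open neighborhood and then applies the continuous mapping theorem, whereas you apply dominated convergence directly along the sequence of empirical moment vectors; the two arguments are interchangeable.
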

\begin{proof}
See Appendix~\ref{ue}.
\end{proof}

\begin{corollary} \label{uc}
Let $X$ be discrete RV with finite support, and $Y$ be a continuous RV with a MGF. Let $\{(X_j,Y_j)\}_{j=1}^\infty$ be i.i.d. samples drawn according to $P_{X,Y}.$ For every $n\in \BN,$ we have the almost-sure convergence
\begin{equation} \label{ug}
    \lim_{m\to \infty} \widehat{I}_n\left(\{(X_j,Y_j)\}_{j=1}^m \right) = I_n(X;Y).
\end{equation}
Furthermore, 
\begin{equation} \label{uh}
I(X;Y) = \lim_{n\rightarrow \infty} \lim_{m\rightarrow \infty} \widehat{I}_n\left(\{(X_j,Y_j)\}_{j=1}^m\right)
\end{equation}
where the convergence in $m$ is almost-sure convergence.
\end{corollary}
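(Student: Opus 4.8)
The plan is to mirror the two-step structure of Theorem~\ref{bp}, obtaining the inner limit in $m$ from the strong law of large numbers and the continuous mapping theorem, and the outer limit in $n$ from Theorem~\ref{jo}. For the inner limit with $n$ fixed, I would first use that $\supp(X)$ is finite: by the strong law of large numbers, each value $x\in\supp(X)$ occurs in the samples a number of times comparable to $mP_X(x)$, which tends to infinity almost surely. Hence almost surely there is a random threshold $M$ such that for all $m\ge M$ every $x\in\supp(X)$ appears more than $n$ times, so $\mathcal{S}^{(n)}=\mathcal{S}$ and $(U^{(n)},V^{(n)})=(U,V)\sim\mathrm{Unif}(\mathcal{S})$. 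Because $Y$ is continuous, the samples $Y_j$ are almost surely distinct, so for large $m$ the support of $V^{(n)}$ conditioned on each $\{U^{(n)}=x\}$ exceeds $n$; this guarantees that the conditional Hankel determinants $\det\bM_{V^{(x)},n}$ (where $V^{(x)}$ denotes $V$ conditioned on $\{U=x\}$) are strictly positive, so that $\widehat{I}_n(\mathcal{S})$ is well-defined.

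Next I would invoke Lemma~\ref{xv}, which splits $I_n(U^{(n)};V^{(n)})$ into a log-determinant term and an integral of the rational functions $\rho_{V,n}$ and $\rho_{V^{(x)},n}$, the latter entering through a finite sum over $\supp(X)$. By the strong law of large numbers, the empirical marginal moments $\BE[V^k]$ and conditional moments $\BE[V^k\mid U=x]$ converge almost surely to $\BE[Y^k]$ and $\BE[(Y^{(x)})^k]$, while the empirical masses $\widehat{P}_X(x)$ converge to $P_X(x)$. The log-determinant term is a continuous function of these moments and masses wherever the determinants are positive; this holds in the limit (since $Y$ and each $Y^{(x)}$ are continuous, so $\det\bM_{Y,n},\det\bM_{Y^{(x)},n}>0$) and, as noted above, for large $m$, so this term converges by the continuous mapping theorem.

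The main obstacle is the integral term, where I must show $\int_0^\infty \rho_{V,n}(t)\,dt$ and each $\int_0^\infty \rho_{V^{(x)},n}(t)\,dt$ converge to their population counterparts. For fixed $t\ge 0$, the denominator $\delta_{Z,n}(t)=\det\bM_{\sqrt{t}Z+N,n}$ (for $Z\in\{V,V^{(x)}\}$) is strictly positive, since $\sqrt{t}Z+N$ is a continuous perturbation and thus has infinite support; as the coefficients of $\rho$ are polynomials in the moments by Theorem~\ref{iv}, this gives the pointwise convergence $\rho_{V,n}(t)\to\rho_{Y,n}(t)$. To pass the limit through the integral I would apply dominated convergence, using that Theorem~\ref{iv} yields $\rho_{X,n}(t)=O(t^{-2})$ with implied constant controlled by the (convergent, hence eventually bounded) moments; this furnishes a uniform dominating function of the form $C/(1+t^2)$ and secures $\widehat{I}_n(\mathcal{S})\to I_n(X;Y)$ almost surely. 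Finally, since the MGF of $Y$ exists, Theorem~\ref{jo} gives $I_n(X;Y)\to I(X;Y)$ as $n\to\infty$, and combining the two limits yields the iterated-limit statement~\eqref{uh}.
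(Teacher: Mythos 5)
Your proposal is correct in outline, but it takes a genuinely different route from the paper's proof. The paper never re-touches the moment formulas at the level of mutual information: on the event $\frakD_m$ that every $x\in\supp(X)$ has at least $n+1$ samples with distinct $Y$-values, it writes $\widehat{I}_n(\calS_m)=\widehat{h}_n(\calA_m)-\sum_{x\in\supp(X)}\widehat{P}_m(x)\,\widehat{h}_n(\calB_{m,x})$ with $\calA_m=\{Y_j\}_{j\in[m]}$ and $\calB_{m,x}=\{Y_j \,;\, X_j=x\}$, and then invokes Theorem~\ref{bp} as a black box for each differential-entropy estimate together with the strong law for the empirical masses; the remaining work is probabilistic bookkeeping (showing $P(\frakC\mid\frakD_{m'})=1,$ $P(\frakF)=1,$ and $P(\frakD_{m'})\to 1$) to justify applying Theorem~\ref{bp} to the randomly-sized conditional sample sets. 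You instead inline the analytic argument: strong law for marginal and conditional empirical moments (the latter as ratios of two almost-surely convergent averages) and for the empirical masses, followed by the continuity/dominated-convergence argument run directly on the formula of Lemma~\ref{xv}. Both routes work; the paper's buys economy by reusing Theorem~\ref{bp} and concentrates the effort in the conditioning argument, while yours avoids that conditioning step entirely at the cost of repeating the analysis of the integral term. One caveat on your domination step: the constant in $\rho_{V,n}(t)=O(t^{-2})$ is \emph{not} controlled by boundedness of the moments alone; you need the random denominator to satisfy $\delta_{V,n}(t)\ge c\,(1+t^{d_n})$ with $c>0$ uniform in $t\ge 0$ and in all large $m.$ This follows not from boundedness but from the almost-sure convergence of the empirical coefficients to those of $\delta_{Y,n}$ combined with strict positivity and positive leading coefficient of $\delta_{Y,n}$: choose $\eta>0$ so small that $c_n+\sum_{j=1}^{d_n}\bigl(B_j(\bcalX)-\eta\bigr)t^j\ge c\,(1+t^{d_n})$ for all $t\ge 0$ (this is exactly the openness-of-$\calG$ argument in the paper's proof of Theorem~\ref{bp}), and wait until every empirical coefficient is within $\eta$ of its limit. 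With that step made explicit, your argument is complete, and the final passage $I_n(X;Y)\to I(X;Y)$ via Theorem~\ref{jo} coincides with the paper's (both implicitly using $h(Y)>-\infty$).
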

\begin{proof}
See Appendix~\ref{ud}.
\end{proof}

\subsection{Sample Complexity} \label{wd}

When $X$ is a continuous RV of bounded support, we may derive the following sample complexity of the estimator of differential entropy in Definition~\ref{to} from Hoeffding's inequality.

\begin{prop} \label{uj}
Fix a bounded-support continuous RV $X \in L^{2n}(\calF).$ There is a constant $C=C(X,n)$ such that, for all small enough $\varepsilon,\delta>0,$ any collection $\calS$ of i.i.d. samples drawn according to $P_X$ of size
\begin{equation}
    |\calS| > \frac{C}{\varepsilon^2} \log \frac{1}{\delta}
\end{equation}
must satisfy
\begin{equation}
    \mathrm{Pr}\left\lbrace \left| \widehat{h}_n(\calS) - h_n(X) \right| < \varepsilon \right\rbrace \ge 1- \delta.
\end{equation}
\end{prop}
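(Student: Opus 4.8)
The plan is to express the estimator as a fixed deterministic function of the empirical moments and then combine local Lipschitz continuity of that function with Hoeffding's inequality applied coordinatewise. By Lemma~\ref{tn}, whenever a RV $R$ satisfies $\BE[R^{2n}]<\infty$ and $|\supp(R)|>n$ we may write $h_n(R)=\Phi(\BE[R],\dots,\BE[R^{2n}])$ for a single function $\Phi:\BR^{2n}\to\BR$, assembled from the term $\tfrac{1}{2d_n}\log\det\bM_{\cdot,n}$ and the integral $\int_0^\infty\rho_{\cdot,n}(t)\,dt$; here the entries of $\bM_{\cdot,n}$ and the coefficients of the polynomials $\theta_{\cdot,n},\delta_{\cdot,n}$ defining $\rho$ are all polynomials in the moments by Theorem~\ref{iv}. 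Fix $M\ge1$ with $\supp(X)\subseteq[-M,M]$ and write $\bmu=(\calX_1,\dots,\calX_{2n})$. Since $X$ is continuous we have $\det\bM_{X,n}>0$, and since the samples are almost surely distinct, $|\supp(U)|=m>n$ once $m>n$; hence $\widehat{h}_n(\calS)=\Phi(\widehat{\calX}_1,\dots,\widehat{\calX}_{2n})$ with $\widehat{\calX}_k=\tfrac1m\sum_{j=1}^m X_j^k$, while $h_n(X)=\Phi(\bmu)$.

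The first substantive step is to show $\Phi$ is Lipschitz on a compact ball $K=\overline{B}(\bmu,r)\subset\BR^{2n}$. The log-determinant term is smooth wherever $\det\bM_{\cdot,n}>0$, which holds on a neighborhood of $\bmu$ because $\bM_{X,n}$ is positive definite (Lemma~\ref{io}). For the integral term, write $\rho(t;\bmu')$ for $\rho_{X,n}(t)$ with the moments of $X$ replaced by the entries of $\bmu'$. The denominator $\delta(t;\bmu)$ is strictly positive for all $t\ge0$, has positive leading coefficient $\det\bM_{X,n}$, and so stays uniformly bounded below on $[0,\infty)$ for all $\bmu'$ in a small enough ball; moreover the numerator of $\rho$ has degree at most $d_n-2$ in $t$ because its top coefficients cancel (as $a_X^{n,d_n-1}=b_X^{n,d_n}=\det\bM_{X,n}$) and it vanishes at $t=0$ (as $a_X^{n,0}=d_n^{-1}b_X^{n,1}$), by the identities in Theorem~\ref{iv}. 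These structural facts give a uniform bound $|\rho(t;\bmu')|\le C_K(1+t)^{-2}$, and the same estimate for the $\bmu'$-derivatives of $\rho$, so that $\bmu'\mapsto\int_0^\infty\rho(t;\bmu')\,dt$ is $C^1$ on $K$ by differentiation under the integral sign. Hence $\Phi\in C^1(K)$ and there is $L=L(X,n)$ with $|\Phi(\bmu')-\Phi(\bmu)|\le L\|\bmu'-\bmu\|_2$ for $\bmu'\in K$. I expect this verification --- securing the uniform $O(t^{-2})$ tail so the moment dependence can be moved inside the improper integral --- to be the main obstacle; the rest is routine.

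The final step is concentration. For each $k\in\{1,\dots,2n\}$ the variables $X_1^k,\dots,X_m^k$ are i.i.d.\ and lie in $[-M^k,M^k]$, so Hoeffding's inequality gives $\Pr\{|\widehat{\calX}_k-\calX_k|\ge\tau\}\le 2\exp(-m\tau^2/(2M^{2k}))\le 2\exp(-m\tau^2/(2M^{4n}))$, using $M\ge1$. Set $\tau=\varepsilon/(L\sqrt{2n})$ and take $\varepsilon$ small enough that $\varepsilon/L<r$. On the event that $|\widehat{\calX}_k-\calX_k|<\tau$ for every $k$, we have $\|\widehat{\bmu}-\bmu\|_2<\tau\sqrt{2n}=\varepsilon/L<r$, so $\widehat{\bmu}\in K$ and therefore $|\widehat{h}_n(\calS)-h_n(X)|\le L\|\widehat{\bmu}-\bmu\|_2<\varepsilon$. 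A union bound over the $2n$ coordinates bounds the probability of the complementary event by $4n\exp(-m\varepsilon^2/(4nL^2M^{4n}))$. Requiring this to be at most $\delta$ and using $\log(4n)\le\log(1/\delta)$ for $\delta$ small yields the sufficient condition $|\calS|=m>\tfrac{C}{\varepsilon^2}\log\tfrac1\delta$ with $C=8nL^2M^{4n}=C(X,n)$, which is the claimed bound.
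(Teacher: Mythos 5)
Your proposal is correct, and it reaches the sample-complexity bound by a genuinely different route than the paper's proof (Appendix~\ref{gw}). Both arguments share the same probabilistic skeleton: write $\widehat{h}_n(\calS)$ and $h_n(X)$ through Lemma~\ref{tn} as one fixed function $\Phi$ of the first $2n$ (sample) moments, apply Hoeffding's inequality to each moment (valid by the bounded support), and union-bound over the $2n$ coordinates. The difference lies entirely in the deterministic stability step. You establish \emph{additive} stability: $\Phi$ is $C^1$, hence Lipschitz, on a compact ball around the true moment vector, using the coefficient identities of Theorem~\ref{iv} (the cancellations $a_X^{n,d_n-1}=b_X^{n,d_n}=\det\bM_{X,n}$ and $a_X^{n,0}=d_n^{-1}b_X^{n,1}$, which you verify correctly) to get a uniform $O\left((1+t)^{-2}\right)$ envelope justifying differentiation under the integral sign. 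The paper instead works with \emph{multiplicative} perturbations: it first shifts $X$ so that $\supp(X)\subset[p,q]\subset(0,\infty)$ (so all moments are strictly positive and relative-error concentration is meaningful), splits each coefficient polynomial into positive and negative parts, and uses the homogeneity and degree bounds of Theorem~\ref{iv} to sandwich $\rho_{U,n}(t)$ between copies of $\rho_{X,n}$ evaluated at rescaled arguments $(1\pm\eta)^2t$, extracting an error linear in $\eta$ through a long chain of explicit integral inequalities. What each buys: the paper's route gives constants with an explicit, traceable dependence on $X$ and $n$ and avoids any compactness argument, at the cost of the positivity reduction and considerable bookkeeping; your route is shorter and more modular and needs no shift to positive support, but the constant $C=8nL^2M^{4n}$ is nonconstructive, since the Lipschitz constant $L$ comes from an extreme-value argument on a compact ball.

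The step you flag as the main obstacle---a uniform positive lower bound on the denominator $\delta(t;\bmu')$ over $t\in[0,\infty)$ and all $\bmu'$ in a small ball, together with the uniform tail bound for $\rho$ and its $\bmu'$-derivatives---is genuinely needed, but it is fillable, and in fact the paper already contains the required argument in its consistency proof (Theorem~\ref{bp}, Appendix~\ref{ue}): openness of the set $\calG$ defined in~\eqref{tu} shows that for all moment vectors near $\bcalX$ the denominator remains positive with leading coefficient bounded away from zero, and the dominating function in~\eqref{tv} is exactly the integrable envelope you need; its analogue for the derivatives follows from the same degree counting applied after the quotient rule. So your sketch completes to a full proof using tools already present in the paper.
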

\begin{proof}
See Appendix~\ref{bz}.
\end{proof}

From Proposition~\ref{uj}, we may also obtain a sample complexity result for the estimate $\widehat{I}_n$ in Definition~\ref{bo}.

\begin{prop} \label{xy}
Fix a finitely-supported discrete RV $X$ and a bounded-support continuous RV $Y \in L^{2n}(\calF).$ There is a constant $C=C(X,Y,n)$ such that, for all small enough $\varepsilon,\delta>0,$ any collection $\calS$ of i.i.d. samples drawn according to $P_{X,Y}$ of size
\begin{equation}
    |\calS| > \frac{C}{\varepsilon^2} \log \frac{1}{\delta}
\end{equation}
must satisfy
\begin{equation}
    \mathrm{Pr}\left\lbrace \left| \widehat{I}_n(\calS) - I_n(X;Y) \right| < \varepsilon \right\rbrace \ge 1- \delta.
\end{equation}
\end{prop}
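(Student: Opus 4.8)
The plan is to reduce Proposition~\ref{xy} to the differential-entropy sample complexity of Proposition~\ref{uj} by exploiting the additive decomposition of $I_n$ into entropy approximants. Write $\supp(X)=\{x_1,\dots,x_K\}$ and $p_k:=P_X(x_k)>0$. Equation~\eqref{jt} gives
\[
    I_n(X;Y) = h_n(Y) - \sum_{k=1}^K p_k\, h_n\!\left(Y^{(x_k)}\right),
\]
and each $Y^{(x_k)}$ is a continuous bounded-support RV (its support lies in $\supp(Y)$), so every $h_n(Y^{(x_k)})$ is finite; set $B:=\max_k |h_n(Y^{(x_k)})|<\infty$. On the sampling side, writing $m=|\calS|$, $m_k:=|\{j:X_j=x_k\}|$, and $\calS_k:=\{Y_j:X_j=x_k\}$, the set $\calS_k$ is a collection of $m_k$ i.i.d.\ samples of $Y^{(x_k)}$. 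When every $x_k$ occurs more than $n$ times, so that $\calS^{(n)}=\calS$ (see~\eqref{eo}), the definition of $\widehat I_n$ in Definition~\ref{bo} together with Definition~\ref{to} yields the matching decomposition $\widehat I_n(\calS)=\widehat h_n(\{Y_j\}_{j=1}^m)-\sum_k (m_k/m)\,\widehat h_n(\calS_k)$, since conditioning the uniform draw on $U^{(n)}=x_k$ produces a uniform RV over $\calS_k$ and $\Pr(U^{(n)}=x_k)=m_k/m$.

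Using $\sum_k m_k/m=1$ and the triangle inequality, I then bound
\[
    |\widehat I_n(\calS)-I_n(X;Y)| \le |\widehat h_n(\{Y_j\}_{j=1}^m)-h_n(Y)| + \max_k|\widehat h_n(\calS_k)-h_n(Y^{(x_k)})| + B\sum_{k=1}^K\left|\frac{m_k}{m}-p_k\right|,
\]
and force each of the three terms below $\varepsilon/3$. The first term is controlled directly by Proposition~\ref{uj} applied to $Y$. The third term is controlled by Hoeffding's inequality for each binomial count $m_k\sim\mathrm{Binomial}(m,p_k)$, which makes $\sum_k|m_k/m-p_k|$ smaller than $\varepsilon/(3B)$ except on an event of probability at most $2Ke^{-cm\varepsilon^2}$ for a suitable $c=c(X,Y,n)$.

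The middle term is the crux, and here I condition on the counts. Let $p_*:=\min_k p_k$ and let $\calE$ be the event $\{m_k\ge \tfrac12 p_* m \text{ for all }k\}$; a Chernoff bound on each $m_k$ shows $\Pr(\calE^c)$ is exponentially small in $m$, and for $m>2n/p_*$ the event $\calE$ also guarantees $\calS^{(n)}=\calS$, so the decomposition above is valid. Conditionally on the counts, $\calS_k$ consists of $m_k\ge\tfrac12 p_* m$ i.i.d.\ samples of $Y^{(x_k)}$, so Proposition~\ref{uj} applied to each $Y^{(x_k)}$ gives $\Pr(|\widehat h_n(\calS_k)-h_n(Y^{(x_k)})|\ge\varepsilon/3\mid\{m_j\})\le \delta/(K+3)$ once $\tfrac12 p_* m>(C_k/(\varepsilon/3)^2)\log((K+3)/\delta)$. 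A union bound over the at most $K+3$ bad events (the $Y$-estimate, the $K$ conditional estimates, the frequency event, and $\calE^c$), after absorbing the constant $\log(K+3)$ into $C$ for small $\delta$, produces a single constant $C=C(X,Y,n)$ for which $|\calS|>(C/\varepsilon^2)\log(1/\delta)$ forces the total error below $\varepsilon$ with probability at least $1-\delta$; smallness of $\varepsilon,\delta$ is needed both because Proposition~\ref{uj} requires it and to ensure the threshold sample size exceeds $2n/p_*$. The main obstacle is exactly this interplay of two levels of randomness: the conditional estimates $\widehat h_n(\calS_k)$ rest on the random subsample sizes $m_k$, so I cannot apply Proposition~\ref{uj} unconditionally and must first establish that the $m_k$ are simultaneously $\Theta(m)$ with overwhelming probability before invoking it inside the conditioning.
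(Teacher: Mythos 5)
Your proposal follows essentially the same route as the paper's proof: decompose $I_n$ and $\widehat I_n$ via \eqref{jt} into entropy approximants, apply the differential-entropy sample complexity (Proposition~\ref{uj}) to $Y$ and to each conditional $Y^{(x)}$, control the empirical frequencies $m_k/m$ by Hoeffding, and finish with a union bound over the bad events. If anything, your explicit Chernoff step forcing $m_k \ge \tfrac12 p_* m$ simultaneously before invoking Proposition~\ref{uj} on the subsamples is more careful than the paper's write-up, which asserts the conditional sample-complexity bound given only the event $\frakD_m$ (all counts exceed $n$) and leaves the required concentration of the random subsample sizes implicit.
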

\begin{proof}
See Appendix~\ref{xz}.
\end{proof}

\subsection{Numerical Results} \label{bc}

We compare via synthetic experiments the performance of our estimators\footnote{A Python code can be found at~\cite{githubMIE}.} against some of the estimators in the literature. 

Our proposed estimator for differential entropy is $\widehat{h}_{10},$ i.e., given samples $\calS$ of $X$ we estimate $h(X)$ by $\widehat{h}_{10}(\calS)$ as given by Definition~\ref{to}. We compare this estimator with two estimation methods: $k$-Nearest-Neighbors ($k$-NN), and Kernel Density Estimation (KDE). The $k$-NN-based method we compare against is as provided by the Python package \textsf{`entropy\_estimators'} \cite{Steeg2014}, which we will refer to in this section as KSG. The kernel used for the KDE method is Gaussian, and it is obtained by computing from a set of samples $\{X_j\}_{j=1}^m$ a kernel $\Phi$ via the Python function \textsf{`scipy.stats.gaussian\_kde'}~\cite{scipy}; then, the estimate for differential entropy will be $\frac{-1}{m}\sum_{j=1}^m \log \Phi(X_j).$ The parameters for the KSG and the KDE estimators are the default parameters, namely, $k=3$ for the KSG estimator, and the bandwidth for the KDE estimator is chosen according to Scott's rule (i.e., $m^{-1/(d+4)}$ for a set of $m$ samples of a $d$-RV).

The mutual information is estimated using $\widehat{I_5},$ i.e., given samples $\mathcal{S}$ of $(X,Y)$ our estimate for $I(X;Y)$ will be $\widehat{I}_5(\mathcal{S})$ as given by Definition~\ref{bo}. This estimator is compared against the partitioning estimator and the Mixed KSG estimator~\cite{Gao2017} (which is a $k$-NN-based estimator); we utilize the implementation in \cite{Gao2017} for both estimators. In particular, the parameters are fixed throughout, namely, we utilize the parameters used in~\cite{Gao2017} ($k=5$ for the Mixed KSG, and $8$ bins per dimension for the partitioning estimator).

We perform $250$ independent trials for each experiment and each fixed sample size, then plot the absolute error as a percentage of the true value (except for the last experiment, where the ground truth is $0,$ so we plot the absolute error) against the sample size.

We note that we also performed the mutual information experiments for the Noisy KSG estimator based on the estimator in~\cite{Kraskov2004} (with noise strength $\sigma=0.01$ as in~\cite{Gao2017}), but its performance was much worse than the other estimators, so we do not include it in the plots.

\begin{experiment} \label{yj}
We estimate the differential entropy of a RV $X$ distributed according to Wigner's semicircle distribution, i.e.,
\begin{equation}
    p_X(x) := \frac{2}{\pi} \sqrt{1-x^2} \cdot 1_{[-1,1]}(x).
\end{equation}
The ground truth is $h(X)\approx 0.64473$ nats. We generate a set $\calS$ of i.i.d. samples distributed according to $P_X.$ The size of $\calS$ ranges from $800$ to $4000$ in increments of $800,$ and for each fixed sample size we independently generate $250$ such sets $\calS$ (so we generate a total of $1250$ sets of samples). The differential entropy $h(X)$ is estimated by three methods: the moments-based estimator that we propose $\widehat{h}_{10},$ the $k$-NN-based estimator implemented in~\cite{Steeg2014}  (which we refer to as the KSG estimator), and the Gaussian KDE estimator. For the proposed estimator, we use $\widehat{h}_{10}(\calS)$ as an estimate for $h(X).$ For the KSG estimator, we use the default setting, for which $k=3.$ We also use the default setting for the Gaussian KDE estimator; in particular, the bandwidth is chosen according to Scott's Rule as $m^{-1/(d+4)}$ where $m=|\calS|$ and $d=1$ is the dimensionality of $X.$ The percentage relative absolute error in the estimation (e.g., $100\cdot |\widehat{h}_{10}(\calS)/h(X)-1|$) is plotted against the sample size for the three estimators in Figure~\ref{DE_W1}. The solid lines in Figure~\ref{DE_W1} are the means of the errors, i.e., the mean in the $250$ independent trials of the percentage relative absolute error for each fixed sample size in $\{800,1600,2400,3200,4000\}.$ Via bootstrapping, we infer confidence intervals, which are indicated by the shaded areas around the solid lines in Figure~\ref{DE_W1}. We see that the proposed estimator outperforms the KSG estimator and the KDE estimator for this experiment.
\end{experiment}

\begin{figure}[!tb]
    \centering
    \includegraphics[width=0.7\textwidth]{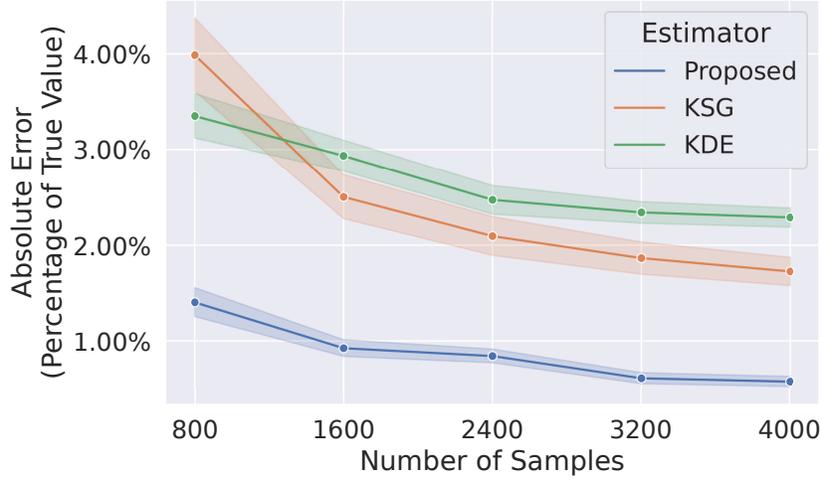}
    \caption{Estimation of differential entropy for a semicircle distribution as in Experiment~\ref{yj}. The vertical axis shows the percentage relative absolute error in the estimation, e.g., for the proposed estimator it is $100\cdot |\widehat{h}_{10}(\calS)/h(X) - 1|$ where $\calS$ is the set of samples and $h(X)\approx 0.64473$ nats is the ground truth. The horizontal axis shows $|\calS|,$ the sample size. The proposed estimator $\widehat{h}_{10}$ outperforms the $k$-NN-based estimator (denoted KSG) and the Gaussian KDE estimator for this experiment.}
    \label{DE_W1}
\end{figure}

\begin{experiment} \label{yh}
We estimate the differential entropy $h(\bX)$ of a random vector $\bX=(X_1,X_2)^T$ where $X_1$ and $X_2$ are i.i.d. distributed according to Wigner's semicircle distribution, namely, $\bX$ has the PDF
\begin{equation}
    p_{\bX}(x,y) = \frac{4}{\pi^2} \sqrt{(1-x^2)(1-y^2)} \cdot 1_{[-1,1]\times [-1,1]}(x,y).
\end{equation}
The ground truth is $h(\bX) \approx 1.28946$ nats. The same numerical setup as in Experiment~\ref{yj} is performed here. The results are plotted in Figure~\ref{DE_W2}, where we see a similar behavior to the comparison in the 1-dimensional case; in particular, the proposed estimator outperforms the KSG estimator and the KDE estimator for this experiment.
\end{experiment}

\begin{figure}[!tb] 
    \centering
    \includegraphics[width=0.7\textwidth]{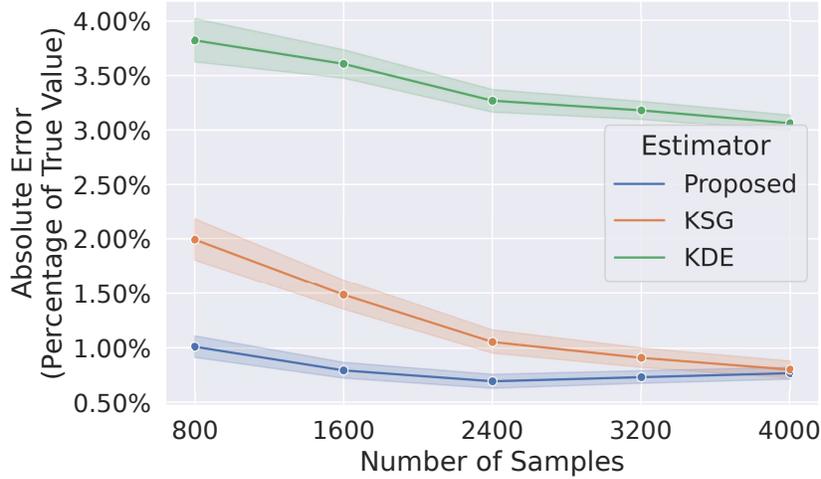} 
    \caption{Estimation of differential entropy for a 2-dimensional semicircle distribution as in Experiment~\ref{yh}. The proposed estimator $\widehat{h}_{10}$ outperforms both the KSG and the KDE estimators for this experiment.} 
    \label{DE_W2}
\end{figure}

\begin{experiment} \label{qs}
We estimate the differential entropy $h(X)$ of a Gaussian mixture $X$ whose PDF is given by
\begin{equation}
    p_X(x) = \sum_{i=1}^4 \frac{p_i}{\sqrt{2\pi \sigma_i^2}} e^{-(x-\mu_i)^2/(2\sigma_i^2)},
\end{equation}
where 
\begin{align}
    \bp &= (0.1,0.2,0.3,0.4) \\
    \bmu &= (-2,0,1,5) \\
    \bsigma &= (1.5,1,2,1).
\end{align}
The ground truth is $h(X) \approx 2.34249$ nats. The same numerical setup in Experiments~\ref{yj} and~\ref{yh} is used here. The results are plotted in Figure~\ref{DE1}. For this experiment, the proposed estimator outperforms the KSG estimator, and it is essentially indistinguishable from the KDE estimator. Note that it is expected that the KDE estimator performs well in this Gaussian mixture experiment, since it is designed specifically to approximate densities by Gaussian mixtures.
\end{experiment}

\begin{figure}[!tb]
    \centering
    \includegraphics[width=0.7\textwidth]{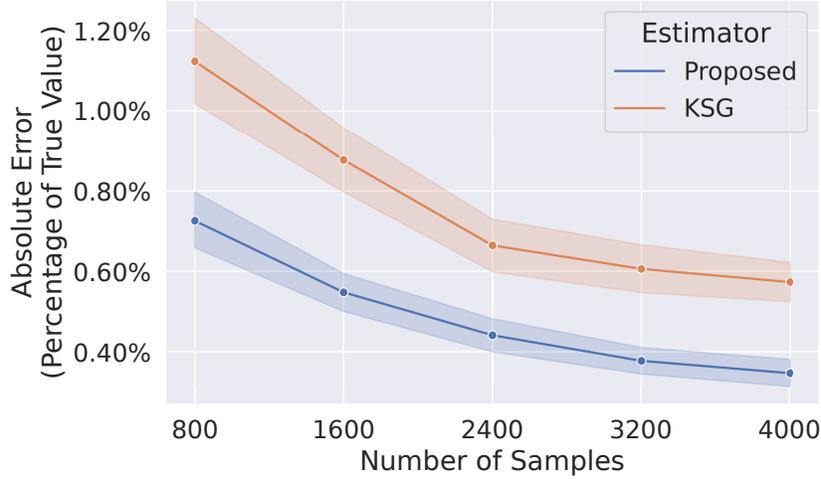}
    \caption{Estimation of differential entropy for a Gaussian mixture as in Experiment~\ref{qs}. The proposed estimator $\widehat{h}_{10}$ outperforms both the KSG and KDE estimators for this experiment. The plot of the KDE estimator's performance is omitted to avoid cluttering, as it lies just above the line for the proposed estimator but overlaps significantly with its uncertainty region.}
    \label{DE1}
\end{figure}

\begin{experiment} \label{yf}
We estimate the differential entropy $h(\bX)$ of a random vector $\bX$ that is a mixture of two Gaussians, namely, $\bX$ has the PDF
\begin{equation}
    p_{\bX}(\bx) = \frac{1}{4\pi \sqrt{\det(\bA)}} e^{-(\bx-\bmu)^T\bA^{-1}(\bx-\bmu)/2} + \frac{1}{4\pi \sqrt{\det(\bB)}} e^{-(\bx-\bnu)^T\bB^{-1}(\bx-\bnu)/2},
\end{equation}
where we have the means $\bmu=(-1,-1)^T$ and $\bnu=(1,1)^T,$ and the covariance matrices
\begin{equation}
    \bA = \left( \begin{array}{cc}
    1 & 1/2 \\
    1/2 & 1 
    \end{array} \right) 
\end{equation}
and $\bB = \bI_2.$ The ground truth is $h(\bX) \approx 3.22406$ nats. The same numerical setup as in Experiments~\ref{yj}-\ref{qs} is performed here. The results are plotted in Figure~\ref{DE2}. As in the 1-dimensional case in Experiment~\ref{qs}, the proposed estimator outperforms the KSG estimator for this experiment. Further, the proposed estimator also outperforms the KDE estimator in this 2-dimensional setting.
\end{experiment}

\begin{figure}[!tb]
    \centering
    \includegraphics[width=0.7\textwidth]{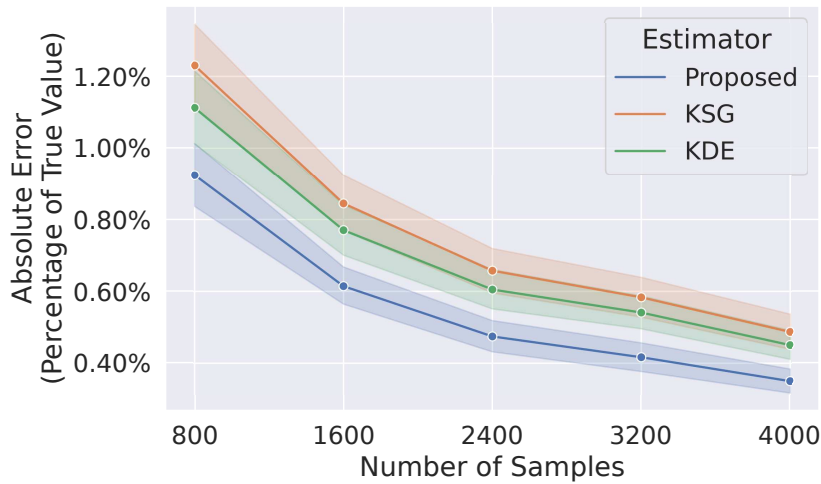}
    \caption{Estimation of differential entropy for a vector Gaussian mixture as in Experiment~\ref{yf}. The proposed estimator $\widehat{h}_{10}$ outperforms both the KSG and KDE estimators for this experiment.}
    \label{DE2}
\end{figure}

\begin{experiment} \label{qr}
We replicate the mixture-distribution part of the zero-inflated Poissonization experiment of~\cite{Gao2017}. In detail, we let $Y\sim \mathrm{Exp}(1),$ and let $X=0$ with probability $0.15$ and $X\sim \mathrm{Pois}(y)$ given that $Y=y$ with probability $0.85.$ The quantity to be estimated is the mutual information $I(X;Y),$ and the ground truth is $I(X;Y) \approx 0.25606$ nats. We generate a set of i.i.d. samples $\calS$ according to the distribution $P_{X,Y},$ where $\calS$ has size in $\{800,1600,2400,3200\}.$ We estimate $I(X;Y)$ via the proposed estimator by $\widehat{I}_5(\calS),$ and we also consider the estimates given by the Mixed KSG estimator and the partitioning estimator, both as implemented in~\cite{Gao2017} (including the parameters used therein). This estimation process is repeated independently $250$ times. The comparison of estimators' performance is plotted in Figure~\ref{MI_Unscaled}. The solid lines indicate the mean percentage relative absolute error, and the shaded areas indicate confidence intervals obtained via bootstrapping. We see in Figure~\ref{MI_Unscaled} that the proposed estimator outperforms the other considered estimators for this experiment. We also test the affine-transformation invariance property of the proposed estimator. In particular, we consider estimating the mutual information from the scaled samples $\calS'$ obtained from $\calS$ via scaling the $Y$ samples by $10^4,$ i.e.,
\begin{equation}
    \calS' := \{ (A,10^4B) ~;~ (A,B) \in \calS \}. 
\end{equation}
Plotted in Figure~\ref{MI_Scaled} is a comparison of the same estimators using the same samples as those used to generate Figure~\ref{MI_Unscaled}, but where $Y$ is processed through this affine transformation. The ground truth stays unchanged, and so do our estimator and the partitioning estimator, but the Mixed KSG estimates change. This experiment illustrates the resiliency of the proposed estimator to affine transformations. In fact, the computed numerical values in the modified setting by the proposed estimator differ by no more than $10^{-15}$ nats from those numerically computed in the original setting for each of the $1000$ different sets of samples $\calS$; in theory, these pairs of values are identical, and the less than $10^{-15}$ discrepancy is an artifact of the computer implementation. Finally, we note that although the setup is more general than the assumptions we prove our results under in this paper (as $X$ here is not finitely supported), the proposed estimator outperformed the other estimators.
\end{experiment}

\begin{figure}[!tb]
    \centering
    \includegraphics[width=0.7\textwidth]{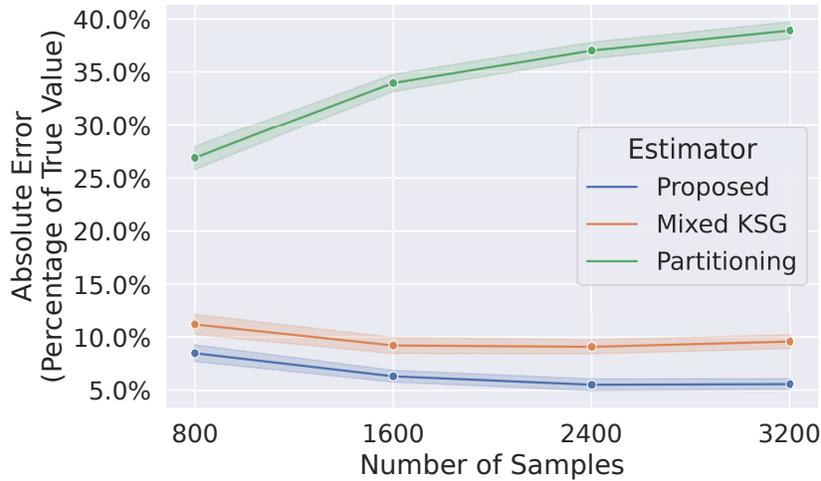}
    \caption{Percentage relative absolute error vs. sample size for unscaled zero-inflated poissonization in Experiment~\ref{qr}. The proposed estimator $\widehat{I}_5$ outperforms both the $k$-NN-based estimator (denoted Mixed KSG) and the partitioning estimator.}
    \label{MI_Unscaled}
\end{figure}

\begin{figure}[!tb]
    \centering
    \includegraphics[width=0.7\textwidth]{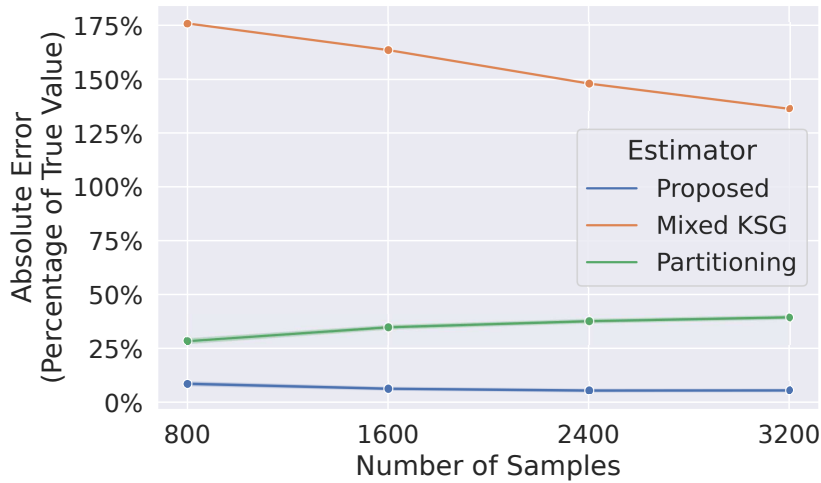}
    \caption{Percentage relative absolute error vs. sample size for the scaled zero-inflated poissonization in Experiment~\ref{qr}. To generate these plots, we use the same samples that yield the plots in Figure~\ref{MI_Unscaled}, but we process them through an affine transformation. Specifically, each sample $(A,B)$ is replaced with $(A,10^4 B).$ Then the samples are passed to the three estimators. We see that the proposed estimator $\widehat{I}_5$ is resilient to scaling, i.e., the same performance line in Figure~\ref{MI_Unscaled} is observed here too. This is in contrast to the performance of the Mixed KSG estimator. The partitioning estimator is resilient to scaling, but its performance is not favorable in this experiment (with above $\% 25$ relative absolute error).}
    \label{MI_Scaled}
\end{figure}

\begin{experiment} \label{yg}
We test for independence under the following settings. We consider independent $X\sim \mathrm{Bernoulli}(0.5)$ and $Y\sim \mathrm{Unif}([0,2]).$ We estimate $I(X;Y),$ whose true value is $I(X;Y)=0.$ We employ the same estimation procedure as in Experiment~\ref{qr}. The results are plotted in Figure~\ref{MI_Indep}, which shows that the proposed estimator predicted independence more accurately than the other estimators for the same sample size. Note that in this case the plot shows the absolute error (in nats) rather than the relative absolute error, as the ground truth is zero.
\end{experiment}

\begin{figure}[!tb]
    \centering
    \includegraphics[width=0.7\textwidth]{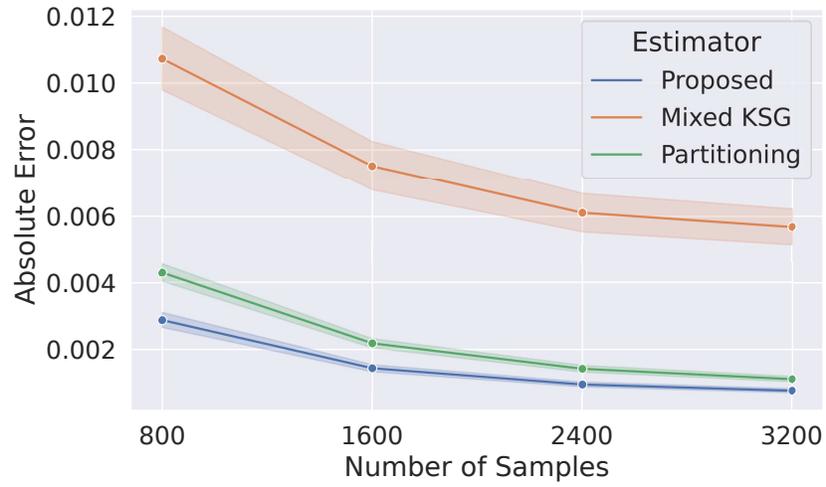}
    \caption{Absolute error (in nats) vs. sample size for the independence testing in Experiment~\ref{yg}. The proposed estimator $\widehat{I}_5$ outperforms the Mixed KSG and the partitioning estimators in this experiment.}
    \label{MI_Indep}
\end{figure}

\newpage

\section*{Acknowledgements}

The authors would like to thank Dr. Shahab Asoodeh (Harvard University) for pointing out that our approach for expressing the mutual information in terms of moments is applicable also to differential entropy. The authors are also grateful to Prof. Alex Dytso (NJIT) for noting that the higher-order derivatives of the conditional expectation in Gaussian channels are expressible in terms of the conditional cumulants.
\addcontentsline{toc}{section}{Acknowledgements}
\clearpage 
\newpage 

\begin{appendices}
\addtocontents{toc}{\protect\setcounter{tocdepth}{0}}

\section{A Derivation of Equation \texorpdfstring{\eqref{arx}}{(58)}} \label{aaj}

Using the notation of \cite{partition}, we have that
\begin{equation} \label{aaa}
    C_r = \sum_{k=1}^r (k-1)! \stirling{r}{k}_{\ge 2}
\end{equation}
where $\stirling{r}{k}_{\ge 2}$ denotes the number of partitions of an $r$-element set into $k$ subsets each of which contains at least $2$ elements (note that there are $(k-1)!$ cyclically-invariant arrangements of $k$ parts). The exponential generating function of the sequence $r \mapsto \stirling{r}{k}_{\ge 2}$ is $(e^x-1-x)^k/k!.$ Now, we may write
\begin{equation}
    (e^x-1-x)^k = \sum_{a+b\le k} \binom{k}{a,b} (-1)^{k-a}x^b \sum_{t\in \BN} \frac{(ax)^t}{t!}.
\end{equation}
Therefore, the coefficient of $x^r$ in $(e^x-1-x)^k/k!$ is
\begin{align}
    \frac{1}{r!}\stirling{r}{k}_{\ge 2} &= \sum_{a+b \le k} \frac{(-1)^{k-a}a^{r-b}}{a!b!(k-a-b)!(r-b)!} \\
    &= \frac{1}{r!} \sum_{b=0}^k \binom{r}{b} \sum_{a=0}^{k-b} (-1)^{k-a} \frac{a^{r-b}}{a!(k-a-b)!} \\
    &= \frac{1}{r!} \sum_{b=0}^k \binom{r}{b} \stirling{r-b}{k-b} (-1)^b,
\end{align}
which when combined with~\eqref{aaa} gives~\eqref{arx}.

\section{Proof of Lemma~\ref{arv}} \label{aae}

Assume that $\BE\left[|X|^{\deg(p)}\right]=\infty$ (so $\deg(p)\ge 1$), and we will show that $\BE\left[|p(X)|\right] = \infty$ too. Let $k\in [\deg(p)-1]$ be the largest integer for which $\BE\left[ |X|^k \right] < \infty,$ and write $p(u)=u^{k+1}q(u)+r(u)$ for a nonzero polynomial $q$ and a remainder $r\in \SP_k.$ By monotonicity of norms, $\BE\left[ |X|^j \right]<\infty$ for every $j\in [k].$ Hence, $r(X)$ is integrable. Therefore, it suffices to prove that $X^{k+1}q(X)$ is non-integrable, which we show next.

Consider the set $\calD = \left\{ u\in \BR ~ ; ~ \left| q(u) \right| < |a| \right\}$ where $a\neq 0$ is the leading coefficient of $q.$ If $q$ is constant, then $\calD$ is empty, whereas if $\deg q \ge 1$ then $|q(u)|\to \infty$ as $|u|\to \infty$ implies that $\calD$ is bounded; in either case, there is an $M\in \BR$ such that $\calD\subset [-M,M].$ Now, writing $1=1_{\calD}+1_{\calD^c},$ we obtain
\begin{equation} \label{aru}
    \BE\left[ |X|^{k+1} |q(X)| \right] \ge |a| ~ \BE\left[ |X|^{k+1} 1_{\calD^c}(X) \right].
\end{equation}
But we also have that
\begin{equation}
    \infty = \BE\left[ |X|^{k+1}\right] \le M^{k+1} +  \BE\left[ |X|^{k+1} 1_{\calD^c}(X) \right],
\end{equation}
so $\BE\left[ |X|^{k+1} 1_{\calD^c}(X) \right]=\infty.$ Therefore, inequality \eqref{aru} yields that $\BE\left[ \left|X^{k+1}q(X)\right|\right]=\infty,$ concluding the proof.

\section{Proofs of Section~\ref{az}: PMMSE Preliminaries} 

\subsection{Proof of Proposition \ref{iw}: Equivalence of PMMSE Definitions} \label{kn}

In this proof, we drop the subscript on the norm, so $\|\wc\|$ is the norm of $L^2(\calF).$ Let $\{p_i(Y)\}_{i\in [d]}$ be an orthonormal basis for $\SP_n(Y).$ Then, the orthogonal projection, $E_n^A[X \mid Y],$ of $X$ onto $\SP_n(Y)$ is given by
\begin{equation}
    E_n^A[X\mid Y] = \sum_{i\in [d]} \langle X,p_i(Y) \rangle p_i(Y),
\end{equation}
with $d= \dim \SP_n(Y)-1.$ 

We next show that $E_n^A[X\mid Y]$ is the unique closest element in $\SP_n(Y)$ to $X.$ Note that $L^2(\calF)$ is separable. Extend the orthonormal basis $\{p_i(Y)\}_{i\in [d]}$ to an orthonormal basis $\{p_i(Y)\}_{i\in [d]} \cup \{f_j \}_{j\in \BN}$ for $L^2(\calF).$ 

Fix an arbitrary $q(Y)\in \SP_n(Y)$ and let $\ba \in \BR^{d+1}$ be such that $q(Y)=\sum_{i\in [d]} a_i p_i(Y).$ We may expand $X$ as
\begin{equation}
    X = \sum_{i\in [d]} \langle X,p_i(Y)\rangle p_i(Y) + \sum_{j\in \BN} \langle X,f_j\rangle f_j.
\end{equation}
Then, the projection $E_n^A[X \mid Y]$ is at least as close to $X$ as $q(Y)$ is, because
\begin{align}
    \left\|X-E_n^A[X \mid Y]\right\|^2 &=\left\|X - \sum_{i\in [d]} \langle X,p_i(Y) \rangle p_i(Y) \right\|^2 = \left\|\sum_{j\in \BN} \langle X,f_j \rangle f_j \right\|^2 = \sum_{j\in \BN} |\langle X,f_j \rangle|^2 \label{ic} \\
    &\le \sum_{i\in [d]} |\langle X,p_i(Y) \rangle - a_i|^2 + \sum_{j\in \BN} |\langle X,f_j \rangle|^2 \label{ib} \\
    &= \left\|\sum_{i\in [d]} (\langle X,p_i(Y)\rangle - a_i)p_i(Y) + \sum_{j\in \BN} \langle X,f_j \rangle f_j \right\|^2 = \| X - q(Y)\|^2. \label{id}
\end{align}

Finally, we show uniqueness. So, assume that $r\in \SP_n$ is such that
\begin{equation}
    \|X-E_n^A[X\mid Y]\| = \|X- r(Y) \|,
\end{equation}
so this value lower bounds $\|X-p(Y)\|$ for any $p(Y)\in \SP_n(Y).$ Note that $(E_n^A[X\mid Y]+r(Y))/2\in \SP_n(Y).$ Hence, $\|X-(E_n^A[X\mid Y]+r(Y))/2\|$ is lower bounded by both $\|X-E_n^A[X\mid Y]\|$ and $\|X- r(Y) \|.$ We utilize the parallelogram law, $\|a+b\|^2+\|a-b\|^2=2\|a\|^2+2\|b\|^2$ for $a,b \in L^2(\calF).$ Setting $a=X-E_n^A[X\mid Y]$ and $b=X-r(Y)$ in the parallelogram law, we have that
\begin{align}
    0 &\le \| E_n^A[X\mid Y] - r(Y) \|^2 = \|(X-r(Y))-(X-E_n^A[X\mid Y])\|^2 \\
    &= 2\|X-r(Y)\|^2+ 2\|X-E_n^A[X\mid Y]\|^2  - 4\|X-(E_n^A[X\mid Y]+r(Y))/2\|^2 \le 0.
\end{align}
Therefore, $\|E_n^A[X\mid Y]-r(Y)\|=0,$ i.e., $E_n^A[X\mid Y]=r(Y).$ Hence, $E_n^A[X\mid Y]$ uniquely minimizes $\|X-p(Y)\|$ over $p(Y)\in \SP_n(Y).$ Therefore, by definition of $E_n^B[X\mid Y],$ we get that $E_n^B[X\mid Y]$ is well defined and that
\begin{equation}
    E_n^A[X\mid Y] = E_n^B[X\mid Y],
\end{equation}
and the proof is complete.

\subsection{Proof of Theorem \ref{ar}: PMMSE Converges to MMSE} \label{lm}

First, we show that the limits hold if $Y$ has finite support. Suppose $\supp(Y) = \{y_j\}_{j\in [m]}.$ Let $q\in \SP_m$ be the polynomial interpolant of $\{\left(y_j,\BE[X\mid Y=y_j]\right)\}_{j\in [m]},$ so $q(Y)=\BE[X\mid Y].$ Then, for every $n\ge m,$ we have that $E_n[X\mid Y] = q(Y) = \BE[X\mid Y].$ Hence, the limits~\eqref{hp} and~\eqref{hq} hold in this case. Thus, we may assume that $|\supp(Y)|=\infty.$

By Carleman's condition, polynomials are dense in $L^2(\sigma(Y))$~\cite{Lubinsky2007}, i.e.,
\begin{equation}
    \overline{\bigcup_{n\in \BN} \SP_n(Y)} = L^2(\sigma(Y)).
\end{equation}
In particular, the collection of monomials $\{Y^k\}_{k\in \BN}$ is a complete set, i.e., if $Z\in L^2(\sigma(Y))$ satisfies $\langle Z,Y^k \rangle = 0$ for every $k\in \BN$ then it must be that $Z=0.$ Further, since $|\supp(Y)|=\infty,$ the monomials are linearly independent. Hence, applying Gram-Schmidt, one obtains an orthonormal basis for the Hilbert space $L^2(\sigma(Y))$ consisting of polynomials $\{q_k(Y) \}_{k\in \BN}$ such that $q_k \in \SP_k$ for each $k\in \BN.$ 

Therefore, we may write
\begin{equation}
    \BE [X \mid Y] = \sum_{k\in \BN} \langle \BE [X \mid Y] , q_k(Y)\rangle  q_k(Y)
\end{equation}
where the series converges in the $L^2(\sigma(Y))$-norm sense. Further, by the orthogonality principle for $\BE [X \mid Y]$
\begin{equation}
    \langle \BE [X \mid Y],q_k(Y)\rangle = \langle X,q_k(Y) \rangle
\end{equation}
for every $k\in \BN.$ Now, by the characterization in \eqref{hl} that $E_n[X\mid Y]$ is the orthogonal projection of $X$ onto $\SP_n(Y),$ we know that
\begin{equation}
    E_n[X \mid Y] = \sum_{k=0}^n \langle X,q_k(Y) \rangle q_k(Y)
\end{equation}
for each $n\in \BN$ (note that $\{q_k(Y)\}_{k\in [n]}$ is an orthonormal basis for $\SP_n(Y)$). Therefore, 
\begin{equation}
    \BE [X \mid Y] = \lim_{n\to \infty} \sum_{k=0}^n \langle X,q_k(Y) \rangle q_k(Y) = \lim_{n\to \infty} E_n[X \mid Y],
\end{equation}
i.e., \eqref{hp} holds.

Finally, for \eqref{hq}, note that
\begin{align}
    \pp_n(X\mid Y) &= \|X-E_n[X \mid Y]\|_2^2 \\
    \mm(X\mid Y) &= \|X-\BE[X\mid Y]\|_2^2.
\end{align}
Since both RVs $E_n[X \mid Y]$ and $\BE[X\mid Y]$ are elements in $L^2(\sigma(Y)),$ the orthogonality principle of $\BE[X\mid Y]$ implies that $X-\BE[X\mid Y]$ and $E_n[X \mid Y]-\BE[X\mid Y]$ are orthogonal. Then, the Pythagorean theorem implies that
\begin{equation}
    \pp_n(X\mid Y) - \mm(X \mid  Y) = \|\BE[X \mid  Y] - E_n[X \mid Y] \|_2^2,
\end{equation}
and the limit \eqref{hq} follows from the limit \eqref{hp}.

\subsection{Proof of Lemma~\ref{kv}: Invertibility of $\bM_{Y,n}$} \label{ln}

The matrix $\bM_{Y,n}$ is symmetric. We show that it is positive-semidefinite, and that it is positive-definite if and only if $|\supp(Y)|>n.$ For any $\bd\in \BR^{n+1},$ we have the inequality
\begin{align}
\bd^T \bM_{Y,n} \bd & = \bd^T \BE  \left[ \bY^{(n)} \left(\bY^{(n)}\right)^T \right] \bd  =  \BE  \left[ \bd^T\bY^{(n)} \left(\bY^{(n)}\right)^T \bd \right]  = \BE \left[ \left| \bd^T\bY^{(n)} \right|^2 \right] \ge 0,
\end{align}
so $\bM_{Y,n}$ is positive-semidefinite. Furthermore, the equality case 
\begin{equation}
    \BE \left[ \left| \bd^T\bY^{(n)} \right|^2 \right]=0
\end{equation}
holds if and only if $  \left| \bd^T\bY^{(n)} \right|^2=0,$ and this latter relation holds if and only if $\bd^T\bY^{(n)}=0.$ Therefore, $\bM_{Y,n}$ is positive-definite if and only if $\bd^T\bY^{(n)}=0$ implies $\bd = \mathbf{0},$ i.e., $\bY^{(n)}$ does not lie almost surely in a hyperplane in $\BR^{n+1}.$ Finally, $\bY^{(n)}$ lies almost surely in a hyperplane in $\BR^{n+1}$ if and only if $|\supp(Y)|\le n.$ Therefore, the desired result that $\bM_{Y,n}$ is invertible if and only if $|\supp(Y)|>n$ follows.

\subsection{Proof of Lemma~\ref{io}: An Orthonormal Basis for $\SP_n(Y)$} \label{lp}

First, note that Lemma \ref{kv} implies that $\bM_{Y,n}$ is invertible. To show orthonormality, we show that the expectation of the outer product of the random vector 
\begin{equation}
\bV = \bM_{Y,n}^{-1/2} \bY^{(n)}    
\end{equation}
is the identity matrix. We have that
\begin{align}
    \BE\left[ \bV \bV^T \right] &= \BE \left[ \bM_{Y,n}^{-1/2} \bY^{(n)} \left( \bY^{(n)} \right)^T \left( \bM_{Y,n}^{-1/2} \right)^T \right] = \bM_{Y,n}^{-1/2} \bM_{Y,n} \left( \bM_{Y,n}^{-1/2} \right)^T  = \bI_{n+1}, \label{lt}
\end{align}
where we have used that (see~\eqref{ls})
\begin{equation}
    \bM_{Y,n}^{1/2} \left( \bM_{Y,n}^{1/2} \right)^T = \bM_{Y,n}
\end{equation}
and (see~\eqref{lr})
\begin{equation}
    \bM_{Y,n}^{-1/2} = \left( \bM_{Y,n}^{1/2} \right)^{-1}.
\end{equation}
Hence, the entries of the vector $\bM_{Y,n}^{-1/2}\bY^{(n)}$ form an orthonormal subset of $\SP_n(Y).$ Since we have that $\mathrm{span}(\{1,\cdots,Y^n\})=\SP_n(Y),$ and $\bM_{Y,n}^{-1/2}$ is invertible, we conclude that the entries of $\bM_{Y,n}^{-1/2} \bY^{(n)}$ also span $\SP_n(Y),$ which completes the proof.

\subsection{A Second Proof for Theorem \ref{hw}} \label{lq}

By assumption on $Y,$ Lemma \ref{io} yields invertibility of $\bM_{Y,n}.$ Denote 
\begin{equation} \label{cc}
\bc_{X,Y,n} := \bM_{Y,n}^{-1} \BE  \left[ X\bY^{(n)} \right],
\end{equation}
and we will show that strict convexity of the function $g:\BR^{n+1}\to [0,\infty)$ defined by
$$
g(\bd) =  \BE \left[ \left( X- \bd^T\bY^{(n)} \right)^2 \right]
$$
implies that $\bc_{X,Y,n}$ is its unique minimizer. For any $\bd \in \BR^{n+1},$ linearity of expectation implies that the gradient of $g$ is
$$
\nabla g (\bd) = \left( 
\BE \left[ 2Y^k\left( \bd^T\bY^{(n)}-X \right) \right]  \right)_{0\le k \le n},
$$
so the Hessian of $g$ is the constant $2\bM_{Y,n}.$ As $\bM_{Y,n}$ is positive-definite, $g$ is strictly convex. As $\nabla g(\bd)={\bf 0}$ is equivalent to $\bM_{Y,n}\bd = \BE  \left[ X\bY^{(n)} \right],$ i.e., to $\bd = \bc_{X,Y,n},$ the desired result follows.

\section{Proofs of Section \ref{ca}: Basic PMMSE Properties}

\subsection{Proof of Proposition \ref{hn}: PMMSE and Affine Transformations} \label{kb}

Set $U=Y+\beta.$ For any $\bc\in \BR^{n+1},$ 
\begin{equation}
X+\alpha-\bc^T\bU^{(n)} = X-(\bM\bc-\alpha \be_1)^T\bY^{(n)},
\end{equation}
where we define the matrix
\begin{equation}
\bM := \left( \beta^{i-j}\binom{i}{j} \right)_{(i,j)\in [n]^2},
\end{equation}
and we set $\beta^{i-j}\binom{i}{j} =0$ when $j>i$ and $\beta^0\binom{i}{i}=1$ when $\beta=0.$ Then $\bM$ is lower-triangular with an all-$1$ diagonal, so the inverse $\bM^{-1}$ exists. Thus, the mapping $\BR^{n+1}\to \BR^{n+1}$ defined by $\bc\mapsto \bM\bc-\alpha \be_1$ is invertible (where $\bd \mapsto \bM^{-1}(\bd+\alpha \be_1)$ is the inverse mapping). Therefore, the following two subsets of $L^2(\calF)$ are equal
\begin{equation}
    \left\{ X+\alpha-\bc^T\bU^{(n)} ~ ; ~ \bc \in \BR^{n+1} \right\} = \left\{ X-\bd^T\bY^{(n)} ~ ; ~ \bd \in \BR^{n+1} \right\}.
\end{equation}
Then, by the definition of the PMMSE, equality~(\ref{ai}) holds.

Equation (\ref{aj}) may be treated similarly. First, note that the case $\alpha = 0$ is immediate, because then $\alpha X = 0 \in \SP_n(\beta Y)$ implies $\pp_n(\alpha X \mid \beta Y)= 0.$ Assume $\alpha \neq 0.$ Setting $V=\beta Y,$ one has that for any $\bc \in \BR^{n+1}$
\begin{equation}
\alpha X - \bc^T \bV^{(n)} = \alpha \left(X-(\bL \bc)^T\bY^{(n)}\right)
\end{equation}
where we define the invertible matrix
\begin{equation}
\bL:= \mathrm{diag}\left( \left(\beta^k/\alpha\right)_{k\in [n]} \right).
\end{equation}
As $\bc\mapsto \bL \bc$ is a bijection of $\BR^{n+1},$ the definition of the PMMSE yields equation (\ref{aj}).

\subsection{Proof of Lemma \ref{me}: PMMSE Operator Properties} \label{mf}

Let $\be_0,\be_1,\cdots,\be_n$ denote the standard basis vectors for $\BR^{n+1}.$ For~\ref{rp}, we first note that
    \begin{equation} \label{md}
        \bM_{Z,n}^{-1} \BE  \left[ \bZ^{(n)} \right] = \be_0,
    \end{equation}
    because $\BE[\bZ^{(n)}] = \bM_{Z,n} \be_0.$ By the formula for $E_n[X\mid Z]$ in Theorem \ref{hw}
    \begin{equation} \label{ja}
    E_n[X \mid Z] = \BE  \left[ X\bZ^{(n)} \right]^T \bM_{Z,n}^{-1} \, \bZ^{(n)}.
    \end{equation}
    Taking the expectation and using \eqref{md},
    \begin{equation}
    \BE \left[E_n[X\mid Z]\right] = \BE [X].
    \end{equation}
    
    For~\ref{rq}, we generalize this approach and note that
    \begin{equation} \label{rv}
        \bM_{Z,n}^{-1}\BE \left[ Z^j\bZ^{(n)} \right]=\be_j.
    \end{equation}
    for each $j \in [n].$ Then, for any monomial $Z^j$ with $j\in [n]$ we have from~\eqref{ja} and~\eqref{rv} that
    \begin{equation}
        \BE\left[ E_n[X\mid Z] Z^j \right]=\BE  \left[ X\bZ^{(n)} \right]^T \be_j = \BE[XZ^j].
    \end{equation}
    In other words, for all $j\in [n],$
    \begin{equation}
        \BE  \left[ (X-E_n[X\mid Z]) Z^j \right] = 0.
    \end{equation}
    By linearity of expectation, we conclude that for any polynomial $p\in \SP_n$
    \begin{equation}
    \BE [(X-E_n[X \mid Z])p(Z)]=0.
    \end{equation}
    
    Properties~\ref{rr}-\ref{wo} follow immediately by Proposition~\ref{ro}. Alternatively, the linearity in~\ref{rr} follows from linearity of expectation in view of the formula for $E_n[\wc \mid Z]$ in~\eqref{ja}, the contractivity in~\ref{wn} follows since $0\le \pp_n(X\mid Y) = \|X\|_2^2 - \|E_n[X\mid Y]\|_2^2,$ the idempotence in~\ref{rt} follows directly from the fact that $E_n[X\mid Z]\in\SP_n(Z),$ and the self-adjointness in~\ref{wo} can be verified via formula~\eqref{ja}.

    If $X$ and $Z$ are independent, then
    \begin{equation}
        \BE\left[ X\bZ^{(n)} \right] = \BE [X] ~ \BE \left[ \bZ^{(n)} \right].
    \end{equation}
    Therefore, from $\bM_{Z,n}^{-1}\BE\left[\bZ^{(n)}\right]=\be_0$ in \eqref{md}, and in view of formula \eqref{ja} for $E_n[X\mid Z],$ we have that
    \begin{equation}
        E_n[X\mid Z] = \BE[X] \be_0^T \bZ^{(n)} = \BE[X],
    \end{equation}
    so \ref{rs} follows.
    
    Finally, for \ref{ru}, assume that $X$---$Y$---$Z$ is a Markov chain. Then, for every $j\in [n]$
    \begin{equation} \label{rw}
        \BE\left[ XZ^j \right] = \BE\left[ \BE\left[XZ^j \mid Y \right] \right] = \BE\left[ \BE\left[X \mid Y \right] \BE\left[Z^j \mid Y \right] \right].
    \end{equation}
    Further, we have that
    \begin{equation} \label{rx}
        \BE\left[ \BE\left[X \mid Y \right] \BE\left[Z^j \mid Y \right] \right] = \BE\left[ \BE\left[X \mid Y \right] Z^j \right]
    \end{equation}
    by the orthogonality property for $\BE[\wc \mid Y]$:
    \begin{equation}
         \BE\left[ \BE\left[X \mid Y \right] \left( Z^j - \BE\left[Z^j \mid Y \right] \right) \right] = 0.
    \end{equation}
    Combining \eqref{rw} and \eqref{rx}, we obtain that for every $j\in [n]$
    \begin{equation} \label{ry}
        \BE\left[ XZ^j \right] =  \BE\left[ \BE\left[X \mid Y \right] Z^j \right].
    \end{equation}
    Multiplying \eqref{ry} on the right by $\bM_{Z,n}^{-1}\bZ^{(n)},$ the formula for $E_n[\wc \mid Z]$ (see~\eqref{ja}) implies that 
    \begin{equation}
        E_n\left[ X \mid Z \right] = E_n\left[ \BE[X\mid Y] \mid Z \right],
    \end{equation}
    as desired.

\subsection{Proof of Proposition~\ref{wk}: PMMSE for Symmetric RVs} \label{wl}

We may assume that $X$ and $Z$ are symmetric around $0,$ since $E_{m}[X+a\mid X+Z+b] = a + E_m[X \mid X+Z]$ for every $m\in \BN$ and $a,b\in \BR.$ Then, $\BE[X^j]=\BE[Z^j]=0$ for every odd $j \in \BN.$ Set $Y=X+Z$ and $n=2k.$ Then, $\BE[Y^j]=0$ for every odd $j\in \BN,$ and $\BE[XY^\ell]=0$ for every even $\ell \in \BN.$ Then, the coefficient of $Y^n$ in $E_n[X \mid Y]$ is 
\begin{equation}
     \frac{1}{\det \bM_{Y,n}}  \sum_{\substack{\ell \in [n] \\ \ell ~ \text{odd}}} \BE\left[ XY^{\ell} \right] \left[ \bM_{Y,n}^{-1} \right]_{\ell,n},
\end{equation}
where $\left[ \bM_{Y,n}^{-1} \right]_{\ell,n}$ denotes the $(\ell,n)$-th entry of $\bM_{Y,n}^{-1}.$ Fix an odd $\ell \in [n].$ Let $T_n^{(\ell,n)}$ denote the set of permutations of $[n]$ that send $\ell$ to $n.$ We have that
\begin{equation}
    \left[ \bM_{Y,n}^{-1} \right]_{\ell,n} = -\sum_{\pi \in T_n^{(\ell,n)}} \mathrm{sgn}(\pi) \prod_{r \in [n]\setminus \{\ell\}} \BE\left[ Y^{r+\pi(r)} \right].
\end{equation}
We have that, for every $\pi \in T_n^{(\ell,n)},$ $\sum_{r\in [n] \setminus \{\ell\}} r+\pi(r) = n(n+1)-\ell-n,$ which is odd. Therefore, for at least one $r\in [n]\setminus \{\ell\},$ the integer $r+\pi(r)$ is odd. Hence, $\BE[Y^{r+\pi(r)}]=0,$ implying that $\left[ \bM_{Y,n}^{-1} \right]_{\ell,n}=0.$ As this is true for every odd $\ell\in [n],$ we conclude that the coefficient of $Y^n$ in $E_n[X\mid Y]$ is $0,$ i.e., \eqref{wi} holds. Equation~\eqref{wj} follows from~\eqref{wi} as $\pp_m(X\mid Y) = \|X-E_m[X\mid Y]\|_2^2$ for every $m\in \BN.$

\section{Proofs of Section~\ref{jh}}

\subsection{Proof of Theorem \ref{an}: Uniform Convergence of the PMMSE} \label{br}

We start the proof by obtaining from Proposition \ref{ar} pointwise convergence. Let $N\sim \calN(0,1)$ be independent of $X.$ We shall verify the assumptions in Proposition \ref{ar} on $\sqrt{t}X+N$ for fixed $t\ge 0$: \textbf{i)} As $N$ is continuous and $X$ is independent of $N$ we must have $|\mathrm{supp}(\sqrt{t}X+N)|=\infty,$ and \textbf{ii)} The MGF of $\sqrt{t}X+N$ exists (it is the product of the MGFs of $\sqrt{t}X$ and $N$) and this implies that $\sqrt{t}X+N$ satisfies Carleman's condition~\cite{Lubinsky2007}. Hence, by Proposition \ref{ar}, we get that for every $t\ge 0$
\begin{equation}\label{bg}
    \lim_{n\to \infty} \pp_n(X,t) = \mm(X,t).
\end{equation}
Now, we show that the convergence is uniform.

Set, for each $n\in \mathbb{N}$ and $t\in [0,\infty),$
\begin{equation}
    g_n(t):=\pp_n(X,t)-\mm(X,t)
\end{equation}
for short. We will show that 
\begin{equation} \label{eg}
    \lim_{n\rightarrow \infty} \sup_{t \in [0,\infty)} g_n(t) = 0,
\end{equation}
which is the uniform convergence in (\ref{ef}). 

The limit \eqref{bg} says that
\begin{equation} \label{ht}
    \lim_{n \to \infty} g_n(t) = 0
\end{equation}
for every fixed $t\ge 0.$ In addition, the asymptotics given in Corollary \ref{is} imply that for each fixed $n\in \mathbb{N}$
\begin{equation} \label{hu}
    \lim_{t \to \infty} g_n(t) = 0.
\end{equation}

By definition of the PMMSE as the minimum over sets of increasing size (in $n$), for each fixed $t\ge 0$ the sequence $\{\pp_n(X,t)\}_{n\in \mathbb{N}}$ is decreasing. So $\{g_n\}_{n\in \mathbb{N}}$ is a pointwise decreasing sequence of functions (i.e., $g_1(t)\ge g_2(t) \ge \cdots$ for each fixed $t\ge 0$). Note that the $g_n$ are nonnegative. We finish the proof via Cantor's intersection theorem.

Fix $\varepsilon>0.$ For each $n\in \mathbb{N},$ let $C_{\varepsilon,n}=g^{-1}_n([\varepsilon,\infty)),$ where $g^{-1}$ denotes the set-theoretic inverse. As $\{g_n\}_{n\in \mathbb{N}}$ is decreasing, $C_{\varepsilon,1}\supseteq C_{\varepsilon,2} \supseteq \cdots$ is decreasing too. As each $g_n$ is continuous, each $C_{\varepsilon,n}$ is closed. Further, $\lim_{t\to \infty} g_1(t) = 0$ implies that $C_{\varepsilon,1}$ is bounded, hence each $C_{\varepsilon,n}$ is bounded. Thus, each $C_{\varepsilon,n}$ is compact. But, the intersection $\bigcap_{n\in \mathbb{N}} C_{\varepsilon,n}$ is empty, for if $t_0$ were in the intersection then $\liminf_{n\to \infty} g_n(t_0)\ge \varepsilon$ violating that $\lim_{n\to \infty} g_n(t_0)=0.$ Hence, by Cantor's intersection theorem, it must be that the $C_{\varepsilon,n}$ are eventually empty, i.e., there is an $m\in \mathbb{N}$ such that $\sup_{t\in [0,\infty)} g_n(t)\le \varepsilon$ for every $n>m.$ This is precisely the uniform convergence in (\ref{eg}), and the proof is complete.

\subsection{Proof of Lemma \ref{ac}} \label{ki}

The integer $i+\pi(i)$ is odd if and only if $i$ and $\pi(i)$ have opposite parities. Thus, the desired result follows from the following more general characterization. For any partition $[n]=A\cup B,$ the cardinality of the set
 \begin{equation}
     I := \left\{ i \in \{1,\cdots,n\} \; ; \; (i,\pi(i))\in (A\times B)\cup (B\times A) \right\}
 \end{equation}
 is even. The desired result follows by letting $A$ and $B$ be even and odd integers, respectively, in $[n].$ Now, we show that the general characterization holds.
 
 Let $A_\pi\subset A$ denote the subset of elements of $A$ that get mapped by $\pi$ into $B,$ i.e.,
\begin{equation}
    A_\pi := \{ i \in A \; ; \; \pi(i)\in B \},
\end{equation}
and define $B_\pi$ similarly. Then, $I= A_\pi \cup B_\pi$ is a partition. As $|A_\pi|=|B_\pi|,$ the desired result follows.

\subsection{Proof of Lemma \ref{kl}} \label{km}

For each $(i,j)\in [n]^2$ let the subset $T_n^{(i,j)}\subset \sn$ denote the collection of permutations sending $i$ to $j,$ i.e.,
\begin{equation}
    T_n^{(i,j)} := \left\{ \pi \in \sn ~ ; ~ \pi(i) =j \right\}.
\end{equation}
We define, for each $(i,j)\in [n]^2,$ the cofactor functions $c_{X,n}^{(i,j)}:[0,\infty)\to \BR$ and the products $d_{X,n}^{(i,j)}:[0,\infty)\to \BR$ by
\begin{align}
    c_{X,n}^{(i,j)}(t) &:= \sum_{\pi \in T_n^{(i,j)}} \mathrm{sgn}(\pi) \prod_{k\neq i} \left(  \bM_{\sqrt{t}X+N,n} \right)_{k,\pi(k)}, \\
    d_{X,n}^{(i,j)}(t)&:= v_{X,i}(t) ~  c_{X,n}^{(i,j)}(t) ~ v_{X,j}(t).
\end{align}
Here, $\left(  \bM_{\sqrt{t}X+N,n} \right)_{a,b}$ is the $(a,b)$-th entry of $ \bM_{\sqrt{t}X+N,n},$ i.e.,
\begin{equation}
    \left(  \bM_{\sqrt{t}X+N,n} \right)_{a,b} = \BE\left[ \left(\sqrt{t}X+N \right)^{a+b} \right].
\end{equation}
Note that $c_{X,n}^{(i,j)}(t)$ is the $(i,j)$-th cofactor of $ \bM_{\sqrt{t}X+N,n}.$ The cofactor matrix $\bC_{X,n}:[0,\infty) \to \BR^{(n+1)\times (n+1)}$ of $t \mapsto \bM_{\sqrt{t}X+N,n}$ is given by
\begin{align}
    \bC_{X,n} &:=\left( c_{X,n}^{(i,j)} \right)_{(i,j)\in [n]^2}. \label{it}
\end{align}
We define the function $D_{X,t}:[0,\infty)\to \BR$ by
\begin{equation}
    D_{X,n} := \bv_{X,n}^T \bC_{X,n} \bv_{X,n}.
\end{equation}

We have the following two relations. First, $D_{X,n}$ is the sum of the $d_{X,n}^{(i,j)}$
\begin{equation}
    D_{X,n}(t) = \sum_{(i,j)\in [n]^2} d_{X,n}^{(i,j)}(t).
\end{equation}
Second, by Cramer's rule, and because symmetry of the matrix $\bM_{\sqrt{t}X+N,n}$ implies that its cofactor is equal to its adjugate, we have the formula 
\begin{equation}
    \bM_{\sqrt{t}X+N,n}^{-1}= \frac{1}{\det  \bM_{\sqrt{t}X+N,n}} ~ \bC_{X,n}.
\end{equation}
Therefore, we obtain
\begin{equation}\label{av}
    F_{X,n}(t) = \frac{D_{X,n}(t)}{\det  \bM_{\sqrt{t}X+N,n}} = \frac{\sum_{(i,j)\in [n]^2} d_{X,n}^{(i,j)}(t)}{\det  \bM_{\sqrt{t}X+N,n}}.
\end{equation}
Hence, it suffices to study the  $d_{X,n}^{(i,j)}.$

We start with a characterization of the cofactors $c_{X,n}^{(i,j)}.$ Namely, we show that if $i+j$ is even then $c_{X,n}^{(i,j)}(t)$ is a polynomial in $t,$ and if $i+j$ is odd then $\sqrt{t}c_{X,n}^{(i,j)}(t)$ is a polynomial in $t.$ If $i+j$ is even, then
\begin{equation}
c_{X,n}^{(i,j)}(t) = \sum_{\pi \in T_n^{(i,j)}} \mathrm{sgn}(\pi) ~ t^{\delta(\pi)/2} \quad \quad \prod_{\mathclap{\substack{k\in [n] \\ k+\pi(k) \; \text{odd} }}} \quad o_{1,X,k+\pi(k)}(t) \quad \quad \prod_{\mathclap{\substack{r \in [n], ~ r\neq i  \\  r+\pi(r) \; \text{even} \; }}} \quad  e_{1,X,r+\pi(r)}(t),
\end{equation}
whereas if $i+j$ is odd then
\begin{equation}
c_{X,n}^{(i,j)}(t) = \sum_{\pi \in T_n^{(i,j)}} \mathrm{sgn}(\pi) ~ t^{\frac{\delta(\pi)-1}{2}} \quad \quad \prod_{\mathclap{\substack{k\in [n], ~ k\neq i \\ k+\pi(k) \; \text{odd} }}} \quad  o_{1,X,k+\pi(k)}(t) \quad \quad \prod_{\mathclap{\substack{r \in [n] \\ r+\pi(r) \; \text{even} }}} \quad e_{1,X,r+\pi(r)}(t).
\end{equation}
Thus, evenness of $\delta(\pi)$ for each $\pi \in \sn$ implies that each $c_{X,n}^{(i,j)}(t)$ is a polynomial when $i+j$ is even and that each $\sqrt{t}c_{X,n}^{(i,j)}(t)$ is a polynomial when $i+j$ is odd. Further, the degree of $c_{X,n}^{(i,j)}$ for even $i+j$ is upper bounded by
\begin{equation}
\frac{\delta(\pi)}{2} + \sum_{k+\pi(k) \; \text{odd}} \frac{k+\pi(k)-1}{2} + \sum_{r+\pi(r) \; \text{even} \, ; \, r\neq i} \frac{r+\pi(r)}{2} = \frac{n(n+1)}{2} - \frac{i+j}{2},
\end{equation}
whereas the degree of $\sqrt{t}c_{X,n}^{(i,j)}$ and for odd $i+j$ is upper bounded by
\begin{equation}
\frac{\delta(\pi)}{2} + \sum_{k+\pi(k) \; \text{odd}\, ; \, k\neq i} \frac{k+\pi(k)-1}{2} + \sum_{r+\pi(r) \; \text{even}} \frac{r+\pi(r)}{2} = \frac{n(n+1)}{2} - \frac{i+j-1}{2}.
\end{equation}
We note that both upper bounds are equal to
\begin{equation}
    \frac{n(n+1)}{2} - \left\lfloor \frac{i+j}{2} \right\rfloor.
\end{equation}
Finally, considering the terms of highest order, we see that the term
\begin{equation}
    \sum_{\pi \in T_n^{(i,j)}} \mathrm{sgn}(\pi) \prod_{k\in [n]\setminus \{i\}} \calX_{k+\pi(k)}
\end{equation}
is the coefficient of $t^{ \frac{n(n+1)}{2} - \left\lfloor \frac{i+j}{2} \right\rfloor}$ in $c_{X,n}^{(i,j)}$ when $i+j$ is even and in $\sqrt{t}c_{X,n}^{(i,j)}$ when $i+j$ is odd.

Now, to show that $D_{X,n}$ is a polynomial, it suffices to check that each $d_{X,n}^{(i,j)}$ is. We consider separately the parity of $i+j$ and build upon the characterization of $c_{X,n}^{(i,j)}.$ If $i+j$ is even, so $i$ and $j$ have the same parity, then
$$
\mathbb{E}\left[ X  \left(\sqrt{t}X+N\right)^i  \right] \mathbb{E}\left[ X \left(\sqrt{t}X+N\right)^j  \right]
$$
is a polynomial in $t$ of degree at most $(i+j)/2$ with the coefficient of $t^{(i+j)/2}$ being $\calX_{i+1}\calX_{j+1}.$ If $i+j$ is odd, so $i$ and $j$ have different parities, then 
$$
t^{-1/2}\mathbb{E}\left[ X  \left(\sqrt{t}X+N\right)^i  \right] \mathbb{E}\left[ X  \left(\sqrt{t}X+N\right)^j   \right]
$$
is a polynomial in $t$ of degree at most $(i+j-1)/2$ with the coefficient of $t^{(i+j-1)/2}$ being $\calX_{i+1}\calX_{j+1}.$ 

Thus, from the characterization of $c_{X,n}^{(i,j)},$ regardless of the parity of $i+j$ we obtain that $d_{X,n}^{(i,j)}(t)$ is a polynomial in $t$ of degree at most $n(n+1)/2$ with the coefficient of $t^{n(n+1)/2}$ being 
\begin{equation}
    \calX_{i+1}\calX_{j+1} \sum_{\pi \in T_n^{(i,j)}} \mathrm{sgn}(\pi) \prod_{k\in [n]\setminus \{i\}} \calX_{k+\pi(k)}.
\end{equation}
Summing over all $(i,j)\in [n]^2,$ we obtain that the coefficient of $t^{n(n+1)/2}$ in $D_{X,n}(t)$ is
\begin{equation}
    \sum_{(i,j)\in [n]^2}  \sum_{\pi \in T_n^{(i,j)}} \mathrm{sgn}(\pi) \calX_{i+1}\calX_{j+1} \prod_{k\in [n]\setminus \{i\}} \calX_{k+\pi(k)}.
\end{equation}
The proof is completed by noting that, for each $i \in [n],$ we have a partition
\begin{equation}
\calS_{[n]} = \bigcup_{j\in [n]} T_n^{(i,j)}.
\end{equation}

\subsection{Proof of Lemma \ref{qy}} \label{rd}

The formulas for $a_X^{n,0}$ and $b_X^{n,0}$ follow by setting $t=0$ in \eqref{mv} and \eqref{mu}. Indeed, if $N\sim \calN(0,1)$ is independent of $X,$ then
\begin{equation}
    F_{X,n}(0) = \calX_1^2 ~ \BE\left[ \bN^{(n)} \right]^T \bM_{N,n}^{-1} ~ \BE\left[ \bN^{(n)} \right] = \calX_1^2
\end{equation}
because $\BE\left[ \bN^{(n)} \right]$ is the leftmost column of $\bM_{N,n}.$ Therefore, 
\begin{equation}
    a_X^{n,0} = \sigma_X^2 \det \bM_{N,n} = \sigma_X^2 b_X^{n,0}.
\end{equation}
Further, by direct computation or using the connection between Hankel matrices and orthogonal polynomials~\cite[Appendix A]{Simon1998} along with the fact that the probabilist's Hermite polynomials $q_k$ satisfy the recurrence $xq_k(x) = q_{k+1}(x)+kq_{k-1}(x),$ it follows that $\det \bM_{N,n} = \prod_{k=1}^{n} k! = G(n+2)$ where $G$ is the Barnes $G$-function. Equation~\eqref{qz} for $b_X^{n,d_n}$ is proved in Lemma~\ref{kk}. To simplify the proof of formula~\eqref{ra} for $b_X^{n,1},$ we first show the shift-invariance stated in~\eqref{rb}.

Fix $s\in \BR.$ For any i.i.d. RVs $Z,Z_0,\cdots,Z_n,$ we have that (see, e.g.,~\cite[Appendix A]{Simon1998})
\begin{equation} \label{dy}
    \det \bM_{Z,n} = \frac{1}{(n+1)!} ~ \mathbb{E} \left[ \prod_{0\le i < j \le n} (Z_i-Z_j)^2 \right].
\end{equation}
From equation (\ref{dy}), since $(Z_i+s)-(Z_j+s)= Z_i-Z_j,$ we obtain that
\begin{equation} \label{rc}
    \det \bM_{Z+s,n}=\det \bM_{Z,n}.
\end{equation}
Let $N\sim \calN(0,1)$ be independent of $X.$ Then, for every $t\in [0,\infty),$ considering $Z=\sqrt{t}X+N$ in \eqref{rc}, we obtain
\begin{equation} \label{dz}
    \det \bM_{\sqrt{t}(X+s)+N,n} = \det \bM_{\sqrt{t}X+N,n}.
\end{equation}
As both sides of (\ref{dz}) are polynomials in $t,$ we obtain that $b_{X+s}^{n,j} = b_{X}^{n,j}$ for every $j\in [d_n].$ Since we also have $\pp_n(X+s,t)=\pp_n(X,t),$ it follows that
\begin{equation}
    t \mapsto \sum_{j\in [d_n]} a_X^{n,j} t^j = \pp_n(X,t) \sum_{j\in [d_n]} b_X^{n,j} t^j
\end{equation}
is also invariant under shifting $X,$ so we also obtain $a_{X+s}^{n,j} = a_X^{n,j}.$

By the shift-invariance of $b_X^{n,1},$ we may assume that $\calX_1=0$ (so $\calX_2= \sigma_X^2$). Now, as each entry in $\bM_{\sqrt{t}X+N,n}$ is a polynomial in $\sqrt{t},$ we see that we may drop any term of order $(\sqrt{t})^3$ or above for the sake of finding $b_X^{n,1}$ (which is the coefficient of $t$ in $\det \bM_{\sqrt{t}X+N,n}$). In other words, 
\begin{equation}
    b_X^{n,1} = \det \left( \binom{i+j}{2} \sigma_X^2 \BE\left[ N^{i+j-2} \right] t + \BE\left[ N^{i+j} \right]   \right)_{(i,j)\in [n]^2} \hspace{-1pt}.
\end{equation}
By Leibniz's formula, we conclude
\begin{equation}
    b_X^{n,1} = \sigma_X^2  \sum_{\substack{\pi \in \sn \\ k\in [n]}}  \mathrm{sgn}(\pi) \binom{k+\pi(k)}{2}  \BE\left[ N^{k+\pi(k)-2}\right]  \prod_{i\in [n]\setminus \{k\}}  \BE\left[N^{i+\pi(i)}\right]. \label{eb}
\end{equation}
But, for any non-negative integer $m$
\begin{equation}
    \binom{m}{2} \BE\left[ N^{m-2} \right] = \frac{m}{2} \BE\left[ N^m \right].
\end{equation}
Therefore, (\ref{eb}) simplifies to
\begin{equation}
    b_X^{n,1} = \frac{\sigma_X^2}{2} \hspace{-5pt} \sum_{\substack{\pi \in \sn \\ k\in [n]}}  \mathrm{sgn}(\pi) (k+\pi(k))  \prod_{i\in [n]}  \BE\left[N^{i+\pi(i)}\right].
\end{equation}
Evaluating the summation over $k$ for each fixed $\pi,$ we obtain that
\begin{equation}
    b_X^{n,1} = \binom{n+1}{2} \sigma_X^2 \sum_{\pi \in \sn} \mathrm{sgn}(\pi) \prod_{i\in [n]}  \BE\left[N^{i+\pi(i)}\right].
\end{equation}
Finally, by Leibniz's formula for $\det \bM_{N,n},$ we obtain that
\begin{equation}
    b_X^{n,1} =  \binom{n+1}{2}\sigma_X^2  \det \bM_{N,n},
\end{equation}
as desired.

\subsection{Formulas for $a_X^{n,j}$} \label{re}

We have (see \eqref{mr}) the polynomial in $t$
\begin{equation}
    \left( \calX_2 - F_{X,n}(t) \right)\det \bM_{\sqrt{t}X+N,n} = \sum_{j\in [d_n]} a_X^{n,j} t^j.
\end{equation}
We obtain from \eqref{mw}
\begin{equation}
    \calX_2 \det \bM_{\sqrt{t}X+N,n} = \sum_{j\in [d_n]} \calX_2 b_{X}^{n,j} t^j.
\end{equation}
The function $t \mapsto F_{X,n}(t) \det \bM_{\sqrt{t}X+N,n}$ was also shown in Lemma \ref{kl} to be a polynomial of degree at most $d_n.$ By definition of $F_{X,n}$ in \eqref{mx}, we obtain that
\begin{equation} \label{ta}
    F_{X,n}(t) \det \bM_{\sqrt{t}X+N,n} = \bv_{X,n}(t)^T \bC_{\sqrt{t}X+N,n} \bv_{X,n}(t),
\end{equation}
where $\bC_{\sqrt{t}X+N,n}$ is the cofactor matrix of $\bM_{\sqrt{t}X+N,n}$ and $\bv_{X,n}$ is the vector-valued function defined in \eqref{ix} and \eqref{my}. As the entries in the cofactor matrix are $\pm 1$ times determinants of minors of $\bM_{\sqrt{t}X+N,n},$ we may use Leibniz's formula here too. Explicitly, if $\bu=(u_0,\cdots,u_n)^T\in \BR^{n+1}$ is any vector and $\bH=(h_{i+j})_{(i,j)\in [n]^2}\in \BR^{(n+1)\times (n+1)}$ is any Hankel matrix, and if $\bR$ is the cofactor matrix of $\bH,$ then
\begin{equation} \label{mz}
    \bu^T \bR \bu \hspace{2mm} =  \sum_{(i,\pi)\in [n] \times \sn}  (-1)^{i+\pi(i)} \hspace{1mm} \mathrm{sgn}(\pi) u_iu_{\pi(i)}  \prod_{r\in [n] \setminus \{i\}}  h_{r+\pi(r)}.
\end{equation}
Applying \eqref{mz} to the triplet
\begin{equation} 
    (\bu,\bH,\bR) = (\bv_{X,n}(t),\bM_{\sqrt{t}X+N,n},\bC_{\sqrt{t}X+N,n}),
\end{equation}
we obtain a formula for $F_{X,n}(t) \det \bM_{\sqrt{t}X+N,n}$ similar to \eqref{mz}. Then, analogously to how we obtained \eqref{na} from \eqref{amz}, expanding the powers and expectations we obtain from~\eqref{ta} and~\eqref{mz} that
\begin{align}
    F_{X,n}(t) \det \bM_{\sqrt{t}X+N,n} \hspace{3mm}   = \hspace{10mm} \sum_{\mathclap{\substack{
    (i,\pi) \in  [n]\times \sn \\
    (w,z) \in [i]\times [\pi(i)] \\
    k_r \in [r+\pi(r)], ~ \forall r\in [n]\setminus\{i\}}}} \hspace{5mm} t^{(w+z+s_i(\bk))/2} \gamma_{i,\pi,\bk,w,z} \calX_{w+1}\calX_{z+1}  \prod_{r\in [n]\setminus\{i\}}  \calX_{k_r}, \label{nb}
\end{align}
where $\gamma_{i,\pi,\bk,w,z}$ are the integers\footnote{An alternative proof of Lemma \ref{kl} is derivable via analysing when these coefficients are nonzero.}
\begin{align}
     \gamma_{i,\pi,\bk,w,z}=(-1)^{i+\pi(i)}\mathrm{sgn}(\pi)\binom{i}{w}\binom{\pi(i)}{z}\BE[N^{i-w}]\BE[N^{\pi(i)-z}]\prod_{r\in [n]\setminus\{i\}}  \binom{r+\pi(r)}{k_r} \BE[N^{r+\pi(r)-k_r}],
\end{align}
and where we define the restricted sums
\begin{equation}
    s_i(\bk)=\sum_{r\in [n]\setminus\{i\}} k_r.
\end{equation}
Since $F_{X,n}(t) \det \bM_{\sqrt{t}X+N,n}$ is shown in Lemma \ref{kl} to be a polynomial in $t,$ the summation in \eqref{nb} may be restricted to run through only those parameters for which $w+z+s_i(\bk)$ is even. Therefore,
\begin{align}
    F_{X,n}(t) \det \bM_{\sqrt{t}X+N,n} \hspace{3mm}  = \hspace{2mm} \sum_{j\in [d_n]} \hspace{1mm} t^j \hspace{13mm} \sum_{\mathclap{\substack{
    (i,\pi) \in  [n]\times \sn \\
    (w,z) \in [i]\times [\pi(i)] \\
    k_r \in [r+\pi(r)], ~ \forall r\in [n] \setminus\{i\}  \\
    w+z+s_i(\bk) = 2j}}} \hspace{8mm} \gamma_{i,\pi,\bk,w,z} \calX_{w+1}\calX_{z+1}  \prod_{r\in [n]\setminus\{i\}}  \calX_{k_r}.
\end{align}
Now, returning to the definition of the $a_X^{n,j}$ in \eqref{mv}
\begin{equation}
    \left( \calX_2 - F_{X,n}(t) \right)\det \bM_{\sqrt{t}X+N,n} = \sum_{j\in [d_n]} a_{X}^{n,j} t^j,
\end{equation}
we obtain that
\begin{align}
    a_X^{n,j} \hspace{3mm} = \hspace{1cm} \sum_{\mathclap{\substack{
    \pi \in  \sn \\ 
    k_r \in [r+\pi(r)], ~ \forall r\in [n] \\
    k_0+\cdots+k_n = 2j }}} \hspace{8mm}  \beta_{\pi;k_0,\cdots,k_n} \calX_2 \calX_{k_0}\cdots \calX_{k_n} \hspace{3mm} - \hspace{1cm} \sum_{\mathclap{\substack{
    (i,\pi) \in  [n]\times \sn \\
    (w,z) \in [i]\times [\pi(i)] \\
    k_r \in [r+\pi(r)], ~ \forall r\in [n] \setminus\{i\} \\
    w+z+s_i(\bk) = 2j}}} \hspace{1cm} \gamma_{i,\pi,\bk,w,z} \calX_{w+1}\calX_{z+1}  \prod_{r\in [n]\setminus\{i\}}  \calX_{k_r}. \label{nc}
\end{align}
From \eqref{nc}, we obtain
\begin{equation}
    a_X^{n,j} \in H_{2j+2,\min(n,2j)+2,\tau_n(j)}(X)
\end{equation}
where
\begin{equation}
    \tau_n(j) = \left\lbrace \begin{array}{cl}
    2 & \text{if } j=0, \\
    2j+1 & \text{if } 1\le j \le \frac{n}{2}, \\
    2j & \text{if } \frac{n+1}{2} \le j \le n, \\
    2n & \text{if } j>n. 
    \end{array} \right.
\end{equation}

\subsection{Proof of Lemma~\ref{sf}} \label{se}

Fix $\varepsilon>0.$ Let $W$ be a discrete RV with finite support such that $\BE\left[ W^k \right] = \mu_k$ for each $k\in [m],$ which exists by the solution to the truncated Hamburger moment problem~\cite[Theorem 3.1]{Curto91}. Set $A=\supp(W),$ and let $\{ q_a \}_{a\in A}\subset [0,1]$ be such that $p_W = \sum_{a\in A} q_a \delta_a$ is the PMF of $W.$ Then, for each $k\in [m],$ we have that
\begin{equation} \label{rh}
    \sum_{a\in A} q_a a^k = \mu_k.
\end{equation}
For each $s>0,$  $a \in A,$ and $k\in [m],$ we have that 
\begin{equation} \label{ri}
    \int_{a-s}^{a+s} \frac{z^k}{2s} \, dz = \frac{(a+s)^{k+1}-(a-s)^{k+1}}{2s (k+1)}.
\end{equation}
Further, we have the limit
\begin{equation}
    \lim_{s\to 0^+} \frac{(a+s)^{k+1}-(a-s)^{k+1}}{2s (k+1)} = a^k.
\end{equation}
Therefore, there exist constants $\eta_{a,k}>0,$ for $(a,k)\in A\times [m],$ such that $0 < s \le \eta_{a,k}$ implies
\begin{equation} \label{rj}
\left| \frac{(a+s)^{k+1}-(a-s)^{k+1}}{2s (k+1)} -a^k\right| < \varepsilon.
\end{equation}
Let $\eta = \min_{(a,k)\in A \times [m]} \eta_{a,k}$ and consider the continuous RV $Z$ whose PDF is given by
\begin{equation}
    p_Z(z) = \sum_{a\in A} \frac{q_a}{2\eta} 1_{[a-\eta,a+\eta]}(z).
\end{equation}
Thus, for each $k \in [m],$ 
\begin{equation}
    \BE\left[ Z^k \right] = \sum_{a\in A} q_a \cdot \frac{(a+\eta)^{k+1}-(a-\eta)^{k+1}}{2\eta (k+1)} \in \left( \sum_{a\in A} q_a (a^k-\varepsilon), \sum_{a\in A} q_a(a^k+\varepsilon)
    \right) = \left( \mu_k - \varepsilon, \mu_k+\varepsilon \right),
\end{equation}
which follow by \eqref{ri}, \eqref{rj}, and \eqref{rh}, respectively.

\subsection{Proof of Proposition~\ref{qx}} \label{sg}

We proceed by induction on $m.$ The case $m=1$ follows because then by assumption on $p$ we have that $p(k)=0$ for every positive integer $k$ as can be seen by taking $X\sim \calN(k,1),$ but the only polynomial with infinitely many zeros is the zero polynomial. Now, assume that the statement of the proposition holds for every polynomial in $m-1$ variables, where $m\ge 2.$

Fix a polynomial $p$ in $m$ variables, and assume that $p|_{\calC^m}=0.$ Regarding $p$ as a polynomial in one of the variables with coefficients being polynomials in the remaining $m-1$ variables, we may write
\begin{equation}
    p(u_1,\cdots,u_m) = \sum_{j\in [d]} p_j(u_1,\cdots,u_{m-1})u_m^j,
\end{equation}
for some polynomials $p_0,\cdots,p_d$ in $m-1$ variables, where $d$ is the total degree of $p.$ We show that $p=0$ identically by showing that each $p_j$ vanishes on $\calC_{m-1}$ and using the induction hypothesis to obtain $p_j=0$ identically.

Fix $\bmu = (\mu_1,\cdots,\mu_{m-1})\in \calC_{m-1}.$ Let $\mu_m$ be a variable, and set $\ell= \lfloor m/2 \rfloor.$ We have that $\ell=(m-1)/2$ if $m$ is odd, and $\ell=m/2$ if $m$ is even. Set $\bH=(\mu_{i+j})_{(i,j)\in [\ell]^2}.$ If $m$ is even, then $\det \bH = \alpha \mu_{m}+\beta$ for some constants $\alpha,\beta \in \BR$ determined by $\bmu,$ with $\alpha = \det (\mu_{i+j})_{(i,j)\in [\ell-1]^2} > 0.$ In the case $m$ is even, we set $t=-\beta/\alpha,$ and in the case $m$ is odd, we set $t=0.$ Then, $\bH$ is positive definite whenever $\mu_m>t.$

For each integer $k\ge 1$ and real $\varepsilon>0,$ Lemma~\ref{sf} yields a RV $X_{k,\varepsilon} \in \calR_m$ satisfying
\begin{equation}
    \delta_{k,j}(\varepsilon) := \BE[X_{k,\varepsilon}^j] - \mu_j \in (-\varepsilon,\varepsilon)
\end{equation}
for each $j\in \{1,\cdots,m-1\}$ and
\begin{equation}
    \delta_{k,m}(\varepsilon) := \BE[X_{k,\varepsilon}^m] - (t+k) \in (-\varepsilon,\varepsilon).
\end{equation}
Then, by assumption on $p,$ for every $\varepsilon>0$ and $k \in \BN_{\ge 1},$
\begin{equation}
    \sum_{j\in [d]} p_j(\mu_1+\delta_{k,1}(\varepsilon),\cdots,\mu_{m-1}+\delta_{k,m-1}(\varepsilon))(t+k+\delta_{k,m}(\varepsilon))^j = 0.
\end{equation}
Taking the limit $\varepsilon \to 0^+,$ we deduce that
\begin{equation} \label{wv}
    \sum_{j\in [d]} p_j(\mu_1,\cdots,\mu_{m-1})(t+k)^j = 0.
\end{equation}
Considering~\eqref{wv} as a univariate polynomial in $k,$ we see that its vanishing at infinitely many values of $k$ implies that
\begin{equation}
    p_j(\mu_1,\cdots,\mu_{m-1}) = 0
\end{equation}
for every $j\in [d].$ This holds for every $(\mu_1,\cdots,\mu_{m-1})\in \calC_{m-1},$ i.e., the premise of the proposition applies to each $p_j$ (namely, for every $X\in \calR_{m-1}$ we have $p_j(\BE[X],\cdots,\BE[X^{m-1}])=0$). By the induction hypothesis, we obtain $p_j = 0,$ as polynomials, for every $j\in [d].$ Therefore, $p=0,$ and the proof is complete.

\section{Proofs of Section~\ref{pc}}

\subsection{Proof of Proposition~\ref{ars}: Conditional Expectation Derivatives} \label{aad}

Recall that the conditional expectation can be expressed as
\begin{equation} \label{aab}
    \BE[Z \mid Y=y] = \frac{\BE\left[ Z e^{-(X-y)^2/2} \right]}{\BE\left[ e^{-(X-y)^2/2} \right]}
\end{equation}
for any RV $Z$ for which $Z e^{-(X-y)^2/2}$ is integrable. This formula applies for both $Z=X$ and $Z=(X-\BE[X\mid Y=y])^k,$ where $(y,k)\in \BR\times \BN,$ because they are polynomials in $X$ and the map $x\mapsto q(x)e^{-(x-y)^2/2}$ is bounded for any polynomial $q.$

Differentiating~\eqref{aab} for $Z=X$ and rearranging terms, we obtain
\begin{equation} \label{ob}
    \frac{d}{dy} ~ \BE[X \mid Y=y] = \frac{\BE\left[ (X-\BE[X \mid Y=y])^2 e^{-(X-y)^2/2} \right]}{\BE\left[ e^{-(X-y)^2/2} \right]},
\end{equation}
i.e., $f'=g_2.$ Note that $g_0 \equiv 1$ and $g_1 \equiv 0.$ Differentiating $g_r$ for $r\ge 1,$ we obtain that
\begin{equation} \label{oc}
    g_r' = g_{r+1} - r g_2 g_{r-1}.
\end{equation} 
We apply successive differentiation to $f'=g_2$ and recover patterns by utilizing~\eqref{oc} at each step. 

From $f'=g_2$ and \eqref{oc}, we infer the first few derivatives
\begin{equation} \label{oe}
    f^{(2)} = g_3, ~ f^{(3)} = g_4-3g_2^2, ~ f^{(4)} = g_5 - 10g_2g_3.
\end{equation}
We see a homogeneity in \eqref{oe}, namely, $f^{(r-1)}$ is an integer linear combination of terms of the form $g_{i_1}^{\alpha_1}\cdots g_{i_\ell}^{\alpha_\ell}$ with $i_1\alpha_1 + \cdots + i_\ell \alpha_\ell = r.$ This homogeneity can be shown to hold for a general $r$ by induction, which we show next. For most of the remainder of the proof, we forget the numerical values of the $f^{(k)}$ and the $g_r^{(k)}$ and only treat them as symbols satisfying $f'=g_2$ and $g_r'=g_{r+1}-rg_2g_{r-1}$ that respect rules of differentiation and which commute.

We call $\sum_{j=1}^\ell i_j \alpha_j$ the \textit{weighted degree} of any nonzero integer multiple of $g_{i_1}^{\alpha_1}\cdots g_{i_\ell}^{\alpha_\ell}.$ This is a well-defined degree because it is invariant to the way the product is arranged. We also say that a sum is of weighted degree $r$ if each summand is of weighted degree $r.$ To prove the claim of homogeneity, i.e., that $f^{(r-1)}$ is of weighted degree $r,$ we differentiate and apply the relation in \eqref{oc} to a generic term $g_{i_1}^{\alpha_1}\cdots g_{i_\ell}^{\alpha_\ell}$ whose weighted degree is $r.$ We have the derivative
\begin{equation} \label{of}
    \left( g_{i_1}^{\alpha_1}\cdots g_{i_\ell}^{\alpha_\ell} \right)' =  \left(g_{i_1}^{\alpha_1}\right)'\cdots g_{i_\ell}^{\alpha_\ell} + \cdots +  g_{i_1}^{\alpha_1}\cdots \left(g_{i_\ell}^{\alpha_\ell} \right)'.
\end{equation}
From \eqref{oc}, for integers $i,\alpha\ge 1,$ 
\begin{equation} \label{oh}
    \left(g_{i}^{\alpha}\right)' = \alpha g_i^{\alpha-1} g_{i+1} - \alpha i g_2 g_{i-1} g_i^{\alpha-1}.
\end{equation}
Therefore, the derivative of $g_i^{\alpha}$ has weighted degree $i\alpha +1.$ In other words, differentiation increased the weighted degree of $g_i^{\alpha}$ by $1.$ From \eqref{of}, then, we see that the weighted degree of $\left( g_{i_1}^{\alpha_1}\cdots g_{i_\ell}^{\alpha_\ell} \right)'$ is $r+1.$ Since $f'=g_2$ is of weighted degree $2,$ induction and linearity of differentiation yield that $f^{(r-1)}$ is of weighted degree $r$ for each $r\ge 2.$

Now, we fix the way we are writing products of the $g_i.$ We ignore explicitly writing $g_0$ and $g_1,$ collect identical terms into an exponent, and write lower indices first. One way to keep this notation is via integer partitions. Consider the ``homogeneous" sets
\begin{equation}
    G_r := \left\{ \sum_{\blambda \in \Pi_r} \beta_{\blambda} \bg^{\blambda} ~ ; ~ \beta_{\blambda} \in \BZ ~ \text{for each} ~ \blambda \in \Pi_r \right\}.
\end{equation}
The homogeneity property for the derivatives of $f$ can be written as $f^{(r-1)} \in G_r$ for each $r\ge 2.$ 

Next, we investigate the exact integer coefficients $h_{\blambda}$ in the expression of the derivatives of $f$ in terms of the $\bg^{\blambda}.$ Homogeneity of the derivatives of $f$ says that we may write each $f^{(r-1)},$ $r\ge 2,$ as an integer linear combination of $\{ \bg^{\blambda}\}_{\blambda \in \Pi_r}.$ One way to obtain such a representation is via repeated differentiation of $f'=g_2,$ applying the relation \eqref{oh}, and discarding any term that is a multiple of $g_1.$ Applying these steps, we arrive at representations
\begin{equation}
    f^{(r-1)} = \sum_{\blambda \in \Pi_r} h_{\blambda} \bg^{\blambda}, ~~~ c_{\blambda} \in \BZ.
\end{equation}

The terms $\bg^{\bnu}$ that appear upon differentiating a term $\bg^{\blambda}$ can be described as follows. For $(\lambda_2,\cdots,\lambda_\ell)=\blambda \in \Pi_r,$ we call $\lambda_2$ the leading term of $\blambda.$ Consider for a tuple $\blambda \in \Pi_r$ the following two sets of tuples $\tau_+(\blambda),\tau_-(\blambda) \subset \Pi_{r+1}$:
\begin{itemize}
    \item The set $\tau_+(\blambda)$ consists of all tuples obtainable from $\blambda$ via replacing a pair $(\lambda_i,\lambda_{i+1})$ with $(\lambda_i-1,\lambda_{i+1}+1)$ (so, necessarily $\lambda_i\ge 1$) while keeping all other entries unchanged;
    
    \item The set $\tau_-(\blambda)$ consists of all tuples obtainable from $\blambda$ via replacing a pair $(\lambda_{i-1},\lambda_i),$ for which $i\ge 3,$ with the pair $(\lambda_{i-1}+1,\lambda_i-1)$ (so, necessarily $\lambda_i \ge 1$) and additionally increasing the leading term by $1$ while keeping all other terms unchanged.
\end{itemize} 
For example, if $\blambda = (0,5,0,1) \in \Pi_{20}$ then
\begin{equation}
    \tau_+(\blambda) = \{ (0,4,1,1),(0,5,0,0,1) \} \subset \Pi_{21}
\end{equation}
and
\begin{equation}
    \tau_-(\blambda) = \{(2,4,0,1),(1,5,1) \} \subset \Pi_{21}.
\end{equation}
The relation \eqref{oh} yields, in view of
\begin{equation} \label{ok}
    \left( g_{2}^{\lambda_2}\cdots g_{\ell}^{\lambda_\ell} \right)' =  \left(g_{2}^{\lambda_2}\right)'\cdots g_{\ell}^{\lambda_\ell} + \cdots +  g_{2}^{\lambda_2}\cdots \left(g_{\ell}^{\lambda_\ell} \right)',
\end{equation}
that
\begin{equation} \label{ol}
    \left( \bg^{\blambda} \right)' = \sum_{\bnu \in \tau_+(\blambda)} a_{\blambda,\bnu}  \bg^{\bnu} - \sum_{\bnu \in \tau_-(\blambda)} b_{\blambda,\bnu} \bg^{\bnu}
\end{equation}
for some positive integers $a_{\blambda,\bnu}$ and $b_{\blambda,\bnu},$ which we describe next. Finding $a_{\blambda,\bnu}$ and $b_{\blambda,\bnu}$ can be straightforwardly done from \eqref{oh} in view of \eqref{ok}. If $\bnu \in \tau_+(\blambda),$ say
\begin{equation}
    (\nu_i,\nu_{i+1}) = (\lambda_i-1,\lambda_{i+1}+1),
\end{equation}
then $a_{\blambda,\bnu} = \lambda_i.$ If $\bnu \in \tau_-(\blambda),$ say
\begin{equation}
    (\nu_{i-1},\nu_{i}) = (\lambda_{i-1}+1,\lambda_{i}-1),
\end{equation}
then $b_{\blambda,\bnu} = i \lambda_i.$ In our example of $\blambda = (0,5,0,1),$ we get
\begin{align}
    a_{(0,5,0,1),(0,4,1,1)} &= 5 \\
    a_{(0,5,0,1),(0,5,0,0,1)} &= 1,
\end{align}
whereas
\begin{align}
    b_{(0,5,0,1),(2,4,0,1)} &= 15 \\
    b_{(0,5,0,1),(1,5,1)} &= 5.
\end{align}
Note that the two sets $\tau_+(\blambda)$ and $\tau_-(\blambda)$ are disjoint because, e.g., the sum of entries of a tuple in $\tau_+(\blambda)$ is the same as that for $\blambda,$ whereas the sum of entries of a tuple in $\tau_-(\blambda)$ is one more than that for $\blambda.$

We next describe how to use what we have shown thus far to deduce a recurrence relation for the  $h_{\blambda}.$ Let $\theta$ be a process inverting $\tau,$ i.e., define for $\bnu \in \Pi_{r+1}$ the two sets
\begin{equation}
    \theta_+(\bnu) := \left\{ \blambda \in \Pi_{r} ~ ; ~ \bnu \in \tau_+(\blambda) \right\}
\end{equation}
and 
\begin{equation}
    \theta_-(\bnu) := \left\{ \blambda \in \Pi_r ~ ; ~ \bnu \in \tau_-(\blambda) \right\}
\end{equation}
The two sets $\theta_+(\bnu)$ and $\theta_-(\bnu)$ are disjoint because the two sets $\tau_+(\blambda)$ and $\tau_-(\blambda)$ are disjoint for each fixed $\blambda.$ Recall our process for defining $h_{\blambda}$: we start with $f'=g_2,$ so $h_{(1)}=1$; we successively differentiate $f'=g_2$; after each differentiation, we use \eqref{oh} and \eqref{ok} (recall that we have the understanding $g_i^0=1$); we discard any ensuing multiple of $g_1$; after $r-2$ differentiations, we get an equation $f^{(r-1)}=\sum_{\blambda \in \Pi_r} h_{\blambda} \bg^{\blambda},$ which we take to be the definition of the $h_{\blambda}.$ The point here is that it could be that $f^{(r-1)}$ is representable as an integer linear combination of the $\bg^{\blambda}$ in more than one way, which can only be verified after the numerical values for the $g_i$ are taken into account, but we are not doing that: our approach treats the $g_i$ as symbols following the laid out rules. Now, we look at one of the steps of this procedure, starting at differentiating $f^{(r-1)} = \sum_{\blambda \in \Pi_r} h_{\blambda} \bg^{\blambda},$ so $f^{(r)} = \sum_{\blambda \in \Pi_r} h_{\blambda} \left( \bg^{\blambda} \right)'.$ Replacing $(\bg^{\blambda})'$ via \eqref{ol},
\begin{equation}
    f^{(r)} = \sum_{\blambda \in \Pi_r} \left( \sum_{\bnu \in \tau_+(\blambda)} a_{\blambda,\bnu}  \bg^{\bnu} - \sum_{\bnu \in \tau_-(\blambda)} b_{\blambda,\bnu} \bg^{\bnu} \right).
\end{equation}
Exchanging the order of summations (for which we use $\theta$),
\begin{equation}
    f^{(r)} = \sum_{\bnu \in \Pi_{r+1}} \left( \sum_{\blambda \in \theta_+(\bnu)} h_{\blambda} a_{\blambda,\bnu} - \sum_{\blambda \in \theta_-(\bnu)} h_{\blambda} b_{\blambda,\bnu} \right) \bg^{\bnu}.
\end{equation}
Therefore, by definition of the $h_{\blambda},$ we have the recurrence: for each $\bnu \in \Pi_{r+1}$ 
\begin{equation} \label{om}
    h_{\bnu} = \sum_{\blambda \in \theta_+(\bnu)} h_{\blambda} a_{\blambda,\bnu} - \sum_{\blambda \in \theta_-(\bnu)} h_{\blambda} b_{\blambda,\bnu}, \quad h_{(1)}=1.
\end{equation}
One instance of this recurrence is, e.g.,
\begin{equation}
    h_{(2,1)} = 3h_{(3)} - 4h_{(1,0,1)}-6h_{(0,2)}.
\end{equation}

Now, we show that the recurrence in~\eqref{om} also generates $e_{\blambda}$ as defined in~\eqref{sp}. For $(\lambda_2,\cdots,\lambda_\ell)=\blambda\in\Pi_r,$ denote $\sigma(\blambda) = \lambda_2+\cdots+\lambda_\ell.$ If $\bnu \in \tau_+(\blambda)$ then $\sigma(\bnu)=\sigma(\blambda),$ and if $\bnu\in \tau_-(\blambda)$ then $\sigma(\bnu)=\sigma(\blambda)+1.$ Therefore, $\blambda \in \theta_+(\bnu)$ implies $\sigma(\bnu)=\sigma(\blambda),$ and $\blambda\in \theta_-(\bnu)$ implies $\sigma(\bnu)=\sigma(\blambda)+1.$ Multiplying~\eqref{om} by $(-1)^{\sigma(\bnu)-1}$ yields the equivalent recurrence
\begin{equation} \label{aac}
    t_{\bnu} = \sum_{\blambda \in \theta_+(\bnu)} t_{\blambda} a_{\blambda,\bnu} + \sum_{\blambda \in \theta_-(\bnu)} t_{\blambda} b_{\blambda,\bnu}, \quad t_{(1)}=1,
\end{equation}
where $t_{\blambda}:= (-1)^{\sigma(\blambda)-1}h_{\blambda}.$ We show that $c_{\blambda}=(-1)^{\sigma(\blambda)-1}e_{\blambda}$ (see~\eqref{sp}) satisfies this recurrence, which is equivalent to $e_{\blambda}$ satisfying the recurrence~\eqref{om}. Clearly, $c_{(1)}=1,$ so consider $c_{\bnu}$ for $\bnu \in \Pi_r$ with $r\ge 3.$

Consider labelled elements $s_1,s_2,\cdots,$ and let $S_k=\{s_1,\cdots,s_k\}$ for each $k\ge 2.$ For any $\blambda\in \Pi_k,$ let $\calC_{\blambda}$ be the set of arrangements of cyclically-invariant set-partitions of $S_k$ according to $\blambda,$ so $|\calC_{\blambda}|=c_{\blambda}.$ Now, fix $\bnu \in \Pi_{r+1},$ and we will build $\calC_{\bnu}$ from the $\calC_{\blambda}$ where $\blambda$ ranges over $\theta_+(\bnu)\cup \theta_-(\bnu).$ Consider first $\blambda\in\theta_+(\bnu),$ where a partition in $\calC_{\bnu}$ is constructed from a partition in $\calC_{\blambda}$ by appending $s_{r+1}$ to one of the parts of the latter partition. Note that adding $s_{r+1}$ to two distinct partitions of $S_r$ cannot produce the same partition of $S_{r+1}$; indeed, just removing $s_{r+1}$ shows that that is impossible. Now, let $i$ be the unique index such that $(\nu_i,\nu_{i+1}) = (\lambda_i-1,\lambda_{i+1}+1).$ Then, a partition $\calP\in \calC_{\bnu}$ of $S_{r+1}$ is induced by a partition $\calP'\in\calC_{\blambda}$ of $S_r$ if and only if $s_{r+1}$ is added to a part in $\calP'$ of size $i,$ of which there are exactly $\lambda_i = a_{\blambda,\bnu}.$ Therefore, we get a contribution of $\sum_{\blambda \in \theta_+(\bnu)} c_{\blambda} a_{\blambda,\bnu}$ towards $c_{\bnu},$ which is the first part in~\eqref{aac}. 

For the second part, $\sum_{\blambda \in \theta_-(\bnu)} c_{\blambda} b_{\blambda,\bnu},$ we consider the remaining ways of generating a partition in $\calC_{\bnu}$ from a partition according to some $\blambda\in \theta_-(\bnu).$ In this case, $s_{r+1}$ is not appended to an existing part, but it is used to create a new part of size $2.$ Thus, we need to also move an element $s_j,$ $1\le j \le r,$ from a part of size at least $3$ to be combined with $s_{r+1}$ to create a new part of size $2.$ It is also clear in this case that such a procedure applied to two distinct partitions in $\calC_{\blambda}$ cannot produce the same partition in $\calC_{\bnu}.$ Let $i$ be the unique index for which $(\nu_{i-1},\nu_{i}) = (\lambda_{i-1}+1,\lambda_{i}-1).$ There are $\lambda_i$ parts to choose from, and $i$ elements to choose from once a part is chosen, so there are a total of $i\lambda_i = b_{\blambda,\bnu}$ ways to generate a partition in $\calC_{\bnu}$ from a partition in $\calC_{\blambda}.$ This gives the second sum in~\eqref{aac}, and we conclude that
\begin{equation}
    c_{\bnu} = \sum_{\blambda \in \theta_+(\bnu)} c_{\blambda} a_{\blambda,\bnu} + \sum_{\blambda \in \theta_-(\bnu)} c_{\blambda} b_{\blambda,\bnu}.
\end{equation}

Therefore, the $c_{\blambda}$ and the $t_{\blambda}$ satisfy the same recurrence, which takes the form: for $\bnu\in \Pi_{r+1}$ there are integers $\{ d_{\blambda,\bnu} \}_{\blambda\in \Pi_r}$ such that
\begin{equation} \label{aah}
    u_{\bnu} = \sum_{\blambda\in \Pi_r} d_{\blambda,\bnu} u_{\blambda}
\end{equation}
with the initial condition $u_{(1)}=1.$ Then, we can induct on $r$ to conclude that the $c_{\blambda}$ and the $t_{\blambda}$ are the same sequence. Since $\Pi_2= \{(1)\},$ we see that $c_{\blambda}=t_{\blambda}$ for every $\blambda \in \Pi_2.$ Suppose $r\ge 2$ is such that $c_{\blambda}=t_{\blambda}$ for every $\blambda \in \Pi_r.$ Hence, for every $\bnu\in \Pi_{r+1},$ we have that 
\begin{equation} \label{aai}
    \sum_{\blambda\in \Pi_r} d_{\blambda,\bnu} c_{\blambda} = \sum_{\blambda\in \Pi_r} d_{\blambda,\bnu} t_{\blambda}.
\end{equation}
Since both sequences $c_{\blambda}$ and $t_{\blambda}$ satisfy the recurrence \eqref{aah}, we obtain from~\eqref{aai} that $c_{\bnu}=t_{\bnu}$ for every $\bnu \in \Pi_{r+1}.$ Therefore, we obtain by induction that $c_{\blambda} = t_{\blambda}$ for every $\blambda\in \Pi_r$ for every $r,$ as desired.

\subsection{Proof of Theorem~\ref{if}} \label{aaf}

Fix $p\in \SD,$ suppose $X\sim p,$ and write $Y=X+N$ and $p_Y=e^{-Q}.$ First, we note that $Q'(y)$ is equal to $\mathbb{E}[N \mid Y=y].$

\begin{lemma} \label{ii}
Fix a random variable $X$ and let $Y=X+N$ where $N\sim \mathcal{N}(0,1)$ is independent of $X.$ Writing $p_Y=e^{-Q},$ we have that $ Q'(y) = \mathbb{E}[N \mid Y=y].$
\end{lemma}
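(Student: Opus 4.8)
The plan is to compute $Q'$ directly from the convolution representation of $p_Y$ and to match the result against $\BE[N\mid Y=y]$ via the trivial identity $N = Y-X$.

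First I would recall the convolution formula \eqref{st}, namely $p_Y(y) = (2\pi)^{-1/2}\,\BE[e^{-(X-y)^2/2}]$, and differentiate it in $y$. For $y$ ranging over any fixed compact interval the $y$-derivative of the integrand, $x\mapsto (x-y)e^{-(x-y)^2/2}$, is bounded uniformly in both variables (since $|ue^{-u^2/2}|\le e^{-1/2}$ for every $u\in\BR$), and $P_X$ is a probability measure, so dominated convergence justifies differentiating under the expectation to obtain
\[
    p_Y'(y) = \frac{1}{\sqrt{2\pi}}\,\BE\!\left[(X-y)e^{-(X-y)^2/2}\right].
\]
Since the integrand defining $p_Y$ is strictly positive, $p_Y>0$ everywhere, so $Q=-\log p_Y$ is well-defined and differentiable with $Q'(y) = -p_Y'(y)/p_Y(y)$; the $(2\pi)^{-1/2}$ factors cancel, giving
\[
    Q'(y) = -\,\frac{\BE\!\left[(X-y)e^{-(X-y)^2/2}\right]}{\BE\!\left[e^{-(X-y)^2/2}\right]}.
\]

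Next I would use $N=Y-X$ to write $\BE[N\mid Y=y] = y-\BE[X\mid Y=y]$ and substitute the conditional-expectation formula \eqref{aab} with $Z=X$, which is legitimate because $x\mapsto xe^{-(x-y)^2/2}$ is bounded and hence integrable against $P_X$. This yields
\[
    \BE[N\mid Y=y] = y - \frac{\BE\!\left[Xe^{-(X-y)^2/2}\right]}{\BE\!\left[e^{-(X-y)^2/2}\right]} = \frac{\BE\!\left[(y-X)e^{-(X-y)^2/2}\right]}{\BE\!\left[e^{-(X-y)^2/2}\right]},
\]
which is exactly the expression for $Q'(y)$ derived above. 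Comparing the two displays establishes $Q'(y)=\BE[N\mid Y=y]$.

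The only genuine technical point is the interchange of differentiation and expectation, and this is routine precisely because the relevant integrands are bounded functions of $x$ uniformly for $y$ in compact sets; there is no obstacle beyond recording the uniform dominating bound. In particular, no integrability hypothesis on $X$ is required, which is consistent with the standing assumptions used elsewhere in this section.
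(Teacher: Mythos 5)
Your proof is correct and follows essentially the same route as the paper's: differentiate the convolution formula $p_Y(y)=\BE[e^{-(X-y)^2/2}]/\sqrt{2\pi}$ under the expectation (justified by the boundedness of $u\mapsto ue^{-u^2/2}$), write $Q'=-p_Y'/p_Y$, and identify the resulting ratio with $y-\BE[X\mid Y=y]=\BE[N\mid Y=y]$. The only cosmetic difference is that you invoke the conditional-expectation formula explicitly and fold in the substitution $N=Y-X$ at the end, whereas the paper states the intermediate identity $Q'(y)=y-\BE[X\mid Y=y]$ first; the content, including the observation that no integrability of $X$ is needed, is identical.
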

\begin{proof}
We have that $p_Y(y) = \mathbb{E}[e^{-(y-X)^2/2}]/\sqrt{2\pi}.$ Differentiating this equation, we obtain that $p_Y'(y) = \mathbb{E}[(X-y)e^{-(y-X)^2/2}]/\sqrt{2\pi},$ where the exchange of differentiation and integration is warranted since $t\mapsto te^{-t^2/2}$ is bounded. Now, $Q=-\log p_Y,$ so $Q' = -p_Y'/p_Y,$ i.e.,
\begin{equation} \label{ij}
    Q'(y) = y - \frac{\mathbb{E}[Xe^{-(y-X)^2/2}]}{\mathbb{E}[e^{-(y-X)^2/2}]} = y - \BE\left[ X \mid Y = y \right].
\end{equation}
The proof is completed by substituting $X= Y-N.$
\end{proof}

In view of Lemma~\ref{ii}, that $p$ is even and non-increasing over $[0,\infty)\cap \mathrm{supp}(p)$ imply that $Q$ satisfies conditions \ref{aak}--\ref{aan} of Definition~\ref{ig}. It remains to show that property~\ref{aao} holds. To this end, we show that if $\supp(p) \subset [-M,M]$ and $\lambda=M+2,$ then for every $y>M+4$ we have that  
\begin{equation}
    1<\frac{M^2+5M+8}{2(M+2)} \le \frac{Q'(\lambda y)}{Q'(y)} \le \frac{M^2+7M+8}{4}.
\end{equation}
First, since $Q'(y)=y-\BE[X\mid Y=y]$ (see \eqref{ij}), we have the bounds $y-M \le Q'(y) \le y+M$ for every $y\in \BR.$ Therefore, $y>M$ and $\lambda>1$ imply that
\begin{equation}
    \frac{\lambda y - M}{y+M} \le \frac{Q'(\lambda y)}{Q'(y)} \le \frac{\lambda y +M}{y-M}.
\end{equation}
Further, since $y>M+4$ and $\lambda=M+2,$ we have
\begin{equation}
    \frac{M^2+5M+8}{2(M+2)} < \lambda - \frac{M(M+3)}{y+M} = \frac{\lambda y - M}{y+M}
\end{equation}
and
\begin{equation}
    \frac{\lambda y +M}{y-M} = \lambda + \frac{M(M+3)}{y-M} \le \frac{M^2+7M+8}{4}.
\end{equation}
The fact that $1<\frac{M^2+5M+8}{2(M+2)}$ follows since the discriminant of $M^2+3M+4$ is $-7<0.$ Therefore, $p_Y$ is a Freud weight.

\subsection{Proof of Inequality \texorpdfstring{\eqref{sx}}{(245)}} \label{aag}

By Lemma~\ref{ii}, 
\begin{equation}
    Q'(y)= \BE[N\mid Y=y] = y - \BE[X \mid Y=y].
\end{equation}
Therefore $X\le M$ implies that, for any constant $z \ge 0,$ we have
\begin{align}
    \int_0^1 \frac{ztQ'(zt)}{\sqrt{1-t^2}} \, dt &= \frac{\pi}{4} z^2 - z \int_0^1 \hspace{-4pt} \frac{t}{\sqrt{1-t^2}}  \frac{\mathbb{E}\left[ X e^{-(X-zt)^2/2} \right]}{ \mathbb{E}\left[e^{-(X-zt)^2/2}\right]} dt \\
    &\ge \frac{\pi}{4} z^2 - Mz.
\end{align}
We have $\pi z^2/4-Mz>n$ for $z=(2M+\sqrt{2})\sqrt{n}.$ Since $y \mapsto yQ'(y)$ is strictly increasing over $(0,\infty)$ (condition~\ref{aam} of Definition~\ref{ig}), we conclude that $a_n(Q)\le (2M+\sqrt{2})\sqrt{n}.$

\section{Proof of Proposition \ref{js}: $h_n$ Under Affine Transformations} \label{lj}

Fix $n \in \BN_{\ge 1}$ and let $N\sim \calN(0,1)$ be independent of $Y.$ From equation \eqref{ai} in Proposition \ref{hn}, for any fixed $t\ge 0$
\begin{align}
    \pp_n(\alpha Y+\beta,t) 
    &= \pp_n(\alpha Y+\beta \mid \sqrt{t} (\alpha Y + \beta) + N)  \\
    &= \pp_n(\alpha Y+\beta \mid \sqrt{\alpha^2 t}  Y + \beta\sqrt{t} + N) \\
    &= \pp_n(\alpha Y \mid \sqrt{\alpha^2 t}  Y  + N).
\end{align}
Also, from equation \eqref{aj} in Proposition \ref{hn},
\begin{equation}
    \pp_n(\alpha Y \mid \sqrt{\alpha^2 t}  Y  + N) = \alpha^2 \pp_n( Y \mid \sqrt{\alpha^2 t}  Y  + N) = \alpha^2 \pp_n(Y,\alpha^2 t).
\end{equation}
Combining these two relations, we obtain
\begin{equation}
    \pp_n(\alpha Y+\beta,t) = \alpha^2 \pp_n(Y,\alpha^2 t).
\end{equation}
Therefore, by definition of $h_n$ (see equation \eqref{jp})
\begin{align}
    h_n(\alpha Y + \beta) &= \frac12 \int_0^\infty  \pp_n(\alpha Y+\beta , t) - \frac{1}{2\pi e + t} \, dt \\
    &= \frac12 \int_0^\infty  \alpha^2 \pp_n(Y,\alpha^2 t) - \frac{1}{2\pi e + t} \, dt \\
    &= \frac12 \int_0^\infty   \left( \pp_n(Y,\alpha^2 t) - \frac{1}{2\pi e \alpha^2 + \alpha^2 t} \right) \alpha^2 dt.
\end{align}
Performing the substitution $u=\alpha^2 t,$
\begin{equation} \label{jq}
    h_n(\alpha Y+ \beta) = \frac12 \int_0^\infty \pp_n(Y,u) - \frac{1}{2\pi e \alpha^2 + u} \, du.
\end{equation}
Finally, from \eqref{hx} (with $a=1/(2\pi e \alpha^2)$ and $b=1/(2\pi e)$), we obtain
\begin{equation} \label{jr}
    \frac12 \int_0^\infty \frac{1}{2\pi e \alpha^2 + u} - \frac{1}{2\pi e + u} \, du = \frac12 \log \frac{1}{\alpha^2} = -\log |\alpha|.
\end{equation}
Note that the integrand in the right hand side of \eqref{jq} and the integrand in the left hand side of \eqref{jr} are both absolutely integrable over $(0,\infty).$ Therefore, adding \eqref{jq} and \eqref{jr}, and utilizing the definition of $h_n$ in \eqref{jp} again, we obtain
\begin{equation}
    h_n(\alpha Y+ \beta)-\log |\alpha| = \frac12 \int_0^\infty  \pp_n(Y,u) - \frac{1}{2\pi e + u} \, du = h_n(Y),
\end{equation}
and the proof is complete.

\section{Proofs of Section~\ref{jc}}

\subsection{Proof of Lemma~\ref{pk}} \label{wx}

Fix $X,$ and let $W$ be its orthogonal projection onto the closed subspace $\calV.$ Consider the symmetric matrix $\bM=\BE[\bV \bV^T].$ Since $\bc^T \bM \bc = \BE[|\bc^T \bV|^2]$ for each $\bc\in \BR^{n+1},$ linear independence of the $V_j$ implies that $\bM$ is positive-definite. Let $\bM^{1/2}$ be the unique lower-triangular matrix with positive diagonal such that $\bM^{1/2} \left(\bM^{1/2}\right)^{T} = \bM,$ and denote its inverse by $\bM^{-1/2}.$ Let $(U_0,\cdots,U_n)^T = \bU = \bM^{-1/2}\bV.$ Then, $\{U_j\}_{j\in [n]} \subset \calV$ and $\BE[\bU\bU^T] = \bI_{n+1}.$ Therefore, $\{U_j\}_{j\in [n]}$ is an orthonormal basis of $\calV$; indeed, it is the output of Gram-Schmidt orthonormalization on $\{V_j\}_{j\in [n]}.$ Then, because $W$ is the orthogonal projection of $X$ onto $\calV,$ we can express $W$ as
\begin{equation} \label{pl}
    W = \sum_{j\in [n]} \BE[XU_j] ~ U_j = \BE[X\bU^T] \bU.
\end{equation}
Plugging the defining formulas of $\bU$ and $\bM$ into \eqref{pl},
\begin{equation}
    W = \BE\left[ X \bV \right]^T \BE\left[ \bV \bV^T \right]^{-1} \bV.
\end{equation}
Finally, being the orthogonal projection of $X$ onto $\calV,$ $W$ is the unique closest element in $\calV$ to $X$; hence, equation \eqref{pj} follows.

\subsection{Proof of Corollary~\ref{pd}} \label{wy}

Fix $q\ge 1.$ For each $j,$ Carleman's condition on $Y_j$ yields that the set of polynomials in $Y_j,$ i.e., $\bigcup_{n\in \BN} \SP_n(Y_j),$ is dense in $L^{2q}(\sigma(Y_j)).$ Therefore, by Theorem \ref{pe},
\begin{equation}
    \overline{\bigcup_{n\in \BN} \SP_{n,m}(\bY)} = L^q(\sigma(\bY)).
\end{equation}
Now, fix $(f_1,\cdots,f_m)^T=\bff \in L^{q}(\BR^m,\sigma(\bY)).$ For each $j,$ $f_j \in L^{q}(\sigma(\bY)).$ Hence, there is a sequence $\{g_{j,n} \}_{n\in \BN} \subset \SP_{n,m}(\bY)$ such that $f_j = \lim_{n\to \infty} g_{j,n}$ in $L^{q}(\sigma(\bY))$-norm. Set $\bg_n = (g_{1,n},\cdots,g_{m,n})^T,$ and note that $\bg_n \in \SP_{n,m}^m(\bY).$ By definition of the norm in $L^q(\BR^m,\sigma(\bY)),$ we deduce
\begin{equation}
    \lim_{n\to \infty} \|\bff - \bg_n \|_q^q = \lim_{n\to \infty} \sum_{j=1}^m \|f_j - g_{j,n}\|_q^q = 0,
\end{equation}
and the desired denseness result follows.

\subsection{Proof of Theorem~\ref{wu}} \label{xa}

Since the $Y_j$ do not satisfy a polynomial relation, the matrix $\bM_{\bY,n}$ is invertible for each $n\in \BN.$ Further, the entries of $\bY^{(n,m)}$ are linearly independent for each $n.$ Then, by Lemma~\ref{pk}, equation~\eqref{wz} follows, i.e., $E_n[\bX\mid \bY]$ is the $\ell$-RV whose $k$-th entry is $\BE\left[X_k \bY^{(n,m)}\right]^T \bM_{\bY,n}^{-1} \bY^{(n,m)}.$ By Corollary~\ref{pd}, since each $Y_j$ satisfies Carleman's condition, the set of vectors of polynomials $\bigcup_{n\in \BN} \SP_{n,m}^m(\bY)$ is dense in $L^2(\BR^m,\sigma(\bY)).$ In particular, $\bigcup_{n\in \BN} \SP_{n,m}(\bY)$ is dense in $L^2(\sigma(\bY)).$ By Theorem~\ref{pi}, we have the $L^2(\sigma(\bY))$ limits
\begin{equation}
        \BE[X_k\mid \bY] = \lim_{n\to \infty} \BE\left[X_k \bY^{(n,m)}\right]^T \bM_{\bY,n}^{-1} \bY^{(n,m)}
\end{equation}
for each $k\in \{1,\cdots,\ell\}.$ We conclude that $E_n[\bX\mid \bY] \to \BE[\bX \mid \bY]$ in $L^2(\BR^\ell,\sigma(\bY)),$ as desired.

\subsection{Proof of Proposition~\ref{xb}} \label{xc}

Set $\bY = (Y_1,Y_2)^T.$ Equation \eqref{pm} is straightforward: since $E_n[X \mid Y_1] \in \SP_n(Y_1) \subset \SP_{n,2}(\bY),$ the projection of $E_n[X \mid Y_1]$ onto $\SP_{n,2}(\bY)$ is $E_n[X \mid Y_1]$ again. Equation \eqref{pn} also follows by an orthogonal projection argument. There is a unique representation $X=p_{1,2}+p_{1,2}^{\perp}$ for $(p_{1,2},p_{1,2}^{\perp}) \in \SP_{n,2}(\bY)\times \SP_{n,2}(\bY)^{\perp}.$ There is also a unique representation $p_{1,2}=q_2 + q_2^{\perp}$ for $(q_2,q_2^{\perp})\in \SP_n(Y_2)\times \SP_n(Y_2)^{\perp}.$ The projection of $X$ onto $\SP_{n,2}(\bY)$ is $p_{1,2},$ whose projection onto $\SP_n(Y_2)$ is $q_2,$ i.e.,
\begin{equation} \label{po}
    E_n\left[ E_n[X \mid Y_1,Y_2] \mid Y_2 \right] = q_2.
\end{equation}
Furthermore, we have the representation $X=q_2+(q_2^{\perp}+p_{1,2}^{\perp}),$ for which $(q_2,q_2^{\perp}+p_{1,2}^{\perp}) \in \SP_n(Y_2)\times \SP_n(Y_2)^{\perp}.$ Hence, the projection of $X$ onto $\SP_n(Y_2)$ is $q_2$ too, i.e.,
\begin{equation} \label{pp}
    E_n[X\mid Y_2] = q_2.
\end{equation}
From \eqref{po} and \eqref{pp} we get \eqref{pn}. Equation \eqref{pn} can also be deduced from the formula of $W:=\BE[X \mid \bY].$ Denote $\bY_2^{(n)} = (1,Y_2,\cdots,Y_2^n)^T.$ We have that
\begin{equation} \label{pq}
    W = \BE\left[ X \bY^{(n,2)} \right]^T \bM_{\bY,n}^{-1} \bY^{(n,2)}
\end{equation}
and
\begin{equation} \label{pr}
    E_n[W \mid Y_2] = \BE\left[ W \bY_2^{(n)} \right]^T \bM_{Y_2,n}^{-1} \bY_2^{(n)}.
\end{equation}
For $k\in [n],$ let $\delta(k)\in \left[ \binom{n+2}{2} - 1 \right]$ be the index of the entry in $\bY^{(n,2)}$ that equals $Y_2^k.$ Then,
\begin{equation}
    \BE\left[ Y_2^k \bY^{(n,2)} \right] = \bM_{\bY,n} \be_{\delta(k)},
\end{equation}
where $\be_0,\cdots,\be_{\binom{n+2}{2}-1}$ are the standard basis vectors of $\BR^{\binom{n+2}{2}}.$ Therefore, plugging \eqref{pq} into \eqref{pr}, we obtain
\begin{equation}
    E_n[W \mid Y_2] = \BE\left[ X \bY_2^{(n)} \right]^T \bM_{Y_2,n}^{-1} \bY_2^{(n)},
\end{equation}
which is just $E_n[X \mid Y_2],$ as desired.

\section{Multidimensional MMSE Dimension (Theorem~\ref{qm})} \label{qo}

In this proof, we will denote the Euclidean norm in $\BR^m$ by the unsubscripted $\|\wc\|,$ so if $\bN=(N_1,\cdots,N_m)^T$ then $\|\bN\| = \sqrt{N_1^2+\cdots+N_m^m}$ is a RV. 

Our proof is a straightforward extension of the proof for the one-dimensional case given in~\cite{Wu2011}. In particular, we use the concept of approximations to the identity.

\begin{definition} \label{qa}
The set of functions $\{K_\delta : \BR^m \to \BR \}_{\delta>0}$ is called \emph{an approximation to the identity} if
\begin{enumerate}[label = (\roman*)]

    \item \label{tf} $\int_{\BR^m} K_{\delta}(\by) \, d\by = A$ for each $\delta>0,$
    
    \item \label{tg} $|K_\delta(\by)| \le B \delta^{-m}$ for each $\delta>0,$ and $\by \in \BR^m,$
    
    \item \label{th} $|K_\delta(\by)| \le B \delta/\|\by\|^{m+1}$ for each $\delta >0$ and $\by \in \BR^m \setminus \{0\},$
    
\end{enumerate}
where $A$ and $B$ are constants that are independent of $\delta$ and $\by.$
\end{definition}

We apply the following theorem \cite[Theorem 2.1]{Stein2019} stating that approximations to the identity closely approximate an identity operation under convolution. Recall that a point $\bx\in \BR^m$ is called a Lebesgue point of a function $f:\BR^m \to \BR$ if
\begin{equation}
    \lim_{r\to 0^+} \frac{1}{\mu_L(B_r(\bx))} \int_{B_r(\bx)} |f(\by)-f(\bx)| \, d\by = 0,
\end{equation}
where $B_r(\bx) \subset \BR^m$ is the open ball around $\bx$ of radius $r,$ and $\mu_L$ is the Lebesgue measure in $\BR^m.$

\begin{theorem} \label{qb}
If $\{ K_\delta : \BR^m \to \BR\}_{\delta>0}$ is an approximation to the identity with $\int_{\BR^m} K_{\delta}(\by) \, d\by = A$ for every $\delta>0$ (see Definition~\ref{qa}), then for every $f\in L^1(\BR^m)$ we have that
\begin{equation}
    \lim_{\delta \to 0^+} (K_\delta \ast f)(\bx) = A\cdot f(\bx)
\end{equation}
at each Lebesgue point $\bx$ of $f.$
\end{theorem}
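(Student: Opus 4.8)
The plan is to prove convergence at a Lebesgue point $\bx$ by estimating the difference $(K_\delta \ast f)(\bx) - A f(\bx)$ directly. First I would use property~\ref{tf} of Definition~\ref{qa} to absorb the constant $A$ into the integral, writing
\begin{equation}
    (K_\delta \ast f)(\bx) - A f(\bx) = \int_{\BR^m} K_\delta(\by) \left[ f(\bx - \by) - f(\bx) \right] \, d\by,
\end{equation}
which is absolutely convergent for each fixed $\delta$ since property~\ref{tg} gives $|K_\delta(\by)| \le B\delta^{-m}$ and $f\in L^1(\BR^m).$ Setting $c_m = \mu_L(B_1(0))$ so that $\mu_L(B_r(0)) = c_m r^m,$ the Lebesgue point hypothesis (after the change of variables $\by \mapsto \bx - \by$) states that
\begin{equation}
    \eta(r) := \frac{1}{c_m r^m} \int_{\|\by\| \le r} \left| f(\bx - \by) - f(\bx) \right| \, d\by
\end{equation}
satisfies $\eta(r) \to 0$ as $r\to 0^+.$ I would also record that $\eta$ is bounded on $(0,\infty)$ by some constant $M$: it tends to $0$ near the origin, while the crude estimate $\le \|f\|_{L^1}/(c_m r^m) + |f(\bx)|$ keeps it bounded for $r$ away from $0.$

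Next I would split the domain into the near region $\{\|\by\|\le \delta\}$ and the far region $\{\|\by\|>\delta\}.$ On the near region, property~\ref{tg} gives
\begin{equation}
    \left| \int_{\|\by\|\le \delta} K_\delta(\by) \left[ f(\bx-\by) - f(\bx) \right] d\by \right| \le B \delta^{-m} \int_{\|\by\|\le \delta} \left| f(\bx - \by) - f(\bx) \right| d\by = B c_m \, \eta(\delta),
\end{equation}
which tends to $0$ as $\delta \to 0^+$ by the Lebesgue point property. For the far region I would decompose dyadically into the annuli $\{2^k\delta < \|\by\| \le 2^{k+1}\delta\}$ for $k\ge 0$ and apply property~\ref{th}. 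On the $k$-th annulus $|K_\delta(\by)| \le B\delta/(2^k\delta)^{m+1},$ so integrating against $|f(\bx-\by)-f(\bx)|$ over the enclosing ball of radius $2^{k+1}\delta$ and invoking the definition of $\eta$ produces a bound of $B c_m 2^m 2^{-k}\eta(2^{k+1}\delta)$ per annulus (the powers of two coming from property~\ref{th} and from $\mu_L(B_{2^{k+1}\delta}(0))$ combine to leave the summable factor $2^{-k}$). Summing over $k,$ the far region is controlled by $B c_m 2^m \sum_{k\ge 0} 2^{-k}\eta(2^{k+1}\delta).$

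I expect the last step to be the main obstacle. The difficulty is that for fixed small $\delta$ the arguments $2^{k+1}\delta$ sweep across all scales as $k$ grows, so $\eta(2^{k+1}\delta)$ is not uniformly small, and one cannot simply pass the limit inside the series. I would handle it by splitting the sum at a threshold: given $\varepsilon>0,$ choose $\rho>0$ with $\eta(r)<\varepsilon$ for $0<r<\rho.$ The terms with $2^{k+1}\delta<\rho$ contribute at most $\varepsilon \sum_{k\ge 0} 2^{-k} = 2\varepsilon,$ whereas the remaining terms, bounded using $\eta\le M,$ form a geometric tail $\sum_{2^{k+1}\delta\ge \rho} 2^{-k}$ of order $M\delta/\rho,$ which vanishes as $\delta\to 0^+.$ Thus $\limsup_{\delta\to 0^+}$ of the far-region sum is at most $2 B c_m 2^m\varepsilon,$ and letting $\varepsilon\to 0$ shows the far region vanishes. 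Combining this with the near-region estimate yields $(K_\delta \ast f)(\bx) - A f(\bx)\to 0,$ completing the proof.
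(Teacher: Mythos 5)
Your proof is correct and is essentially the canonical argument for this statement: the paper itself does not prove Theorem~\ref{qb} but imports it from~\cite[Theorem 2.1]{Stein2019}, whose proof uses exactly your decomposition into the ball $\{\|\by\|\le \delta\}$ handled by property~\ref{tg} and dyadic annuli handled by property~\ref{th}, with the series $\sum_{k\ge 0} 2^{-k}\eta(2^{k+1}\delta)$ controlled by splitting at a threshold $\rho$ just as you do. One small gloss: the absolute convergence of $\int_{\BR^m} K_\delta(\by) f(\bx)\,d\by$ requires $K_\delta \in L^1(\BR^m)$, which follows from property~\ref{th} at infinity together with property~\ref{tg} near the origin (not from~\ref{tg} alone), but your annulus estimates supply this in any case.
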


We use the following auxiliary result showing that the Lebesgue set is Borel.

\begin{lemma} \label{yb}
The set of Lebesgue points of an integrable Borel function is Borel.
\end{lemma}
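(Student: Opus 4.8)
The plan is to exhibit the Lebesgue set as a countable combination of Borel sets. Writing $\omega_m := \mu_L(B_1(\mathbf{0}))$, so that $\mu_L(B_r(\bx)) = \omega_m r^m$, define the averaging functional
\[
    \Psi(\bx,r) := \frac{1}{\mu_L(B_r(\bx))} \int_{B_r(\bx)} |f(\by)-f(\bx)| \, d\by ,
\]
so that the Lebesgue set is $L = \{ \bx \in \BR^m : \lim_{r\to 0^+} \Psi(\bx,r) = 0 \}$. I would prove the lemma by establishing two facts: first, that $(\bx,r)\mapsto \Psi(\bx,r)$ is Borel measurable on $\BR^m \times (0,\infty)$, and second, that for each fixed $\bx$ the map $r\mapsto \Psi(\bx,r)$ is continuous on $(0,\infty)$. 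Granting these, $L$ can be written in a countable form that is manifestly Borel.

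The main obstacle is that the integrand depends on $\bx$ both through the domain $B_r(\bx)$ and through the subtracted value $f(\bx)$; the latter blocks a direct appeal to the continuity of the convolution operator $h\mapsto A_r h$. To circumvent this I would treat $G(\bx,\by):=|f(\by)-f(\bx)|$ as a function of the pair $(\bx,\by)$: since $f$ is Borel, $(\bx,\by)\mapsto (f(\bx),f(\by))$ is Borel and $(a,b)\mapsto |b-a|$ is continuous, so $G$ is Borel on $\BR^m\times\BR^m$. Then
\[
    \Psi(\bx,r) = \frac{1}{\omega_m r^m} \int_{\BR^m} 1_{\{\|\by-\bx\|<r\}}\, G(\bx,\by)\, d\by ,
\]
where the integrand $(\bx,r,\by)\mapsto 1_{\{\|\by-\bx\|<r\}} G(\bx,\by)$ is nonnegative and Borel on $\BR^m\times(0,\infty)\times\BR^m$ (the indicator is that of an open set). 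Tonelli's theorem then makes the partial integral over $\by$ a Borel function of $(\bx,r)$, and dividing by the continuous nonvanishing factor $\omega_m r^m$ preserves this. I expect this Tonelli step to be the crux; the only care needed is to run it on the genuine product Borel $\sigma$-algebra, which here coincides with the Borel $\sigma$-algebra of the product because all the spaces involved are second countable. Fixing $r$ then shows that each slice $\bx\mapsto \Psi(\bx,r)$ is Borel.

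For the continuity in $r$, I would fix $\bx$ and set $h := |f(\cdot)-f(\bx)| \in L^1_{\mathrm{loc}}(\BR^m)$; as $r\to r_0>0$ one has $1_{B_r(\bx)} \to 1_{B_{r_0}(\bx)}$ pointwise almost everywhere (the bounding sphere is $\mu_L$-null) and dominated by $1_{B_{r_0+1}(\bx)}h \in L^1$, so dominated convergence gives $\int_{B_r(\bx)} h \to \int_{B_{r_0}(\bx)} h$ and hence $\Psi(\bx,r)\to\Psi(\bx,r_0)$. Because $\Psi(\bx,\cdot)$ is continuous and nonnegative, $\lim_{r\to 0^+}\Psi(\bx,r)=0$ is equivalent to requiring that for every $k\ge 1$ there exist $n\ge 1$ with $\Psi(\bx,r)\le 1/k$ for all $r\in(0,1/n)$, and by continuity this last quantifier may be restricted to the dense countable set $\mathbb{Q}\cap(0,1/n)$. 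This yields
\[
    L = \bigcap_{k\ge 1} \bigcup_{n\ge 1} \bigcap_{r\in\mathbb{Q}\cap(0,1/n)} \{ \bx\in\BR^m : \Psi(\bx,r)\le 1/k \}.
\]
Each set in braces is Borel, being the preimage of a closed half-line under the Borel slice $\bx\mapsto\Psi(\bx,r)$, and Borel sets are closed under countable unions and intersections; hence $L$ is Borel, as desired.
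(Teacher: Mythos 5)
Your proof is correct and follows essentially the same route as the paper's: establish Borel measurability of the averaged difference via Tonelli applied to the jointly Borel integrand $1_{\{\|\by-\bx\|<r\}}|f(\by)-f(\bx)|$, prove continuity in $r$ for fixed $\bx$, and then write the Lebesgue set as a countable intersection/union over rational radii. The only differences are cosmetic (normalizing by $\mu_L(B_r(\bx))$ rather than $r^m$, dominated instead of monotone convergence for the continuity step, and proving joint $(\bx,r)$-measurability when only each $r$-slice is needed).
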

\begin{proof}
Fix a Borel function $f\in L^1(\BR^m),$ denote the set of its Lebesgue points by $\calL,$ and define the function $F: \BR^m \times (0,\infty) \to [0,\infty)$ by
\begin{equation}
    F(\bx,r) := \frac{1}{r^m} \int_{B_r(\bx)} |f(\by)-f(\bx)| \, d\by.
\end{equation}
Then $\calL=\{\bx \in \BR^m ~ ; ~ \lim_{r\to 0^+} F(\bx,r) = 0 \}.$ By definition of limits, the set $\calL$ might be rewritten as
\begin{equation}
    \calL = \bigcap_{k \in \BN} \bigcup_{\delta \in \BQ_{>0}} \bigcap_{r\in (0,\delta)} \left\{ \bx \in \BR^m ~ ; ~ F(\bx,r) \le 2^{-k} \right\}.
\end{equation}
Now, we show that the uncountable intersection over $r\in (0,\delta)$ may be replaced with a countable intersection over only the rational points $r \in  \BQ \cap (0,\delta).$ For this, it will suffice to show that, for each fixed $\bx\in \BR^m,$ the function $r\mapsto F(\bx,r)$ is continuous over $r\in (0,\infty).$ Clearly, the function $r\mapsto r^{-m}$ is continuous. In addition, by the monotone convergence theorem, we obtain the continuity of the function
\begin{equation}
    r \mapsto \int_{B_r(\bx)} |f(\by)-f(\bx)| \, d\by.
\end{equation}
Therefore, $r\mapsto F(\bx,r)$ is continuous too. Thus, if $F(\bx,r) \le 2^{-k}$ for every $r\in \BQ \cap (0,\delta),$ the density of $\BQ$ in $\BR$ implies that $F(\bx,r) \le 2^{-k}$ for every $r\in (0,\delta).$ Hence, we may write
\begin{equation}
    \calL = \bigcap_{k \in \BN} \bigcup_{\delta \in \BQ_{>0}} \bigcap_{r\in \BQ \cap (0,\delta)} \left\{ \bx \in \BR^m ~ ; ~ F(\bx,r) \le 2^{-k} \right\}.
\end{equation}
Finally, we show that each set $\{\bx \in \BR^m ~ ; ~ F(\bx,r) \le 2^{-k}\}$ is a Borel set (for fixed $r>0$ and $k\in \BN$) by showing that $\bx \mapsto F(\bx,r)$ is a Borel function. Indeed, we may write $F(\bx,r) = \int_{\BR^m} g(\bx,\by) \, d\nu(\by)$ where $\nu$ is the restriction of the Lebesgue measure of $\BR^{m}$ to the Borel subsets, and
\begin{equation}
    g(\bx,\by) := \frac{1}{r^m} |f(\by)-f(\bx)| \cdot 1_{B_r(\bx)}(\by).
\end{equation}
Note that $1_{B_r(\bx)}(\by) = 1_A(\bx,\by),$ where $A=\{(\bx,\by) \in \BR^{2m} ~ ; ~ \|\bx-\by\|_2 < r \}$ is a Borel set. Thus, $g$ is a Borel function. Therefore, by the Fubini-Tonelli theorem~\cite[Section 2.5, Theorem 14]{Adams} $\bx \mapsto \int_{\BR^m} g(\bx,\by) \, d\nu (\by)$ is a Borel function, and the proof is complete.
\end{proof}

\subsection{Proof of Theorem~\ref{qm}}

Denote $\delta = 1/\sqrt{t},$ and consider the $m$-RVs $\bY_{\delta} := \bX+\delta \bN.$ We have that
\begin{align}
    \mm(\bX \mid \sqrt{t}\bX+\bN) &=  \mm(\bX \mid \bX+ \delta \bN) = \left\| \bX - \BE\left[ \bX \mid \bY_\delta \right] \right\|_2^2 = \left\| \bX - \BE\left[ \bY_\delta - \delta \bN \mid \bY_{\delta} \right] \right\|_2^2 \\
    &= \delta^2 \left\|\bN - \BE\left[ \bN \mid \bY_{\delta} \right] \right\|_2^2 = \delta^2 ~ \mm\left( \bN \mid \bY_{\delta} \right).
\end{align}
Therefore, the statement of the theorem is equivalent to
\begin{equation} \label{tc}
    \lim_{\delta \to 0^+} \mm(\bN \mid \bY_\delta ) = \mathrm{tr}~\Sigma_{\bN}.
\end{equation}
To show that~\eqref{tc} holds, it suffices to show that
\begin{equation} \label{pt}
    \liminf_{\delta \to 0^+} ~ \mm(\bN \mid \bY_\delta ) \ge \mathrm{tr}~\Sigma_{\bN}.
\end{equation}
Indeed, the constant estimator $\psi(\by) = \BE[\bN]$ attains an error of $\mathrm{tr}~\Sigma_{\bN}$ when estimating $\bN$ given $\bY_{\delta},$ hence
\begin{equation}
    \limsup_{\delta \to 0^+} ~ \mm(\bN \mid \bY_\delta ) \le \mathrm{tr}~\Sigma_{\bN}.
\end{equation}
Fix $\varepsilon>0.$ By the square-integrability assumption on $\bN,$ there is an $M \ge 0$ such that 
\begin{equation} \label{td}
    \left\| \bN 1_{(M,\infty)}(\|\bN\|) \right\|_2 \le \varepsilon.
\end{equation}
By the triangle inequality in $L^2(\BR^m,\calF),$
\begin{align}
    \mm\left(\bN \mid \bY_{\delta}\right)^{1/2} &= \left\| \bN - \BE\left[ \bN \mid \bY_{\delta} \right] \right\|_2 \\
    &\ge \left\| \bN - \BE\left[ \bN 1_{[0,M]}(\|\bN\|) \mid \bY_{\delta} \right] \right\|_2 - \left\| \BE\left[ \bN 1_{(M,\infty)}(\|\bN\|) \mid \bY_{\delta} \right]\right\|_2. \label{qe}
\end{align}
Now, since $\left\| \BE\left[ \bN 1_{(M,\infty)}(\|\bN\|) \mid \bY_{\delta} \right]\right\|_2 \le  \left\| \bN 1_{(M,\infty)}(\|\bN\|) \right\|_2,$ we obtain from~\eqref{td} that
\begin{equation}
    \mm\left(\bN \mid \bY_{\delta}\right)^{1/2} \ge \left\| \bN - \BE\left[ \bN 1_{[0,M]}(\|\bN\|) \mid \bY_{\delta} \right] \right\|_2 - \varepsilon.
\end{equation}
Therefore, to obtain~\eqref{pt}, it suffices to show that
\begin{equation} \label{te}
    \liminf_{\delta \to 0^+} \left\| \bN - \BE\left[ \bN 1_{[0,M]}(\|\bN\|) \mid \bY_{\delta} \right] \right\|_2 \ge \sqrt{\mathrm{tr}~\Sigma_{\bN}}
\end{equation}
since $\varepsilon$ is arbitrary. Now, consider the function $\bff:\BR^m \times \BR^m \times (0,\infty) \to \BR^m$ given by
\begin{equation}
    \bff(\bx,\bz;\delta) := \BE\left[ \bN ~ 1_{[0,M]}(\|\bN\|) \mid \bY_{\delta} = \bx + \delta \bz \right],
\end{equation}
and note that we have
\begin{equation}
    \bff(\bX,\bN;\delta) = \BE\left[ \bN ~ 1_{[0,M]}(\|\bN\|) \mid \bY_{\delta}  \right].
\end{equation}
We will show that
\begin{equation} \label{pu}
    \lim_{\delta \to 0^+} \bff(\bX,\bN;\delta) = \BE\left[ \bN ~ 1_{[0,M]}(\|\bN\|) \right]
\end{equation}
almost surely. The limit in~\eqref{pu} suffices to conclude the proof of the theorem, because it implies inequality~\eqref{te} via Fatou's lemma. Indeed, from~\eqref{pu} we obtain
\begin{equation}
    \liminf_{\delta \to 0^+} \left\| \bN - \bff(\bX,\bN;\delta) \right\|_2 \ge \left\| \bN - \BE\left[ \bN ~ 1_{[0,M]}(\|\bN\|) \right] \right\|_2 \ge \left\| \bN - \BE\left[ \bN  \right] \right\|_2 = \sqrt{\mathrm{tr}~\Sigma_{\bN}},
\end{equation}
where the penultimate step follows because $\BE[\bN]$ is the best constant estimator of $\bN$ in the mean-square sense. Thus, the rest of the proof is devoted to showing that the limit in~\eqref{pu} holds.

Note that
\begin{equation}
    p_{\bY_{\delta}}(\by) = \frac{1}{\delta^m} \BE\left[ p_{\bN}\left( \frac{1}{\delta} \left( \by-\bX \right) \right) \right] = \BE\left[ p_{\bX}\left( \by - \delta \bN  \right) \right].
\end{equation}
We have the conditional expectation formulas
\begin{align}
    \BE\left[ \bX \mid \bY_\delta = \by \right] &= \BE\left[ \bX ~ p_{\bN}\left( \frac{1}{\delta} \left( \by-\bX \right) \right)\right]  \cdot \frac{1}{\BE\left[ p_{\bN}\left( \frac{1}{\delta} \left( \by-\bX \right) \right) \right]}, \\
    \BE\left[ \bN \mid \bY_\delta = \by \right] &= \BE\left[ \bN ~ p_{\bX}\left(\by - \delta \bN \right) \right] \cdot \frac{1}{\BE\left[ p_{\bX}\left( \by - \delta \bN  \right) \right]}, \\
    \bff(\bx,\bz;\delta) &= \BE\left[ \bN ~ 1_{[0,M]}(\|\bN\|) ~ p_{\bX}(\bx+\delta\bz-\delta \bN) \right] \cdot \frac{1}{p_{\bY_{\delta}}(\bx+\delta\bz)}.
\end{align}
Define the function $\bg:\BR^m \times \BR^m \times (0,\infty) \to \BR^m$
\begin{align}
    \bg(\bx,\bz;\delta) &:=  \BE\left[ \bN ~ 1_{[0,M]}(\|\bN\|) ~ p_{\bX}(\bx+\delta\bz-\delta \bN) \right],
\end{align}
and note that
\begin{equation} \label{tl}
    \bff(\bx,\bz;\delta) = \bg(\bx,\bz;\delta) \cdot  \frac{1}{p_{\bY_\delta}(\bx+\delta\bz)}.
\end{equation}
Let $\calL \subset \BR^m$ be the Lebesgue set of $p_{\bX},$ $\calS := \{ \bx\in \BR^m ~ ; ~ p_{\bX}(\bx)>0 \},$ and $\calT:= \calL \cap \calS.$ We will show that the limit in \eqref{pu} holds almost surely by showing that the following two limits hold for every $(\bx,\bz) \in \calT \times \BR^m$
\begin{align}
    \lim_{\delta \to 0^+} \bg(\bx, \bz;\delta) &= \BE\left[ \bN ~ 1_{[0,M]}(\|\bN\|) \right] p_{\bX}(\bx) \label{pv}, \\
    \lim_{\delta \to 0^+} p_{\bY_{\delta}}(\bx + \delta \bz) &= p_{\bX}(\bx) \label{pw}.
\end{align}
We describe first how \eqref{pv} and \eqref{pw} imply \eqref{pu}. We have that $P_{\bX}(\calS)=1$ because $\calS$ is a Borel set and
\begin{equation}
    P_{\bX}(\calS^c) = \int_{\calS^c} p_{\bX}(\bx) \, d\bx = \int_{\calS^c} 0 \, d\bx = 0.
\end{equation}
By the Lebesgue differentiation theorem, since $p_{\bX}\in L^1(\BR^m),$ the complement $\calL^c$ has zero Lebesgue measure. Further, $\calL$ is a Borel set by Lemma~\ref{yb}, since $p_{\bX}$ is a Borel function. Thus, as $P_{\bX}$ is absolutely continuous with respect to the Lebesgue measure, we deduce that $P_{\bX}(\calL^c)=0,$ i.e., $P_{\bX}(\calL)=1.$ Therefore, $P_{\bX}(\calT)=1,$ from which we obtain $P_{\bX,\bN}(\calT\times \BR^m) = P_{\bX}(\calT)=1.$ In other words, if~\eqref{pv} and~\eqref{pw} hold for every $(\bx,\bz)\in \calT \times \BR^m$ then~\eqref{pu} holds almost surely. We now focus on showing~\eqref{pv} and~\eqref{pw}.

Fix $\bz \in \BR^m,$ and denote $\bff=(f_1,\cdots,f_m)^T$ and $\bg=(g_1,\cdots,g_m)^T.$ We apply Theorem~\ref{qb}. The following convolution relations hold for every $j \in \{1,\cdots,\ell\}$
\begin{align}
    g_j(\bx,\bz;\delta) &= (p_{\bX} \ast G_{\delta,j})(\bx) \label{py} \\
    p_{\bY_{\delta}}(\bx+\delta \bz) &= (p_{\bX} \ast K_\delta)(\bx), \label{pz}
\end{align}
where the functions $G_{\delta,j}$ and $K_{\delta}$ are defined by
\begin{align}
    G_{\delta,j}(\by) &:= (\delta^{-1}y_j+z_j)1_{[0,M]}  \left( \left\| \delta^{-1} \by + \bz \right\| \right)  p_{\bN} \left( \delta^{-1} \by + \bz \right) \delta^{-m} \label{qc} \\
    K_{\delta}(\by) &:= \delta^{-m} p_{\bN} \left( \delta^{-1} \by + \bz \right). \label{qd}
\end{align}
Indeed, denoting $q_j(\bu)= u_j 1_{[0,M]}  \left( \left\| \bu \right\| \right),$ and using the substitution $\by = \delta(\bu-\bz),$ we may compute
\begin{align}
    (p_{\bX} \ast G_{\delta,j})(\bx) & \hspace{-0.5mm} = \int_{\BR^m}  p_{\bX}(\bx - \by) q_j(\delta^{-1} \by + \bz) p_{\bN} \left( \delta^{-1} \by + \bz \right) \delta^{-m}  d\by \\
    &= \int_{\BR^m} p_{\bX}(x-\delta(\bu-\bz)) q_j(\bu) p_{\bN}(\bu)  \, d\bw = \BE\left[ p_{\bX}(\bx - \delta(\bN-\bz)) q_j(\bN)  \right] = g_j(\bx,\bz;\delta). \label{px} 
\end{align}
Also, noting that
\begin{equation} \label{ti}
    K_\delta(\by) = \delta^{-m} p_{\bN}(\delta^{-1} \by + \bz) = p_{\delta(\bN-\bz)}(\by),
\end{equation}
and by independence of $\bX$ and $\delta(\bN-\bz),$ we obtain the convolution
\begin{align}
    (p_{\bX} \ast K_\delta)(\bx) = (p_{\bX} \ast p_{\delta (\bN-\bz)} )(\bx)  = p_{\bX + \delta(\bN-\bz)}(\bx) = p_{\bY_{\delta}}(\bx+\delta \bz).
\end{align}
Therefore, equations \eqref{py} and \eqref{pz} hold. Fix $j\in \{1,\cdots,m\}.$ We show next that $\{G_{\delta,j}\}_{\delta>0}$ and $\{K_\delta\}_{\delta>0}$ are approximations to the identity. For condition~\ref{tf} of Definition~\ref{qa}, we note that the substitution $\by= \delta(\bu-\bz)$ implies that
\begin{equation}
    \int_{\BR^m} G_{\delta,j}(\by) \, d\by = \BE\left[ N_j 1_{[0,M]}(\|\bN\|) \right],
\end{equation}
which is a constant independent of $\delta.$ Also, since $K_\delta$ is the PDF of $\delta(\bN-\bz)$ (see~\eqref{ti}), we have
\begin{equation}
    \int_{\BR^m} K_\delta(\by) \, d\by = 1.
\end{equation}
Therefore, condition~\ref{tf} in Definition \ref{qa} is satisfied by both the $G_{\delta,j}$ and the $K_\delta.$ We now show that the bounds in conditions~\ref{tg} and~\ref{th} of Definition~\ref{qa} hold with the constant $B$ chosen as $B := (1+\|\bz\|)^{m+1}C$ where $C>0$ is any constant such that 
\begin{equation} \label{tk}
    p_{\bN}(\bu) \le \frac{C}{1+\|\bu\|^{m+2}}
\end{equation}
holds for every $\bu\in \BR^m.$ Note that such a constant $C$ exists by assumption~\eqref{tj} on the decay of $p_{\bN}.$

Since $v^\beta \le 1+v^{m+2}$ for every $v\ge 0$ and $\beta \in [0,m+2],$ inequality~\eqref{tk} implies the uniform bounds
\begin{equation}
    \|\bu\|^{\beta}p_{\bN}(\bu) \le C 
\end{equation}
for every $\bu\in \BR^m$ and $\beta \in [0,m+2].$ Denote
\begin{equation}
    S_\beta := \sup_{\bu \in \BR^m} \|\bu\|^{\beta}p_{\bN}(\bu),
\end{equation}
so $S_\beta \le C \le B$ for every $0\le \beta \le m+2.$ For condition~\ref{tg}, we have that for every $\delta>0$ and $\by\in \BR^m,$ 
\begin{equation}
    \delta^m |G_{\delta,j}(\by)| \le \left| \delta^{-1} y_j + z_j \right| p_{\bN}(\delta^{-1}\by+\bz) \le \left\| \delta^{-1} \by + \bz \right\| p_{\bN}(\delta^{-1}\by+\bz) \le S_1 \le C \le B
\end{equation}
and
\begin{equation}
    \delta^m K_\delta(\by) = p_{\bN}(\delta^{-1}\by+\bz) \le S_0 \le C \le B.
\end{equation}
Therefore, condition~\ref{tg} of Definition \ref{qa} is satisfied by both the $G_{\delta,j}$ and the $K_\delta.$ Finally, for condition~\ref{th}, noting that
\begin{equation}
    \left\| \delta^{-1} \by \right\| \le \left\| \delta^{-1}\by + \bz \right\| + \left\| \bz \right\|,
\end{equation}
we have the inequalities
\begin{align}
    \frac{\|\by\|^{m+1}}{\delta} |G_\delta(\by)| &\le \left\| \delta^{-1} \by \right\|^{m+1} \left| \delta^{-1} y_j + z_j \right| p_{\bN}(\delta^{-1}\by+\bz)  \\
    &\le \left( \left\| \delta^{-1}\by + \bz \right\| + \left\| \bz \right\| \right)^{m+1} \left\| \delta^{-1} \by+\bz \right\| p_{\bN}(\delta^{-1}\by+\bz)  \\
    &\le \left( S_{m+2}^{1/(m+1)} +\|\bz\|S_1^{1/(m+1)} \right)^{m+1} \le (1+\|\bz\|)^{m+1}C = B
\end{align}
and
\begin{align}
    \frac{\|\by\|^{m+1}}{\delta} K_\delta(\by) = \left\| \delta^{-1} \by \right\|^{m+1} p_{\bN}(\delta^{-1} \by + \bz) \le \left( S_{m+1}^{1/(m+1)} + \|\bz\| S_0^{1/(m+1)} \right)^{m+1} \le (1+\|\bz\|)^{m+1}C = B
\end{align}
for every $\delta > 0$ and $\by \in \BR^m.$ Therefore, condition~\ref{th} in Definition \ref{qa} is satisfied by both the $G_{\delta,j}$ and the $K_\delta.$ In other words, each of $\{G_{\delta,j}\}_{\delta>0}$ and $\{K_\delta\}_{\delta>0}$ is an approximation to the identity.

Therefore, by Theorem \ref{qb}, for each $j\in \{1,\cdots,m\}$ and $(\bx,\bz)\in \calL\times \BR^m,$ we have that
\begin{equation}
    \lim_{\delta \to 0^+} g_j(\bx,\bz;\delta) = \BE\left[ N_j 1_{[0,M]}(\|\bN\|) \right] p_{\bX}(\bx)
\end{equation}
and
\begin{equation}
    \lim_{\delta \to 0^+} p_{\bY_{\delta}}(\bx+\delta \bz) = p_{\bX}(\bx).
\end{equation}
Hence, by~\eqref{tl},  we obtain that for every $(\bx,\bz)\in \calT \times \BR^m$
\begin{equation}
    \lim_{\delta \to 0^+} \bff(\bX,\bN;\delta) = \BE\left[ \bN ~ 1_{[0,M]}(\|\bN\|) \right],
\end{equation}
which is the desired limit~\eqref{pu}, completing the proof.

\section{Proofs of Subsection~\ref{ub}} \label{bx}

\subsection{Proof of Theorem~\ref{bp}: Consistency of the Differential Entropy Estimator} \label{ue}

We use the formula for $h_n$ given in Lemma~\ref{tn},
\begin{equation} \label{tp}
    h_n(X) = \frac12 \log \left( 2\pi e \left( \frac{\det \bM_{X,n}}{\det \bM_{N,n}} \right)^{1/d_n} \right) + \int_0^\infty \rho_{X,n}(t) \, dt,
\end{equation}
where $d_n = \binom{n+1}{2}$ and $N\sim \calN(0,1).$ We may assume that $N$ is independent of $X$ and the $X_j.$ For each $m\in \BN,$ let $\calS_m := \{X_j\}_{j\in [m]},$ and consider the sequence $\{ U_m \sim \mathrm{Unif}(\calS_m) \}_{m\in \mathbb{N}}.$ For each $m\in \BN,$ let $\frakE_m$ be the event that $X_0,\cdots,X_m$ are distinct, and let $\frakE$ be the event that the $X_j,$ for $j\in \BN,$ are all distinct. Whenever $m\ge n$ and $\frakE_m$ occurs, we have by Definition~\ref{to} of $\widehat{h}_n$ and formula~\eqref{tp} for $h_n$ the following estimate
\begin{equation} \label{uk}
    \widehat{h}_n\left( \calS_m \right) = \frac12 \log \left( 2\pi e \left( \frac{\det \bM_{U_m,n}}{\det \bM_{N,n}} \right)^{1/d_n} \right) + \int_0^\infty \rho_{U_m,n}(t) \, dt.
\end{equation}
Since $X$ is continuous, we have that $P(\frakE_m)=1$ for every $m\in \BN.$ Further, $\frakE_0 \supset \frakE_1 \supset \cdots$ and $\frakE = \bigcap_{m\in \BN} \frakE_m,$ hence $P(\frakE)=1.$ Therefore, for the purpose of proving the almost-sure limit $\widehat{h}_n\left( \calS_m \right)\to h_n(X),$ we may assume that $\frakE$ occurs. We first treat convergence of the integral part. We show that the integral part is a continuous function of the moments, then the continuous mapping theorem yields that
\begin{equation}\label{eh}
    \int_0^\infty \rho_{U_m,n}(t) \, dt \to  \int_0^\infty \rho_{X,n}(t) \, dt
\end{equation}
almost surely as $m\to \infty$ because sample moments converge almost surely to the moments. A similar method is then applied to the convergence of the $\log \det \bM_{X,n}$ part. 

We fix $n\in \BN_{\ge 1},$ and assume $m\ge n$ throughout the proof. We use the following notation. The $2n$-RV $\bmu^{(m)}$ consists of the first $2n$ moments of $U_m$
\begin{equation}
    \bmu^{(m)} := \left( \frac{\sum_{j=0}^m X_j}{m+1}, \cdots, \frac{\sum_{j=0}^m X_j^{2n}}{m+1} \right)^T.
\end{equation}
Let $\mu^{(m)}_k$ be the $k$-th coordinate of $\bmu^{(m)},$ so $\bmu^{(m)}=\left( \mu_1^{(m)},\cdots,\mu_{2n}^{(m)}\right)^T.$ We write $\calX_k := \BE\left[ X^k \right]$ for $k\in \BN,$ and consider the constant vector
\begin{equation}
    \bcalX := \left( \calX_k \right)_{1\le k \le 2n}.
\end{equation}
By the strong law of large numbers, we have the almost-sure convergence $\mu_k^{(m)} \to \calX_k$ for each $1\le k \le 2n.$ Then, $\bmu^{(m)} \to \bcalX$ almost surely as $m\to \infty.$ We show next that the function $\bcalX \mapsto \int_0^\infty \rho_{X,n}(t) \, dt$ is continuous.

By definition of $\rho_{X,n}$ (see~\eqref{tm}), there are polynomials $A_1,\cdots,A_{d_n-2}$ and $B_1,\cdots,B_{d_n}$ in $2n$ variables such that 
\begin{equation} \label{ul}
    \rho_{X,n}(t) = \frac{\sum_{j=1}^{d_n-2} A_j(\bcalX) ~ t^j}{c_n + \sum_{j=1}^{d_n} B_j(\bcalX) ~ t^j}
\end{equation}
where $c_n:= \prod_{k=1}^n k!$ (we are subsuming the $1/2$ factor in~\eqref{tm} in the numerator, so we have the equality $\delta_{X,n}(t) = c_n+ \sum_{j=1}^{d_n} B_j(\bcalX)t^j$). Being polynomials, each of the $A_j$ and the $B_\ell$ is continuous over $\mathbb{R}^{2n}.$ Then, by the continuous mapping theorem, we have the almost-sure convergences
\begin{equation}
    A_j\left({\bm \mu}^{(m)}\right) \to  A_j(\bcalX) \quad  \text{and} \quad B_\ell\left({\bm \mu}^{(m)}\right) \to  B_\ell(\bcalX) 
\end{equation}
as $m\to \infty$ for each $1 \le j \le d_n-2$ and $1\le \ell \le d_n.$ Denote
\begin{align}
    \bA(\bcalX) &:= \left( A_j (\bcalX) \right)_{1\le j \le d_n-2}, \\
    \bB(\bcalX) &:= \left( B_j (\bcalX) \right)_{1\le j\le d_n}.
\end{align}
We show next that the there is an open set $\calO\subset \BR^{d_n}$ containing the point $\bB(\bcalX)$ such that the mapping $f:\mathbb{R}^{d_n-2} \times \calO \to \mathbb{R}$ defined by
\begin{equation} \label{ek}
    f(p_1,\cdots,p_{d_n-2},q_1,\cdots,q_{d_n}) := \int_0^\infty \frac{\sum_{j=1}^{d_n-2} p_jt^j}{c_n + \sum_{j=1}^{d_n} q_jt^j} \, dt
\end{equation}
is continuous at the point $(\bA(\bcalX), \, \bB(\bcalX)).$ To this end, we shall show first that the mapping in (\ref{ek}) is well-defined on an open neighborhood of $(\bA(\bcalX), \, \bB(\bcalX)).$ In other words, the denominator of the integrand $t\mapsto c_n+\sum_{j=1}^{d_n} q_j t^j$ cannot have a root $t\in [0,\infty)$ for any $\bq \in \calO,$ and the rational function integrand has to be integrable. For integrability, we will restrict the set $\calO$ to contain only points having $q_{d_n}>0,$ so showing that the integrand's denominator is strictly positive over $t\in [0,\infty)$ will be enough to deduce integrability in~\eqref{ek}.

We consider the subset $\calG \subset \mathbb{R}^{d_n}$ defined by
\begin{equation} \label{tu}
    \calG := \left\{ \bg \in \mathbb{R}^{d_n} ~ ; ~ g_{d_n}>0, \text{ and } \sum_{\ell=1}^{d_n}  g_{j} t^j > - c_n \; \text{ for every } t\ge 0    \right\}
\end{equation}
where in this definition and the subsequent argument we set ${\bm g} = \left(g_1,\cdots,g_{d_n}\right)^T.$ Note that $\bB(\bcalX)\in \calG.$ Indeed, since $X$ is continuous, $B_{d_n}(\bcalX)=\det \bM_{X,n}>0$; similarly, for every $t \in [0,\infty),$ continuity of $\sqrt{t}X+N$ implies that $\det \bM_{\sqrt{t}X+N}>0$ (recall that $c_n+\sum_{j=1}^{d_n} B_j(\bcalX) t^j = \det \bM_{\sqrt{t}X+N}$). We show that $\calG$ is an open set. Fix $\bg \in \calG$ and $\varepsilon_1 \in \left( 0 , g_{d_n} \right).$ We have that the polynomial $\sum_{j=1}^{d_n} (g_j- \varepsilon_1)t^j$ is eventually increasing and approaches infinity as $t\to \infty.$ Let $t_0>1$ be such that for every $t>t_0$ we have
\begin{equation} \label{el}
    \sum_{\ell=1}^{d_n} (g_j- \varepsilon_1)t^j > -c_n.
\end{equation}
Being continuous, the polynomial $\sum_{j=1}^{d_n} g_j t^j$ attains its minimum over the compact set $[0,t_0].$ Let $s$ denote this minimum, and note that $s>-c_n.$ Let $\varepsilon \in (0,1)$ be defined by
\begin{equation}
    \varepsilon := \frac12 \min\left( \varepsilon_1, \frac{(s+c_n)(t_0-1)}{t_0(t_0^{d_n}-1)} \right).
\end{equation}
As $\varepsilon<\varepsilon_1,$ inequality (\ref{el}) yields that for every $t>t_0$
\begin{equation} \label{ts}
    \sum_{j=1}^{d_n} (g_j-\varepsilon)t^j > -c_n.
\end{equation}
In addition, for any $t\in [0,t_0],$ 
\begin{align}
    \sum_{j=1}^{d_n} (g_j-\varepsilon)t^j =  \sum_{j=1}^{d_n} g_j t^j - \varepsilon \sum_{j=1}^{d_n}  t^j \ge s -\varepsilon \sum_{j=1}^{d_n}  t_0^j > s-  \frac{(s+c_n)(t_0-1)}{t_0(t_0^{d_n}-1)}\sum_{j=1}^{d_n}  t_0^j = s- (s+c_n) = -c_n. \label{tt}
\end{align}
Thus, combining~\eqref{ts} and~\eqref{tt} we obtain
\begin{equation}
     \sum_{j=1}^{d_n} (g_j-\varepsilon)t^j  > -c_n
\end{equation}
for every $t\in [0,\infty).$ Hence, for any $(\delta_j)_{1\le j \le d_n} =: {\bm \delta} \in \mathbb{R}^{d_n}$ such that $\|{\bm \delta}\|_2<\varepsilon,$ we have that for all $t \in [0,\infty)$
\begin{equation}
    \sum_{j=1}^{d_n} (g_j - \delta_j) t^j \ge \sum_{j=1}^{d_n} (g_j - \|{\bm \delta}\|_2) t^j \ge \sum_{j=1}^{d_n} (g_j - \varepsilon) t^j > -c_n.
\end{equation}
In other words, the open ball $\{ \bq \in \BR^{2n} ~ ; ~ \|\bq-\bg\| < \varepsilon \}$ lies within $\calG.$ This completes the proof that $\calG$ is open. Then, the function $f$ given by~\eqref{ek} is well-defined on the open set $\BR^{d_n-2}\times \calG.$ We will replace $\calG$ with an open box $\calO \subset \calG$ to simplify the notation for the proof of continuity of $f.$ 

By openness of $\calG,$ there is an $\eta_1 \in (0,B_{d_n}(\bcalX))$ such that the open box
\begin{equation}
    \calO_1 := \prod_{j =1}^{d_n} \left( B_j(\bcalX)- \eta_1, B_j(\bcalX) + \eta_1 \right) \subset \calG
\end{equation}
contains $\bB(\bcalX).$ Since $\calO_1 \subset \calG,$ we have by the definition of $\calG$ in~\eqref{tu} that for any $\bg \in \mathcal{O}_1$ the lower bound
\begin{equation}
     c_n+ \sum_{\ell=1}^{d_n}  g_\ell t^\ell>0
\end{equation}
holds for every $t\ge 0.$ In particular, with $\eta:= \eta_1/2,$ the set 
\begin{equation}
    \calO := \prod_{j =1}^{d_n} \left( B_j(\bcalX)- \eta, B_j(\bcalX) + \eta \right) \subset \calO_1 \subset \calG
\end{equation}
is an open set containing $\bB(\bcalX),$ and the point $(B_j(\bcalX)-\eta)_{1\le j \le 2n}$ lies inside $\calG.$ Then, the function $f: \mathbb{R}^{d_n-2}\times \mathcal{O} \to \mathbb{R}$ given by~\eqref{ek} is well-defined, and for any $\bg\in \calO$ we have the lower bound (over $t\in [0,\infty)$)
\begin{equation} \label{tw}
    c_n+ \sum_{\ell=1}^{d_n}  g_\ell t^\ell \ge c_n+ \sum_{\ell=1}^{d_n}  (B_\ell(\bcalX)-\eta) t^\ell > 0.
\end{equation}
From~\eqref{tw}, Lebesgue's dominated convergence shows continuity of $f$ at $(\bA(\bcalX), \bB(\bcalX)),$ as follows.

Let ${\bm w} := ({\bm u},{\bm v}) \in  \mathbb{R}^{d_n-2} \times \mathcal{O}$ be such that $\|{\bm w}\|_2<\eta.$ The integrand in $f$ at $(\bA(\bcalX), \bB(\bcalX)) - ({\bm u},{\bm v})$ may be bounded as
\begin{align}
    \left| \frac{\sum_{j=1}^{d_n-2} (A_j(\bcalX)-u_j) t^j}{c_n+\sum_{\ell=1}^{d_n} (B_\ell(\bcalX)-v_\ell) t^\ell} \right| =  \frac{\left|\sum_{j=1}^{d_n-2} (A_j(\bcalX)-u_j) t^j \right|}{c_n+\sum_{\ell=1}^{d_n} (B_\ell(\bcalX)-v_\ell) t^\ell} \le \frac{\sum_{j=1}^{d_n-2} (|A_j(\bcalX)|+\eta) t^j}{c_n+\sum_{\ell=1}^{d_n} (B_\ell(\bcalX)-\eta) t^\ell}. \label{tv}
\end{align}
The bound in~\eqref{tv} is uniform in $\bw,$ and the upper bound is integrable over $[0,\infty)$ as the denominator's degree exceeds that of the numerator by at least $2$ and the denominator is strictly positive by~\eqref{tw}. Hence, by Lebesgue's dominated convergence
\begin{equation}
    \lim_{\|{\bm w}\| \to 0} f\left( (\bA(\bcalX), \bB(\bcalX)) -{\bm w} \right) = f\left( \bA(\bcalX), \bB(\bcalX) \right),
\end{equation}
i.e., $f$ is continuous at $(\bA(\bcalX), \bB(\bcalX)),$ as desired. Denote
\begin{align}
    \bA^{(m)} &:= \left( A_j({\bm \mu}^{(m)}) \right)_{1\le j \le d_n-2}, \\
    \bB^{(m)} &:= \left( B_\ell({\bm \mu}^{(m)}) \right)_{1\le \ell \le d_n}.
\end{align}
We have the formulas
\begin{equation}
     f(\bA^{(m)},\bB^{(m)}) = \int_0^\infty \rho_{U_m,n}(t) \, dt
\end{equation}
and
\begin{equation}
     f(\bA(\bcalX), \bB(\bcalX)) = \int_0^\infty \rho_{X,n}(t) \, dt.
\end{equation}
Since $(\bA^{(m)},\bB^{(m)}) \to (\bA(\bcalX), \bB(\bcalX))$ almost surely, continuity of $f$ at $(\bA(\bcalX), \bB(\bcalX))$ implies by the continuous mapping theorem that
\begin{equation} \label{ty}
    f(\bA^{(m)},\bB^{(m)}) \to f(\bA({\bm \nu}), \bB({\bm \nu}))
\end{equation}
almost surely as $m\to \infty,$ i.e.,~\eqref{eh} holds.

Now, for the convergence of the logarithmic part, recall that we have the almost sure convergence 
\begin{equation}
    \det \bM_{U_m,n} = B_{d_n}({\bm \mu}^{(m)}) \to B_{d_n}(\bcalX) = \det \bM_{X,n}
\end{equation}
as $m\to \infty.$ As the mapping $\mathbb{R}_{>0}\to \mathbb{R}$ defined by $q \mapsto \log q$ is continuous, the continuous mapping theorem yields that 
\begin{equation} \label{tx}
    \log \det \bM_{U_m,n} \to \log\det \bM_{X,n}
\end{equation}
almost surely as $m \to \infty.$ Combining~\eqref{ty} and~\eqref{tx}, we obtain that
\begin{equation} \label{ua}
    \widehat{h}_n\left( \calS_m \right) \to h_n(X)
\end{equation}
almost surely as $m\to \infty.$ Finally,~\eqref{tz} follows from~\eqref{ua} by Theorem~\ref{jn}.

\subsection{Proof of Corollary~\ref{uc}: Consistency of the Mutual Information Estimator} \label{ud}

Denote $\calS_m = \{(X_j,Y_j)\}_{j\in [m]},$ and consider the empirical measure 
\begin{equation}
    \widehat{P}_m(x) := \sum_{j\in [m]} \frac{\delta_x(X_j)}{m+1}.
\end{equation}
Let $\frakD_m$ be the event that for each $x\in \supp(X)$ there is a subset of indices $J_x\subset [m]$ of size at least $n+1$ such that: \textbf{i)} $X_j=x$ for each $j\in J_x,$ and \textbf{ii)} the $Y_j,$ for $j\in J_x,$ are distinct. If $\frakD_m$ occurs, then we may write
\begin{equation}
    \widehat{I}_n(\calS_m) = \widehat{h}_n(\calA_m) - \sum_{x\in \supp(X)} \widehat{P}_m(x) ~ \widehat{h}_n(\calB_{m,x}),
\end{equation}
where $\calA_m := \{Y_j \}_{j\in [m]}$ and $\calB_{m,x} := \{ Y_j ~ ; ~ j\in [m], X_j=x\}.$ By the assumption of continuity of $Y,$ it holds with probability $1$ that the $Y_j,$ for $j\in \BN,$ are all distinct. In addition, we have that $P_X(x)>0$ for each $x\in \supp(X).$ Therefore, $P(\frakD_m)\to 1$ as $m\to \infty.$ Note that $\frakD_0\subset \frakD_1\subset \cdots.$ 

Let $\frakC$ be the event that $\lim_{m\to \infty} \widehat{h}_n(\calA_m)=h_n(Y)$ and, for each $x\in \supp(X),$ $\lim_{m\to \infty} \widehat{h}_n(\calB_{m,x})=h_n(Y^{(x)}).$ By Theorem~\ref{bp} and finiteness of $\supp(X),$ for each integer $m' \ge (n+1)|\supp(X)|,$ we have that $P(\frakC  \mid \frakD_{m'})=1.$ Let $\frakF$ be the event that the empirical measure $\widehat{P}_m$ converges to $P_X,$ i.e., that for each $x\in \supp(X)$ the limit $\widehat{P}_m(x)\to P_X(x)$ holds as $m\to \infty.$ By the strong law of large numbers, $P(\frakF)=1.$ Therefore,
\begin{equation}
    P\left(\lim_{m\to \infty} \widehat{I}_n(\calS_m) = I(X;Y) \right) \ge P(\frakC \cap \frakF \cap \frakD_{m'}) \ge P(\frakF) + P(\frakC\cap \frakD_{m'}) -1 =  P(\frakD_{m'}).
\end{equation}
Taking $m'\to \infty,$ we deduce that $\widehat{I}_n(\calS_m) \to I(X;Y) $ almost surely.

\section{Proofs of Subsection~\ref{wd}: Sample Complexity} \label{bz}

\subsection{Proof of Proposition~\ref{uj}: Differential Entropy} \label{gw}

Suppose $\mathrm{supp}(X) \subset [p,q] \subset (0,\infty),$ and write $\calS=\{X_j\}_{j=1}^m$; note that we may assume, without loss of generality, that $X$ is strictly positive because $h_n$ is shift-invariant. We use the same notation in Appendix~\ref{bx}. In particular, $\calX_k=\BE[X^k],$ and $\bcalX=(\calX_1,\cdots,\calX_{2n})^T.$ Let $U \sim \mathrm{Unif}(\calS).$ Let $\frakE_m$ be the event that $X_1,\cdots,X_m$ are distinct. From equations~\eqref{tp} and~\eqref{uk}, if $m>n$ and $\frakE_m$ holds, then we have that
\begin{equation} \label{gp}
    \widehat{h}_n(\calS)-h_n(X) = \frac{1}{2d_n} \log \frac{\det \bM_{U,n}}{\det \bM_{X,n}}   +  \int_0^\infty \rho_{U,n}(t) \, - \rho_{X,n}(t) \, dt. 
\end{equation}
By the assumption of continuity of $X,$ we have that $P(\frakE_m)=1$ for every $m.$ Therefore, for the purpose of proving a sample complexity bound, we may assume that $m>n$ and that $\frakE_m$ occurs.

We will consider the determinant part and the integral part in~\eqref{gp} separately, but the proof technique will be the same. Let $A_j$ and $B_\ell$ be the polynomials as defined by equation~\eqref{ul} in Appendix~\ref{bx}, so 
\begin{equation}
    \rho_{X,n}(t) = \frac{\sum_{j=1}^{d_n-2} A_j(\bcalX) ~ t^j}{c_n + \sum_{j=1}^{d_n} B_j(\bcalX) ~ t^j}
\end{equation}
where $c_n:= \prod_{k=1}^n j!.$ We split each of the polynomials $A_j$ and $B_\ell$ into a positive part and a negative part. More precisely, we collect the terms in $A_j$ that have positive coefficients into a polynomial $A_j^{(+)},$ and the terms in $A_j$ with negative coefficients into a polynomial $-A_j^{(-)}$ (so $A_j^{(-)}$ has positive coefficients, and $A_j = A_j^{(+)}-A_j^{(-)}$). Define $B_\ell^{(+)}$ and $B_\ell^{(-)}$ from $B_\ell$ similarly. By positivity of $X,$ each moment $\calX_k$ is (strictly) positive. Then, we may write
\begin{equation}
    \rho_{X,n}(t) = \frac{f_X(t) - g_X(t)}{u_X(t)-v_X(t)}
\end{equation}
with the polynomials in $t$
\begin{align}
    f_X(t) &:= \sum_{j=1}^{d_n-2} A_j^{(+)}(\bcalX)t^j \\
    g_X(t) &:= \sum_{j=1}^{d_n-2} A_j^{(-)}(\bcalX)t^j \\
    u_X(t) &:= c_n+\sum_{\ell=1}^{d_n} B_\ell^{(+)}(\bcalX)t^\ell \\
    v_X(t) &:= \sum_{\ell=1}^{d_n} B_\ell^{(-)}(\bcalX)t^\ell,
\end{align}
having all non-negative coefficients. We note that we have suppressed the dependence on $n$ in the notation used for these polynomials for readability. For $q \in \{f,g,u,v\},$ let $q_{U}$ be the random variable whose value is what is obtained via $q_X$ when the moments of $X$ are replaced with the sample moments obtained from the samples $\calS,$ e.g.,
\begin{equation}
    f_{U}(t) := \sum_{j=1}^{d_n-2} A_j^{(+)}\left( \frac{\sum_{i=1}^m X_i}{m},\cdots,\frac{\sum_{i=1}^m X_i^{2n}}{m}\right)t^j.
\end{equation}
Note that $u_{U}(t) - v_{U}(t) = \det \bM_{\sqrt{t}U+N,n}>0,$ where $N\sim \calN(0,1)$ is independent of $X,X_1,\cdots,X_m.$ Then the function
\begin{equation} \label{uu}
    \rho_{U,n}(t) = \frac{f_{U}(t) - g_{U}(t)}{u_{U}(t)-v_{U}(t)}
\end{equation}
is well-defined over $t\in [0,\infty).$ By the homogeneity properties proved in Theorem~\ref{iv}, we know that the total degree of $A_j$ is at most $2j+2,$ and the total degree of $B_\ell$ is at most $2\ell.$ Therefore, for any $\eta\in (0,1)$ and $(\xi_1,\cdots,\xi_{2n})\in \BR_{\ge 0}^{2n},$ we have the inequalities
\begin{align}
    (1-\eta)^{2j+2} A_j^{(\pm)}(\xi_1,\cdots,\xi_{2n})&\le A_j^{(\pm)}((1-\eta)\xi_1,\cdots,(1-\eta)\xi_{2n}) \label{us} \\
    A_j^{(\pm)}((1+\eta)\xi_1,\cdots,(1+\eta)\xi_{2n}) &\le (1+\eta)^{2j+2} A_j^{(\pm)}(\xi_1,\cdots,\xi_{2n})  \\
    (1-\eta)^{2\ell} B_\ell^{(\pm)}(\xi_1,\cdots,\xi_{2n})&\le B_\ell^{(\pm)}((1-\eta)\xi_1,\cdots,(1-\eta)\xi_{2n}) \\
    B_\ell^{(\pm)}((1+\eta)\xi_1,\cdots,(1+\eta)\xi_{2n}) &\le (1+\eta)^{2\ell} B_\ell^{(\pm)}(\xi_1,\cdots,\xi_{2n}) \label{ut}
\end{align}
for every $1\le j \le d_n-2$ and $1\le \ell \le d_n.$

For each $\eta \in (0,1),$ we denote the event
\begin{equation} \label{up}
    \mathfrak{A}_{n,\eta}(\calS)  := \left\{ 1-\eta \le \frac{\sum_{i=1}^m X_i^k}{m\calX_k}\le 1+\eta \text{ for every } k \in [2n] \right\},
\end{equation}
Hoeffding's inequality yields that, for any $z>0$ and $1\le k \le 2n,$
\begin{equation}
    P\left( \left| \calX_k - \frac{1}{m}\sum_{i=1}^m X_i^k \right| \ge z \right) \le 2e^{- 2mz^2/\left(q^k-p^k\right)^2}.
\end{equation}
Setting $z = \eta \calX_k \ge \eta p^k > 0$ for $\eta \in (0,1)$ yields that
\begin{equation}
P\left( (1-\eta) \calX_k < \frac{1}{m}\sum_{i=1}^m X_i^k < (1+\eta) \calX_k \right) \ge 1-2e^{- 2m \eta^2/\left( (q/p)^k-1 \right)^2}.
\end{equation}
Therefore, the union bound yields that
\begin{equation} \label{vl}
P\left( \mathfrak{A}_{n,\eta}(\calS) \right) \ge 1-4ne^{- 2m \eta^2/\left( (q/p)^{2n}-1\right)^2}. 
\end{equation}
If $\mathfrak{A}_{n,\eta}(\calS)$ occurs, we show a bound on the estimation error that is linear in $\eta$
\begin{equation}
    \widehat{h}_n(\calS)-h_n(X) = O_{X,n}(\eta),
\end{equation}
independent of the number of samples $m,$ for all small enough $\eta.$ Then, we choose $\eta$ to be linear in the error $\varepsilon$ to conclude the proof. 

We may bound $\rho_{U,n}(t)$ (see~\eqref{uu}) via the bounds in~\eqref{us}--\eqref{ut} under the assumption that $\mathfrak{A}_{n,\eta}(\calS)$ occurs. If $(1-\eta)\calX_k \le \frac{1}{m} \sum_{i=1}^m X_i^m \le (1+\eta)\calX_k$ holds for every $1\le k \le 2n,$ then by~\eqref{us}--\eqref{ut} we have that for every $t\ge 0$ and $\eta\in (0,1)$
\begin{equation} \label{uw}
    \frac{(1-\eta)^2f_X((1-\eta)^2t)-(1+\eta)^2g_X((1+\eta)^2t)}{u_X((1+\eta)^2t)-v_X((1-\eta)^2t)} \le \frac{f_{U}(t)-g_{U}(t)}{u_{U}(t)-v_{U}(t)} = \rho_{U,n}(t).
\end{equation}
For an analogous upper bound, we first verify the positivity
\begin{equation} \label{uv}
    u_X((1-\eta)^2t)-v_X((1+\eta)^2t)>0
\end{equation}
for every small enough~$\eta.$ Let
\begin{equation} \label{uo}
    \mu_X := \sup_{t\in [0,\infty)} \frac{v_X(t)}{u_X(t)}.
\end{equation}
We show that $\mu_X<1.$ We have the limit
\begin{equation}
    \xi_X := \lim_{t\to \infty} \frac{v_X(t)}{u_X(t)} = \frac{B_{d_n}^{(-)}(\bcalX)}{B_{d_n}^{(+)}(\bcalX)}.
\end{equation}
Recall that $B_{d_n}^{(+)}(\bcalX) - B_{d_n}^{(-)}(\bcalX) = B_{d_n}(\bcalX) = \det \bM_{X,n}>0$ and both $B_{d_n}^{(+)}(\bcalX)$ and $B_{d_n}^{(-)}(\bcalX)$ are non-negative, hence $B_{d_n}^{(+)}(\bcalX)>0.$ Then, $\xi_X < 1.$ Thus, there is a $t_0\ge 0$ such that $v_X(t)/u_X(t) < (1+\xi_X)/2 < 1$ whenever $t>t_0.$ Further, by the extreme value theorem, there is a $t_1\in [0,t_0]$ such that $v_X(t)/u_X(t)\le v_X(t_1)/u_X(t_1) < 1$ for every $t\in [0,t_0].$ Therefore, $\mu_X \le \max((1+\xi_X)/2,v_X(t_1)/u_X(t_1)) < 1,$ as desired. Note that if $\mu_X=0$ then $v_X \equiv 0$ identically, in which case~\eqref{uv} trivially holds by positivity of $u_X.$ So, for the purpose of showing~\eqref{uv}, it suffices to consider the case $\mu_X\in (0,1).$ Denote 
\begin{equation} \label{uq}
    \nu := \left( \frac{1+ \eta}{1-\eta} \right)^2.
\end{equation}
Now, since $v_X$ is a polynomial of degree at most $d_n,$ we have that $v_X(\alpha \tau)\le \alpha^{d_n} v_X(\tau)$ for every $\alpha \ge 1$ and $\tau\ge 0.$ Therefore, for every $1\le \nu < \mu_X^{-1/d_n}$ and $t\ge 0,$ we have that
\begin{equation}
    \frac{v_X((1+\eta)^2 t)}{u_X((1-\eta)^2 t)} \le \left( \frac{1+\eta}{1-\eta} \right)^{2d_n} \cdot \frac{ v_X((1-\eta)^2t)}{u_X((1-\eta)^2t)} \le \nu^{d_n} \mu_X < 1,
\end{equation}
i.e., inequality~\eqref{uv} holds. Therefore, for every $1\le \nu < \mu_X^{-1/d_n}$ (if $\mu_X=0,$ we allow $1\le \nu <\infty$), inequalities~\eqref{us}--\eqref{ut} imply the bound
\begin{equation} \label{ux}
\rho_{U,n}(t) = \frac{f_{U}(t)-g_{U}(t)}{u_{U}(t)-v_{U}(t)} \le \frac{(1+\eta)^2f_X((1+\eta)^2t)-(1-\eta)^2g_X((1-\eta)^2t)}{u_X((1-\eta)^2t)-v_X((1+\eta)^2t)}.
\end{equation}
Combining~\eqref{uw} and~\eqref{ux}, then integrating with respect to $t$ over $[0,\infty)$ and performing a change of variables from $t$ to $(1-\eta)^2 t,$ we obtain the bounds
\begin{equation} \label{ga}
\int_0^\infty \frac{f_X(t)-\nu g_X(\nu t)}{u_X(\nu t)-v_X(t)} \, dt \le \int_0^\infty \rho_{U,n}(t) \, dt \le \int_0^\infty \frac{\nu f_X(\nu t)-g_X(t)}{u_X(t)-v_X(\nu t)} \, dt. 
\end{equation}
Next, we further develop these bounds. For any $s\in (0,1),$ denote
\begin{equation} \label{uy}
    \nu_{X,n,s}:= \left( \frac{1-s\mu_X}{1-s} \right)^{1/d_n}.
\end{equation}
Consider the functions 
\begin{align}
    \varphi_X(t;\nu) &:= \frac{u_X(t)-v_X(t)}{u_X(t)-v_X(\nu t)},  \\
    \psi_X(t;\nu) &:= \frac{u_X(t)-v_X(t)}{u_X(\nu t)-v_X(t)}.
\end{align}
We show in Appendix~\ref{gq} that, for any $s\in (0,(1-\mu_X)/(1+\mu_X))$ and $1\le \nu \le \nu_{X,n,s},$ the uniform bounds
\begin{equation} \label{gy}
    1-s \le \psi_X(t;\nu) \le 1 \le \varphi_X(t; \nu) \le 1+s
\end{equation}
hold over $t\in [0,\infty).$ Fix $s\in (0,(1-\mu_X)/(1+\mu_X))$ and $1\le \nu \le \nu_{X,n,s}.$ 

Now, the integrand in the upper bound in~\eqref{ga} can be rewritten as
\begin{equation} \label{vc}
\frac{\nu f_X(\nu t)-g_X(t)}{u_X(t)-v_X(\nu t)} = \varphi_X(t;\nu) \left( \frac{f_X(t)-g_X(t)}{u_X(t)-v_X(t)} + \frac{\nu f_X(\nu t) -f_X(t)}{u_X(t)-v_X(t)} \right).
\end{equation}
The integrand in the lower bound in~\eqref{ga} can be rewritten as
\begin{equation} \label{vd}
    \frac{f_X(t)-\nu g_X(\nu t)}{u_X(\nu t)-v_X(t)}  = \psi_X(t;\nu) \left( \frac{f_X(t)-g_X(t)}{u_X(t)-v_X(t)} + \frac{g_X(t)-\nu g_X(\nu t)}{u_X(t)-v_X(t)} \right).
\end{equation}
By the bounds in~\eqref{gy}, we have that for every $t\ge 0$
\begin{equation} \label{uz}
    0 \le \varphi_X(t;\nu)-1 \le s.
\end{equation}
Hence, by non-negativity of $f_X$ and $g_X,$ we deduce
\begin{equation}
    \left( \varphi_X(t;\nu)-1 \right)\cdot  \frac{f_X(t)-g_X(t)}{u_X(t)-v_X(t)} \le s \cdot \frac{f_X(t)}{u_X(t)-v_X(t)},
\end{equation}
i.e.,
\begin{equation} \label{va}
    \varphi_X(t;\nu)\cdot  \frac{f_X(t)-g_X(t)}{u_X(t)-v_X(t)} \le \frac{f_X(t)-g_X(t)}{u_X(t)-v_X(t)} + s \cdot \frac{f_X(t)}{u_X(t)-v_X(t)}.
\end{equation}
In addition, since $f_X(\nu t) \le \nu^{d_n-2} f_X(t)$ over $t\in [0,\infty),$ inequality~\eqref{uz} implies that
\begin{equation} \label{vb}
    \varphi_X(t;\nu) \cdot \frac{\nu f_X(\nu t) -f_X(t)}{u_X(t)-v_X(t)} \le \frac{(1+s)(\nu^{d_n-1}-1)f_X(t)}{u_X(t)-v_X(t)}.
\end{equation}
Therefore, applying inequalities~\eqref{va} and~\eqref{vb} in formula~\eqref{vc}, we deduce in view of the upper bound in~\eqref{ga} the inequality
\begin{equation} \label{gc}
    \int_0^\infty \rho_{U,n}(t) - \rho_{X,n}(t) \, dt  \le \left( (1+s)\nu^{d_n-1}-1 \right) \int_0^\infty \frac{f_X(t)}{u_X(t)-v_X(t)} \, dt. 
\end{equation}
Similarly, we derive a lower bound on~\eqref{vd}. By~\eqref{gy}, we have that for every $t\ge 0$
\begin{equation}
    s \ge 1- \psi_X(t;\nu) \ge 0.
\end{equation}
Hence, by non-negativity of $f_X$ and $g_X,$ 
\begin{equation}
     s \cdot \frac{f_X(t)}{u_X(t)-v_X(t)}  \ge \left( 1- \psi_X(t;\nu) \right)\frac{f_X(t)-g_X(t)}{u_X(t)-v_X(t)},
\end{equation}
i.e.,
\begin{equation} \label{ve}
    \psi_X(t;\nu)\cdot\frac{f_X(t)-g_X(t)}{u_X(t)-v_X(t)}  \ge \frac{f_X(t)-g_X(t)}{u_X(t)-v_X(t)}  - s \cdot \frac{f_X(t)}{u_X(t)-v_X(t)}.
\end{equation}
In addition, from $\psi_X(t;\nu) \le 1 \le \nu$ and $g_X(\nu t)\le \nu^{d_n-2} g_X(t)$ for $t\ge 0,$ we deduce
\begin{equation} \label{vf}
    \psi_X(t;\nu)\cdot\frac{g_X(t)-\nu g_X(\nu t)}{u_X(t)-v_X(t)} \ge \psi_X(t;\nu)\cdot\frac{(1-\nu^{d_n-1})g_X(t)}{u_X(t)-v_X(t)} \ge \left( 1 - \nu^{d_n-1} \right) \frac{g_X(t)}{u_X(t)-v_X(t)}.
\end{equation}
Therefore, applying inequalities~\eqref{ve} and~\eqref{vf} in formula~\eqref{vd}, the lower bound in~\eqref{ga} yields bound
\begin{equation} \label{gd}
   \int_0^\infty \rho_{U,n}(t) - \rho_{X,n}(t) \, dt  \ge - s  \int_0^\infty \frac{f_X(t)}{u_X(t)-v_X(t)} \, dt- \left( \nu^{d_n-1}-1 \right) \int_0^\infty \frac{g_X(t)}{u_X(t)-v_X(t)} \, dt. 
\end{equation}
In particular,~\eqref{gd} implies that
\begin{equation} \label{vg}
   \int_0^\infty \rho_{U,n}(t) - \rho_{X,n}(t) \, dt  \ge - \left( \nu^{d_n-1}-(1-s) \right)  \int_0^\infty \frac{f_X(t)+g_X(t)}{u_X(t)-v_X(t)} \, dt.
\end{equation}
Now, note that $(1+s)\nu^{d_n-1}-1 \ge \nu^{d_n-1}-(1-s).$ Therefore, combining the upper bound in~\eqref{gc} and the lower bound in~\eqref{vg}, we deduce that
\begin{equation} \label{fu}
    \left| \int_0^\infty \rho_{U,n}(t) - \rho_{X,n}(t) \, dt \right|\le \left( (1+s)\nu^{d_n-1}-1 \right) \int_0^\infty \frac{f_X(t)+g_X(t)}{u_X(t)-v_X(t)} \, dt. 
\end{equation}
The upper bound in (\ref{fu}) may be made as small as needed by choosing a small~$s$ then choosing a small~$\nu.$

The second part of the proof, given in Appendix~\ref{gs}, derives the following error bound for estimating $\log \det \bM_{X,n}$ from samples. If $B_{d_n}^{(-)}(\bcalX)>0,$ we denote
\begin{equation} \label{vy}
    \tau_{X,n} := \left(  \frac{  B_{d_n}^{(+)}(\bcalX)/B_{d_n}^{(-)}(\bcalX) + 1}{2} \right)^{1/(n+1)} \in (1,\infty)
\end{equation}
and
\begin{equation} \label{we}
    \eta_{X,n}:=\min\left( \frac12, \frac{\tau_{X,n}-1}{\tau_{X,n}+1} \right) \in (0,1/2].
\end{equation}
If $B_{d_n}^{(-)}(\bcalX)=0,$ then we set $\tau_{X,n}=\infty$ and $\eta_{X,n}=1/2.$ We show that for all $\eta \in (0,\eta_{X,n}),$
if $\mathfrak{A}_{n,\eta}(\calS)$ holds, then we have the bound
\begin{equation} \label{gg}
    \left| \frac{1}{2d_n} \log \frac{\det \bM_{U,n}}{\det \bM_{X,n}} \right| \le \frac{6\eta}{n} \cdot \frac{B_{d_n}^{(+)}(\bcalX)+B_{d_n}^{(-)}(\bcalX)}{B_{d_n}^{(+)}(\bcalX)-B_{d_n}^{(-)}(\bcalX)}.
\end{equation}

To finish the proof, we choose $\eta$ so that the desired accuracy is achieved with high probability. Recall from~\eqref{vl} that
\begin{equation}
   P\left( \frakA_{n,\eta}(\calS)\right) \ge 1-4ne^{- m \eta^2 \alpha_{X,n}}
\end{equation}
where we denote the constant
\begin{equation}
    \alpha_{X,n} :=  2 \cdot \left( \left( \frac{q}{p} \right)^{2n}-1\right)^{-2}.
\end{equation}
In addition, from~\eqref{fu} and~\eqref{gg}, we know that if $s\in (0,(1-\mu_X)/(1+\mu_X)),$ $\nu \in [1,\nu_{X,n,s}],$ $\eta\in (0,\eta_{X,n}),$ and $\mathfrak{A}_{n,\eta}(\calS)$ occurs, then 
\begin{equation} \label{vr}
    \left| \widehat{h}_n(\calS) - h_n(X) \right| \le \eta \cdot \beta_{X,n}+ \left( (1+s)\nu^{d_n-1} - 1 \right)\cdot \gamma_{X,n} 
\end{equation}
where we denote the constants
\begin{align}
    \beta_{X,n} &:= \frac{6}{n} \cdot \frac{B_{d_n}^{(+)}(\bcalX)+B_{d_n}^{(-)}(\bcalX)}{B_{d_n}^{(+)}(\bcalX)-B_{d_n}^{(-)}(\bcalX)}, \\
    \gamma_{X,n} &:= \int_0^\infty \frac{f_X(t)+g_X(t)}{u_X(t)-v_X(t)} \, dt.
\end{align}
Consider the constant $\varepsilon_{X,n}\in (0, 2 \gamma_{X,n}]$ defined by
\begin{equation}
    \varepsilon_{X,n} := 2 \gamma_{X,n} \cdot \frac{1-\mu_X}{1+\mu_X}.
\end{equation}
Fix $\varepsilon \in (0, \varepsilon_{X,n}),$ set $s := \varepsilon/(6\gamma_{X,n}) \in (0,1/3],$ denote
\begin{equation} \label{vz}
    \kappa_{X,n} := \min\left( 3, \tau_{X,n} , \left( \frac{1-s\mu_X}{1-s} \right)^{1/(2d_n)},  \frac{1+\varepsilon/(2\beta_{X,n})}{1-\varepsilon/(2\beta_{X,n})}\right),
\end{equation}
and fix $\eta\in (0,(\kappa_{X,n}-1)/(\kappa_{X,n}+1)).$ Since $\kappa_{X,n}\le 3,$ we obtain $\eta < 1/2.$ In addition, $\kappa_{X,n}\le \tau_{X,n},$ hence $\eta < (\kappa_{X,n}-1)/(\kappa_{X,n}+1)$ implies that $ \eta < \eta_{X,n}.$ Note that, for $a\in (0,1)$ and $b>1,$ the inequality $a\le (b-1)/(b+1)$ is equivalent to $(1+a)/(1-a)\le b.$ By definition, 
\begin{equation}
    \kappa_{X,n} \le \left( \frac{1-s\mu_X}{1-s} \right)^{1/(2d_n)},
\end{equation}
hence we have
\begin{equation} \label{vp}
    (1+s)\nu^{d_n} = (1+s) \left( \frac{1+\eta}{1-\eta} \right)^{2d} < (1+s) \kappa_{X,n}^{2d} \le (1+s)\cdot \frac{1-s\mu_X}{1-s} \le \frac{1+s}{1-s} \le \frac{1+s+s(1-3s)}{1-s} = 1+3s.
\end{equation}
In addition, since 
\begin{equation}
    \kappa_{X,n} \le \frac{1+\varepsilon/(2\beta_{X,n})}{1-\varepsilon/(2\beta_{X,n})},
\end{equation}
and since we assume $\eta  < (\kappa_{X,n}-1)/(\kappa_{X,n}+1),$ we deduce the inequality $\eta < \varepsilon/(2\beta_{X,n}).$ Applying the two inequalities $\eta < \varepsilon/(2\beta_{X,n})$ and $(1+s)\nu^{d_n}\le 1+3s$ (see~\eqref{vp}) into inequality~\eqref{vr}, we conclude that
\begin{equation} 
    \left| \widehat{h}_n(\calS) - h_n(X) \right| \le \eta \cdot \beta_{X,n}+ \left( (1+s)\nu^{d_n-1} - 1 \right)\cdot \gamma_{X,n} \le \frac{\varepsilon}{2} +  \frac{\varepsilon}{2} = \varepsilon
\end{equation}
whenever $\mathfrak{A}_{n,\eta}(\calS)$ occurs.

Now, fix $\delta \in (0,1/(4n)).$ Set
\begin{equation} \label{wa}
    \eta := \frac12 \cdot \frac{\kappa_{X,n}-1}{\kappa_{X,n}+1}.
\end{equation}
We show that $\eta \ge \varepsilon c_{X,n},$ where we denote the constant $c_{X,n}$ by
\begin{equation}
    c_{X,n} := \min\left( \frac{1}{8 \gamma_{X,n}}, \frac{\tau_{X,n}-1}{4 \gamma_{X,n}(\tau_{X,n}+1)}, \frac{1-\mu_X}{72 \gamma_{X,n} d_{n}}, \frac{1}{4\beta_{X,n}} \right).
\end{equation}
In this definition of $c_{X,n},$ the term involving $\tau_{X,n}$ is removed if $\tau_{X,n}=\infty.$ We assume that
\begin{equation} \label{vu}
    m \ge \frac{2/(c_{X,n}^2\alpha_{X,n})}{\varepsilon^2} \log \frac{1}{\delta}.
\end{equation}
From $\eta \ge \varepsilon c_{X,n}$ and~\eqref{vu}, it follows that the probability that the event $\mathfrak{A}_{n,\eta}(\calS)$ does not occur is bounded as
\begin{equation}
    P\left( \mathfrak{A}_{n,\eta}(\calS)^c \right) \le 4n e^{-m\eta^2 \alpha_{X,n}} \le \delta.
\end{equation}
Note that this would conclude the proof, as then we would have that
\begin{equation}
    P\left( \left| \widehat{h}_n(\calS) - h_n(X) \right| \le \varepsilon \right) \ge P\left.\left( \left| \widehat{h}_n(\calS) - h_n(X) \right| \le \varepsilon ~ \right| ~ \mathfrak{A}_{n,\eta}(\calS) \right) P\left( \mathfrak{A}_{n,\eta}(\calS) \right) = P\left( \mathfrak{A}_{n,\eta}(\calS) \right) > 1-\delta.
\end{equation}
The rest of the proof is devoted to showing that $\eta \ge \varepsilon c_{X,n}$ holds. 

Let $\rho = (1-\mu_X)/(6d_n).$ We will show that
\begin{equation} \label{vv}
    \left( \frac{1-s\mu_X}{1-s} \right)^{1/(2d_n)} \ge \frac{1+s\rho}{1-s\rho}.
\end{equation}
Inequality~\eqref{vv} is equivalent to 
\begin{equation} \label{vw}
    (1-s\mu_X)(1-s\rho)^{2d_n} \ge (1+\rho s)^{2d_n}(1-s).
\end{equation}
By Bernoulli's inequality, since $0 \le s\rho \le 1,$ we have that $(1-s\rho)^{2d_n} \ge 1- 2d_n \rho s.$ In addition, the inequality $1+2az\ge e^{az}\ge (1+a)^z$ for $a,z\ge 0$ satisfying $az\le \log 2$ implies, in view of $2d_n \rho s \le 1/9 < \log 2,$ that
\begin{equation}
    1+4d_n\rho s \ge (1+\rho)^{2d_n}.
\end{equation}
Therefore, to show~\eqref{vw}, it suffices to show that
\begin{equation} \label{vx}
    (1-s\mu_X)(1-2d_n\rho s) \ge (1+4d_n \rho s)(1-s).
\end{equation}
Now, using the definition $\rho = (1-\mu_X)/(6d_n),$ inequality~\eqref{vx} follows as
\begin{align}
    (1-s\mu_X)(1-2d_n\rho s) &= (1-s \mu_X)(1-s(1-\mu_X)/3) = (1+2(1-\mu_X)s/3)(1-s) + s^2(1-\mu_X)(\mu_X+2)/3 \nonumber \\
    &\ge (1+2(1-\mu_X)s/3)(1-s) = (1+4d_n\rho s)(1-s).
\end{align}
Since~\eqref{vx} holds, we conclude that inequality~\eqref{vv} holds. 

Now, by the definition of $\kappa_{X,n}$ in~\eqref{vz} there are four possible values $\kappa_{X,n}$ can take. First, if $\kappa_{X,n}=3,$ then
\begin{equation}
    \eta = \frac14 = \varepsilon \cdot \frac{1}{4 \varepsilon} \ge \varepsilon \cdot \frac{1}{8\gamma_{X,n}} \ge \varepsilon c_{X,n}
\end{equation}
since $\varepsilon < \varepsilon_{X,n}\le 2\gamma_{X,n}.$ Now, if $\kappa_{X,n}=\tau_{X,n}$ (so $B_{d_n}^{(-)}(\bcalX)>0$), then
\begin{equation}
    \eta = \frac12 \cdot \frac{\tau_{X,n}-1}{\tau_{X,n}+1} \ge \frac{\varepsilon}{4\gamma_{X,n}} \cdot \frac{\tau_{X,n}-1}{\tau_{X,n}+1}
\end{equation}
since $\varepsilon <  2\gamma_{X,n}.$ Next, suppose that 
\begin{equation} \label{wc}
    \kappa_{X,n} = \left( \frac{1-s\mu_X}{1-s} \right)^{1/(2d_n)}.
\end{equation}
By~\eqref{vv} and~\eqref{wc}, we deduce that
\begin{equation} \label{wb}
    \kappa_{X,n} \ge   \frac{1+s\rho}{1-s\rho}.
\end{equation}
Recall that, for $0 < a < 1 < b,$ the inequalities $(1+a)/(1-a) \ge b$ and $(b-1)/(b+1) \ge a$ are equivalent.  Therefore, the definition of $\eta$ in~\eqref{wa} yields from~\eqref{wb} that $\eta \ge s\rho /2.$ Plugging in the definitions of $s$ and $\rho,$ we conclude that
\begin{equation}
    \eta \ge \varepsilon \cdot \frac{1-\mu_X}{72 \gamma_{X,n} d_{n}} \ge \varepsilon c_{X,n}.
\end{equation}
Finally, when
\begin{equation}
    \kappa_{X,n} = \frac{1+\varepsilon/(2\beta_{X,n})}{1-\varepsilon/(2\beta_{X,n})},
\end{equation}
the definition of $\eta$ implies that $\eta \ge \varepsilon/(4\beta_{X,n}) \ge \varepsilon c_{X,n}.$ Combining these four cases, we conclude that we must have $\eta \ge \varepsilon c_{X,n}$ independently of the value of $\kappa_{X,n}.$ The proof is thus complete.

\subsection{Uniform Bounds on $\varphi_X$ and $\psi_X$: Inequalities~\eqref{gy}} \label{gq}

Being polynomials of degree at most $d_n$ with non-negative coefficients, the functions $u_X$ and $v_X$ satisfy $u_X(\nu t)\le \nu^{d_n} u_X(t)$ and $v_X(\nu t)\le \nu^{d_n} v_X(t)$ for every $\nu \ge 1$ and $t\ge 0.$ Note also that both $u_X$ and $v_X$ are nondecreasing. In addition, we have $v_X(t)<u_X(t)$ for every $t\ge 0,$ because $u_X(t)-v_X(t) = \det \bM_{\sqrt{t}X+N,n} >0.$ We have also shown that $\mu_X<1,$ where $\mu_X$ is defined in~\eqref{uo} as
\begin{equation}
    \mu_X := \sup_{t\in [0,\infty)} \frac{v_X(t)}{u_X(t)}.
\end{equation}
These facts will be enough to deduce the bounds in~\eqref{gy}.

We show first the bounds on $\varphi_X$ in~\eqref{gy}. It suffices to consider the case $\mu_X>0,$ for otherwise $v_X$ vanishes identically and $\varphi_X\equiv 1$ identically. We show that for every $s>0$ and $1\le \nu \le \nu_{X,n,s}',$ where $\nu_{X,n,s}' := \left((1/s+1/\mu_X)/(1/s+1)\right)^{1/d_n} ,$ the uniform bound $1 \le \varphi_X(t;\nu) \le 1+s$ in~\eqref{gy} holds.

Consider the lower bound on $\varphi_X.$ For every $1\le \nu < \mu_X^{-1/d_n},$ we have the uniform bound
\begin{equation} \label{fh}
    \frac{v_X(\nu t)}{u_X(t)} \le \frac{\nu^{d_n} v_X(t)}{u_X(t)} \le \nu^{d_n} \mu_X < 1
\end{equation}
over $t\in [0,\infty).$ In particular,
\begin{equation} \label{um}
    u_X(t) - v_X(\nu t)>0
\end{equation}
for every $1\le \nu < \mu_X^{-1/d_n}$ and $t\ge 0.$ Since $v_X$ is nondecreasing, we conclude that $\varphi_X(t;\nu) = (u_X(t)-v_X(t))/(u_X(t)-v_X(\nu t)) \ge 1$ whenever $1\le \nu < \mu_X^{-1/d_n}.$ Note that $\nu_{X,n,s}' < \mu_X^{-1/d_n}$ for every $s>0$ since $\mu_X \in (0,1).$ 

Next, we show the upper bound on $\varphi_X.$ Fix $s>0$ and $\nu \in [1,\nu_{X,n,s}'].$ Since $v_X(t)/\mu_X \le u_X(t),$ we have for every $t\ge 0$ the bound
\begin{equation} \label{fk}
v_X(\nu t) \le \nu^{d_n} v_X(t) \le \frac{1/s+1/\mu_X}{1/s+1} \cdot v_X(t) \le \frac{v_X(t)/s+u_X(t)}{1/s+1} = v_X(t) + \frac{u_X(t)-v_X(t)}{1/s+1}. 
\end{equation}
Rearranging~\eqref{fk}, we obtain the bound
\begin{equation} \label{fl}
    \frac{-1}{1/s+1} \le \frac{v_X(t)-v_X(\nu t)}{u_X(t)-v_X(t)}.
\end{equation}
Adding $1$ to both sides of~\eqref{fl} then inverting, we obtain $\varphi_X(t;\nu) \le 1+s$; for this step, we used the fact that $u_X(t)-v_X(\nu t)>0,$ which follows by~\eqref{um} since $\nu \le \nu_{X,n,s}' < \mu_X^{-1/d_n}.$ 

Next, we prove the bounds on $\psi_X$ in~\eqref{gy}. We do not assume $\mu_X>0.$ The upper bound $\psi_X(t;\nu) \le 1$ follows for every $\nu \ge 1$ by monotonicity of $u_X.$ For the lower bound on $\psi_X,$ we show that for every $s\in (0,1)$ and $1\le \nu \le \nu_{X,n,s},$ where $\nu_{X,n,s}:=((1-s\mu_X)/(1-s))^{1/d_n},$ the uniform bound $\psi_X(t;\nu) \ge 1-s$ holds over $t\in [0,\infty).$ We have, for every $s\in (0,1)$ and $\nu \in [1,\nu_{X,n,s}],$ the bound
\begin{equation} \label{un}
    u_X(\nu t) \le \nu^{d_n} u_X(t) \le \frac{1-s \mu_X}{1-s} \cdot u_X(t) \le \frac{u_X(t) - s v_X(t)}{1-s} = \frac{u_X(t)-v_X(t)}{1-s} + v_X(t)
\end{equation}
over $t\in[0,\infty).$ Rearranging~\eqref{un}, we obtain $\psi_X(t;\nu)\ge 1-s,$ as desired.

Finally, note that $\nu_{X,n,s}\le \nu_{X,n,s}'$ is equivalent to $s\le (1-\mu_X)/(1+\mu_X).$ This concludes the proof that, for every $s\in (0,(1-\mu_X)/(1+\mu_X))$ and $\nu \in [1,\nu_{X,n,s}],$ the uniform bounds in~\eqref{gy}
\begin{equation}
    1-s \le \psi_X(t;\nu) \le 1 \le \varphi_X(t; \nu) \le 1+s
\end{equation}
hold over $t\in[0,\infty).$

\subsection{Error in Estimating $\log \det \bM_{X,n}$: Inequality~\eqref{gg} } \label{gs}

Recall that 
\begin{equation}
    \det \bM_{X,n} = B_{d_n}(\bcalX) = B^{(+)}_{d_n}(\bcalX) - B^{(-)}_{d_n}(\bcalX).
\end{equation}
We bound the error when estimating $\log \det \bM_{X,n}$ from the samples $\calS.$ Denote the random vector $\bmu := \left( \frac{\sum_{i=1}^m X_i}{m},\cdots,\frac{\sum_{i=1}^m X_i^{2n}}{m}\right),$ and note that
\begin{equation}
    \det \bM_{U,n} = B_{d_n}(\bmu) = B^{(+)}_{d_n}(\bmu) - B^{(-)}_{d_n}(\bmu).
\end{equation}
We assume that $m>n.$ Let $\eta_{X,n}$ be as defined by~\eqref{vy} and~\eqref{we}, and fix $\eta \in (0,\eta_{X,n}).$ Then we show that under $\mathfrak{A}_{n,\eta}(\calS)$
\begin{equation} \label{vh}
    \left| \frac{1}{2d_n} \log \frac{\det \bM_{U,n}}{\det \bM_{X,n}} \right| \le \frac{6\eta}{n} \cdot \frac{B_{d_n}^{(+)}(\bcalX)+B_{d_n}^{(-)}(\bcalX)}{B_{d_n}^{(+)}(\bcalX)-B_{d_n}^{(-)}(\bcalX)}.
\end{equation}
By~\eqref{mq} in Theorem~\ref{iv}, each term in the polynomials $B_{d_n}^{(\pm)}$ is a product of at most $n+1$ monomials. Thus,
\begin{equation} \label{gh}
    (1-\eta)^{n+1} B_{d_n}^{(\pm)}(\bcalX) \le B_{d_n}^{(\pm)}(\bmu) \le (1+\eta)^{n+1} B_{d_n}^{(\pm)}(\bcalX).
\end{equation}
It suffices to consider the case when $B_{d_n}^{(-)}$ is not the zero polynomial, for if $B_{d_n}^{(-)}$ is the zero polynomial then we obtain from~\eqref{gg} the bound 
\begin{equation} \label{vi}
    \left| \frac{1}{2d_n} \log \frac{\det \bM_{U,n}}{\det \bM_{X,n}} \right| = \frac{1}{2d_n} \left| \log \frac{B_{d_n}^{(+)}(\bmu)}{B_{d_n}^{(+)}(\bcalX)} \right| \le  \frac{\max\left(\log(1+\eta),-\log(1-\eta)\right)}{n} = \frac{-\log(1-\eta)}{n}< \frac{2\eta}{n}
\end{equation}
where the last inequality follow because $-\log(1-z) < 2z$ for $z\in (0,1/2),$ which can be verified by checking the derivative. Note that the bound $2\eta/n$ in~\eqref{vi} is stronger than the bound in~\eqref{vh}. Assume that $B_{d_n}^{(-)}$ does not vanish identically, so positivity of $X$ yields that $B_{d_n}^{(-)}(\bcalX)>0.$

From~\eqref{gh}, we have that
\begin{align} \label{gk}
    \log \frac{B_{d_n}^{(+)}(\bcalX)-\nu^{\frac{n+1}{2}} B_{d_n}^{(-)}(\bcalX)}{B_{d_n}^{(+)}(\bcalX)-B_{d_n}^{(-)}(\bcalX)} + (n+1)\log(1-\eta) \le \log \frac{\det \bM_{U,n}}{\det \bM_{X,n}} 
\end{align}
and
\begin{align} \label{gl}
    \log \frac{\det \bM_{U,n}}{\det \bM_{X,n}}  \le  \log \frac{B_{d_n}^{(+)}(\bcalX)-\nu^{-\frac{n+1}{2}} B_{d_n}^{(-)}(\bcalX)}{B_{d_n}^{(+)}(\bcalX)-B_{d_n}^{(-)}(\bcalX)} + (n+1)\log(1+\eta)
\end{align}
where we used our assumption that 
\begin{equation} \label{gj}
    \nu^{\frac{n+1}{2}} = \left( 1 + \frac{2}{1/\eta-1} \right)^{n+1} < \frac{1}{2} \left(\frac{B_{d_n}^{(+)}(\bcalX)}{B_{d_n}^{(-)}(\bcalX)} + 1 \right) < \frac{B_{d_n}^{(+)}(\bcalX)}{B_{d_n}^{(-)}(\bcalX)}.
\end{equation}
Now, for every $(w,z,r)\in\mathbb{R}^3$ such that $w>z>0$ and $w/z>r>1,$ rearranging $r+1/r>2$ we have that
\begin{equation}
    \frac{w-z/r}{w-z}< \frac{w-z}{w-rz}.
\end{equation}
Setting $(w,z,r)=(B_{d_n}^{(+)}(\bcalX),B_{d_n}^{(-)}(\bcalX),\nu^{(n+1)/2}),$ we obtain that
\begin{equation}
    1 < \frac{B_{d_n}^{(+)}(\bcalX)-\nu^{-\frac{n+1}{2}} B_{d_n}^{(-)}(\bcalX)}{B_{d_n}^{(+)}(\bcalX)-B_{d_n}^{(-)}(\bcalX)} < \frac{B_{d_n}^{(+)}(\bcalX)-B_{d_n}^{(-)}(\bcalX)}{B_{d_n}^{(+)}(\bcalX)-\nu^{\frac{n+1}{2}} B_{d_n}^{(-)}(\bcalX)}.
\end{equation}
Therefore,
\begin{equation} \label{vj}
    0 < \log \frac{B_{d_n}^{(+)}(\bcalX)-\nu^{-\frac{n+1}{2}} B_{d_n}^{(-)}(\bcalX)}{B_{d_n}^{(+)}(\bcalX)-B_{d_n}^{(-)}(\bcalX)} < \left| \log \frac{B_{d_n}^{(+)}(\bcalX)-\nu^{\frac{n+1}{2}} B_{d_n}^{(-)}(\bcalX)}{B_{d_n}^{(+)}(\bcalX)-B_{d_n}^{(-)}(\bcalX)} \right|.
\end{equation}
Applying~\eqref{vj} in~\eqref{gl} and combining that with~\eqref{gk}, we obtain (since $\log(1+\eta) < - \log(1-\eta)$) the bound
\begin{equation} \label{yc}
    \left| \log \frac{\det \bM_{U,n}}{\det \bM_{X,n}} \right| \le \log \frac{B_{d_n}^{(+)}(\bcalX)-B_{d_n}^{(-)}(\bcalX)}{B_{d_n}^{(+)}(\bcalX)-\nu^{\frac{n+1}{2}} B_{d_n}^{(-)}(\bcalX)} + (n+1)\log\frac{1}{1-\eta}.
\end{equation}

Now, we may write
\begin{equation}
    \frac{B_{d_n}^{(+)}(\bcalX)-B_{d_n}^{(-)}(\bcalX)}{B_{d_n}^{(+)}(\bcalX)-\nu^{\frac{n+1}{2}} B_{d_n}^{(-)}(\bcalX)}  =\left( 1 - \frac{B_{d_n}^{(-)}(\bcalX)}{B_{d_n}^{(+)}(\bcalX)-B_{d_n}^{(-)}(\bcalX)} \left( \nu^{\frac{n+1}{2}}-1 \right) \right)^{-1}. 
\end{equation}
The proof of~\eqref{vh} is completed by showing that for $(w,z,r)\in \mathbb{R}_{>0}^3$ such that $(1+z)^r<1+\frac{1}{2w}$ we have
\begin{equation} \label{gm}
    - \log\left( 1- w \left( (1+z)^r-1 \right) \right) \le (2w+1)rz.
\end{equation}
Before showing that~\eqref{gm} holds, we note how it completes the proof. Setting 
\begin{equation}
    (w,z,r) = \left(  \frac{B_{d_n}^{(-)}(\bcalX)}{B_{d_n}^{(+)}(\bcalX)-B_{d_n}^{(-)}(\bcalX)} , \frac{2\eta}{1-\eta}, n+1 \right),
\end{equation}
we obtain that
\begin{equation} \label{yd}
    \log \frac{B_{d_n}^{(+)}(\bcalX)-B_{d_n}^{(-)}(\bcalX)}{B_{d_n}^{(+)}(\bcalX)-\nu^{\frac{n+1}{2}} B_{d_n}^{(-)}(\bcalX)} \le \frac{B_{d_n}^{(+)}(\bcalX)+B_{d_n}^{(-)}(\bcalX)}{B_{d_n}^{(+)}(\bcalX)-B_{d_n}^{(-)}(\bcalX)} \cdot (n+1)\cdot \frac{2\eta}{1-\eta}
\end{equation}
since
\begin{equation} \label{gn}
    \nu^{\frac{n+1}{2}} < \frac{1}{2} \left(  \frac{B_{d_n}^{(+)}(\bcalX)}{B_{d_n}^{(-)}(\bcalX)} + 1 \right).
\end{equation}
Then $- \log (1- \eta) < 2\eta$ yields from~\eqref{yc} and~\eqref{yd} that 
\begin{equation} \label{ye}
     \frac{1}{2d_n} \left| \log \frac{\det \bM_{U,n}}{\det \bM_{X,n}} \right| \le \frac{B_{d_n}^{(+)}(\bcalX)+B_{d_n}^{(-)}(\bcalX)}{B_{d_n}^{(+)}(\bcalX)-B_{d_n}^{(-)}(\bcalX)} \cdot \frac{2\eta}{n(1-\eta)} + \frac{2\eta}{n}.
\end{equation}
Then~\eqref{yd} yields the desired inequality~\eqref{gg} as $\eta \in (0,1/2).$

Finally, to see that (\ref{gm}) holds, we consider for fixed $w,r>0$
\begin{equation}
    f(z) := (2w+1)rz + \log\left( 1 - w \left( \left(1+z\right)^r - 1 \right) \right)
\end{equation}
over $0 \le z < (1+1/(2w))^{1/r}-1.$ Inequality~\eqref{gm} is restated as $f(z)\ge 0$ for every $0 < z < (1+1/(2w))^{1/r}-1,$ which follows since $f$ is continuous, $f(0)=0,$ $f'(0^+)=(w+1)r>0,$ and 
\begin{align}
    f'(z) &= (2w+1)r - \frac{wr(1+z)^{r-1}}{1-w((1+z)^r-1)} > (2w+1)r - \frac{wr(1+z)^{r}}{1-w((1+z)^r-1)} \\
    &>(2w+1)r - \frac{wr(1+1/(2w))}{1-w((1+1/(2w))-1)}=0
\end{align}
for every $0 \le z < (1+1/(2w))^{1/r}-1.$

\subsection{Proof of Proposition~\ref{xy}: Mutual Information} \label{xz}

Let $\{(X_j,Y_j)\}_{j\in \BN}$ be i.i.d. samples drawn according to $P_{X,Y}.$ Denote $\calS_m=\{X_j\}_{j=1}^m.$ By continuity of $Y,$ we may assume that all the $Y_j,$ for $j\in \BN,$ are distinct. For each $x\in \supp(X),$ let $J_x:=\{1\le j \le m ~ ; ~ X_j=x\}.$ Let $\frakD_m$ be the event that, for every $x\in \supp(X),$ we have that $|J_x|>n.$ We use Hoeffding's inequality to obtain a lower bound on the probability
\begin{equation}
    P(\frakD_m) = P\left( \min_{x\in \supp(X)} |J_x| > n \right).
\end{equation}
Let $\widehat{P}_m$ be the empirical measure: $\widehat{P}_m(x) := m^{-1}\sum_{j=1}^m \delta_x(X_j).$ Note that $|J_x|= m \widehat{P}_m(x).$

Let $x_0\in \supp(X)$ be such that $P_X(x_0)$ is minimal, set $\zeta:= P_X(x_0)/2,$ and suppose $m \ge \zeta^{-1}n.$ Then, the union bound and $\zeta \le P_X(x)-\zeta$ for each $x\in \supp(X)$ yield that
\begin{align}
    P\left( n \ge \min_{x\in \supp(X)} |J_x| \right) &\le P\left( m \zeta \ge \min_{x\in \supp(X)} |J_x| \right) \le \sum_{x\in \supp(X)} P\left( m \zeta \ge  |J_x| \right) \\
    &\le \sum_{x\in \supp(X)} P\left( m (P_X(x)-\zeta) \ge  |J_x| \right) = \sum_{x\in \supp(X)} P\left( P_X(x)- \widehat{P}_m(x) \ge  \zeta \right).
\end{align}
Since $\BE[\widehat{P}_m(x)]=P_X(x)$ for each $x \in \supp(X),$ Hoeffding's inequality yields that $P\left( P_X(x)- \widehat{P}_m(x) \ge  \zeta \right) \le e^{-2\zeta^2 m}.$ Therefore,
\begin{equation}
    P\left( n \ge \min_{x\in \supp(X)} |J_x| \right) \le |\supp(X)| \cdot e^{-2\zeta^2 m}.
\end{equation}
In other words, for every $m\ge 2n/P_X(x_0),$ we have the bound 
\begin{equation} \label{wh}
    P(\frakD_m) \ge 1 - |\supp(X)| \cdot e^{-m P_X(x_0)^2/2}.
\end{equation}
Denote $\pi_X:= 4/P_X(x_0)^2$ and
\begin{equation}
    \delta_{X,n} := \min\left( \frac{1}{4|\supp(X)|}, e^{-P_X(x_0)n/2} \right).
\end{equation}
We conclude from~\eqref{wh} that, for every $\delta \in (0,\delta_{X,n}),$ if $m \ge \pi_X \log (1/\delta)$ then $P(\frakD_m) > 1-\delta/4.$

Consider the event $\mathfrak{P}_{m,\varepsilon}$ that the empirical measure $\widehat{P}_m$ is pointwise $\varepsilon$-close to the true measure $P_X,$ i.e.,
\begin{equation}
    \mathfrak{P}_{m,\varepsilon} := \left\{ \max_{x\in \supp(X)} \left| \widehat{P}_m(x) - P_X(x) \right| < \varepsilon \right\}.
\end{equation}
By the union bound, we have that
\begin{equation}
    P\left( \mathfrak{P}_{m,\varepsilon}^c \right) \le \sum_{x\in \supp(X)} P\left( \left| \widehat{P}_m(x)-P_X(x) \right| \ge \varepsilon \right).
\end{equation}
By Hoeffding's inequality, for each $x\in \supp(X),$ we have that
\begin{equation}
     P\left( \left| \widehat{P}_m(x)-P_X(x) \right| \ge \varepsilon \right) \le 2e^{-2m\varepsilon^2}.
\end{equation}
Therefore, we obtain the bound
\begin{equation}
    P\left( \mathfrak{P}_{m,\varepsilon} \right) > 1-2 |\supp(X)|e^{-2m\varepsilon^2}.
\end{equation}
In particular, if $\delta \in (0,1/(4|\supp(X)|)),$ then $m\ge (1/\varepsilon^2)\log(1/\delta)$ implies $P(\frakB_{m,\varepsilon}) > 1 - \delta/2.$

Recall that, if $\frakD_m$ occurs, then we may write
\begin{equation}
    \widehat{I}_n(\calS_m) = \widehat{h}_n(\calA_m) - \sum_{x\in \supp(X)} \widehat{P}_m(x) ~ \widehat{h}_n(\calB_{m,x}),
\end{equation}
where $\calA_m := \{Y_j \}_{j=1}^m$ and $\calB_{m,x} := \{ Y_j ~ ; ~ 1\le j \le m, X_j=x\}.$ Then,
\begin{align}
    \left| \widehat{I}_n(\calS_m) - I_n(X;Y) \right| \le \left| \widehat{h}_n(\calA_m) - h_n(Y) \right| &+ \sum_{x\in \supp(X)} \widehat{P}_m(x) \left| \widehat{h}_n(\calB_{m,x}) - h_n(Y^{(x)}) \right| \nonumber \\
    &+ \left( \max_{x\in \supp(X)} \left|\widehat{P}_m(x)-P_X(x)\right| \right) \sum_{x\in \supp(X)} |h_n(Y^{(x)})|.
\end{align}
Denote $H_{X,Y,n}:=\sum_{x\in \supp(X)} |h_n(Y^{(x)})|.$ Consider the events
\begin{align}
    \mathfrak{H}_{x,\varepsilon} &:= \bigcap_{x\in \supp(X)} \left\{ \left| \widehat{h}_n(\calB_{m,x}) - h_n(Y^{(x)}) \right| < \frac{\varepsilon}{3} \right\} \\
    \mathfrak{H}_{\varepsilon}' &:= \left\{ \left| \widehat{h}_n(\calA_m) - h_n(Y) \right| < \frac{\varepsilon}{3} \right\}.
\end{align}
Set $\mathfrak{H}_\varepsilon := \bigcap_{x\in \supp(X)} \mathfrak{H}_{x,\varepsilon}.$ From Theorem~\ref{bp}, we know that there is a constant $C_{X,Y,n}$ such that for every small enough $\varepsilon,\delta>0,$ if $m \ge (C_{X,Y,n}/\varepsilon^2)\log(1/\delta)$ then $P(\mathfrak{H}_{x,\varepsilon}\mid \frakD_m) \ge 1 - \delta/(8 |\supp(X)|)$ for each $x\in \supp(X)$ and $P(\mathfrak{H}_\varepsilon'\mid \frakD_m)>1-\delta/8.$ Then, $P(\mathfrak{H}_\varepsilon \cap \mathfrak{H}_\varepsilon' \mid \frakD_m) \ge 1-\delta/4.$ We conclude, possibly after increasing $C_{X,Y,n},$ that $P(\mathfrak{H}_\varepsilon \cap \mathfrak{H}_\varepsilon' \cap \frakD_m) \ge 1-\delta/2.$ Also, $P(\frakB_{m,\varepsilon/(2H_{X,Y,n})}) > 1 - \delta/2.$ Then, $P(\mathfrak{H}_\varepsilon \cap \mathfrak{H}_\varepsilon' \cap \frakD_m \cap \frakB_{m,\varepsilon/(2H_{X,Y,n})}) \ge 1-\delta.$ But under the event $\mathfrak{H}_\varepsilon \cap \mathfrak{H}_\varepsilon' \cap \frakD_m \cap \frakB_{m,\varepsilon/(2H_{X,Y,n})},$ we have the bound
\begin{equation}
     \left| \widehat{I}_n(\calS_m) - I_n(X;Y) \right| < \varepsilon,
\end{equation}
and the proof is complete.

\section{Algebraic Proof of Corollary~\ref{sl}} \label{aw}

We provide here an alternative algebraic proof of the formulas $a_X^{n,d_n-1}=\det \bM_{X,n}$ and $a_X^{n,d_n}=0$ (see equations~\eqref{mg}--\eqref{mu} for the definition of these quantities). These equations were proved in Section~\ref{jh} by invoking that $\mm(X,t)\sim 1/t$ (as $t\to \infty$) for a continuous RV $X.$ Ultimately, via Proposition~\ref{qx}, we have shown that these formulas hold identically as polynomials in the symbols $\{ \calX_j \}_{j=1}^{2n}.$ The alternative proof we present in this appendix directly derives this latter result by algebraic means without appealing to the MMSE asymptotic result. The point of including this alternative proof is that it might shed light on deriving simple expressions for the other constants $a_X^{n,j}$ for $1\le j \le d_n-2$ and $b_X^{n,\ell}$ for $2\le \ell \le d_n-1.$ See Remark~\ref{qw} for the polynomial expressions of the $a_X^{n,j}.$

We consider indeterminates $\calR_1,\cdots,\calR_{2n},$ which we think of as moments of a RV $R.$ For a permutation $\pi \in \sn$ and integers $m\in \mathbb{N}$ and $\{ i_1,\cdots,i_m\} \subseteq [n],$ it will be convenient to denote the products
\begin{equation}
    Q_R(\pi;i_1,\cdots,i_m) :=  \prod_{k\not\in \{i_1,\cdots,i_m\}} \calR_{k+\pi(k)},
\end{equation}
and $Q_R(\pi) :=  \prod_{k \in [n]} \calR_{k+\pi(k)}.$ We let $T_n^{(i,j)}\subset \calS_{[n]}$ denote the subset of permutations that send $i$ to $j,$ i.e., 
\begin{equation}
    T_n^{(i,j)} := \{ \pi \in \calS_{[n]} ~ ; ~ \pi(i) = j \}.
\end{equation}
Note that, for each fixed $i\in [n],$ we have a partition
\begin{equation} \label{cv}
\calS_{[n]} = \bigcup_{j\in [n]} T_n^{(i,j)}.
\end{equation}
We will denote for $i\in [n]$ and $\pi \in \calS_{[n]}$ the composition of permutations
\begin{equation}
    \pi_i := \pi \circ (1 \; i).
\end{equation}
If $i=1,$ then $\pi_i=\pi.$ Further, for each $i\in [n],$ as multiplication by $(1 \; i)$ is an automorphism of $\calS_{[n]},$ the mapping $\pi \mapsto \pi_i$ is a bijection of $\calS_{[n]}.$ In addition, when $i\neq 1,$
\begin{equation}
\mathrm{sgn}(\pi_i) = - \mathrm{sgn}(\pi).
\end{equation}

\subsection{Leading Coefficients} \label{cg}

We first show that the coefficient of $t^{d_n}$ in $F_{R,n}(t)$ is $\mathcal{R}_2 \det \bM_{R,n},$ i.e., that
\begin{equation} \label{cn}
 \sum_{i\in [n]} \sum_{\pi \in \calS_{[n]}} \mathrm{sgn}(\pi) \mathcal{R}_{i+1}\mathcal{R}_{\pi(i)+1}Q_{R}(\pi;i)= \mathcal{R}_2 \det \bM_{R,n}.
 \end{equation}
For each $i\in [n]\setminus \{1\},$ we have that
\begin{align}
    \sum_{\pi \in \calS_{[n]}} \mathrm{sgn}(\pi) \mathcal{R}_{\pi(i)+1}Q_{R}(\pi;i) &=\sum_{\pi \in \calS_{[n]}} \mathrm{sgn}(\pi) \mathcal{R}_{\pi(i)+1}\mathcal{R}_{\pi(1)+1}Q_{R}(\pi;1,i)  \\
    &=\sum_{\pi \in \calS_{[n]}} \mathrm{sgn}(\pi_i)\mathcal{R}_{\pi_i(i)+1}\mathcal{R}_{\pi_i(1)+1}Q_{R}(\pi_i;1,i)  \\
    &=-\sum_{\pi \in \calS_{[n]}} \mathrm{sgn}(\pi)\mathcal{R}_{\pi(1)+1}\mathcal{R}_{\pi(i)+1}Q_{R}(\pi;1,i)  \\
    &=-\sum_{\pi \in \calS_{[n]}} \mathrm{sgn}(\pi)\mathcal{R}_{\pi(i)+1}Q_{R}(\pi;i).
\end{align}
Hence,
\begin{equation}
    \sum_{\pi \in \calS_{[n]}} \mathrm{sgn}(\pi)\mathcal{R}_{\pi(i)+1}Q_{R}(\pi;i)=0.
\end{equation}
Thus, only $i=1$ could give a nonzero sum in the left hand side of~\eqref{cn}. Furthermore, when $i=1$ in~\eqref{cn}, we obtain the sum
\begin{align}
    \sum_{\pi \in \calS_{[n]}} \mathrm{sgn}(\pi) \mathcal{R}_{2} \mathcal{R}_{\pi(1)+1} Q_R(\pi;1) =\mathcal{R}_{2} \sum_{\pi \in \calS_{[n]}} \mathrm{sgn}(\pi)  Q_R(\pi) =\mathcal{R}_2 \det \bM_{R,n}.
\end{align}
Thus,~\eqref{cn} follows. In view of equation~\eqref{sd} in Lemma~\ref{kl}, equation~\eqref{cn} yields that $a_R^{n,d_n}=0.$ Next, we apply similar bijectivity tricks to show that $a_R^{n,d_n-1}= \det \bM_{R,n}.$

Via the application of Leibniz's formula in equation~\eqref{cl}, a preliminary formula for $a_R^{n,d_n-1}$ is as follows
\begin{align}
 a_R^{n,d_n-1}&=\mathcal{R}_2 \sum_{r\in [n]} \sum_{\pi \in \calS_{[n]}} \mathrm{sgn}(\pi)  \binom{r+\pi(r)}{2} \mathcal{R}_{r+\pi(r)-2} Q_{R}(\pi;r) \nonumber \\
&\hspace{5mm}- \sum_{(i,j)\in [n]^2}  \sum_{\pi \in T_n^{(i,j)}} \sum_{r\neq i} \mathrm{sgn}(\pi) \binom{r+\pi(r)}{2}  \mathcal{R}_{i+1}\mathcal{R}_{j+1} \mathcal{R}_{r+\pi(r)-2}   Q_{R}(\pi;i,r) \nonumber \\
&\hspace{5mm} -\sum_{(i,j)\in [n]^2} \sum_{\pi \in T_n^{(i,j)}} \mathrm{sgn}(\pi) Q_{R}(\pi;i)   \left( \binom{i}{2} \mathcal{R}_{i-1} \mathcal{R}_{j+1} +\binom{j}{2} \mathcal{R}_{i+1}\mathcal{R}_{j-1} \right),
\end{align}
where we set $\mathcal{R}_\ell = 0$ when $\ell<0.$ We will deal with each of the three sums in this preliminary formula separately; so, denote the three sums, in order, by $\mathfrak{S}_1,\mathfrak{S}_2,\mathfrak{S}_3$ (where, for $\mathfrak{S}_1,$ we absorb the factor $\mathcal{R}_2$ inside the sum), i.e., define
\begin{align}
  \mathfrak{S}_1 &:= \sum_{r\in [n]} \sum_{\pi \in \calS_{[n]}} \mathrm{sgn}(\pi)  \binom{r+\pi(r)}{2}  \mathcal{R}_2 \mathcal{R}_{r+\pi(r)-2} Q_{R}(\pi;r),   \\ \nonumber \\
\mathfrak{S}_2 &:= - \sum_{(i,j)\in [n]^2}  \sum_{\pi \in T_n^{(i,j)}} \sum_{r\neq i} \mathrm{sgn}(\pi) \binom{r+\pi(r)}{2}  \mathcal{R}_{i+1}\mathcal{R}_{j+1} \mathcal{R}_{r+\pi(r)-2}   Q_{R}(\pi;i,r),  \\ \nonumber  \\
\mathfrak{S}_3 &:= -\sum_{(i,j)\in [n]^2} \sum_{\pi \in T_n^{(i,j)}} \mathrm{sgn}(\pi) Q_{R}(\pi;i)   \left( \binom{i}{2} \mathcal{R}_{i-1} \mathcal{R}_{j+1} +\binom{j}{2} \mathcal{R}_{i+1}\mathcal{R}_{j-1} \right).
\end{align}
Thus, $a_R^{n,d_n-1}=\mathfrak{S}_1+\mathfrak{S}_2+\mathfrak{S}_3.$ We show that this coefficient is equal to $\det \bM_{R,n}$ by showing that
\begin{equation} \label{cq}
    \mathfrak{S}_1+\mathfrak{S}_2 = \det \bM_{R,n}
\end{equation}
and that
\begin{equation} \label{cr}
    \mathfrak{S}_3 = 0.
\end{equation}
For $i_1,\cdots,i_\ell \in [n],$ let $[n]_{i_1,\cdots,i_\ell} := [n]\setminus \{i_1,\cdots,i_\ell\}.$ 

For the first sum, $\mathfrak{S}_1,$ we partition $[n] \times \calS_{[n]}$ into three parts as
\begin{equation} \label{cp}
[n] \times \calS_{[n]} = \left( \{1\} \times T_n^{(1,1)} \right) \cup \left( \{1\} \times \bigcup_{j \in [n]_1}T_n^{(1,j)} \right) \cup \left( [n]_1 \times \calS_{[n]} \right),
\end{equation}
and we let $\mathfrak{S}_1 = \mathfrak{S}_{1,1}+\mathfrak{S}_{1,2}+\mathfrak{S}_{1,3}$ be the ensuing decomposition, which we express next. For the first part in (\ref{cp}), we obtain
\begin{equation} \label{ct}
   \mathfrak{S}_{1,1} = \sum_{\pi \in T_n^{(1,1)}} \mathrm{sgn}(\pi) Q_R(\pi),
\end{equation}
whereas the second and third parts give
\begin{equation} \label{dc}
    \mathfrak{S}_{1,2}=\sum_{j \in [n]_1} \sum_{\pi \in T_n^{(1,j)}} \mathrm{sgn}(\pi) \binom{j+1}{2} \mathcal{R}_2 \mathcal{R}_{j-1} Q_R(\pi;1)
\end{equation}
and 
\begin{equation} \label{dd}
    \mathfrak{S}_{1,3}=\hspace{-3mm}\sum_{(r,\pi)\in [n]_1 \times \calS_{[n]}} \hspace{-3mm}\mathrm{sgn}(\pi) \binom{r+\pi(r)}{2} \mathcal{R}_2 \mathcal{R}_{r+\pi(r)-2} Q_R(\pi;r),
\end{equation}
respectively. We will show that both $\mathfrak{S}_{1,2}$ and $\mathfrak{S}_{1,3}$ cancel out identically when summed with parts of the sum $\mathfrak{S}_2.$ We also note that $\mathfrak{S}_{1,1}$ provides part of the sum that will ultimately produce $\det \bM_{R,n}$; the remaining part lies in $\mathfrak{S}_2,$ which we treat next.

Let 
\begin{equation}
    U_n^{(i,j)} = \left\{ (i,j,\pi) \; ; \; \pi \in T_n^{(i,j)} \right\}.
\end{equation}
For the second sum, $\mathfrak{S}_2,$  we employ the partition
\begin{equation}
    \bigcup_{(i,j)\in [n]^2} \left( U_n^{(i,j)} \times [n]_i \right) = \mathfrak{p}_1\cup \mathfrak{p}_2 \cup \mathfrak{p}_3 \cup \mathfrak{p}_4
\end{equation}
where
\begin{align}
    \mathfrak{p}_1 &:=  \bigcup_{j\in [n]} \left( U_n^{(1,j)} \times [n]_1 \right) \\
    \mathfrak{p}_2 &:=  \bigcup_{i\in [n]_1} \left( U_n^{(i,1)} \times \{1\} \right) \\
    \mathfrak{p}_3 &:=  \bigcup_{(i,j)\in [n]_1\times [n]} \left( U_n^{(i,j)} \times [n]_{1,i} \right) \\
    \mathfrak{p}_4 &:=  \bigcup_{(i,j)\in [n]_1^2} \left( U_n^{(i,j)} \times \{1\} \right).
\end{align}
We will denote the ensuing sums by $\mathfrak{S}_{2,1}, \mathfrak{S}_{2,2}, \mathfrak{S}_{2,3}, \mathfrak{S}_{2,4},$ which we express next. We will denote a generic element $((i,j,\pi),r) \in U_n^{(i,j)} \times [n]_i$ as $(i,j,\pi,r)$ for short. The $\mathfrak{p}_1$-part yields
\begin{equation} \label{dg}
    \mathfrak{S}_{2,1} = - \sum_{(r,\pi) \in [n]_1 \times \calS_{[n]}}  \mathrm{sgn}(\pi) \binom{r+\pi(r)}{2} \mathcal{R}_2 \mathcal{R}_{r+\pi(r)-2} Q_R(\pi;r),
\end{equation}
the $\mathfrak{p}_2$-part yields
\begin{equation} \label{de}
    \mathfrak{S}_{2,2} = -  \sum_{i\in [n]_1} \sum_{\pi \in T_n^{(i,1)}}  \mathrm{sgn}(\pi) \binom{\pi(1)+1}{2} \mathcal{R}_2 \mathcal{R}_{\pi(1)-1} Q_R(\pi;1),
\end{equation}
and the $\mathfrak{p}_3$-part yields
\begin{align}
    \mathfrak{S}_{2,3} = - &\sum_{i \in [n]_1} \sum_{r\in [n]_{1,i}} \sum_{\pi \in \calS_{[n]}} \mathrm{sgn}(\pi) \binom{r+\pi(r)}{2}   \mathcal{R}_{i+1} \mathcal{R}_{\pi(i)+1}\mathcal{R}_{r+\pi(r)-2}Q_R(\pi;i,r). \label{df}
\end{align}
We further partition the $\mathfrak{p}_4$-part according to whether permutations fix $1,$ namely,
\begin{equation}
    \mathfrak{p}_4 = \mathfrak{p}_{4,1} \cup \mathfrak{p}_{4,2}
\end{equation}
where
\begin{align}
    \mathfrak{p}_{4,1} &:= \left\{ (i,j,\pi,1) \in \mathfrak{p}_4 \; ; \; \pi(1)=1 \right\} \\
    \mathfrak{p}_{4,2} &:= \left\{ (i,j,\pi,1) \in \mathfrak{p}_4 \; ; \; \pi(1)\neq 1 \right\}.
\end{align}
We denote the ensuing sums by $\mathfrak{S}_{2,4,1},\mathfrak{S}_{2,4,2}.$ The $\mathfrak{p}_{4,1}$-part gives
\begin{equation} \label{cw}
    \mathfrak{S}_{2,4,1} = -  \sum_{(i,j)\in [n]_1^2} \sum_{\pi \in T_n^{(i,j)}\cap T_n^{(1,1)}}  \mathrm{sgn}(\pi) \mathcal{R}_{i+1}\mathcal{R}_{\pi(i)+1}Q_R(\pi;i,1),
\end{equation}
whereas the $\mathfrak{p}_{4,2}$-part gives
\begin{align}
    \mathfrak{S}_{2,4,2} &= -\sum_{(i,j,k)\in [n]_1^3}\sum_{\pi \in T_n^{(i,j)}\cap T_n^{(1,k)}} \mathrm{sgn}(\pi) \binom{\pi(1)+1}{2} \mathcal{R}_{i+1}\mathcal{R}_{\pi(i)+1}\mathcal{R}_{\pi(1)-1} Q_{R}(\pi;1,i). \label{di}
\end{align}

With this decomposition of $a_R^{n,d_n-1}$ at hand, we proceed to show that equations~\eqref{cq} and~\eqref{cr} hold by showing that the following six equations hold. We will show that
\begin{align}
    \mathfrak{S}_{1,1} + \mathfrak{S}_{2,4,1} &= \det \bM_{R,n}, \label{cs} \\
    \mathfrak{S}_{1,2} + \mathfrak{S}_{2,2} &= 0, \label{cx} \\
    \mathfrak{S}_{1,3} + \mathfrak{S}_{2,1} &= 0, \label{cy} \\
    \mathfrak{S}_{2,3} &= 0, \label{cz} \\
    \mathfrak{S}_{2,4,2} &= 0, \label{da} \\
    \mathfrak{S}_3 &= 0. \label{db}
\end{align}
Summing~\eqref{cs}--\eqref{db} gives $a_R^{n,d_n-1}=\det \bM_{R,n}.$

We first show that~\eqref{cs} holds. From~\eqref{ct}, we have that
\begin{equation}
   \mathfrak{S}_{1,1} = \sum_{\pi \in T_n^{(1,1)}} \mathrm{sgn}(\pi) Q_R(\pi).
\end{equation}
We show that $\mathfrak{S}_{2,4,1}$ complements this summation to give $\det \bM_{R,n},$ i.e., that $(\ref{cs})$ holds. From the Leibniz formula for the determinant, we have that
\begin{equation}
   \det \bM_{R,n} = \sum_{\pi \in \calS_{[n]}} \mathrm{sgn}(\pi) Q_R(\pi).
\end{equation}
From the partition $\calS_{[n]} = \bigcup_{i\in [n]} T_n^{(i,1)}$ (similar to the partition in (\ref{cv})), then, it suffices to show that
\begin{equation} \label{cu}
    \mathfrak{S}_{2,4,1} = \sum_{i\in [n]_1} \sum_{\pi \in T_n^{(i,1)}} \mathrm{sgn}(\pi) Q_R(\pi).
\end{equation}
We proceed to show that (\ref{cu}) holds. We employ a similar technique to how we showed (\ref{cn}). Fix $i\in [n]_1.$ By the change of variables $\sigma = \pi_i$ (equivalently, $\pi = \sigma_i,$ since $(1 \; i)^{-1} = (1 \; i)$) we have that
\begin{align}
    \sum_{j\in [n]_1} \sum_{\pi \in T_n^{(i,j)} \cap T_n^{(1,1)}} \hspace{-5mm} \mathrm{sgn}(\pi) \mathcal{R}_{i+1}\mathcal{R}_{\pi(i)+1}Q_R(\pi;i,1) &=\sum_{j\in [n]_1} \sum_{\sigma \in T_n^{(i,1)} \cap T_n^{(1,j)}} \hspace{-3mm} \mathrm{sgn}(\sigma_i) \mathcal{R}_{i+1}\mathcal{R}_{\sigma_i(i)+1}Q_R(\sigma_i;i,1) \\
    &= -\sum_{j\in [n]_1} \sum_{\sigma \in T_n^{(i,1)} \cap T_n^{(1,j)}} \hspace{-3mm} \mathrm{sgn}(\sigma) \mathcal{R}_{i+\sigma(i)}\mathcal{R}_{\sigma(1)+1}Q_R(\sigma;i,1) \\
    &= -\sum_{j\in [n]_1} \sum_{\sigma \in T_n^{(i,1)} \cap T_n^{(1,j)}} \hspace{-3mm} \mathrm{sgn}(\sigma) Q_R(\sigma) \\
    &= -\sum_{\sigma \in T_n^{(i,1)}}\mathrm{sgn}(\sigma) Q_R(\sigma).
\end{align}
Summing over all $i\in [n]_1$ and noting the minus sign in the definition of $\mathfrak{S}_{2,4,1}$ in (\ref{cw}), we obtain that (\ref{cu}) holds. Hence, equation (\ref{cs}) holds. We now show that the other parts give a vanishing contribution, i.e., that~\eqref{cx}--\eqref{db} all hold.

Equations (\ref{cx}) and (\ref{cy}) follow from the expressions we we give in (\ref{dc}),(\ref{dd}),(\ref{dg}),(\ref{de}). For (\ref{cx}), we note that each $j$ in the summand in (\ref{dc}) may be replaced with $\pi(1)$ as $\pi \in T_n^{(1,j)}.$ Then, as
\begin{equation}
    \bigcup_{j \in [n]_1} T_n^{(1,j)} = \bigcup_{i\in [n]_1} T_n^{(i,1)}
\end{equation}
are both partitions of the same set, namely, the set of permutations that do not fix $1,$ equations~\eqref{dc} and~\eqref{de} yield that $\mathfrak{S}_{1,2} = - \mathfrak{S}_{2,2},$ i.e., (\ref{cx}) holds. For (\ref{cy}), the expressions in (\ref{dd}) and (\ref{dg}) show that $\mathfrak{S}_{1,3} = - \mathfrak{S}_{2,1},$ i.e., that (\ref{cy}) holds.

Next, we show that (\ref{cz}) holds. Fix $i\in [n]_1$ and $r\in [n]_{1,i}.$ We will show that the following sum vanishes
\begin{equation} \label{dh}
    \sum_{\pi \in \calS_{[n]}} \mathrm{sgn}(\pi) \binom{r+\pi(r)}{2} \mathcal{R}_{\pi(i)+1}\mathcal{R}_{r+\pi(r)-2}Q_R(\pi;i,r)=0.
\end{equation}
As $\mathfrak{S}_{2,3},$ according to equation (\ref{df}), is a linear combination of such sums, we would obtain that $\mathfrak{S}_{2,3}=0,$ i.e., that (\ref{cz}) holds. To show that (\ref{dh}) holds, we utilize that $\pi \mapsto \pi_i$ is an automorphism of $\calS_{[n]},$ as follows. We have that
\begin{align}
    \sum_{\pi \in \calS_{[n]}} &\mathrm{sgn}(\pi) \binom{r+\pi(r)}{2} \mathcal{R}_{\pi(i)+1}\mathcal{R}_{r+\pi(r)-2}Q_R(\pi;i,r) \nonumber \\
    &= \sum_{\pi \in \calS_{[n]}} \mathrm{sgn}(\pi) \binom{r+\pi(r)}{2}\mathcal{R}_{\pi(i)+1}\mathcal{R}_{r+\pi(r)-2}  \mathcal{R}_{\pi(1)+1}Q_R(\pi;1,i,r) \\
    &=  \sum_{\pi \in \calS_{[n]}}  \mathrm{sgn}(\pi_i) \binom{r+\pi_i(r)}{2}\mathcal{R}_{\pi_i(i)+1}\mathcal{R}_{r+\pi_i(r)-2}  \mathcal{R}_{\pi_i(1)+1}Q_R(\pi_i;1,i,r) \\
    &= -  \sum_{\pi \in \calS_{[n]}}  \mathrm{sgn}(\pi)  \binom{r+\pi(r)}{2}\mathcal{R}_{\pi(1)+1}\mathcal{R}_{r+\pi(r)-2}  \mathcal{R}_{\pi(i)+1}Q_R(\pi;1,i,r) \\
    &= -  \sum_{\pi \in \calS_{[n]}}  \mathrm{sgn}(\pi) \binom{r+\pi(r)}{2}\mathcal{R}_{r+\pi(r)-2}\mathcal{R}_{\pi(i)+1}Q_R(\pi;i,r),
\end{align}
so the vanishing in (\ref{dh}) holds. Hence, (\ref{cz}) holds.

Next, we show that (\ref{da}) holds. We rewrite (\ref{di}) as
\begin{align}
    \mathfrak{S}_{2,4,2} &= -\sum_{(i,j,k,\ell)\in [n]_1^4}\sum_{\pi \in T_n^{(i,j)}\cap T_n^{(1,k)}\cap T_n^{(\ell,1)}} \mathrm{sgn}(\pi) \binom{k+1}{2}  \mathcal{R}_{i+1}\mathcal{R}_{j+1}\mathcal{R}_{k-1}\mathcal{R}_{\ell+1} Q_{R}(\pi;1,i,\ell).  \label{dj}
\end{align}
We fix $(j,k)\in [n]_1^2$ and show the vanishing of each of the following sums
\begin{align} \label{dk}
    & \sum_{(i,\ell)\in [n]_1^2} \mathcal{R}_{i+1}\mathcal{R}_{\ell+1} \sum_{\pi \in T_n^{(i,j)}\cap T_n^{(1,k)}\cap T_n^{(\ell,1)}}  \mathrm{sgn}(\pi)   Q_{R}(\pi;1,i,\ell) = 0.
\end{align}
From (\ref{dj}), we may write $\mathfrak{S}_{2,4,2}$ as a linear combination of such sums, so we would obtain that $\mathfrak{S}_{2,4,2}=0.$ We show (\ref{dk}) next. We change variables as $\sigma = \pi \circ (i \; \; \ell)$ in the inner sum in (\ref{dk}) to obtain that 
\begin{align}
    \sum_{\pi \in T_n^{(i,j)}\cap T_n^{(1,k)}\cap T_n^{(\ell,1)}}  \mathrm{sgn}(\pi)   Q_{R}(\pi;1,i,\ell) &= \sum_{\sigma \in T_n^{(\ell,j)}\cap T_n^{(1,k)}\cap T_n^{(i,1)}}  \mathrm{sgn}(\sigma \circ (i \;\; \ell))   Q_{R}(\sigma \circ (i \;\; \ell);1,i,\ell)  \\
    &= -\sum_{\sigma \in T_n^{(\ell,j)}\cap T_n^{(1,k)}\cap T_n^{(i,1)}}   \mathrm{sgn}(\sigma )   Q_{R}(\sigma;1,i,\ell).
\end{align}
Multiplying by $\mathcal{R}_{i+1}\mathcal{R}_{\ell+1}$ then summing over $(i,\ell)\in [n]_1^2,$ we obtain that the quantity on the left hand side of (\ref{dk}) is equal to its negative. Hence (\ref{dk}) holds, and we obtain that $\mathfrak{S}_{2,4,2}=0.$

Finally, we show that $\mathfrak{S}_3=0.$ We may write
\begin{align}
    \mathfrak{S}_3 = &-\hspace{-1mm}\sum_{i\in [n]_1} \hspace{-1mm} \binom{i}{2} \mathcal{R}_{i-1} \hspace{-1mm} \sum_{\pi \in \calS_{[n]}} \hspace{-1mm}\mathrm{sgn}(\pi)   \mathcal{R}_{\pi(i)+1}  Q_{R}(\pi;i)  -\hspace{-1mm}\sum_{j\in [n]_1} \hspace{-1mm}\binom{j}{2}\mathcal{R}_{j-1}\hspace{-1mm}  \sum_{\pi \in \calS_{[n]}}\hspace{-1mm} \mathrm{sgn}(\pi)  \mathcal{R}_{\pi^{-1}(j)+1} Q_{R}(\pi;\pi^{-1}(j)). \label{dn}
\end{align}
We will show that for each $(i,j)\in [n]_1^2,$
\begin{equation} \label{dl}
    \sum_{\pi \in \calS_{[n]}} \mathrm{sgn}(\pi)   \mathcal{R}_{\pi(i)+1}  Q_{R}(\pi;i) = 0
\end{equation}
and 
\begin{equation} \label{dm}
    \sum_{\pi \in \calS_{[n]}} \mathrm{sgn}(\pi)  \mathcal{R}_{\pi^{-1}(j)+1} Q_{R}(\pi;\pi^{-1}(j))=0.
\end{equation}
Together, equations (\ref{dl}) and (\ref{dm}) imply in view of (\ref{dn}) that $\mathfrak{S}_3=0.$ To show that (\ref{dl}) holds, we apply the automorphism $\pi \mapsto \pi_i$ of $\calS_{[n]},$ from which we obtain
\begin{align}
    \sum_{\pi \in \calS_{[n]}} \mathrm{sgn}(\pi)   \mathcal{R}_{\pi(i)+1}  Q_{R}(\pi;i)  &= \sum_{\pi \in \calS_{[n]}} \mathrm{sgn}(\pi)   \mathcal{R}_{\pi(i)+1} \mathcal{R}_{\pi(1)+1} Q_{R}(\pi;1,i)  \\
    &= \sum_{\pi \in \calS_{[n]}} \mathrm{sgn}(\pi_i)   \mathcal{R}_{\pi_i(i)+1} \mathcal{R}_{\pi_i(1)+1} Q_{R}(\pi_i;1,i)  \\
    &=-\sum_{\pi \in \calS_{[n]}} \mathrm{sgn}(\pi)   \mathcal{R}_{\pi(1)+1} \mathcal{R}_{\pi(i)+1} Q_{R}(\pi;1,i)  \\
    &=-\sum_{\pi \in \calS_{[n]}} \mathrm{sgn}(\pi)    \mathcal{R}_{\pi(i)+1} Q_{R}(\pi;i)
\end{align}
and (\ref{dl}) follows. Now, we show that (\ref{dm}) reduces to (\ref{dl}) via the automorphism $\pi \mapsto \pi^{-1}$ of $\calS_{[n]}.$ First, note that for any $\pi \in \calS_{[n]}$ we have 
\begin{equation}
    Q_R(\pi;\pi^{-1}(k_1),\cdots,\pi^{-1}(k_\ell)) = Q_R(\pi^{-1};k_1,\cdots,k_\ell).
\end{equation}
Hence, the left hand side of (\ref{dm}) may be rewritten as
\begin{align}
     \sum_{\pi \in \calS_{[n]}} \mathrm{sgn}(\pi)  \mathcal{R}_{\pi^{-1}(j)+1} Q_{R}(\pi;\pi^{-1}(j))  &=    \sum_{\pi \in \calS_{[n]}}   \mathrm{sgn}(\pi)  \mathcal{R}_{\pi^{-1}(j)+1} \mathcal{R}_{\pi^{-1}(1)+1} Q_{R}(\pi;\pi^{-1}(j),\pi^{-1}(1))  \\
     &=   \sum_{\pi \in \calS_{[n]}} \mathrm{sgn}(\pi)  \mathcal{R}_{\pi^{-1}(j)+1} \mathcal{R}_{\pi^{-1}(1)+1} Q_{R}(\pi^{-1};j,1). \label{do}
\end{align}
Further, the bijection $\pi \mapsto \pi^{-1}$ yields that
\begin{align}
     \sum_{\pi \in \calS_{[n]}} \mathrm{sgn}(\pi)  \mathcal{R}_{\pi^{-1}(j)+1} \mathcal{R}_{\pi^{-1}(1)+1} Q_{R}(\pi^{-1};j,1) &=   \sum_{\pi \in \calS_{[n]}} \mathrm{sgn}(\pi^{-1})  \mathcal{R}_{\pi(j)+1} \mathcal{R}_{\pi(1)+1} Q_{R}(\pi;j,1)  \\
     &= \sum_{\pi \in \calS_{[n]}} \mathrm{sgn}(\pi)  \mathcal{R}_{\pi(j)+1} \mathcal{R}_{\pi(1)+1} Q_{R}(\pi;j,1). \label{dp}
\end{align}
Combining (\ref{do}) and (\ref{dp}), we get that 
\begin{align}
    \sum_{\pi \in \calS_{[n]}} \mathrm{sgn}(\pi)  \mathcal{R}_{\pi^{-1}(j)+1} Q_{R}(\pi;\pi^{-1}(j))  = \sum_{\pi \in \calS_{[n]}} \mathrm{sgn}(\pi)  \mathcal{R}_{\pi(j)+1} \mathcal{R}_{\pi(1)+1} Q_{R}(\pi;j,1),
\end{align}
i.e., the left hand side of (\ref{dm}) is equal to the left hand side of (\ref{dl}) with $j$ in place of $i.$ As (\ref{dl}) holds, (\ref{dm}) holds too. Therefore, $\mathfrak{S}_3=0.$ This concludes the proof that the coefficient $a_R^{n,d_n-1}$ is equal to $\det \bM_{R,n}.$

\end{appendices}

\bibliographystyle{IEEEtran}
\bibliography{Biblio}

\end{document}